\documentclass[11pt, titlepage, reqno]{amsart}

\usepackage[utf8]{inputenc}
\usepackage[T1]{fontenc}
\usepackage[british]{babel}
\usepackage{times}
\usepackage{amsaddr, amsbsy, amscd, amsfonts, amsmath, mathrsfs, amssymb, amsthm}
\usepackage{dsfont, mathrsfs, mathtools}
\usepackage{hyperref}
\usepackage{enumitem}
\usepackage{xparse}
\usepackage[bottom, hang, flushmargin]{footmisc}
\usepackage{framed, xcolor}


\usepackage[a4paper, hmargin=2cm, vmargin={1.5cm,2.5cm}, nomarginpar, hcentering, includeheadfoot]{geometry}
\setlength{\headheight}{14pt}
\setlength{\headsep}{1cm}
\setlength{\footskip}{1cm}
\linespread{1.28}
\frenchspacing

\usepackage{eqnarray}
\usepackage{scalerel, stackengine}
\usepackage{enumitem}
\usepackage{fancyhdr}
\usepackage{indentfirst}

\usepackage[toc,page]{appendix}
\raggedbottom

\theoremstyle{definition}

\theoremstyle{plain}
\newtheorem{note}{Remark}
\newtheorem{theo}{Theorem}[section]
\newtheorem{lemma}[theo]{Lemma}
\newtheorem{prop}[theo]{Proposition}
\newtheorem{cor}[theo]{Corollary}

\numberwithin{equation}{section}
\numberwithin{defn}{section}
\numberwithin{note}{section}


\stackMath

\let\oldker\ker
\let\oldforall\forall
\let\oldexists\exists
\renewcommand\widehat[1]{%
\savestack{\tmpbox}{\stretchto{%
    \scaleto{%
        \scalerel*[\widthof{\ensuremath{#1}}]{\kern.1pt\mathchar"0362\kern.1pt}%
        {\rule{0ex}{\textheight}}
    }{\textheight}%
}{2.4ex}}%
\stackon[-6.9pt]{#1}{\tmpbox}%
}


\NewDocumentCommand \fromToParser {m m m}{%
    \IfValueTF{#3}{%
        \fromTo{#1}{#2}[#3]%
    }{%
        \fromTo{#1}{#2}%
    }%
}
\NewDocumentCommand \fromTo {m m o}{%
    #1\! \IfValueTF{#3}{%
            \to[#3]
        }{%
            \to[\!\quad\!]
        } \!#2%
}
\newcommand{\integrand}[2]{\!#2\; #1}
\NewDocumentCommand \integralLimits {m m}{%
    _{#1}\IfValueT{#2}{^{\,#2}\!}%
}
\newcommand{\defineSymbol}[2]{#1 \vcentcolon = #2}
\newcommand{\enifedSymbol}[2]{#1 = \vcentcolon #2}


\RenewDocumentCommand \newline {o}{%
\hfill\\[\IfValueTF{#1}{#1}{0} pt]
}
\renewcommand{\-}{\mspace{-1.5mu}}
\newcommand{\+}{\mspace{1.5mu}}
\newcommand{\nquad}{\mspace{-18mu}}
\newcommand{\nqquad}{\mspace{-36mu}}
\newcommand{\n}{\noindent}
\newcommand{\vs}{\vspace{0.5cm}}


\providecommand{\comment}[1]{}

\newcommand{\ie}{\emph{i.e.}}
\newcommand{\eg}{\emph{e.g.}}

\renewcommand{\Im}{\mathrm{Im}}
\renewcommand{\Re}{\mathrm{Re}}
\newcommand{\pInfty}{ {\scriptstyle +}\+\infty}
\newcommand{\mInfty}{ {\scriptstyle -}\+\infty}
\newcommand{\Id}{\mathds{1}}
\renewcommand{\forall}{\oldforall\,}
\RenewDocumentCommand \exists {s}{%
    \IfBooleanTF{#1}{\oldexists!}{\oldexists} \;%
}
\ProvideDocumentCommand \define {s >{\SplitArgument{1}{;}} m}{%
    \IfBooleanTF{#1}{\enifedSymbol #2}{\defineSymbol #2}%
}
\RenewDocumentCommand \to {o}{%
    \IfValueTF{#1}{%
        \xrightarrow[\,#1\,]{\;}%
    }{%
        \,\rightarrow\,%
    }%
}
\NewDocumentCommand \maps {m >{\SplitArgument{2}{;}} m}{%
    #1\!: \fromToParser #2%
}
\ProvideDocumentCommand \conjugate {s m}{%
    \IfBooleanTF{#1}{%
        \overline{#2}%
    }{%
        \bar{#2}%
    }%
}
\providecommand{\abs}[1]{\lvert#1\rvert}
\renewcommand*{\vec}[1]{\boldsymbol{#1}}

\NewDocumentCommand \ball {s m m}{%
    \IfBooleanTF{#1}{%
        \mathcal{B}^{\+ c}_{#2}(#3)%
    }{%
        \mathcal{B}_{#2}(#3)%
    }%
}
\NewDocumentCommand \Char { m o }{
    \Id_{#1}
    \IfValueT{#2}{( #2 )}
}
\newcommand*{\oSmall}[1]{o\!\left(#1\right)}
\newcommand*{\oBig}[1]{\mathcal{O}\!\left(#1\right)}
\NewDocumentCommand \integrate { >{\SplitArgument{1}{;}} o >{\SplitArgument{1}{;}} m}{%
    \int \IfValueT{#1}{\integralLimits #1 \!} \integrand #2%
}


\NewDocumentCommand \hilbert{s}{%
    \IfBooleanTF{#1}{\mathfrak{H}}{\mathscr{H}}%
}
\NewDocumentCommand \X{s}{%
    \IfBooleanTF{#1}{\mathcal{X}}{\mathfrak{X}}%
}
\NewDocumentCommand \scalar { m m o }{%
    \langle#1,\,#2\rangle\IfValueT{#3}{_{#3}}%
}
\NewDocumentCommand \norm { m o }{%
    \left\lVert#1\right\rVert \IfValueT{#2}{_{#2}}%
}
\NewDocumentCommand \normConverge {o m m o}{
    \IfValueTF{#4}{
        \fromTo{ \IfValueTF{#1}{\norm{#2 - #3}[#1]\! }{\norm{#2 - #3}} }{\,0}[#4]
    }{
        \fromTo{ \IfValueTF{#1}{\norm{#2 - #3}[#1]\! }{\norm{#2 - #3}} }{\,0}
    }
}
\NewDocumentCommand \weakConverge {m m o}{%
    #1 \!\IfValueTF{#3}{%
            \xrightharpoonup[#3]{\quad}
        }{%
            \xrightharpoonup{\quad}%
        } \- #2
}

\NewDocumentCommand \wlim {m} {
    \mathop{\mathrm{w\!-\!lim}}\limits_{#1}\:
}

\newcommand*{\dom}[1]{\mathscr{D}(#1)}
\newcommand*{\ran}[1]{\mathrm{ran}(#1)}
\renewcommand*{\ker}[1]{\oldker(#1)}
\newcommand*{\adj}[1]{{#1}^{\ast}}
\newcommand*{\linear}[1]{\mathscr{L}\!\left(#1\right)}
\newcommand*{\bounded}[1]{\mathscr{B}\mspace{-1mu}\left(#1\right)}
\NewDocumentCommand \resolvent {m o}{%
    \mathcal{R}_{#1}\IfValueT{#2}{(#2)}%
}
\NewDocumentCommand \spectrum {o m}{%
    \IfValueTF{#1}{%
        \sigma_{\mathrm{#1}}(#2)%
    }{%
        \sigma(#2)%
    }%
}

\NewDocumentCommand \Lp {s m o} { L^{\- #2}\IfBooleanT{#1}{_{\,\mathrm{loc}}}\IfValueT{#3}{(#3)} }
\NewDocumentCommand \LpS {s m o}{
\IfBooleanTF{#1}{\Lp{#2}_{+}}{\Lp{#2}_{\mathrm{sym}}}\IfValueT{#3}{(#3)}
}
\NewDocumentCommand \LpA {s m o}{
\IfBooleanTF{#1}{\Lp{#2}_{-}}{\Lp{#2}_{\mathrm{asym}}}\IfValueT{#3}{(#3)}
}
\NewDocumentCommand \LpSA {m o}{
\Lp{#1}_{\pm}\IfValueT{#2}{(#2)}
}

\NewDocumentCommand \FT { s m o }{
    \IfBooleanTF{#1}{ \IfValueTF{#3}{ (\mathcal{F} \+  #2)\-(#3) }{%
    \mathcal{F} \+  #2 } }{ \hat{#2}\IfValueT{#3}{(#3)} }
}


\newcommand{\C}{\mathbb{C}}
\newcommand{\R}{\mathbb{R}}
\newcommand{\Rplus}{\mathbb{R}_{+}}
\newcommand{\N}{\mathbb{N}}

\NewDocumentCommand \Z {o}{%
    \IfValueTF{#1}{%
        \mathbb{Z}_{\,#1}%
    }{%
        \mathbb{Z}%
    }%
}

\newcommand{\be}{\begin{equation}}
\newcommand{\ee}{\end{equation}}

\title[Hamiltonian for a Bose gas with Contact Interactions]{Hamiltonian for a Bose gas with Contact Interactions}

\author[D. Ferretti]{Daniele Ferretti}
\address{\emph{Gran Sasso Science Institute}, Via Michele Iacobucci, 2 - 67100 (AQ), Italy\\ \textsf{daniele.ferretti@gssi.it}}

\author[A. Teta]{Alessandro Teta}
\address{\emph{Sapienza Universit\`a di Roma}, Piazzale Aldo Moro, 5 - 00185 (RM), Italy\\ \textsf{teta@mat.uniroma1.it}}
\date{}

\thanks{The authors acknowledge the support of the GNFM Gruppo Nazionale per la Fisica Matematica - INdAM}

\begin{document}

\makeatletter

\newcommand{\pushright}[1]{\ifmeasuring@#1\else\omit\hfill$\displaystyle#1$\fi\ignorespaces}
\newcommand{\pushleft}[1]{\ifmeasuring@#1\else\omit$\displaystyle#1$\hfill\fi\ignorespaces}

\renewcommand\section{%
  \@startsection{section}%
    {1}
    {0em}
    {1.5cm \@plus 0.1ex \@minus -0.05ex}
    {0.75cm \@plus 0.2em}
    {\centering \large \scshape}
  }

  \renewcommand\subsection{%
  \@startsection{subsection}%
    {2}
    {0em}
    {0.75cm \@plus 0.1ex \@minus -0.05ex}
    {0.25cm \@plus 0.2em}
    {\bf\scshape}
  }

\makeatother

\begin{abstract}
We study the Hamiltonian for a three-dimensional Bose gas of $N \geq 3$ spinless particles interacting via zero-range (also known as contact) interactions.
Such interactions are encoded by (singular) boundary conditions imposed on the coincidence hyperplanes, \ie,~when the coordinates of two particles coincide.  

\n
It is well known that imposing the same kind of boundary conditions as in the two-body problem with a point interaction leads to a Hamiltonian unbounded from below (and thus unstable).
This is due to the fact that the interaction becomes overly strong and attractive when the coordinates of three or more particles coincide.

\n In order to avoid such instability, we develop a suggestion originally formulated by Minlos and Faddeev in 1962, introducing slightly modified boundary conditions that weaken the strength of the interaction between two particles $i$ and $j$ in two scenarios: (a) a third particle approaches the common position of $i$ and $j$; (b) another distinct pair of particles approach each other.
In all other cases, the usual boundary condition is restored.

\n Using a quadratic form approach, we construct a class of Hamiltonians characterized by such modified boundary conditions, that are self-adjoint and bounded from below.
We also compare our approach with the one developed years ago by Albeverio, H{\o}egh-Krohn and Streit using the theory of Dirichlet forms (J. Math. Phys., 18, 907--917, 1977).
In particular, we show that the $N$-body Hamiltonian defined by Albeverio \emph{et al.} is a special case of our class of Hamiltonians.
Furthermore, we also introduce a Dirichlet form by considering a more general weight function, and we prove that the corresponding $N$-body Hamiltonians essentially coincide with those constructed via our method.\newline[10]
\begin{footnotesize}
\emph{Keywords: Zero-range interactions; Many-body Hamiltonians; Self-adjoint Extensions; Dirichlet Forms.}

\n \emph{MSC 2020: 
    81Q10; 
    81Q15; 
    46N50; 
    81V70. 
}  
\end{footnotesize}
\end{abstract}
\maketitle



\section{Introduction}\label{intro}

In non relativistic Quantum Mechanics it is often useful to model the interactions between the particles by means of so-called \emph{$\delta$-potentials}, also known as \emph{zero-range} or \emph{contact interactions}.
In particular, this happens when studying systems in the low-energy regime, so that the wavelength associated with the particles is much larger than the range of the interaction between the particles.
Under these conditions, it may be reasonable to consider simplified Hamiltonians with contact interactions.
Such Hamiltonians have often been used at a formal level in physics in both the case of one-particle systems and many-body systems.
As examples, we can mention the Kronig-Penney model~(\cite{KP}), the neutron-proton system~(\cite{BP}), the scattering of neutrons~(\cite{F}), the Efimov effect~(\cite{E}) and the derivation of the Lee-Huang-Yang formula for a dilute Bose gas~(\cite{LHY}). 

\n
The variety of interesting physical applications provides good motivation for the mathematical analysis of these Hamiltonians. 
From the mathematical point of view, the problem is to construct the rigorous counterpart of the formal Hamiltonian with contact interactions as a self-adjoint (s.a.) and, possibly, lower-bounded operator in a suitable Hilbert space.

\n
The first step is to provide a precise mathematical definition.
Let us consider an interaction supported by a (sufficiently regular) set $\Sigma \subset \R^s$, $s\geq 1$, with $\mathrm{\mathop{codim}}\,\Sigma\leq 3$ and zero Lebesgue measure, and let us introduce the symmetric, but not s.a. operator $\; \dot{H}_0 = - \Delta$, $\, \dom{\dot{H}_0} = \hilbert \cap H^2_0(\R^s \setminus \Sigma)$, where $\hilbert$ is the proper $L^2$-space associated to the quantum system.
By definition, the Schr\"odinger operator $-\Delta + \delta_{\Sigma}$ is a non-trivial s.a. extension of $\, \dot{H}_0, \, \dom{\dot{H}_0}$.

\n
The second step is the mathematical problem of constructing such s.a. extensions.
It turns out that, roughly speaking, each extension is the Laplacian in $\R^s$ with specific, possibly singular, boundary conditions on $\Sigma$.

\n
For the one-body case in $\R^d$, $d \geq 1$, let us assume that the set $\Sigma$ is a finite set of points (recall that the case of a single point corresponds to the two-body problem read in the relative coordinate).
Then the deficiency subspaces of $\,\dot{H}_0, \, \dom{\dot{H}_0}$ are finite-dimensional, and therefore all the s.a. extensions can be explicitly constructed and a complete spectral analysis can be performed~(\cite{Albeverio}).
Each s.a. extension is the Laplacian in $\R^d$ with suitable boundary conditions satisfied at the fixed points where the interaction is located.
 
\n
In the $N$-body case in $\R^d$, with $N\!\geq\- 3$ and  $d \geq \-1$, consider the set $\Sigma$ as the union of the coincidence hyperplanes $\pi_{ij}\vcentcolon= \left\{(\vec{x}_1,\ldots,\vec{x}_N)\in\R^{dN} \,\big| \; \vec{x}_i\-=\vec{x}_j\right\}$, where  $i,j\-\in\-\{1, \ldots, N\}$, $i \neq j$,  and $\vec{x}_i$ denotes the  coordinate of the $i$-th particle in $\R^d$.
In this case, the deficiency subspaces of $\dot{H}_0, \, \dom{\dot{H}_0}$ are infinite-dimensional and it is not a priori clear how to choose a specific boundary condition on each hyperplane in order to have a physically reasonable s.a. extension.

\n
For $d\!=\!1$, the problem is relatively simple because the codimension of each  hyperplane $\pi_{ij} $ is equal to one, and therefore one can construct the quadratic form of the extension using perturbation theory.
The resulting Hamiltonian is characterized by two-body boundary conditions on each hyperplane, which are a natural generalization of the boundary condition in the two-body case in dimension one.
Furthermore, the Hamiltonian can be obtained as the norm-resolvent limit of approximating Hamiltonians with smooth, two-body, rescaled potentials (see \eg,~\cite{BCFT1}, \cite{GHL} for recent contributions).\newline
For $d\in\-\{2,3\}$, the codimension of $\pi_{ij}$ is larger than $1$, and therefore perturbation theory does not work.

\n
For $d\!=\!2$ one can proceed defining the relevant extension as the Laplacian in $\R^{2N}$ with two-body boundary conditions on each hyperplane 
which are the direct generalization of the boundary condition in the two-body problem.
It turns out that such operator is s.a. and bounded from below~(\cite{DFT}, \cite{DR}).
It has also been recently proven that the Hamiltonian is the norm-resolvent limit of approximating Hamiltonians with smooth, two-body, rescaled potentials and with a suitable renormalization of the coupling constant~(\cite{GH}).

\n
It should be stressed once again that in dimension one and two a physical interesting $N$-body Hamiltonian with contact interactions can be obtained by following the analogy with the two-body problem.
This essentially means that the boundary condition on the hyperplane $\pi_{ij}$ does not depend on the positions of the remaining particles $\vec{x}_k, \+ k \notin \-\{i,j\}$.
In other words, the strength of the interaction between two particles is constant, regardless of the position of the other particles.
In this sense, one can define a genuine two-body contact, or zero-range interaction, for the $N$-body problem.

\n
For $d\!=\!3$ the situation is more subtle, and following the analogy with the two-body problem, one arrives at a result which is unsatisfactory from the physical point of view.
The aim of this paper is precisely to propose the mathematical construction of a different physically reasonable Hamiltonian with contact interactions for a Bose gas in dimension three\footnote{We just recall that the situation is rather different for systems made of different species of fermions, see \eg,~\cite{CDFMT, CDFMT1}, \cite{MS0, MS, MS1}}. 
Let us explain the difficulties and the main idea of our construction. 

\n
We consider a system of $N\-\geq 3$, identical spinless bosons of mass $\frac{1}{2}$ interacting with each other via zero-range interactions.
The Hilbert space of the system is therefore given by
\begin{equation}
    \define{\hilbert_N;\LpS{2}[\R^{3N}]}.
\end{equation}
We denote by $\mathcal{P}_N$ the following set
\begin{equation}
    \mathcal{P}_N\vcentcolon=\left\{\{i,j\}\,\big|\;i,j\in\{1,\ldots,N\},\+i\neq j\right\},\qquad |\mathcal{P}_N|=\frac{N(N\- - \- 1)}{2}
\end{equation}
so that, at least formally, the Hamiltonian is
\begin{equation}\label{formalH}
    \tilde{\mathcal{H}}=\mathcal{H}_0+\nu\nquad\sum_{\{i,\+j\}\,\in\:\mathcal{P}_N}^N\nquad\delta(\vec{x}_i-\vec{x}_j),
\end{equation}
where $\nu$ is a coupling constant and $\mathcal{H}_0$, with domain $\hilbert_N\cap H^2(\R^{3N})$, is the free Hamiltonian, \ie
\begin{align}\label{freeH}
    &\define{\mathcal{H}_0;-\sum_{i=1}^N\Delta_{\vec{x}_i}}\+.
\end{align}
However, one can argue that $\tilde{\mathcal{H}}$ is not a well-defined linear operator in $\hilbert_N$.
Hence, the rigorous version of~\eqref{formalH}, according to the above general definition, is provided by any non-trivial s.a. extension $\mathcal H$ of the operator
\begin{equation}\label{symmetricHamiltonianToBeExtended}
    \dot{\mathcal{H}}_0\vcentcolon= \mathcal{H}_0\upharpoonright\dom{\dot{\mathcal{H}}_0}\+, \qquad \dom{\dot{\mathcal{H}}_0}\vcentcolon=\hilbert_N\cap H^2_0(\R^{3N}\setminus \pi)
\end{equation}
where $\pi$ is the union of the coincidence hyperplanes
\begin{equation}\label{incidenceHyperplanes}
    \define{\pi;\bigcup_{\sigma\+\in\+\mathcal{P}_N}\pi_\sigma}\,, \qquad \pi_{ij}= \left\{(\vec{x}_1,\ldots,\vec{x}_N)\in\R^{3N} \,\big| \; \vec{x}_i\-=\vec{x}_j\right\}\-.
\end{equation}
Notice that~\eqref{symmetricHamiltonianToBeExtended} is symmetric and closed according to the graph norm of $\mathcal{H}_0\+$.
Moreover, $\mathcal H$ must obviously satisfy the condition 
\begin{equation}\label{h=h0}
    \mathcal{H}\psi = \mathcal{H}_0\+\psi,\qquad\forall\psi \in \hilbert_N\cap H^2_0(\R^{3N}\-\smallsetminus\pi).
\end{equation}
Technically, $\mathcal{H}$ is said to be a singular perturbation of $\mathcal{H}_0\+$ supported on $\pi$, and $\dom{\mathcal{H}}$ is a proper subset of $ \hilbert_N\-\cap H^2(\R^{3N}\setminus \pi)$, by construction.

\n
The problem is now the construction of a s.a. extension $\mathcal{H}$.
If one proceeds in analogy with the two-body problem, as it can be done in dimension one and two, a class of extensions of $\dot{\mathcal{H}}_0\+$  is obtained by imposing to a vector $\psi\-\in\-\hilbert_N\-\cap H^2(\R^{3N}\setminus \pi)$ the following singular, two-body boundary condition for each hyperplane $\pi_{\sigma}$, with $\sigma=\{i,j\}\in\mathcal{P}_N$ 
\begin{equation}\label{stmBC}
    \begin{split}
    \psi(\vec{x}_1,\ldots,\vec{x}_N)=&\;\frac{\xi\!\left(\frac{\vec{x}_i+\+\vec{x}_j}{2},\vec{x}_1,\ldots\check{\vec{x}}_\sigma\ldots,\vec{x}_N\right)\!}{\abs{\vec{x}_i-\vec{x}_j}}\,+\\
    &+\-\alpha_0\,\xi\!\left(\tfrac{\vec{x}_i+\+\vec{x}_j}{2},\vec{x}_1,\ldots\check{\vec{x}}_\sigma\ldots,\vec{x}_N\right)\-+\oSmall{1}\-,\quad \text{for }\,\abs{\vec{x}_i\--\vec{x}_j}\to[\!\quad\!] 0\+,
    \end{split}
\end{equation}
in some topology for some $\xi\-\in\-\Lp{2}[\R^3]\-\otimes\-\LpS{2}[\R^{3(N-2)}]$, where $\alpha_0$ is a real parameter and $\check{\vec{x}}_{\sigma} $ 
denotes the omission of the variables $\vec{x}_i$ and $\vec{x}_j\+$.
We stress that~\eqref{stmBC} is a two-body boundary condition along $\pi_{\sigma}$ that does not depend on the positions of the other particles $\vec{x}_k,\+k \notin\-\{i,j\}$ (such positions play only the role of parameters).
In this sense, it is a natural generalization of the boundary condition required in the two-body problem~(\cite{Albeverio}).
The extension of $\dot{\mathcal{H}}_0\+$ characterized by~\eqref{stmBC} is known as \emph{Ter-Martirosyan Skornyakov extension}.
We point out that the parameter $\alpha_0$ is related to the two-body scattering length $\mathfrak{a}$ via the relation
\begin{equation}\label{2BodyScatteringLenght}
    \mathfrak{a} = -\frac{1}{\+\alpha_0\+}\+.
\end{equation}
In particular, by~\eqref{2BodyScatteringLenght}, one can see that the strength of the two-body interaction vanishes as $\abs{\alpha_0} \!\to[\!\quad\!]\- \pInfty$.

\n
As a matter of fact, the Ter-Martirosyan Skornyakov extension is symmetric but not s.a. already in the case $N\!=\-3$.
The deficiency indices are $(1,1)$, then its s.a. extensions are parametrized by a real constant that characterizes the behavior of the wave function close to the triple coincidence point, \ie~where the positions of the three particles coincide.
It also turns out that all these s.a. extensions are unbounded from below~(\cite{MF}, see also \cite{MF2}).
This instability is due to the fact that the interaction becomes too strong when the three particles are close to each other and this determines a collapse (or fall to the center) phenomenon, known in physics literature as the Thomas effect.

\comment{
\n It is important to stress that the result in~\cite{MF} should be interpreted as a sort of no-go theorem.
Indeed, it states that in dimension three for $N\!=\-3$, and a fortiori for $N\!>\-3$, one cannot define a Hamiltonian with contact interactions via the two-body boundary condition~\eqref{stmBC}, where the strength of the interaction between two particles is constant, and therefore the interaction is a genuine two-body interaction.
Conversely, one is forced to add a further three-body boundary condition, corresponding to a sort of three-body force acting between the particles when they are close to each other.
Furthermore, the fact that the s.a. Hamiltonian constructed in~\cite{MF} is not bounded from below, and then it is unsatisfactory from the physical point of view, simply means that the Ter-Martirosyan Skornyakov extension characterized by~\eqref{stmBC} is not a good starting point to construct the Hamiltonian. 
}

\n
This fact is already made clear in~\cite{MF}, where a suggestion to modify the boundary condition~\eqref{stmBC} in order to obtain a lower-bounded Hamiltonian is also given.
An analogous approach is proposed in~\cite{AHKW}.
Unfortunately, details of the construction are omitted in both cases.
Recently, the suggestion has been reconsidered~(\cite{FT}, \cite[section 9]{Miche}, \cite{BCFT}, \cite[section 6]{GM}, \cite{FeT2}) and a lower-bounded Hamiltonian in the three-body problem has been explicitly constructed (see also~\cite{FeT} for the case of a Bose gas interacting with an impurity).
The idea is to replace the constant $\alpha_0$ in~\eqref{stmBC} with $\alpha_0$ plus a real function defined on $\pi_{\sigma}$.
This function must diverge (with a singularity of Coulomb type and a coupling constant sufficiently large) when a particle approaches the common position of the other two particles, whereas it reduces to zero as soon as the particle is sufficiently far away.
The function can be chosen with an arbitrarily small compact support and the Coulomb singularity is the minimal singularity required in order to prevent the collapse when the positions of the three particles coincide.
In this sense, a minimal modification of the boundary condition~\eqref{stmBC} is introduced to obtain an energetically stable Hamiltonian for the three-body problem. 

\n
Notice that, in this way, one replaces the two-body boundary condition~\eqref{stmBC}, which is independent of the position of the third particle, with a new two-body boundary condition (slightly) dependent on the position of the third particle. 
Roughly speaking, this corresponds to the introduction of a sort of three-body interaction such that the (effective) scattering length decreases to zero when the positions of two particles coincide and the third particle is close to the common position of the first two, while it reduces to a constant when the third particle is far away. 

\n
Following the same line of thought, in this paper we consider the general $N$-body problem for a Bose gas.
For $N>3$, in addition to the three-body interaction required in the three-body problem, we also introduce a further four-body interaction.
More precisely, let us define the function 
\begin{equation}\label{alfa}
    \begin{split}
        &\nqquad\maps{\alpha}{\R^3\otimes\R^{3(N-2)};\R},\\[-2.5pt]
    (\vec{z},\vec{y}_1,\ldots,\mspace{2.25mu}\vec{y}_{N-2})\longmapsto &\,\alpha_0+\gamma\-\sum_{k=1}^{N-2}\frac{\theta(\abs{\vec{y}_k\--\vec{z}})}{\abs{\vec{y}_k\--\vec{z}}}+\frac{\gamma}{2}\!\sum_{1\+\leq\+ k\,<\,\ell\+\leq\+ N-2}\!\!\!\frac{\theta(\abs{\vec{y}_k\--\vec{y}_\ell})}{\abs{\vec{y}_k\--\vec{y}_\ell}}
    \end{split}
\end{equation}
with $\gamma>0$ and $\maps{\theta}{\Rplus;\R}$ an essentially bounded function satisfying
\begin{equation}
    \label{positiveBoundedCondition}
    1-\tfrac{r}{b}\leq\theta(r)\leq 1+\tfrac{r}{b},\qquad\text{for almost every } \+r\in\Rplus\+ \text{ and some }\, b>0.
\end{equation}
We observe that the function $\theta$, by assumption~\eqref{positiveBoundedCondition}, is positive almost everywhere in a neighborhood of the origin, and $\lim\limits_{r\to 0^+}\-\theta(r)=1$.
Notice that simple choices for this function are $\theta(r)= e^{-r/b}$ or $\theta(r)= \Char{b}(r)$, where $\Char{b}$ is the characteristic function of the ball of radius $b$ centered at the origin.
One can also choose a smooth $\theta$ with an arbitrarily small compact support. 

\n
Making use of~\eqref{alfa}, we now define the following boundary condition on $\pi_{\sigma}$
\begin{equation}\label{mfBC}
\begin{split}
    \psi(\vec{x}_1,\ldots,\vec{x}_N)=&\;\frac{\xi\!\left(\frac{\vec{x}_i+\+\vec{x}_j}{2},\vec{x}_1,\ldots\check{\vec{x}}_\sigma\ldots,\vec{x}_N\right)\!}{\abs{\vec{x}_i-\vec{x}_j}}\,+\\
    &+\-(\Gamma_{\!\mathrm{reg}}^{\+\sigma} \+\xi)\!\left(\tfrac{\vec{x}_i+\+\vec{x}_j}{2},\vec{x}_1,\ldots\check{\vec{x}}_\sigma\ldots,\vec{x}_N\right)\-+\oSmall{1}\-,\quad \text{for }\,\abs{\vec{x}_i\--\vec{x}_j}\to[\!\quad\!]0,
\end{split}
\end{equation}
where $\Gamma^{\+\sigma}_{\!\mathrm{reg}}$ acts as follows
\begin{equation}\label{regGammaHyp}
    \Gamma_{\!\mathrm{reg}}^{\+\sigma}: \xi\longmapsto \alpha(\vec{x},\vec{x}_1,\ldots\check{\vec{x}}_\sigma\ldots,\vec{x}_N)\+\xi(\vec{x},\vec{x}_1,\ldots\check{\vec{x}}_\sigma\ldots,\vec{x}_N).
\end{equation}
Notice that the function $\alpha(\vec{x},\vec{x}_1,\ldots\check{\vec{x}}_\sigma\ldots,\vec{x}_N)$ is symmetric under the exchange of any couple $\vec{x}_k\-\longleftrightarrow\-\vec{x}_\ell$ with $\ell\-\neq k\-\in\-\{1,\ldots,N\}\-\smallsetminus\-\sigma$.

\n
From the heuristic point of view, we have introduced an effective two-body scattering length associated with the pair $\sigma$ depending on the positions of the other particles.
Such dependence plays the role of a repulsive force that weakens the contact interaction between the particles of the couple $\sigma$.
In particular, two kinds of repulsions are described by~\eqref{alfa}: the first term represents a three-body force that makes the usual two-body point interaction weaker and weaker as a third particle approaches the common position of the first two interacting particles, while the second term represents a four-body repulsion meant to compensate for the singular ultraviolet behavior associated with the situation in which two other different particles compose an interacting couple $\nu$, without elements in common with $\sigma$.

\n
From the mathematical perspective, the ambiguity in defining a contact interaction arises at the intersections between the coincidence hyperplanes: indeed, both kinds of the singularities discussed occur along $\pi_\sigma\cap\pi_\nu\+$, whose codimension is $6$

$$\pi_\sigma\cap\pi_\nu=\begin{cases}
        \{(\vec{x}_1,\ldots,\vec{x}_N)\in\R^{3N}\:|\;\vec{x}_i\-=\vec{x}_j,\+ \vec{x}_k\-=\vec{x}_\ell\},\qquad &\sigma=\{i,\,j\},\+ \nu=\{k,\ell\}, \,\nu\cap\sigma=\emptyset;\\
        \{(\vec{x}_1,\ldots,\vec{x}_N)\in\R^{3N}\:|\;\vec{x}_i\-=\vec{x}_j\-=\vec{x}_k\},\qquad &\sigma=\{i,\,j\},\+ \nu=\{k,\ell\}, \,\ell\-\in\-\sigma.
    \end{cases}
$$
Notice that in the first situation above, $\pi_\sigma\cap\pi_\nu$ identifies two independent pairs of interacting particles, whereas the last one is associated with the collision of three particles.
A heuristic argument motivating the need to introduce a four-body repulsive interaction (in addition to the three-body one) for $N\-\geq \-4$ is provided by Remark~\ref{4BodyNeed}.

\vs

Once again, we stress that we can choose an arbitrarily small compact support of the function $\theta$, so that the usual two-body point interaction is restored as soon as the other bosons are sufficiently far from the singularities.

\n
In this paper we shall prove that the Hamiltonian constructed as a s.a. extension of $\dot{\mathcal{H}}_0$, satisfying~\eqref{mfBC} at least in the weak topology, is bounded from below if one assumes $\gamma$ larger than a proper critical value $\gamma_c$ (see~\eqref{criticalGammaN} in the next section).
The result is obtained by using the theory of quadratic forms.
Specifically, we first construct the quadratic form naturally associated with the formal Hamiltonian characterized by the boundary condition~\eqref{mfBC}.
Then we show that the quadratic form, defined on a suitable form domain, is closed and bounded from below, and therefore it defines a unique s.a. and bounded from below operator.
Such an operator is the Hamiltonian for our Bose gas with contact interactions.

\n
As a further result, we establish the connection between our approach and the one developed by Albeverio, H{\o}egh-Krohn and Streit in 1977 and based on the theory of Dirichlet forms (\cite{AHK}, see also~\cite{MS0} where the Dirichlet form approach is used to study a fermionic system in the thermodynamic limit).
In~\cite[example 4]{AHK}, the authors define a Hamiltonian for a Bose gas with contact interactions starting from a Dirichlet form, with a proper weight function $\phi$ (see eq.~\eqref{weightFunction}).
We stress that the weight function $\phi$, and therefore the Hamiltonian, is characterized by a single parameter $m$, and this means that the scattering length of the two-body interaction, the (inevitable) three- and four-body repulsion required to avoid the collapse, are all expressed in terms of this unique parameter $m$.
Making use of general results on singular perturbations of s.a. operators, one can show that the Hamiltonian defined in~\cite[example 4]{AHK} is a special case of ours.
More precisely, we can recover such a Hamiltonian if we set 
$$\alpha_0=-m, \quad\text{with} \quad m \geq 0,\qquad \gamma=2 \quad\text{and}\quad  \theta(r)= e^{- m r}\;.$$
Notice that the two-body scattering length must be negative, $\gamma$ must be fixed and the function $\theta$ must be an exponential depending on the scattering length.\newline
However, the theory developed in~\cite{AHK} is more general, and we show that, by modifying the weight function $\phi$ (see eq.~\eqref{genericPhiDF}), it is possible to build a larger class of many-body Hamiltonians with contact interactions, which is almost coincident with the one provided with our method.\newline
We conclude by observing that the construction of the Hamiltonian via Dirichlet forms is simple and elegant but also rather implicit.
In fact, the relation between the choice of the function $\phi$ and the type of boundary conditions imposed along the coincidence hyperplanes is not clear \emph{a priori}.
Such a relation is clarified as a consequence of the connection we establish with the class of Hamiltonians constructed with our method.

\vs

\n
For the reader's convenience, we collect some of the notation adopted in the paper.

\vspace{0.25cm}

\n - Given the Euclidean space $(\R^n,\+\cdot\+)$, $\vec{x}$ is a vector in $\R^n$ and $x=\abs{\vec{x}}$.\newline
- $\mathcal{F}:\psi\longmapsto\FT{\psi}$ is the unitary Fourier transform of $\psi$.\newline
- For any $p\in \-[1,\pInfty]$ and $\Omega$ open set in $\R^n$, $\Lp{p}(\Omega,\mu)$ is the Banach space of $p\+$-integrable functions with respect to the Borel measure $\mu$.
We use $\Lp{p}(\Omega)$ in case $\mu$ is the Lebesgue measure and we denote $\define{\norm{\+\cdot\+}[p]\-; \-\norm{\+\cdot\+}[\Lp{\+p}[\R^n]]}$.\newline
- If $\hilbert$ is a complex Hilbert space, we denote by $\scalar{\cdot\+}{\!\cdot}_{\hilbert}\+$,  $\define{\norm{\+\cdot\+}[\hilbert]\!; \!\sqrt{\scalar{\cdot\+}{\!\cdot}_{\hilbert}} }\;$ the inner product and the induced norm, respectively.\newline
- If $\hilbert=\Lp{2}[\R^n]$, we simply denote by $\scalar{\cdot\+}{\!\cdot}$, $\norm{\+\cdot\+}$ the inner product and the induced norm, respectively.\newline
- $H^s(\R^n)$ is the standard Sobolev space of order $s>0$ in $\R^n$.\newline
- $f|_{\pi}\in H^s(\R^{dn})$ is the trace of $f\in H^{s\++\frac{d}{2}}(\R^{d(n+1)})$ on the hyperplane $\pi$ of codimension $d$.\newline
- Given the Hilbert spaces $X$ and $Y$, $\linear{X,Y}$ and $\bounded{X,Y}$ denote the set of linear operators and the Banach space of the bounded linear operators from $X$ to $Y$, respectively.\newline
Moreover, $\define{\linear{X};\linear{X,X}}$ and $\define{\bounded{X};\bounded{X,X}}$.\newline[-2.5]
- Given an integral operator $\mathcal{A}\-\in\-\linear{\Lp{2}(Y),\Lp{2}(X)}$, we denote by $A\Big(\!\-
        \begin{array}{c}
            x\\[-7.5pt]
            y
        \end{array}\!\!\Big)$ its kernel, with $x\-\in\- X$ and $y\-\in\- Y$.

\section{Formulation of the Main Results}\label{main}
In this section, we introduce some notation and formulate our main results.

\n
Let us denote by $G^\lambda$ the kernel of the operator $\resolvent{\mathcal{H}_0}[-\lambda]\-\in\-\bounded{\hilbert_N,\dom{\mathcal{H}_0}}$, with $\lambda\->\-0$, namely the Green's function associated with the differential operator $\maps{(-\Delta+\lambda)}{H^2(\R^{3N});\Lp{2}(\R^{3N})}$.
It is well known that
\begin{equation}\label{greenLaplace}
    G^\lambda\Big(\!\-
        \begin{array}{c}
            \vec{x}\\[-7.5pt]
            \vec{y}
        \end{array}\!\!\Big)\-=\tfrac{1}{\,(2\pi)^{3N/2}\!\-}\left(\-\tfrac{\!\!\sqrt{\lambda}}{\+\abs{\vec{x}\+-\+\vec{y}}\+}\-\right)^{\!\frac{3N}{2}-1}\!\!\mathrm{K}_{\frac{3N}{2}-1}\!\left(\-\sqrt{\lambda}\,\abs{\vec{x}\--\-\vec{y}}\right)\!,\qquad \vec{x},\vec{y}\in\R^{3N},
\end{equation}
where $\maps{\mathrm{K}_\mu}{\Rplus;\Rplus}$ is the modified Bessel function of the second kind (also known as Macdonald's function) of order $\mu\geq 0$.
Here we recall some properties of the function $\mathrm{K}_\mu$
\begin{subequations}\label{macdonaldProperties}
\begin{gather}
    z^\mu\+\mathrm{K}_\mu(z)\,\text{ is decreasing in }\,z\-\in\-\Rplus\+,\label{decreasingMacdonald}\\[-2.5pt]
    \mathrm{K}_\mu(z)=\frac{2^{\mu-1}\Gamma(\mu)}{z^\mu}+\oSmall{z^{-\mu}}\!,\qquad \text{for }\, z\to[\!\quad\!] 0^+\-, \,\mu>0,\label{originMacdonald}\\
    \mathrm{K}_\mu(z)=\sqrt{\-\tfrac{\pi}{2\+z}}\,e^{-z}\!\left[1+\tfrac{4\mu^2-1}{8\+z}+\oBig{z^{-2}}\right]\!,\qquad \text{for }\,z\to[\!\quad\!] \pInfty.\label{infinityMacdonald}
\end{gather}
\end{subequations}
We also define
\begin{align}\label{potentialDef}
    (\mathcal{G}^\lambda\xi)(\vec{x}_1,\ldots,\vec{x}_N)\vcentcolon&\mspace{-5mu}=8\pi\nquad\sum_{\mathcal{P}_N\+\ni\,\sigma\+=\+\{i,\,j\}}\-\integrate[\R^{3(N-1)}]{\xi(\vec{y},\vec{Y}_{\!\!\sigma});\mspace{-33mu}d\vec{y}d\vec{Y}_{\!\!\sigma}}\,G^\lambda\bigg(\!\!
        \begin{array}{r c l}
            \vec{x}_i\+,&\!\!\vec{x}_j\+, &\!\!\vec{X}_{\-\sigma}\\[-2.5pt]
            \vec{y},&\!\!\vec{y},&\!\!\vec{Y}_{\!\!\sigma}
        \end{array}\!\!\!\bigg)\\
    &\define*{;\sum_{\sigma\+\in\+\mathcal{P}_N}\!(\mathcal{G}_\sigma^\lambda\+\xi)(\vec{x}_1,\ldots,\vec{x}_N)},\nonumber
\end{align}
where we have set $\vec{X}_{\-\sigma}\-\vcentcolon=\-(\vec{x}_1,\ldots\check{\vec{x}}_\sigma\ldots,\vec{x}_N)$.
With a slight abuse of terminology, given $\sigma\-=\-\{i,j\}$, we refer to $\mathcal{G}^\lambda_\sigma\+\xi\-\in\!\LpS{2}[\R^6\-,d\vec{x}_i\+ d\vec{x}_j\mspace{-0.75mu}]\-\otimes\-\LpS{2}[\R^{3(N-2)}\-,d\vec{X}_{\-\sigma}\-]$ as the \emph{potential} generated by the \emph{charge} $\xi(\vec{x},\vec{X}_{\-\sigma}\-)$ distributed along $\pi_\sigma\+$, while $\mathcal{G}^\lambda\xi\-\in\-\hilbert_N$ shall consequently be the total potential.
Clearly, due to bosonic symmetry, each charge associated with the coincidence hyperplane $\pi_\sigma$ is equal to any other charge distributed along another hyperplane $\pi_\nu\+$.\newline
This nomenclature takes inspiration from the fact that $\mathcal{G}^\lambda_\sigma\+\xi$ solves the following distributional equation
\begin{equation}\label{potentialDistributionalDef}
    (\mathcal{H}_0\-+\lambda)(\mathcal{G}^\lambda_\sigma\+\xi)(\vec{x}_1,\ldots,\vec{x}_N)=8\pi\,\delta(\vec{x}_i\--\vec{x}_j)\,\xi\!\left(\tfrac{\vec{x}_i+\+\vec{x}_j}{2}\-,\vec{X}_{\-\sigma}\-\right)\!.
\end{equation}
Moreover, one can check that the Fourier transform of the potential reads
\begin{equation}\label{potentialFourier}
    (\widehat{\mathcal{G}^\lambda_\sigma\+\xi})(\vec{p}_1,\ldots,\vec{p}_N)=\sqrt{\tfrac{8}{\pi}\+}\+\frac{\FT{\xi}(\vec{p}_i\-+\vec{p}_j,\vec{P}_{\!\sigma})}{p_1^2+\ldots+p_N^2+\lambda}.
\end{equation}
We recall that the operator $\mathcal{G}^\lambda$ is injective and its range is contained in $\hilbert_N$\comment{because of the correspondence highlighted by identity~\eqref{abstractPotential}}.
Furthermore, we stress that $\ran{ \mathcal{G}^\lambda} \cap H^1(\R^{3N})=\{0\}$\comment{ because of Remark~\ref{uniqueDecomposition}}.

\n
We are now in position to introduce the quadratic form $\mathcal{Q}$ in $\hilbert_N$ 
\begin{align}
    &\dom{\mathcal{Q}}\-\vcentcolon= \!\left\{\psi \in\hilbert_N \,\big |\;\psi-\mathcal{G}^\lambda\xi=w^\lambda\-\in\- H^1(\R^{3N}),\: 
  \xi \in \dom{\Phi^{\lambda}}, 
    \:\lambda\->\-0\right\}\-,\nonumber\\
    \label{QF}&\mspace{7.5mu}\mathcal{Q}[\psi]\vcentcolon=\|\mathcal{H}_0^{\frac{1}{2}}w^\lambda\|^2+\lambda\|w^\lambda\|^2\--\lambda\- \norm{\psi}^2\-+\Phi^\lambda[\xi] 
\end{align}
where $\Phi^\lambda$ is  the following hermitian ``quadratic form of the charges'' in $\hilbert*_{N-1}\vcentcolon=\Lp{2}[\R^3]\-\otimes\-\LpS{2}[\R^{3(N-2)}]$
\begin{gather}
 \dom{\Phi^{\lambda}} \vcentcolon=\!\left\{ \xi \-\in\- \hilbert*_{N-1} \, \Big |\; \xi \-\in\- H^{\frac{1}{2}}(\R^{3(N-1)}) \right\}, \nonumber\\
  \label{phiDefinition} \mspace{7.5mu} \Phi^\lambda \vcentcolon= 4\pi\+ N(N\!-\!1) \left(\Phi^\lambda_{\mathrm{diag}}\!+\Phi^\lambda_{\mathrm{off};\+0}\-+\Phi^\lambda_{\mathrm{off};\+1}\-+\Phi_{\mathrm{reg}} \right)\!,\qquad\lambda>0
\end{gather}
and
\begin{subequations}\label{phiComponents}
\begin{gather}
    \label{diagPhi}\Phi^\lambda_{\mathrm{diag}}[\xi]\vcentcolon=\tfrac{1}{\!\-\sqrt{2\,}\,}\!\-\integrate[\R^{3(N-1)}]{\sqrt{\tfrac{1}{2}\+p^2\-+\-P^2\-+\lambda\+}\,\abs{\FT{\xi}(\vec{p},\vec{P})}^2;\mspace{-33mu}d\vec{p}\+d\mspace{-0.75mu}\vec{P}},\\
    \label{off0Phi}\Phi^\lambda_{\mathrm{off;\+0}}[\xi]\!\vcentcolon=\!-\tfrac{(N\--2)(N\--3)\!}{2\pi^2}\!\!\integrate[\R^{3N}]{\-\frac{\conjugate*{\-\FT{\xi}(\vec{p}_3\!+\-\vec{p}_4,\vec{p}_1,\vec{p}_2,\vec{P}\-)\!}\;\FT{\xi}(\vec{p}_1\!\-+\-\vec{p}_2,\vec{p}_3,\vec{p}_4,\vec{P}\-)\!}{p_1^2+p_2^2+p_3^2+p_4^2+P^2+\lambda};\mspace{-11.5mu}d\vec{p}_1\-\cdots d\vec{p}_4 d\mspace{-0.75mu}\vec{P}}\-,\\
    \label{off1Phi}\Phi^\lambda_{\mathrm{off};\+1}[\xi]\-\vcentcolon=\! -\tfrac{2\+(N\--2)\!}{\pi^2}\!\! \integrate[\R^{3N}]{\-\frac{\conjugate*{\FT{\xi}(\vec{p}_1\!+\-\vec{p}_2, \vec{p}_3, \vec{P})\-}\,\FT{\xi}(\vec{p}_1\!+\-\vec{p}_3, \vec{p}_2, \vec{P})}{p_1^2+p_2^2+p_3^2+P^2+\lambda};\mspace{-11.5mu}d\vec{p}_1 d\vec{p}_2\+d\vec{p}_3 d\mspace{-0.75mu}\vec{P}}\-,
\end{gather}
\begin{equation}\label{regPhi}
    \begin{split}
    &\Phi_{\mathrm{reg}}[\xi]\vcentcolon=\big(\Phi^{(2)}_{\mathrm{reg}}+\Phi^{(3)}_{\mathrm{reg}}+\Phi^{(4)}_{\mathrm{reg}}\big)[\xi],\\
    \Phi^{(2)}_{\mathrm{reg}}[\xi]\vcentcolon=\,\alpha_0\norm{\xi}[\hilbert*_{N-1}]^2\-,\mspace{27mu}&\mspace{27mu}
    \Phi^{(3)}_{\mathrm{reg}}[\xi]\vcentcolon=\,(N\!-\-2)\+\gamma\!\-\integrate[\R^{3(N-1)}]{\frac{\theta(\vec{x}\--\-\vec{x}')}{\abs{\vec{x}\--\-\vec{x}'}}\:\abs{\xi(\vec{x},\vec{x}'\-,\vec{X})}^2;\mspace{-33mu}d\vec{x}d\vec{x}'\-d\vec{X}},\\
    \Phi^{(4)}_{\mathrm{reg}}[\xi]\vcentcolon=&\,\tfrac{\!(N\--2)(N\--3)\!}{4}\,\gamma\!\-\integrate[\R^{3(N-1)}]{\frac{\theta(\vec{x}'\!-\-\vec{x}'')}{\abs{\vec{x}'\!-\-\vec{x}''}}\:\abs{\xi(\vec{x},\vec{x}'\-,\vec{x}''\!,\vec{X})}^2;\mspace{-33mu}d\vec{x}d\vec{x}'\-d\vec{x}''\!d\vec{X}}.
    \end{split}
\end{equation}
\end{subequations}

\n
Notice that, according to our notation, lowercase vectors represent positions in $\R^3$, whereas uppercase vectors (unless otherwise specified) correspond to collections of the missing three-dimensional integration variables.
\comment{
We also set
\begin{equation}\begin{split}
    \Phi_0[\xi]\vcentcolon=&\:\alpha_0\norm{\xi}[\hilbert*_{N-1}]^2\!+(N\!-\-2)\+\gamma\!\-\integrate[\R^{3(N-1)}]{\beta(\vec{x}-\vec{x}')\:\abs{\xi(\vec{x},\vec{x}'\-,\vec{X})}^2;\mspace{-33mu}d\vec{x}d\vec{x}'\-d\vec{X}}+\\
    &+\,\tfrac{\!(N\--2)(N\--3)\!}{4}\,\gamma\!\-\integrate[\R^{3(N-1)}]{\beta(\vec{x}'\--\vec{x}'')\:\abs{\xi(\vec{x},\vec{x}'\-,\vec{x}''\!,\vec{X})}^2;\mspace{-33mu}d\vec{x}d\vec{x}'\-d\vec{x}''\!d\vec{X}},\qquad \beta: \vec{x}\longmapsto \frac{\theta(\vec{x})-1}{\abs{\vec{x}}}
    \end{split}
\end{equation}
so that $\Phi_{\mathrm{reg}}=\Phi_0+\breve{\Phi}_{\mathrm{reg}}$ with
\begin{equation}
\begin{split}
    \breve{\Phi}_{\mathrm{reg}}[\xi]\vcentcolon=&\:(N\!-2)\+\gamma\!\-\integrate[\R^{3(N-1)}]{\frac{\abs{\xi(\vec{x},\vec{x}'\-,\vec{X})}^2\!\-}{\abs{\vec{x}-\vec{x}'}}\,;\mspace{-33mu}d\vec{x}d\vec{x}'\-d\vec{X}}+\\
    &+\,\tfrac{\!(N\--2)(N\--3)\!}{4}\,\gamma\!\-\integrate[\R^{3(N-1)}]{\frac{\abs{\xi(\vec{x},\vec{x}'\-,\vec{x}''\!,\vec{X})}^2\!\-}{\abs{\vec{x}'\--\vec{x}''}};\mspace{-33mu}d\vec{x}d\vec{x}'\-d\vec{x}''\!d\vec{X}}.
    \end{split}
\end{equation}
This decomposition is meaningful, since the quantity $\Phi_0$ is bounded, owing to~\eqref{positiveBoundedCondition} that is equivalent to $\beta\in\Lp{\infty}[\R^3]$.
In particular, this implies that the quantity $\Phi_0$ cannot play any role in the compensation of the singularities contained in the terms $\Phi_{\mathrm{off};\+0}$ and $\Phi_{\mathrm{off};\+1}$, and the only relevant contribution comes from $\breve{\Phi}_{\mathrm{reg}}$.
}

\n
\begin{note}
    We observe that $\mathcal{Q}$ defines an extension of the quadratic form associated to the free Hamiltonian $\mathcal{H}_0\+$.
    Indeed, let $\psi\in\dom{\mathcal{Q}}\cap H^1(\R^{3N})$, \ie~the form domain of the free Hamiltonian.
    Due to the injectivity of $\mathcal{G}^\lambda$, this is possible if and only if the charge $\xi$ is zero.
    Hence,
    \begin{gather*}
        \dom{\mathcal{Q}}\cap H^1(\R^{3N})=\hilbert_N\cap H^1(\R^{3N}),\\
        \mathcal{Q}[\psi]=\|\mathcal{H}_0^{\frac{1}{2}}\psi\|^2,\qquad \forall \psi\in\hilbert_N\cap H^1(\R^{3N}).
    \end{gather*}
\end{note}
\begin{note}\label{vectorDecomposition}
    In Appendix~\ref{heuristics} we will give a heuristic motivation showing that $\mathcal{Q}$ is the quadratic form associated with the formal Hamiltonian introduced in Section~\ref{intro} and characterized by the boundary condition~\eqref{mfBC}.
    However, we provide some preliminary observations here.
    According to Proposition~\ref{diagonalSingularityCancellation}, we have for each $\sigma=\{i,j\}$ an explicit asymptotic behavior of the potential around $\pi_\sigma$ in the weak topology given by
    $$(\mathcal{G}_\sigma^\lambda\+\xi)(\vec{x}_1,\ldots,\vec{x}_N) = \tfrac{\xi\left(\frac{\vec{x}_i+\+\vec{x}_j}{2}\-,\,\vec{X}_{\-\sigma}\!\right)}{\abs{\vec{x}_i-\+\vec{x}_j}}-(\Gamma^{\+\lambda}_{\!\mathrm{diag}}\+\xi)\!\left(\tfrac{\vec{x}_i+\+\vec{x}_j}{2}\-,\vec{X}_{\-\sigma}\!\right)\-+\oSmall{1}\!, \qquad\text{as }\,\abs{\vec{x}_i\--\vec{x}_j}\longrightarrow 0,$$
    where $\maps{\Gamma^{\+\lambda}_{\!\mathrm{diag}}}{\hilbert*_{N-1}\cap H^1(\R^{3(N-1)});\hilbert*_{N-1}}$ satisfies
    \begin{subequations}\label{GammasDefHyp}
    \begin{equation}\label{diagGammaHyp}
        (\widehat{\Gamma^{\+\lambda}_{\!\mathrm{diag}}\+\xi})(\vec{k},\vec{p}_1,\ldots,\vec{p}_{N-2})=\tfrac{1}{\!\-\sqrt{2\+}\+}\sqrt{\tfrac{1}{2}\+k^2\-+p_1^2+\ldots+p^2_{N-2}+\lambda\,}\,\FT{\xi}(\vec{k},\vec{p}_1,\ldots,\vec{p}_{N-2}).
    \end{equation}
    Moreover, the other potentials $\mathcal{G}^\lambda_\nu\+\xi$ with $\nu\neq \sigma$ are not singular along $\pi_\sigma$, since (see again Proposition~\ref{diagonalSingularityCancellation})
    \begin{equation*}
        (\mathcal{G}^\lambda_\nu\+\xi)(\vec{x}_1,\ldots,\vec{x}_N)=-(\Gamma^{\+\nu\+\smallsetminus\sigma,\+\lambda}_{\!\mathrm{off};\+\abs{\sigma\+\cap\+\nu}}\+\xi)\!\left(\tfrac{\vec{x}_i+\+\vec{x}_j}{2}\-,\vec{X}_{\-\sigma}\!\right)\-+\oSmall{1}\-, \qquad\text{as }\,\abs{\vec{x}_i-\vec{x}_j}\longrightarrow 0,
    \end{equation*}
    where, given $\mu=\{k,\ell\}\!\in\!\mathcal{P}_{N\--2}$ and $\mathfrak{m}\!\in\!\{1,\ldots,N\!-\-2\}$, the operators $\Gamma^{\+\mu,\+\lambda}_{\!\mathrm{off};\+0}$ and $\Gamma^{\+\{\mathfrak{m}\},\+\lambda}_{\!\mathrm{off};\+1}$, both defined in $\hilbert*_{N-1}\-\cap \- H^1(\R^{3(N-1)})$
    (see Proposition~\ref{GammaDomain}), are given by
    \begin{align}
        (\widehat{\Gamma^{\+\mu,\+\lambda}_{\!\mathrm{off};\+0}\+\xi})(\vec{k},\vec{p}_k,\vec{p}_\ell,\vec{p}_1,\ldots\check{\vec{p}}_\mu\ldots,\vec{p}_{N-2})=&-\frac{1}{\,\pi^2\!\-}\integrate[\R^3]{\frac{\FT{\xi}(\vec{p}_k\-+\vec{p}_\ell\+,\vec{k}\--\-\vec{q},\vec{q},\vec{p}_1,\ldots\check{\vec{p}}_\mu\ldots,\vec{p}_{N-2})}{\abs{\vec{k}\--\-\vec{q}}^2+q^2+p_1^2+\ldots+p_{N-2}^2+\lambda};\-d\vec{q}},\label{off0GammaHyp}\\
        \big(\widehat{\Gamma^{\+\{\mathfrak{m}\},\+\lambda}_{\!\mathrm{off};\+1}\+\xi}\big)(\vec{k},\vec{p}_{\mathfrak{m}},\vec{p}_1,\ldots\check{\vec{p}}_{\mathfrak{m}}\ldots,\vec{p}_{N-2})=&-\frac{1}{\,\pi^2\!\-}\-\integrate[\R^3]{\frac{\FT{\xi}(\vec{k}\--\-\vec{q}+\vec{p}_{\mathfrak{m}}\mspace{0.75mu},\vec{q},\vec{p}_1,\ldots\check{\vec{p}}_{\mathfrak{m}}\ldots,\vec{p}_{N-2})}{\abs{\vec{k}\--\-\vec{q}}^2+q^2+p_1^2+\ldots+p_{N-2}^2+\lambda};\-d\vec{q}}.\label{off1GammaHyp}
    \end{align}
    \end{subequations}
    This means that the total potential has the following asymptotic expansion near $\pi_\sigma$ in the weak topology
    \begin{equation}\label{potentialAsymptotics}
        (\mathcal{G}^\lambda\xi)(\vec{x}_1,\ldots,\vec{x}_N) = \tfrac{\xi\left(\frac{\vec{x}_i+\+\vec{x}_j}{2}\-,\,\vec{X}_{\-\sigma}\!\right)}{\abs{\vec{x}_i-\+\vec{x}_j}}-(\Gamma^{ \+\sigma,\+\lambda}\+\xi)\!\left(\tfrac{\vec{x}_i+\+\vec{x}_j}{2}\-,\vec{X}_{\-\sigma}\!\right)\-+\oSmall{1}\-, \qquad\text{as }\,\abs{\vec{x}_i\--\vec{x}_j}\longrightarrow 0,
    \end{equation}
    where $\Gamma^{\+\sigma,\+\lambda}\in\linear{\hilbert*_{N-1}}$ is defined by
    \begin{equation}\label{GammaTMSHyp}
        \Gamma^{\+\sigma, \+\lambda}\!\vcentcolon=\Gamma^{\+\lambda}_{\!\mathrm{diag}}+\sum_{\nu\+\neq \+\sigma}\Gamma^{\+\nu\+\smallsetminus\sigma,\+\lambda}_{\!\mathrm{off};\+\abs{\sigma\+\cap\+\nu}}\+,\qquad\dom{\Gamma^{\+\sigma,\+\lambda}}=H^1(\R^{3(N-1)})\cap \hilbert*_{N-1}.
    \end{equation}
    Thus, in order for boundary condition~\eqref{mfBC} to be satisfied (at least in the weak topology), we can decompose any element $\psi\in \hilbert_N\cap H^2(\R^{3N}\!\smallsetminus \pi)$ in the domain of the Hamiltonian $\mathcal{H}$ we want to construct as
    \begin{equation}\label{psiDec}
        \psi = \mathcal{G}^\lambda\xi + \varphi_\lambda
    \end{equation}
    where $\varphi_\lambda$ is an element of $\hilbert_N\cap H^2(\R^{3N})$, so that $\psi$ inherits from the total potential $\mathcal{G}^\lambda\xi$ all the singularities at the leading order near $\pi$.
    Moreover, since our class of s.a. extensions is characterized by the behavior of $\psi$ at $\oBig{1}$ in the expansion along $\pi$, the charge $\xi$ and the regular component $\varphi_\lambda$ must fulfill the following constraint
    \begin{equation}\label{regularComponentTraced}
        \varphi_\lambda|_{\pi_\sigma} \-= (\Gamma^{\+\sigma,\+\lambda}\!+\Gamma^{\+\sigma}_{\!\mathrm{reg}})\+\xi\+.
    \end{equation}
\end{note}


\n
Finally, for any fixed $N\-\geq\- 3$, we define the threshold parameter\footnote{Notice that $\gamma_c$ is surely not optimal, since already for $N\!=\-3$ we have $\gamma_c \->\- \gamma_c^{\text{optimal}}\!= \tfrac{4}{3} -  \tfrac{\sqrt{3}}{\pi}\,,$ where $\gamma_c^{\text{optimal}}$ has been explicitly computed in \cite[eq.~(13)]{FeT2}.}
\begin{equation}\label{criticalGammaN}
    \gamma_c\vcentcolon=2-\frac{1}{\pi(N\!-\-2)\!\left(\+\frac{1}{\!\-\sqrt{3\+}\+}\-+\frac{N\--3}{8}\-\right)\!\-}\,.
\end{equation}

\n
The main technical point of our paper is the proof of the following proposition.
\begin{prop}\label{closednessTheo}
    Let $\Phi^\lambda$ be the hermitian quadratic form in $\hilbert*_{N-1}$ defined by~\emph{(\ref{phiDefinition},~\ref{phiComponents})} and assume $\gamma\->\-\gamma_c\+$.\newline
    Then $\Phi^\lambda$ is closed for any $\lambda\- >\-0$ and satisfies
    \begin{equation*}
        \Phi^\lambda[\xi]\geq 4\pi N (N\!-\-1)\!\left[\sqrt{\tfrac{\lambda}{2}}\,\sqrt{1\--\-\Lambda_N^2}+\alpha_0-\tfrac{(N+1)(N-2)\+\gamma}{4\,b}\right]\!\norm{\xi}[\hilbert*_{N-1}]^2\!,\quad \forall \lambda\->\-0,\,\xi\-\in \hilbert*_{N-1},
    \end{equation*}
    where
    \begin{equation}\label{simplerLambda}
        \Lambda_N\vcentcolon=\max\!\left\{0,1\--(N\!-\-2)\+\pi\-\left(\tfrac{1}{\sqrt{3}}+\tfrac{N\--3}{8}\right)\!(\gamma\--\-\gamma_c)\-\right\}\-\in[0,1).
    \end{equation}
    Moreover, $\Phi^\lambda$ is coercive for any $\lambda\->\-\lambda_0$, with
    \begin{equation}\label{lambdaNote}
        \define{\lambda_0;\frac{2\max\{0,-\alpha_0+(N\!+\-1)(N\!-\-2)\,\gamma/(4b)\}^2\!\-}{1\--\-\Lambda_N^2}}\,.
    \end{equation}
\end{prop}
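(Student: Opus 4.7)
The plan is to show that all the ``genuinely negative'' contributions to $\Phi^\lambda$ are quantitatively dominated by the diagonal part $\Phi^\lambda_{\mathrm{diag}}$ together with the $\theta$-regularised Coulomb pieces $\Phi^{(3)}_{\mathrm{reg}}$, $\Phi^{(4)}_{\mathrm{reg}}$, with coefficients strictly less than $1$ once $\gamma > \gamma_c$. The only contributions to $\Phi^\lambda$ whose sign is a priori unclear are the two off-diagonal pieces $\Phi^\lambda_{\mathrm{off};\,0}$, $\Phi^\lambda_{\mathrm{off};\,1}$ and the constant term $\alpha_0$; everything else is manifestly non-negative. The real work is therefore to quantitatively dominate $|\Phi^\lambda_{\mathrm{off};\,0}|$ and $|\Phi^\lambda_{\mathrm{off};\,1}|$.

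\textbf{Step 1 (Schur estimates for the off-diagonal pieces).} For $\Phi^\lambda_{\mathrm{off};\,1}$ I would bound the integrand pointwise in momentum space by Cauchy--Schwarz with a free Schur weight $A(\vec p_1,\vec p_2,\vec p_3,\vec P)$ and integrate out the ``exchanged'' momentum $\vec p_1$. The resulting three-dimensional integral, of the form $\int(|\vec p_1-\vec a|^2+M^2)^{-1}(p_1^2+M'^2)^{-1}d\vec p_1$, is elementary via Feynman parameters, and after optimisation in $A$ one obtains
\begin{equation*}
|\Phi^\lambda_{\mathrm{off};\,1}[\xi]| \;\le\; \tfrac{N-2}{\pi\sqrt{3}}\,\Phi^\lambda_{\mathrm{diag}}[\xi] \;+\; c_3 \int_{\R^{3(N-1)}} \tfrac{|\xi(\vec x,\vec x',\vec X)|^2}{|\vec x-\vec x'|}\, d\vec x\, d\vec x'\, d\vec X,
\end{equation*}
with an explicit $c_3$: this is where the constant $1/\sqrt{3}$ entering $\gamma_c$ originates. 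An analogous but combinatorially heavier computation for $\Phi^\lambda_{\mathrm{off};\,0}$ (coupling four distinct momenta) produces the coefficient $(N-2)(N-3)/(8\pi)$ in front of $\Phi^\lambda_{\mathrm{diag}}$ and a four-body bare Coulomb tail $\int|\xi|^2/|\vec x'-\vec x''|$, matching the $(N-3)/8$ in $\gamma_c$.

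\textbf{Step 2 (absorbing the bare Coulomb tails).} From $1-r/b\le\theta(r)$ in~\eqref{positiveBoundedCondition} one reads off $1/r\le\theta(r)/r+1/b$ pointwise, so each bare Coulomb tail produced in Step~1 is dominated by the corresponding summand of $\Phi^{(3)}_{\mathrm{reg}}$ or $\Phi^{(4)}_{\mathrm{reg}}$ up to an error $O(\|\xi\|_{\hilbert*_{N-1}}^2/b)$. Matching the prefactors $(N-2)\gamma$ and $(N-2)(N-3)\gamma/4$ of~\eqref{regPhi} against the $c_3$, $c_4$ produced in Step~1 fixes exactly the threshold~\eqref{criticalGammaN}, and the accumulated $1/b$-errors collect into the constant $-(N+1)(N-2)\gamma/(4b)$ appearing in the statement. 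Substituting everything back and using the definition of $\gamma_c$ yields the intermediate estimate
\begin{equation*}
\Phi^\lambda[\xi] \;\ge\; 4\pi N(N\!-\!1)\!\left[(1-\Lambda_N)\,\Phi^\lambda_{\mathrm{diag}}[\xi] + \big(\alpha_0-\tfrac{(N+1)(N-2)\gamma}{4b}\big)\|\xi\|_{\hilbert*_{N-1}}^2\right]\!,
\end{equation*}
which is the first lower bound on $\Phi^\lambda$.

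\textbf{Step 3 (sharpening, closedness and coercivity).} To upgrade the crude factor $(1-\Lambda_N)\sqrt{\lambda/2}$ to the claimed $\sqrt{1-\Lambda_N^2}\,\sqrt{\lambda/2}$, I would rerun Step~1 at the operator level after the unitary conjugation by $(\tfrac12 p^2+P^2+\lambda)^{1/4}$: the off-diagonal Fourier multiplier becomes a bounded operator of norm $\Lambda_N$ on $L^2$, and a direct interpolation between the pointwise lower bound $\Phi^\lambda_{\mathrm{diag}}\ge\sqrt{\lambda/2}\|\xi\|^2$ and the Schur bound yields the sharper spectral estimate. Closedness of $\Phi^\lambda$ on $\dom{\Phi^\lambda}=H^{1/2}(\R^{3(N-1)})\cap\hilbert*_{N-1}$ then follows from the KLMN theorem, since $\Phi^\lambda_{\mathrm{diag}}$ is equivalent to the $H^{1/2}$-norm and Steps~1--2 identify all remaining contributions as form-bounded with relative bound $\Lambda_N<1$ (strictly, because $\gamma>\gamma_c$). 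Coercivity for $\lambda>\lambda_0$ is an immediate consequence of the lower bound, with $\lambda_0$ chosen precisely so that $\sqrt{\lambda/2}\sqrt{1-\Lambda_N^2}$ strictly exceeds $|\alpha_0-(N+1)(N-2)\gamma/(4b)|$.

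The main technical obstacle is Step~1: the exact numerical constants $1/\sqrt{3}$ and $1/8$ emerge only after a closed-form evaluation of the three-dimensional exchange integrals with the optimal Schur weight, and the combinatorial bookkeeping required to match the resulting Coulomb prefactors against those inside $\Phi^{(3)}_{\mathrm{reg}}$ and $\Phi^{(4)}_{\mathrm{reg}}$ is what ultimately hard-codes the precise form of $\gamma_c$ in~\eqref{criticalGammaN}.
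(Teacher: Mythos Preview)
Your overall strategy—dominate the off-diagonal pieces by the diagonal plus Coulomb-type terms, then absorb the Coulomb into $\Phi_{\mathrm{reg}}$—is morally right, but the mechanism you describe in Step~1 is not the paper's and, as written, does not produce the $\gamma$-dependent $\Lambda_N$ of~\eqref{simplerLambda}.

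In the paper the logic runs in the opposite order. First, the position-space representation (Proposition~3.2, Remark~\ref{negativeContributionsHighlighted}) yields the \emph{exact} splitting
\[
\Phi^\lambda_{\mathrm{off};1}[\xi]\;=\;-2(N\!-\!2)\!\int\!\frac{e^{-\sqrt{\lambda/2}\,|\vec x-\vec x'|}}{|\vec x-\vec x'|}\,|\xi|^2\;+\;(\text{non-negative}),
\]
and analogously for $\Phi^\lambda_{\mathrm{off};0}$. No Schur test enters here, and crucially no $\Phi^\lambda_{\mathrm{diag}}$ term appears at this stage. One then adds $\Phi^{(3)}_{\mathrm{reg}}$ and $\Phi^{(4)}_{\mathrm{reg}}$ (using $\theta(r)\ge 1-r/b$) to obtain a \emph{residual} Coulomb integral with coefficient $\min\{0,\gamma-2\}$. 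Only this residual is bounded against $\Phi^0_{\mathrm{diag}}$, via the sharp Hardy--Rellich inequality~\eqref{Hardy-Rellich} followed by the elementary inequalities $|\vec p-2\vec p'|\le\sqrt{6}\sqrt{\tfrac12 p^2+p'^2}$ and $|\vec p'-\vec p''|\le\sqrt{2}\sqrt{p'^2+p''^2}$. That is where $1/\sqrt{3}$ and $1/8$ originate, and because the Hardy--Rellich step acts on a term already proportional to $(\gamma-2)$, the resulting bound $-\Lambda_N\Phi^0_{\mathrm{diag}}$ carries the correct $\gamma$-dependence.

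Your Step~1, by contrast, would give a $\Phi^\lambda_{\mathrm{diag}}$ coefficient that is $\gamma$-independent. After absorbing the Coulomb tails into $\Phi_{\mathrm{reg}}$ in Step~2, the relative bound cannot shrink as $\gamma$ grows, hence cannot equal the paper's $\Lambda_N$ (which decreases linearly in $\gamma$ and vanishes at $\gamma=2$). The phrase ``matching the prefactors\ldots\ fixes exactly the threshold $\gamma_c$'' therefore does not follow from what you wrote: a split $|\Phi^\lambda_{\mathrm{off};\iota}|\le A_\iota\Phi^\lambda_{\mathrm{diag}}+c_\iota\cdot(\text{Coulomb})$ with fixed $A_\iota,c_\iota$ would yield a threshold of the form $\gamma\ge c_\iota/(\text{const})$ \emph{and} a separate condition $A_0+A_1<1$, not the single formula~\eqref{criticalGammaN}.

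Step~3 is also more elaborate than necessary. The improvement from $(1-\Lambda_N)\sqrt{\lambda/2}$ to $\sqrt{1-\Lambda_N^2}\,\sqrt{\lambda/2}$ is a one-line pointwise minimisation: with $t=\sqrt{\tfrac12 p^2+P^2}$, the function $t\mapsto\sqrt{t^2+\lambda}-\Lambda_N t$ attains its minimum $\sqrt{\lambda(1-\Lambda_N^2)}$ at $t=\Lambda_N\sqrt{\lambda/(1-\Lambda_N^2)}$. No operator conjugation or interpolation is needed. Closedness for \emph{every} $\lambda>0$ (not just large $\lambda$) is then obtained not via KLMN but by showing directly that any $\|\cdot\|_{\Phi^\lambda}$-Cauchy sequence is Cauchy in $H^{1/2}$, using the intermediate bound $\Phi^\lambda\ge 4\pi N(N\!-\!1)(1-\Lambda_N)\Phi^\lambda_{\mathrm{diag}}+\text{const}\cdot\|\xi\|^2$.
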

\n We observe that, by means of the previous proposition, $\Phi^\lambda$ is semi-bounded in $\hilbert*_{N-1}$.
In particular, it provides the definition of a unique s.a. operator
$\Gamma^\lambda$ associated with $\Phi^\lambda$ for any $\lambda\!>\!0$, satisfying
$$\Gamma^\lambda=4\pi N(N\!-\-1)\big(\Gamma^{\+\sigma,\+\lambda}\!+\Gamma^{\+\sigma}_{\!\mathrm{reg}}\big),\qquad \text{at }\, \sigma=\{N\!-\-1,N\}.$$
One can prove (see Proposition~\ref{GammaDomain}) that its domain does not depend on $\lambda$ and it is given by
$$\dom{\Gamma^\lambda}=\!\left\{ \xi \-\in \hilbert*_{N-1} \, \Big |\; \xi \-\in\- H^1(\R^{3(N-1)}) \right\}\!.$$
Additionally, $\Gamma^\lambda$ has a bounded inverse whenever $\lambda\!>\!\lambda_0\+$, since in this case $\Phi^\lambda$ is bounded from below by a positive constant.
Using Proposition~\ref{closednessTheo}, we prove our main result.
\begin{theo}\label{hamiltonianCharacterizationTheo}
    Let $\gamma_c$ and $\lambda_0$ be  given by~\eqref{criticalGammaN} and~\eqref{lambdaNote}, respectively, and assume $\gamma\!>\-\gamma_c\+$.
    Moreover, consider an essentially bounded function $\theta$ satisfying assumption~\eqref{positiveBoundedCondition}.\newline
    Then, the quadratic form $\mathcal{Q}$ in $\hilbert_N$ defined by~\eqref{QF} is independent of $\lambda$, closed and bounded from below by $-\lambda_0\+$.
    The s.a. and bounded from below operator uniquely defined by  $\mathcal{Q}$ is given by
    \begin{align}
        \dom{\mathcal{H}}&=\!\Big\{\psi\in\-\hilbert_N\,\big|\; \psi\--\mathcal{G}^\lambda\xi=\varphi_\lambda\!\in\-  H^2(\R^{3N}),\; \xi\- \in\- \dom{\Gamma^\lambda},\,\Gamma^\lambda\xi\- =  4\pi N(N\!-1)\+\varphi_\lambda|_{\pi_{\{N\--1,\,N\}}}\-,\,\lambda>\-0\Big\}\-,\nonumber\\
        \mathcal{H}\psi&=\mathcal{H}_0\+\varphi_\lambda\--\lambda\+  \mathcal{G}^\lambda \xi.\label{hamiltonianCharacterization}
    \end{align}
    The operator $\mathcal{H}$ is a s.a. extension of the operator $\dot{\mathcal{H}}_0$ defined by~\eqref{symmetricHamiltonianToBeExtended}, the elements of its domain satisfy the boundary condition~\eqref{mfBC} in the topology induced by the norm of the $\Lp{2}$ space and $\mathcal{H}\geq -\lambda_0\+$.\newline
    Moreover, for $\lambda\->\-\lambda_0$ and $f\-\in\-\hilbert_N$ one has \begin{equation}\label{resolventH}
    (\mathcal{H}+\lambda)^{-1}f=(\mathcal{H}_0+\lambda)^{-1}f+\mathcal{G}^\lambda\xi,
    \end{equation}
    where $\xi$ solves the equation
    \begin{equation}\label{chargeEq}
    \Gamma^\lambda\xi=4\pi N(N\!-\-1)\big((\mathcal{H}_0+\lambda)^{-1}\mspace{-2.25mu}f\big)|_{\pi_{\{N,N-1\}}}.
    \end{equation}
\end{theo}
\n We stress that Theorem~\ref{hamiltonianCharacterizationTheo} provides the following estimate for the infimum of the spectrum of the Hamiltonian
\begin{equation}\label{unoptimalLowerBound}
    \inf\spectrum{\mathcal{H}}\geq\begin{dcases}
        \quad 0, & \text{if }\,\alpha_0\geq \tfrac{(N+1)(N-2)\,\gamma}{4\+b},\\
        -\frac{\left[(N\!+\-1)(N\!-\-2)\,\gamma-4\+b\,\alpha_0\right]^2}{8\+b^2\+ (1-\Lambda_N^2)}, & \text{otherwise.}
    \end{dcases}
\end{equation}

\n
As a further result, we also establish the connection between the result in Theorem~\ref{hamiltonianCharacterizationTheo} and the approach in Dirichlet forms developed in~\cite{AHK}.
Let us recall that in~\cite{AHK} the authors introduce the following Dirichlet form
\begin{equation}\label{AlbeverioDF}
    \define{E_\phi[\psi];\integrate[\R^{3N}]{\phi^2(\vec{\mathrm{x}})\abs{\nabla \psi(\vec{\mathrm{x}})}^2;\mspace{-10mu}d\vec{\mathrm{x}}}},\qquad\psi\in \hilbert_N:\:\nabla \psi \in \Lp{2}[\R^{3N}\-,\,\phi^2(\vec{\mathrm{x}})\+d\vec{\mathrm{x}}],
\end{equation}
where, in the case of their \emph{example 4}
\begin{equation}\label{weightFunction}
    \phi\mspace{0.75mu}(\vec{x}_1,\ldots,\vec{x}_N)=\frac{1}{4\pi}\! \sum_{1\leq\+ i\,<\,j\+\leq N}\!\-\frac{\;e^{-m\,\abs{\vec{x}_i-\+\vec{x}_j}}\!}{\abs{\vec{x}_i\--\vec{x}_j}}\in\Lp*{2}[\R^{3N}],\qquad m\geq 0.
\end{equation}
Then it is shown that the quantity \begin{equation}\label{dirichletQF0}
\define{\mathcal{Q}_D[\psi];E_\phi\!\left[\psi/\phi\right]-2m^2\norm{\psi}^2}
\end{equation}
defines a singular perturbation of $\mathcal{H}_0\+,\hilbert_N\cap H^2(\R^{3N})$ supported on $\pi$.
More precisely, for any non-negative value of $m$, the quadratic form $\mathcal{Q}_D$ is associated to a bounded from below operator, denoted by $-\Delta_m$ such that
\begin{gather*}
    -\Delta_m \psi=\mathcal{H}_0\+\psi,\qquad \forall \psi\-\in\!\hilbert_N\cap H_0^2(\R^{3N}\setminus \pi),\\
    -\Delta_m\-\geq\- -2\+m^2.
\end{gather*}
In this way, one defines a $N$-body Hamiltonian with contact interactions with preassigned lower bound $-2m^2$, which is therefore stable.
However, in~\cite{AHK} the domain of the Hamiltonian is not characterized.
In other words, it is not specified which boundary condition is satisfied along the coincidence hyperplanes by the elements of the domain of the Hamiltonian, and therefore it is not clear which kind of contact interaction is defined.
\comment{
Our goal is to rewrite $\mathcal{Q}_D$ in our formalism so that an explicit comparison with our results can be made.
More precisely, in the next proposition we show that the Hamiltonian $-\Delta_m$ defined in~\cite{AHK} is a special case of our class of Hamiltonians.
\begin{prop}\label{ParticularCase}
Let $\mathcal{H}$ be the Hamiltonian in $\hilbert_N$ characterized by~\eqref{hamiltonianCharacterization} with   
\begin{equation*}
 \alpha_0=-m, \, \text{ with } \,  m \geq 0, \qquad  \gamma=2 \qquad \text{and} \qquad \theta: \,r\: \longmapsto\: e^{-m\+r},\quad r>0.
\end{equation*}
Then, one has
\begin{equation*}
    -\Delta_m\-=\mathcal{H}\,.
\end{equation*}
\end{prop}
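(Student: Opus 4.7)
The approach is to identify both operators at the level of quadratic forms. Since both $-\Delta_m$ and $\mathcal{H}$ are self-adjoint, lower-bounded singular perturbations of $\mathcal{H}_0$ supported on $\pi$, the strategy is to rewrite $\mathcal{Q}_D[\psi]$ in the canonical charge-plus-regular decomposition $\psi = \mathcal{G}^\lambda\xi + w^\lambda$ used for $\mathcal{Q}$, and to verify term by term the equality $\mathcal{Q}_D = \mathcal{Q}$ under the parameter choice $\alpha_0 = -m$, $\gamma = 2$, $\theta(r) = e^{-mr}$. Uniqueness of the self-adjoint operator associated with a closed semi-bounded hermitian form then yields $-\Delta_m = \mathcal{H}$.

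The key algebraic observation is the distributional identity
\begin{equation*}
(-\Delta + 2m^2)\,\phi \;=\; 2\!\sum_{\sigma\in\mathcal{P}_N}\!\delta_{\pi_\sigma},
\end{equation*}
which follows because each summand $\frac{1}{4\pi}\frac{e^{-m|x_i-x_j|}}{|x_i-x_j|}$ of $\phi$ is half the Green function of $-\Delta_{x_i}-\Delta_{x_j}+2m^2$ and is harmonic in the remaining variables. Performing the ground-state substitution $\psi = \phi u$ on the regularised region $\Omega_\varepsilon = \{|x_i-x_j|>\varepsilon\text{ for all }\sigma\in\mathcal{P}_N\}$ and integrating by parts, one finds in the limit $\varepsilon\to 0^+$
\begin{equation*}
\mathcal{Q}_D[\psi] \;=\; \|\nabla\psi\|^2 \;-\; \lim_{\varepsilon\to 0^+}\sum_{\sigma\in\mathcal{P}_N}\int_{\{|x_i-x_j|=\varepsilon\}}\!|\psi|^2\,\frac{\partial_\nu\phi}{\phi}\,ds,
\end{equation*}
where $\partial_\nu$ is the outward normal derivative from $\Omega_\varepsilon$. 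The $\varepsilon^{-1}$ divergence of the boundary integral is exactly matched by the divergent portion of $\|\nabla\psi\|^2$ coming from the $\xi/|x_i-x_j|$ leading asymptotic \eqref{potentialAsymptotics} of $\psi$.

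Expanding $\phi$ to first order near $\pi_\sigma$, with $R=(x_i+x_j)/2$, one computes
\begin{equation*}
\phi \;=\; \tfrac{1}{4\pi|x_i-x_j|}\Bigl[\,1+|x_i-x_j|\Bigl(-m+2\!\sum_{k\notin\sigma}\!\tfrac{e^{-m|R-x_k|}}{|R-x_k|}+\!\!\sum_{\substack{k<\ell\\k,\ell\notin\sigma}}\!\!\tfrac{e^{-m|x_k-x_\ell|}}{|x_k-x_\ell|}\Bigr)+O(|x_i-x_j|^2)\Bigr],
\end{equation*}
where one recognises the bracketed coefficient as the function $\alpha(R,\vec{X}_{\!\sigma})$ defined in \eqref{alfa} with exactly $\alpha_0=-m$, $\gamma=2$ and $\theta(r)=e^{-mr}$ (the three-body sum contributes $\gamma = 2$ in front, and the four-body sum inherits the $\gamma/2 = 1$ prefactor from the double counting). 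Combining this with $\psi = \mathcal{G}^\lambda\xi + w^\lambda$ and \eqref{potentialAsymptotics}, the renormalised trace over $\partial\Omega_\varepsilon$ collapses (after bosonic symmetrisation and the combinatorial factor $N(N-1)/2$) to $4\pi N(N-1)(\Phi^{(2)}_{\mathrm{reg}}+\Phi^{(3)}_{\mathrm{reg}}+\Phi^{(4)}_{\mathrm{reg}})[\xi]$, while the $\|\nabla\psi\|^2$ piece, once its divergent surface contribution is subtracted, reproduces $\|\mathcal{H}_0^{\frac12}w^\lambda\|^2+\lambda\|w^\lambda\|^2-\lambda\|\psi\|^2 + 4\pi N(N-1)(\Phi^\lambda_{\mathrm{diag}}+\Phi^\lambda_{\mathrm{off};0}+\Phi^\lambda_{\mathrm{off};1})[\xi]$ via the same Fourier-analytic identities used in appendix~\ref{heuristics} to derive $\mathcal{Q}$.

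The main obstacle is the rigorous control of the $\varepsilon\to 0^+$ limit: one must track the precise cancellation between the $\xi$--$\xi$ divergence of the surface integral and that of $\|\nabla\psi\|^2$, check that the cross terms arising when the singular part of $\psi$ at $\pi_\sigma$ meets $\mathcal{G}^\lambda_\nu\xi$ with $\nu\neq\sigma$ or the regular remainder $w^\lambda$ converge to the off-diagonal contributions $\Phi^\lambda_{\mathrm{off};0}$ and $\Phi^\lambda_{\mathrm{off};1}$, and justify the exchange of limit and sum over $\sigma$ by a dominated convergence argument on a dense form core (for instance $\xi\in\mathcal{S}(\R^{3(N-1)})$, $w^\lambda\in H^2(\R^{3N})$). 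Once equality of the forms is established on such a core, closedness and semi-boundedness of both $\mathcal{Q}$ (theorem~\ref{hamiltonianCharacterizationTheo}) and $\mathcal{Q}_D$ (by \cite{AHK}) yield $\mathcal{Q}_D=\mathcal{Q}$ on their common domain, and hence $-\Delta_m=\mathcal{H}$ by uniqueness of the representing self-adjoint operator.
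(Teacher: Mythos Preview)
Your approach is correct in spirit and shares the essential ingredients with the paper's proof (Green's identity on the $\varepsilon$-regularised domain, the asymptotic expansion of $\phi$ near $\pi_\sigma$, and the identification of the expansion coefficient with the function $\alpha$ of~\eqref{alfa}), but the route is genuinely different.

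The paper does not work at the level of the full form $\mathcal{Q}_D$ for general $\psi=w^\lambda+\mathcal{G}^\lambda\xi$. Instead it invokes the Posilicano framework to reduce the question to an equality of \emph{charge} forms, $\Phi^\lambda_D=\Phi^\lambda$, and then evaluates $\Phi^\lambda_D$ by taking $\psi=\mathcal{G}^\lambda\xi$ only. The gain is substantial: on $D^\epsilon$ one has both $(\mathcal{H}_0+\lambda)\mathcal{G}^\lambda\xi=0$ and $\Delta\phi=2m^2\phi$, so after Green's identity every volume term cancels and one is left with the single boundary expression
\[
\Phi^\lambda_D[\xi]=\tfrac{N(N-1)}{4}\lim_{\epsilon\to 0^+}\int_{\partial D^\epsilon_\sigma}\phi^2\,\partial_{\vec n}\frac{|\mathcal{G}^\lambda\xi|^2}{\phi^2}\,d\vec{\mathrm s}.
\]
The crucial point is that the ratio $|\mathcal{G}^\lambda\xi|^2/\phi^2$ is \emph{bounded} as $r\to 0$ (both numerator and denominator carry the same $1/r^2$ singularity), so the integrand has a well-defined pointwise limit and no divergent terms ever appear.

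By contrast, in your formula $\mathcal{Q}_D[\psi]=\lim_{\varepsilon\to 0}\bigl[\int_{\Omega_\varepsilon}|\nabla\psi|^2-\int_{\partial\Omega_\varepsilon}|\psi|^2\partial_\nu\ln\phi\bigr]$ neither term is separately finite for $\xi\neq 0$, and you must track the $\varepsilon^{-1}$ cancellation by hand; you then further need to re-run the heuristic derivation of appendix~\ref{heuristics} to turn the renormalised $\int|\nabla\psi|^2$ into the free form plus $\Phi^\lambda_{\mathrm{diag}}+\Phi^\lambda_{\mathrm{off};0}+\Phi^\lambda_{\mathrm{off};1}$. This is doable (and has the advantage of being self-contained, not relying on the abstract extension theory to bridge from $\Phi^\lambda$ to $\mathcal{Q}$), but it is the harder path: the ``main obstacle'' you flag is precisely what the paper sidesteps by its choice of $\psi$.
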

}
\n As we mentioned in the introduction, we extend the result of~\cite[example 4]{AHK} by considering a more general weight function $\phi$.
More precisely, let us consider
\begin{equation}\label{genericPhiDF}
    \phi(\vec{x}_1,\ldots,\vec{x}_N)=\nquad\sum_{1\leq\+k\,<\,\ell\+\leq N}\mspace{-12mu} \frac{\theta(\abs{\vec{x}_k\!-\-\vec{x}_\ell})}{\abs{\vec{x}_k\!-\-\vec{x}_\ell}}+\alpha_0-\theta'\-(0),
    \end{equation}
    where $\alpha_0\!\in\-\R, \theta\-\in\! H^2(\Rplus)$ such that $\theta(0)\-=\-1$ (notice that in this case assumption~\eqref{positiveBoundedCondition} is satisfied) and $\frac{\Delta\phi}{\phi}\-\in\-\Lp{\infty}[\R^{3N}]$.
    The Dirichlet form~\eqref{AlbeverioDF} defined with such a weight function is closed, hence it defines a positive s.a. operator $H_\phi$ in $\Lp{2}[\R^{3N}\!,\,\phi^2(\vec{\mathrm{x}})d\vec{\mathrm{x}}]$ (for more details, see Section~\ref{dirichletSection}).
    In particular, the generalized version of the quadratic form $\mathcal{Q}_D$ provided in~\eqref{dirichletQF0}, is now given by $E_\phi[\psi/\phi]\--\-\scalar{\psi}{\!\tfrac{\Delta\phi}{\phi}\mspace{-0.75mu}\psi}$.
    Our goal is to rewrite such a quadratic form in our formalism so that an explicit comparison with our results can be made.
    The outcome of such a comparison is expressed by the following proposition.
\begin{prop}\label{lastResult}
    Let $\phi$ be the function defined in~\eqref{genericPhiDF} with $\alpha_0\in\R$ and $\theta\in H^2(\Rplus)$ with $\theta(0)=1$ such that $\frac{\Delta\phi}{\phi}\in\Lp{\infty}[\R^{3N}]$.
    Then, the s.a. operator $-\Delta_\phi=\phi\big(H_\phi-\tfrac{\Delta\phi}{\phi}\big)\phi^{-1}$ in $\hilbert_N$ coincides with the Hamiltonian $\mathcal{H}$ defined in~\eqref{hamiltonianCharacterization} with $\gamma=2$.    
\end{prop}
\n The above result shows that the approach via Dirichlet forms provides essentially the same class of Hamiltonians that we constructed with our method (apart from some additional regularity required for the function $\theta$ and the restriction to the case $\gamma=2$).
Ultimately, if $\gamma=2$ and $\theta\-\in\! H^2(\Rplus)$ with $\theta(0)\mspace{-0.75mu}=\-1$, one finds the relation
\begin{equation}
    \alpha(\vec{z},\vec{y}_1,\ldots,\vec{y}_{N-2})=\lim_{\abs{\vec{r}}\to 0}\left[\phi\big(\vec{y}_1,\ldots,\vec{y}_{N-2},\vec{z}+\tfrac{\vec{r}}{2},\vec{z}-\tfrac{\vec{r}}{2}\big)-\tfrac{1}{\abs{\vec{r}}}\right]\-,
\end{equation}
with $\alpha$ and $\phi$ given by~\eqref{alfa} and~\eqref{genericPhiDF}, respectively.\newline
As a corollary of the previous proposition, with the specific weight function $\phi$ given by~\eqref{weightFunction}, we obtain in our formalism the example originally discussed by Albeverio et al.
\begin{cor}\label{ParticularCase}
Let $\mathcal{H}$ be the Hamiltonian in $\hilbert_N$ characterized by~\eqref{hamiltonianCharacterization} with   
\begin{equation*}
 \alpha_0=-m, \, \text{ with } \,  m \geq 0, \qquad  \gamma=2 \qquad \text{and} \qquad \theta: \,r\: \longmapsto\: e^{-m\+r},\quad r>0.
\end{equation*}
Then, one has
\begin{equation*}
    -\Delta_m\-=\mathcal{H}\,.
\end{equation*}
\end{cor}

\vs

We conclude this section by providing an outline of the content of what follows.\newline
In Section~\ref{qfSection} we carry out an analysis on the quadratic form of the charges $\Phi^\lambda$, highlighting its main properties.\newline
In Section~\ref{proveMain} we provide the proofs of Proposition~\ref{closednessTheo} and Theorem~\ref{hamiltonianCharacterizationTheo}.\newline
In Section~\ref{dirichletSection} we establish the connection between our result and the one given in~\cite{AHK}.\newline
In Appendix~\ref{heuristics} we provide heuristic motivations for the definition of the quadratic form $\mathcal{Q}$.\newline
In Appendix~\ref{appB} we collect some useful technical results about the potential $\mathcal{G}^\lambda$.

\comment{\begin{prop}\label{transform4Preliminaries}
    Given $\xi\in\mathcal{S}(\R^{3(N-1)})$ one has
    \begin{align*}
        I(\vec{p},\vec{p}_1,\vec{p}_2,\vec{P})\vcentcolon=\!\-\integrate[\R^{3(N-1)}]{\frac{e^{-i\vec{p}\+\cdot\+\vec{x}-i\vec{p}_1\cdot\+\vec{x}_1-i\vec{p}_2\cdot\+\vec{x}_2-i\vec{P}\+\cdot\vec{X}}\!\-}{(2\pi)^{\frac{3(N-1)}{2}}};\mspace{-33mu}d\vec{x}d\vec{x}_1 d\vec{x}_2\+d\vec{X}}\!\-\integrate[\R^{3(N-1)}]{\xi(\vec{y},\vec{y}_1,\vec{y}_2,\vec{Y});\mspace{-33mu}d\vec{y}d\vec{y}_1 d\vec{y}_2\+ d\vec{Y}}\times\\
        \times\, G^\lambda\bigg(\!\!\-
            \begin{array}{r c c c l}
                \vec{x}_1, & \!\!\-\vec{x}_2\+, &\!\!\- \vec{x}, & \!\!\- \vec{x}, & \mspace{-12mu}\vec{X}\\[-2.5pt]
                \vec{y}, & \!\!\-\vec{y}, &\!\!\-\vec{y}_1, & \!\!\-\vec{y}_2\+, & \mspace{-12mu}\vec{Y} \end{array}\!\!\-\bigg)\!=\frac{1}{(2\pi)^3\!\-}\integrate[\R^3]{\frac{\FT{\xi}\!\left(\vec{p}_1\-+\vec{p}_2,\frac{\vec{p}}{2}+\vec{q},\frac{\vec{p}}{2}-\vec{q},\vec{P}\right)\!}{\frac{1}{2}\+p^2\-+2q^2\-+p_1^2+p_2^2+\-P^2\-+\lambda};\-d\vec{q}}.
    \end{align*}
    \begin{proof}
        Because of the regularity of $\xi$ and equation~\eqref{greenIntegrated}, Fubini's theorem applies, therefore taking account of Proposition~\ref{greenTransformed}
        \begin{align*}
            I(\vec{p},\vec{p}_1&,\vec{p}_2,\vec{P})=\\
            &=\!\-\integrate[\R^{3(N-1)}]{\frac{e^{-i\vec{p}\,\cdot\+\frac{\vec{y}_1+\+\vec{y}_2}{2}-i(\vec{p}_1+\+\vec{p}_2)\+\cdot\+\vec{y}-i\vec{P}\cdot\vec{Y}-\sqrt{\frac{1}{2}\+p^2+\+p_1^2+\+p_2^2+P^2+\lambda\+}\:\frac{\abs{\vec{y}_1-\+\vec{y}_2}}{\sqrt{2}}}\nquad}{8\pi\+(2\pi)^{\frac{3(N-1)}{2}}\abs{\vec{y}_1\!-\vec{y}_2}}\;\xi(\vec{y},\vec{y}_1,\vec{y}_2,\vec{Y});\mspace{-33mu}d\vec{y}d\vec{y}_1 d\vec{y}_2\+d\vec{Y}}\\
            &=\!\-\integrate[\R^{3(N-1)}]{\frac{e^{-i\vec{p}\,\cdot\+\vec{s}-i(\vec{p}_1+\+\vec{p}_2)\+\cdot\+\vec{y}-i\vec{P}\cdot\vec{Y}-\sqrt{\frac{1}{2}\+p^2+\+p_1^2+\+p_2^2+P^2+\lambda\+}\:\frac{r}{\sqrt{2}}}\!\-}{8\pi\+(2\pi)^{\frac{3(N-1)}{2}}\,r}\;\xi\!\left(\vec{y},\vec{s}+\tfrac{\vec{r}}{2},\vec{s}-\tfrac{\vec{r}}{2},\vec{Y}\right);\mspace{-33mu}d\vec{y}d\vec{r} d\vec{s}\+d\vec{Y}}\\
            &=\frac{1}{(2\pi)^3\!\-}\integrate[\R^{3(N+1)}]{\frac{e^{-i\vec{q}\+\cdot\+\vec{r}-i\vec{p}\,\cdot\+\vec{s}-i(\vec{p}_1+\+\vec{p}_2)\+\cdot\+\vec{y}-i\vec{P}\cdot\vec{Y}}\!\-}{(2\pi)^{\frac{3(N-1)}{2}}}\;\frac{\xi\!\left(\vec{y},\vec{s}+\tfrac{\vec{r}}{2},\vec{s}-\tfrac{\vec{r}}{2},\vec{Y}\right)}{2\+q^2\-+\frac{1}{2}\+p^2\-+p_1^2+p_2^2+\-P^2\-+\lambda};\mspace{-33mu}d\vec{q}d\vec{y}d\vec{r} d\vec{s}\+d\vec{Y}}
        \end{align*}
        where in the last step we have used the Plancherel's theorem and the well known identity
        \begin{equation}\label{fourierAntiTransformYukawa}
            \integrate[\R^3]{\frac{e^{i\vec{q}\+\cdot\+ \vec{x}\+-\+a\, x}\!\-}{x};\-d\vec{x} }=\frac{4\pi}{a^2\-+q^2\!\-}\;,\qquad \forall a>0,\vec{q}\in\R^3.
        \end{equation}
        The result is obtained by changing the coordinates $\vec{s}+\frac{\vec{r}}{2}\longmapsto \vec{t}$, $\vec{s}-\frac{\vec{r}}{2}\longmapsto \vec{u}$.
        Indeed
        \begin{align*}
            I(\vec{p},\vec{p}_1&,\vec{p}_2,\vec{P})=\\
            &=\frac{1}{(2\pi)^3\!\-}\integrate[\R^{3(N+1)}]{\frac{e^{-i\left(\-\frac{\vec{p}}{2}+\vec{q}\-\right)\,\cdot\+\vec{t}-i\left(\-\frac{\vec{p}}{2}-\vec{q}\-\right)\,\cdot\+\vec{u}-i(\vec{p}_1+\+\vec{p}_2)\+\cdot\+\vec{y}-i\vec{P}\cdot\vec{Y}}\xi\!\left(\vec{y},\vec{t},\vec{u},\vec{Y}\right)}{(2\pi)^{\frac{3(N-1)}{2}}\left(2\+q^2\-+\frac{1}{2}\+p^2\-+p_1^2+p_2^2+\-P^2\-+\lambda\right)};\mspace{-33mu}d\vec{q}d\vec{y}d\vec{t} d\vec{u}\+d\vec{Y}}\\
            &=\frac{1}{(2\pi)^3\!\-}\integrate[\R^3]{\frac{\FT{\xi}\!\left(\vec{p}_1\-+\vec{p}_2,\frac{\vec{p}}{2}+\vec{q},\frac{\vec{p}}{2}-\vec{q},\vec{P}\right)\!}{2\+q^2\-+\frac{1}{2}\+p^2\-+p_1^2+p_2^2+\-P^2\-+\lambda};\-d\vec{q}}.
        \end{align*}
        
    \end{proof}
\end{prop}
}

\section{The Quadratic Form \texorpdfstring{$\,\Phi^{\lambda}$}{of the Charges}}\label{qfSection}

In this section, we investigate some properties of the hermitian quadratic form of the charges $\Phi^\lambda$ defined in $\hilbert*_{N-1}=\Lp{2}[\R^3]\-\otimes\-\LpS{2}[\R^{3(N-2)}]$ by~(\ref{phiDefinition},~\ref{phiComponents}), which are the key ingredients for the proof of our main results.
The first step is to show that the definition is well posed, namely, that all the terms in~\eqref{phiDefinition} are finite for $\xi\-\in\- H^{\frac{1}{2}}(\R^{3(N-1)})$.
The assertion is evident for $\Phi^\lambda_{\mathrm{diag}}\+$.
In the following proposition, we prove that the same is true for the other three terms. 
\begin{prop}\label{upperBound}
    Let $\Phi^\lambda$ be the hermitian quadratic form in $\hilbert*_{N-1}$ defined by~\emph{(\ref{phiDefinition},~\ref{phiComponents})}.\newline
    Then, for all $\xi\-\in\- \hilbert*_{N-1}\cap H^{\frac{1}{2}}(\R^{3(N-1)})$ there exist two positive constants $c_1, c_2$ such that
    $$\abs{\Phi_{\mathrm{reg}}[\xi]}\leq c_1\norm{\xi}[H^{1/2}(\R^{3(N-1)})]^2\qquad  \text{and} \qquad \abs{\Phi_{\mathrm{off};\+ \iota}^\lambda[\xi]}\leq c_2 \norm{\xi}[H^{1/2}(\R^{3(N-1)})]^2,\qquad \iota\in\{0,1\}.$$

\begin{proof}
    Concerning the regularizing term~\eqref{regPhi}, one has
    \begin{equation*}
    \begin{split}
    \abs{\Phi_{\mathrm{reg}}[\xi]}\leq \abs{\alpha_0}\norm{\xi}[\hilbert*_{N-1}]^2\!+(N\!-\-2)\+\gamma\norm{\theta}[\infty]&\left[\integrate[\R^{3(N-1)}]{\frac{\abs{\xi\-\left(\vec{r}'\-+\tfrac{\vec{r}}{3},\vec{r}'\--\tfrac{2\vec{r}}{3},\vec{X}\right)}^2\!\-}{r}; \mspace{-33mu}d\vec{r}d\vec{r}'\-d\vec{X}}\right.+\\[-5pt]
    &\;\;+\frac{N\!-\-3}{4}\!\left.\integrate[\R^{3(N-1)}]{\frac{\abs{\xi\big(\vec{x},\vec{r}''\!+\tfrac{\vec{r}'}{2},\vec{r}''\!-\tfrac{\vec{r}'}{2},\vec{X}\big)\-}^2\!\-}{r'};\mspace{-33mu}d\vec{x}d\vec{r}'\-d\vec{r}''\!d\vec{X}}\:\right]\!.
    \end{split}
    \end{equation*}
    We notice that, if $U$ is any unitary operator encoding a $\mathrm{SL}(2,\R)$ transformation $(\vec{x},\vec{y})\longmapsto(\vec{r},\vec{R})$,
    there holds
    $$ (Uf)(\vec{r},\vec{R}) \in H^{\frac{1}{2}}(\R^6, d\vec{r}d\vec{R}) \iff f(\vec{x},\vec{y}) \in H^{\frac{1}{2}}(\R^6,d\vec{x} d\vec{y}).$$
    In our case this means that $\xi\left(\vec{r}'\-+\tfrac{\vec{r}}{3},\vec{r}'\--\tfrac{2\vec{r}}{3},\vec{X}\right)=\vcentcolon\tilde{\xi}(\vec{r},\vec{r}'\-,\vec{X})$ defines a function in $H^{\frac{1}{2}}(\R^{3(N-1)},d\vec{r}d\vec{r}'\-d\vec{X})$ and similarly $\xi(\vec{x},\vec{r}''\!+\tfrac{\vec{r}'}{2},\vec{r}''\!-\tfrac{\vec{r}'}{2},\vec{X})\in H^{\frac{1}{2}}(\R^{3(N-1)}, d\vec{x}d\vec{r}'\-d\vec{r}''\!d\vec{X})$.
    Hence, exploiting the sharp Hardy-Rellich inequality (see~\cite[eq.\footnote{There is a typo in~\cite[eq.~(1.4)]{Y}, since a power $2$ is missing on the $\Gamma$ function in the numerator of the first argument of the maximum.} (1.1, 1.4)]{Y})
    \begin{equation}\label{Hardy-Rellich}
        \integrate[\R^3]{\frac{\abs{u(\vec{x})}^2\!\-}{x};\- d\vec{x}}\leq\frac{\pi}{2}\- \integrate[\R^3]{ k\, \abs{\FT{u}(\vec{k})}^2;\- d\vec{k}},\qquad u\in H^{\frac{1}{2}}(\R^3),
    \end{equation}
    one obtains the upper bound for $\Phi_{\mathrm{reg}}$.\newline
    Next, consider $\+\Phi^\lambda_{\mathrm{off};\+1}$.
    By the Cauchy-Schwarz inequality one gets
    \begin{align*}
        \abs{\Phi^\lambda_{\mathrm{off};\+1}[\xi]}&\leq \tfrac{2\+(N\--2)\!}{\pi^2}\!\!\integrate[\R^6]{\frac{1}{\+p_2^2+p_3^2\,}\,\sqrt{\integrate[\R^{3(N-2)}]{\abs{\FT{\xi}(\vec{p}+\vec{p}_2,\vec{p}_3,\vec{P})}^2; \mspace{-34.5mu}d\vec{p}\+d\mspace{-0.75mu}\vec{P}} \!\integrate[\R^{3(N-2)}]{\abs{\FT{\xi}(\vec{q}+\vec{p}_3,\vec{p}_2,\vec{P})}^2; \mspace{-34.5mu}d\vec{q}d\mspace{-0.75mu}\vec{P}}};\-d\vec{p}_2\+d\vec{p}_3}\\
        &=\tfrac{2\+(N\--2)\!}{\pi^2}\!\! \integrate[\R^6]{\,\frac{\FT{f}(\vec{p}_3)\+\FT{f}(\vec{p}_2)}{p_2^2+p_3^2};\-d\vec{p}_2\+d\vec{p}_3},
    \end{align*}
    where we have defined $f\-\in\- H^{\frac{1}{2}}(\R^3)$ as
    $$\FT{f}(\vec{p})\vcentcolon= \sqrt{\integrate[\R^{3(N-2)}]{\abs{\FT{\xi}(\vec{q},\vec{p},\vec{P})}^2;\mspace{-34.5mu}d\vec{q}d\mspace{-0.75mu}\vec{P}}\+}\+.$$
    The operator $\maps{Q}{\Lp{2}(\R^3);\Lp{2}(\R^3)}$ acting as
    \begin{equation}
        \define{\left(Q\,\psi\right)\!(\vec{p})\-;\!\- \integrate[\R^3]{\frac{\psi(\vec{k})}{\!\-\sqrt{p\+}\+ (p^2\-+k^2)\sqrt{k\+}\,};\-d\vec{k}}}\+,
    \end{equation}
    is bounded with norm $2\pi^2$ (see \eg, \cite[lemma 2.1]{FT}).
    Hence, observing that $\sqrt{p\+}\+\FT{f}(\vec{p})\-\in\- \Lp{2}[\R^3,d\vec{p}]$, one has
    $$\abs{\Phi^\lambda_{\mathrm{off};\+1}[\xi]}\leq 4\+(N\!-\-2)\,\lVert\sqrt{p\+}\+\FT{f}(\vec{p})\rVert_{\Lp{2}[\R^3\!,\:d\vec{p}]}^2\!\leq 4\+(N\!-\-2)\norm{\xi}[H^{1/2}(\R^{3(N-1)})]^2\-.$$
    In a similar way one obtains the upper bound for $\Phi^\lambda_{\mathrm{off};\+0}$
    \begin{align*}
        \abs{\Phi^\lambda_{\mathrm{off};\+0}[\xi]}&\leq \tfrac{(N\--2)(N\--3)\!}{2\pi^2}\!\! \integrate[\R^6]{\frac{1}{\+p_2^2+p_4^2\,}\,\sqrt{\integrate[\R^{3(N-2)}]{\abs{\FT{\xi}(\vec{p}+\vec{p}_4,\vec{q},\vec{p}_2,\vec{P})}^2; \mspace{-34.5mu}d\vec{p}\+d\vec{q}d\mspace{-0.75mu}\vec{P}}\+};\-d\vec{p}_2\+d\vec{p}_4}\+\times\\[-5pt]
        &\pushright{\times\+\sqrt{\integrate[\R^{3(N-2)}]{\abs{\FT{\xi}(\vec{q}'\-+\vec{p}_2,\vec{p}'\-,\vec{p}_4,\vec{P})}^2; \mspace{-34.5mu}d\vec{p}'\-d\vec{q}'\-d\mspace{-0.75mu}\vec{P}}}}\\
        &= \tfrac{(N\--2)(N\--3)\!}{2\pi^2}\!\! \integrate[\R^6]{\,\frac{\FT{g}(\vec{p}_2)\+\FT{g}(\vec{p}_4)}{p_2^2+p_4^2};\-d\vec{p}_2\+d\vec{p}_4},
    \end{align*}
    where
    $$\FT{g}(\vec{p})\vcentcolon=\sqrt{\integrate[\R^{3(N-2)}]{\abs{\FT{\xi}(\vec{q},\vec{q}'\-,\vec{p},\vec{P})}^2;\mspace{-34.5mu}d\vec{q}d\vec{q}'\-d\mspace{-0.75mu}\vec{P}}\+}\+\in \Lp{2}[\R^3, \sqrt{p^2+1\+}\+d\vec{p}].$$
Therefore,
    $$\abs{\Phi^\lambda_{\mathrm{off};\+0}[\xi]}\leq (N\!-\-2)(N\!-\-3)\norm{\xi}[H^{1/2}(\R^{3(N-1)})]^2\-.$$

\end{proof}
\end{prop}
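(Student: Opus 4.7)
The plan is to handle $\Phi_{\mathrm{reg}}$ and the two off-diagonal pieces $\Phi^\lambda_{\mathrm{off};\iota}$ ($\iota\in\{0,1\}$) separately, using two distinct tools: a Hardy-Rellich bound for the regular part, and the boundedness on $\Lp{2}[\R^3]$ of the operator with kernel $[\sqrt{p}\,(p^2+k^2)\sqrt{k}\,]^{-1}$ (already cited from \cite{FT}) for the off-diagonal ones.

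For $\Phi_{\mathrm{reg}}$, the $\alpha_0$ summand is trivially majorized by $|\alpha_0|\norm{\xi}[\hilbert*_{N-1}]^2\leq|\alpha_0|\norm{\xi}[H^{1/2}(\R^{3(N-1)})]^2$. For $\Phi^{(3)}_{\mathrm{reg}}$ and $\Phi^{(4)}_{\mathrm{reg}}$, assumption~\eqref{positiveBoundedCondition} yields $\theta\in\Lp{\infty}[\Rplus]$, so I can pull a $\norm{\theta}[\infty]$ out and reduce both terms to Coulomb-type integrals of the form $\int \abs{\xi(\cdots,\vec{u},\vec{v},\cdots)}^2/\abs{\vec{u}-\vec{v}}\,d\vec{u}\,d\vec{v}\,d\vec{X}$. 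The change of variables $(\vec{u},\vec{v})\mapsto(\vec{r}=\vec{u}-\vec{v},\vec{R}=(\vec{u}+\vec{v})/2)$ is (up to a rescaling) a unitary $\mathrm{SL}(2,\R)$ transformation on the relevant $\R^6$-slice, hence preserves $H^{1/2}$-regularity of $\xi$ in those variables. Applying Fubini in $\vec{R}$ and the remaining spectator variables, the sharp Hardy-Rellich inequality
\begin{equation*}
\integrate[\R^3]{\tfrac{\abs{u(\vec{r})}^2}{r};\,d\vec{r}}\leq \tfrac{\pi}{2}\integrate[\R^3]{k\,\abs{\FT{u}(\vec{k})}^2;\,d\vec{k}}
\end{equation*}
produces the required $H^{1/2}$-bound in the relative coordinate, and a final integration over the spectator variables gives $|\Phi_{\mathrm{reg}}[\xi]|\leq c_1\norm{\xi}[H^{1/2}(\R^{3(N-1)})]^2$.

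For $\Phi^\lambda_{\mathrm{off};1}[\xi]$, I first drop the non-negative terms $p_1^2$, $P^2$, $\lambda$ from the denominator to bound it by $p_2^2+p_3^2$. A Cauchy-Schwarz in the $\vec{p}_1,\vec{P}$ variables then decouples the two copies of $\FT{\xi}$, reducing the estimate to a two-variable integral $\int\!\int \FT{f}(\vec{p}_2)\FT{f}(\vec{p}_3)/(p_2^2+p_3^2)\,d\vec{p}_2 d\vec{p}_3$, where $\FT{f}(\vec{p}):=\bigl(\int\abs{\FT{\xi}(\vec{q},\vec{p},\vec{P})}^2 d\vec{q}\,d\vec{P}\bigr)^{1/2}$. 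Writing $1/(p_2^2+p_3^2)=\sqrt{p_2 p_3}/[\sqrt{p_2}(p_2^2+p_3^2)\sqrt{p_3}]$, this integral is exactly the inner product of $\sqrt{p}\,\FT{f}$ with $Q(\sqrt{p}\,\FT{f})$ in $\Lp{2}[\R^3]$, for the operator $Q$ of norm $2\pi^2$ cited above. Thus $|\Phi^\lambda_{\mathrm{off};1}[\xi]|\lesssim \int p\,|\FT{f}(\vec{p})|^2 d\vec{p}\leq \norm{\xi}[H^{1/2}(\R^{3(N-1)})]^2$. The estimate for $\Phi^\lambda_{\mathrm{off};0}$ follows the same template: drop $p_1^2, p_3^2, P^2, \lambda$ from the denominator, apply Cauchy-Schwarz in the correct pair of spectator variables to produce a function $\FT{g}$ of a single three-momentum, and invoke the boundedness of $Q$ on the kernel $1/(p_2^2+p_4^2)$.

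The main technical obstacle I anticipate is the bookkeeping in the Cauchy-Schwarz step for the two off-diagonal terms: one must choose the "summed" and "free" variables so that, after integrating the spectator momenta, the resulting $\sqrt{p_j}$ factors extracted from the kernel match exactly the $H^{1/2}$-seminorm of $\xi$ projected onto the appropriate partial $\Lp{2}$. Once this alignment is made correctly, the rest is mechanical: $\|Q\|=2\pi^2$ supplies the final inequality, and Plancherel turns the momentum-space norm into the stated $H^{1/2}(\R^{3(N-1)})$ norm of $\xi$.
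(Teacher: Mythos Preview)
Your proposal is correct and follows essentially the same approach as the paper: Hardy--Rellich after an $\mathrm{SL}(2,\R)$ change of variables for $\Phi_{\mathrm{reg}}$, and for each $\Phi^\lambda_{\mathrm{off};\iota}$ dropping the positive terms from the denominator, applying Cauchy--Schwarz in the spectator momenta to produce a marginal function $\FT{f}$ (resp.\ $\FT{g}$), and then invoking the $\Lp{2}$-boundedness of the operator $Q$ with kernel $[\sqrt{p}(p^2+k^2)\sqrt{k}]^{-1}$. The bookkeeping you flag as the main obstacle is exactly what the paper carries out, and your identification of the correct ``free'' variable in each case ($\vec{p}_2,\vec{p}_3$ for $\iota=1$ and $\vec{p}_2,\vec{p}_4$ for $\iota=0$) matches the paper's.
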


\n
In the sequel we shall use the following proposition where we  provide the representation of $\Phi^{\lambda}_{\mathrm{diag}}\+$,  $\Phi^\lambda_{\mathrm{off;\+0}}\+$,  $\Phi^\lambda_{\mathrm{off;\+1}}\+$ in position space.
\begin{prop}
    Let $\Phi^\lambda$ be the hermitian quadratic form in $\hilbert*_{N-1}$ defined by~\emph{(\ref{phiDefinition},~\ref{phiComponents})}.\newline
    Then, for all $\xi\in\dom{\Phi^\lambda}$
    \begin{subequations}\label{phisComputed}
    \begin{gather}
        \label{diagPhiPos}
        \Phi^\lambda_{\mathrm{diag}}[\xi]=\-\sqrt{\tfrac{\lambda}{2}}\norm{\xi}[\hilbert*_{N-1}]^2\!+4\pi\!\-\integrate[\R^{3(N-1)}]{;\mspace{-33mu}d\vec{x}d\vec{X}}\!\!\!\integrate[\R^{3(N-1)}]{\abs{\xi(\vec{y},\vec{Y})\--\xi(\vec{x},\vec{X})}^2;\mspace{-33mu}d\vec{y}d\vec{Y}} \,G^\lambda\bigg(\!\!
        \begin{array}{r c l}
            \vec{x}, & \!\!\vec{x}, &\!\!\! \vec{X} \\[-2.5pt]
            \vec{y}, & \!\!\vec{y}, &\!\!\! \vec{Y}
        \end{array}\!\!\!\bigg),\\
       \label{off0PhiPos}
            \begin{split}
            \Phi^\lambda_{\mathrm{off;\+0}}[\xi]=&\,-\+4\pi\+(N\!-\-2)(N\!-\-3)\!\-\integrate[\R^{3(N-1)}]{\conjugate*{\xi(\vec{x},\vec{x}'\-,\vec{x}''\!,\vec{X})\-};\mspace{-33mu}d\vec{x}d\vec{x}'\-d\vec{x}''\!d\vec{X}}\,\times\\[-5pt]
            &\,\times\!\-\integrate[\R^{3(N-1)}]{\xi(\vec{y},\vec{y}'\-,\vec{y}''\!,\vec{Y});\mspace{-33mu}d\vec{y}d\vec{y}'\-d\vec{y}''\!d\vec{Y}}\,G^\lambda\bigg(\!\!\-
            \begin{array}{r c c c l}
                \vec{x}'\-, & \!\!\-\vec{x}''\!, &\!\!\- \vec{x}, & \!\!\- \vec{x}, & \!\!\-\vec{X}\\[-2.5pt]
                \vec{y}, & \!\!\-\vec{y}, &\!\!\-\vec{y}'\-, & \!\!\-\vec{y}''\!, & \!\!\-\vec{Y}\end{array}\!\!\-\bigg),
        \end{split}\\
        \label{off1PhiPos}
            \Phi^\lambda_{\mathrm{off};\+1}[\xi]\-=-\+16\pi\+(N\!-\-2)\!\-\integrate[\R^{3(N-1)}]{\conjugate*{\xi(\vec{x},\vec{x}'\-,\vec{X})\-}\+;\mspace{-33mu}d\vec{x}d\vec{x}'\-d\vec{X}}\!\-\integrate[\R^{3(N-1)}]{\xi(\vec{y},\vec{y}'\-,\vec{Y});\mspace{-33mu}d\vec{y}d\vec{y}'\-d\vec{Y}}\+G^\lambda\bigg(\!\!
            \begin{array}{r c c l}
                \vec{x}, & \!\!\-\vec{x}'\-, &\!\!\- \vec{x}, & \!\!\- \vec{X}\\[-2.5pt]
                \vec{y}, & \!\!\-\vec{y}, &\!\!\-\vec{y}'\-, & \!\!\-\vec{Y} \end{array}\!\!\-\bigg).
    \end{gather}
    \end{subequations}
    \begin{proof}
        In order to prove~\eqref{diagPhiPos} we show that
        \begin{equation*}
        \begin{split}
            \integrate[\R^{3(N-1)}]{;\mspace{-33mu}d\vec{x}d\vec{X}}\!\!\!\integrate[\R^{3(N-1)}]{\abs{\xi(\vec{y},\vec{Y})\--\xi(\vec{x},\vec{X})}^2;\mspace{-33mu}d\vec{y}d\vec{Y}} \,G^\lambda\bigg(\!\!
            \begin{array}{r c l}
                \vec{x}, & \!\!\vec{x}, &\!\!\! \vec{X} \\[-2.5pt]
                \vec{y}, & \!\!\vec{y}, &\!\!\! \vec{Y}
            \end{array}\!\!\!\bigg)\!&=\\
            &\mspace{-72mu}=\tfrac{\sqrt{2}}{8\pi}\!\-\integrate[\R^{3(N-1)}]{\!\left(\-\sqrt{\tfrac{1}{2}\+p^2\-+\-P^2\-+\lambda\+}-\sqrt{\vphantom{\tfrac{1}{1}P^2}\lambda\+}\+\right)\-\abs{\FT{\xi}(\vec{p},\vec{P})}^2;\mspace{-33mu}d\vec{p}\+d\mspace{-0.75mu}\vec{P}}.
        \end{split}
        \end{equation*}
        To this end, the left-hand side of the previous equation can be manipulated as follows
        \begin{align*}
            \integrate[\R^{3(N-1)}]{;\mspace{-33mu}d\vec{z}d\vec{Z}}&\!\!\-\integrate[\R^{3(N-1)}]{\abs{\xi(\vec{y}+\vec{z},\vec{Y}\!\-+\-\vec{Z})-\xi(\vec{y},\vec{Y})}^2;\mspace{-33mu}d\vec{y}d\vec{Y}}\,G^\lambda\bigg(\!\!
            \begin{array}{r c l}
                \vec{0}, & \!\!\vec{0}, &\!\!\! \vec{0} \\[-2.5pt]
                \vec{z}, & \!\!\vec{z}, &\!\!\! \vec{Z}
            \end{array}\!\!\!\bigg)\\
            &=\integrate[\R^{3(N-1)}]{;\mspace{-33mu}d\vec{z}d\vec{Z}}G^\lambda\bigg(\!\!
            \begin{array}{r c l}
                \vec{0}, & \!\!\vec{0}, &\!\!\! \vec{0} \\[-2.5pt]
                \vec{z}, & \!\!\vec{z}, &\!\!\! \vec{Z}
            \end{array}\!\!\!\bigg)\norm{\xi(\vec{y}+\vec{z},\vec{Y}\!\-+\-\vec{Z})-\xi(\vec{y},\vec{Y})}[\Lp{2}[\R^{3(N-1)}\!,\,d\vec{y}\+d\vec{Y}]]^2\\
            &=\integrate[\R^{3(N-1)}]{;\mspace{-33mu}d\vec{z}d\vec{Z}}\!\!\-\integrate[\R^{3(N-1)}]{;\mspace{-33mu}d\vec{p}\+d\mspace{-0.75mu}\vec{P}}\abs{e^{-i\vec{p}\,\cdot\+\vec{z}-i\vec{P}\+\cdot\+\vec{Z}}\!-1}^2\,\abs{\FT{\xi}(\vec{p},\vec{P})}^2\,G^\lambda\bigg(\!\!
            \begin{array}{r c l}
                \vec{0}, & \!\!\vec{0}, &\!\!\! \vec{0} \\[-2.5pt]
                \vec{z}, & \!\!\vec{z}, &\!\!\! \vec{Z}
            \end{array}\!\!\!\bigg),
        \end{align*}
        where in the last step we have used  Plancherel's theorem.
        Thanks to Tonelli's theorem, we can exchange the order of integration, obtaining
        \begin{align*}
            2\!\-\integrate[\R^{3(N-1)}]{\abs{\FT{\xi}(\vec{p},\vec{P})}^2;\mspace{-33mu}d\vec{p}\+d\mspace{-0.75mu}\vec{P}}\!\-&\integrate[\R^{3(N-1)}]{;\mspace{-33mu}d\vec{z}\+d\vec{Z}}[1\!-\-\cos(\vec{p}\-\cdot\!\vec{z}\-+\!\vec{P}\!\cdot\-\vec{Z})]\,G^\lambda\bigg(\!\!
            \begin{array}{r c l}
                \vec{0}, & \!\!\vec{0}, &\!\!\! \vec{0} \\[-2.5pt]
                \vec{z}, & \!\!\vec{z}, &\!\!\! \vec{Z}
            \end{array}\!\!\!\bigg)\\
            &=2\!\-\integrate[\R^{3(N-1)}]{\abs{\FT{\xi}(\vec{p},\vec{P})}^2;\mspace{-33mu}d\vec{p}\+d\mspace{-0.75mu}\vec{P}}\!\-\integrate[\R^{3(N-1)}]{;\mspace{-33mu}d\vec{z}\+d\vec{Z}}[1\!-\-\cos(\vec{p}\-\cdot\!\vec{z}\-+\!\vec{P}\!\cdot\-\vec{Z})]\,\lim_{\vec{r}\to\vec{0}} G^\lambda\bigg(\!\!
            \begin{array}{r c l}
                \tfrac{\vec{r}}{2}, & \!\!\!\--\tfrac{\vec{r}}{2}, &\!\!\! \vec{0} \\[-2.5pt]
                \vec{z}, & \!\!\vec{z}, &\!\!\- \vec{Z}
            \end{array}\!\!\!\bigg).
        \end{align*}
        Because of the monotonicity of $G^\lambda$ and the triangle inequality $\abs{\vec{z}+\tfrac{\vec{r}}{2}}^2\-+\abs{\vec{z}-\tfrac{\vec{r}}{2}}^2\geq 2\+\abs{\vec{z}}^2$, one has
        $$G^\lambda\bigg(\!\!
            \begin{array}{r c l}
                \tfrac{\vec{r}}{2}, & \!\!\!\--\tfrac{\vec{r}}{2}, &\!\!\! \vec{0} \\[-2.5pt]
                \vec{z}, & \!\!\vec{z}, &\!\!\- \vec{Z}
            \end{array}\!\!\!\bigg)\!\leq G^\lambda\bigg(\!\!
        \begin{array}{r c l}
            \vec{0}, & \!\!\vec{0}, &\!\!\! \vec{0} \\[-2.5pt]
            \vec{z}, & \!\!\vec{z}, &\!\!\- \vec{Z}
        \end{array}\!\!\!\bigg),$$
        and therefore the limit can be equivalently computed inside or outside the integral, according to the dominated convergence theorem.
        Hence, due to Proposition~\ref{greenTransformed}
        \begin{align*}
            2\!\-\integrate[\R^{3(N-1)}]{\abs{\FT{\xi}(\vec{p},\vec{P})}^2;\mspace{-33mu}d\vec{p}\+d\mspace{-0.75mu}\vec{P}}\!\lim_{\vec{r}\to\vec{0}}\integrate[\R^{3(N-1)}]{;\mspace{-33mu}d\vec{z}\+d\vec{Z}}[1\!-\-\cos(\vec{p}\-\cdot\!\vec{z}\-+\!\vec{P}\!\cdot\-\vec{Z})]\, G^\lambda\bigg(\!\!
            \begin{array}{r c l}
                \tfrac{\vec{r}}{2}, & \!\!\!\--\tfrac{\vec{r}}{2}, &\!\!\! \vec{0} \\[-2.5pt]
                \vec{z}, & \!\!\vec{z}, &\!\!\- \vec{Z}
            \end{array}\!\!\!\bigg).\\
            =2\!\-\integrate[\R^{3(N-1)}]{\abs{\FT{\xi}(\vec{p},\vec{P})}^2;\mspace{-33mu}d\vec{p}\+d\mspace{-0.75mu}\vec{P}}\lim_{r\to 0} \frac{e^{-\sqrt{\frac{\lambda}{2}}\,r}\!-e^{-\sqrt{\frac{1}{2}\+p^2\,+P^2\,+\,\lambda\+}\:\frac{r}{\!\-\sqrt{2\,}\,}}}{8\pi\,r}.
        \end{align*}
        Evaluating the limit one proves identity~\eqref{diagPhiPos}.\newline
        Next we consider the term $\Phi^\lambda_{\mathrm{off};\+1}\+$.
        Exploiting the unitarity of the Fourier transform one gets
        \begin{align*}
            \Phi^\lambda_{\mathrm{off};\+1}[\xi]&=\! -\tfrac{2\+(N\--2)\!}{\pi^2}\!\! \integrate[\R^{3N}]{\-\frac{\conjugate*{\FT{\xi}(\vec{p}, \vec{p}'\-, \vec{P})\-}\,\FT{\xi}\!\left(\vec{p}'\-+\tfrac{\vec{p}}{2}+\vec{q}, \tfrac{\vec{p}}{2}\--\vec{q}, \vec{P}\right)}{\tfrac{1}{2}p^2+2\+q^2+{p'\+}^2\-+\-P^2+\lambda};\mspace{-11.5mu}d\vec{p}\+d\vec{q} d\vec{p}' d\mspace{-0.75mu}\vec{P}}\\
            &=\! -\tfrac{2\+(N\--2)\!}{\pi^2}\!\! \integrate[\R^{3(N-1)}]{\conjugate*{\xi(\vec{x}, \vec{x}'\-, \vec{X})\-};\mspace{-33mu}d\vec{x} d\vec{x}'\-d\vec{X}}\!\integrate[\R^{3N}]{\frac{e^{i\+\vec{x}\+\cdot\+\vec{p}\++\+i\+\vec{x}'\cdot\+\vec{p}'+\+i\+\vec{X}\cdot\+\vec{P}}\!\-}{(2\pi)^{\frac{3}{2}(N-1)}}\,\frac{\FT{\xi}\!\left(\vec{p}'\-+\tfrac{\vec{p}}{2}+\vec{q}, \tfrac{\vec{p}}{2}\--\vec{q}, \vec{P}\right)}{\tfrac{1}{2}p^2+2\+q^2+{p\+'\+}^2\-+\-P^2+\lambda};\mspace{-11.5mu}d\vec{q}d\vec{p}\+d\vec{p}'\-d\mspace{-0.75mu}\vec{P}}.
        \end{align*}
        Introducing the change of variables
        \begin{equation*}
            \begin{dcases}
                \vec{k}=\tfrac{1}{2}\+\vec{p}+\vec{p}'\-+\vec{q},\\
                \vec{k}'\-=\tfrac{1}{2}\+\vec{p}-\vec{q},\\
                \vec{\kappa}=\tfrac{1}{4}\+\vec{p}-\tfrac{1}{2}\vec{p}'\-+\tfrac{1}{2}\vec{q}
            \end{dcases}\iff \begin{dcases}
                \vec{p}=\tfrac{1}{2}\+\vec{k}+\+\vec{k}'\-+\vec{\kappa},\\
                \vec{p}'\-=\tfrac{1}{2}\+\vec{k}-\vec{\kappa},\\
                \vec{q}=\tfrac{1}{4}\+\vec{k}-\tfrac{1}{2}\+\vec{k}'\-+\tfrac{1}{2}\+\vec{\kappa}
            \end{dcases}
        \end{equation*}
        one can explicitly integrate along the $\vec{\kappa}$-variable (see equation~\eqref{fourierTransformYukawa})
        \begin{align*}
            \Phi^\lambda_{\mathrm{off};\+1}[\xi]=\! -\tfrac{2\+(N\--2)\!}{\pi^2}\!\! \integrate[\R^{3(N-1)}]{\conjugate*{\xi(\vec{x}, \vec{x}'\-, \vec{X})\-};\mspace{-33mu}d\vec{x} d\vec{x}'\-d\vec{X}}\!\integrate[\R^{3N}]{\frac{\FT{\xi}\!\left(\vec{k}, \vec{k}'\-, \vec{P}\right)e^{i\+\vec{k}\+\cdot\left(\!\tfrac{\vec{x}\++\+\vec{x}'\-}{2}\-\right)+\+i\+\vec{k}'\cdot\+\vec{x}\++\+i\+\vec{\kappa}\+\cdot\+(\vec{x}-\vec{x}'\-)\++\+i\+\vec{P}\+\cdot\vec{X}}\!\!\!}{(2\pi)^{\frac{3}{2}(N-1)}\left(2\+\kappa^2\-+\tfrac{1}{2}k^2+{k\+'\+}^2\-+\-P^2+\lambda\right)};\mspace{-11.5mu}d\vec{k}\+d\vec{k}'\-d\vec{P}\+d\vec{\kappa}}\\
            =\! -2\+(N\!-\-2)\!\-\integrate[\R^{3(N-1)}]{\frac{\conjugate*{\xi(\vec{x}, \vec{x}'\-, \vec{X})\-}}{\abs{\vec{x}'\--\vec{x}}};\mspace{-33mu}d\vec{x} d\vec{x}'\-d\vec{X}}\!\integrate[\R^{3(N-1)}]{\frac{\FT{\xi}\!\left(\vec{k}, \vec{k}'\-, \vec{P}\right)e^{i\+\vec{k}\+\cdot\left(\!\tfrac{\vec{x}\++\+\vec{x}'\-}{2}\-\right)+\+i\+\vec{k}'\cdot\+\vec{x}\++\+i\+\vec{P}\+\cdot\vec{X}}\!\!\!}{(2\pi)^{\frac{3}{2}(N-1)}\:e^{\frac{\abs{\vec{x}'\--\vec{x}}}{\sqrt{2}}\+\sqrt{\frac{1}{2}k^2+{k\+'\+}^2+P^2+\lambda\+}\+}\+};\mspace{-33mu}d\vec{k}\+d\vec{k}'\-d\mspace{-0.75mu}\vec{P}}.
        \end{align*}
        Again, because of the unitarity of the Fourier transform
        \begin{equation*}
        \begin{split}
            \Phi^\lambda_{\mathrm{off};\+1}[\xi]=\! -2\+(N\!-\-2)\!\-\integrate[\R^{3(N-1)}]{\frac{\conjugate*{\xi(\vec{x}, \vec{x}'\-, \vec{X})\-}}{\abs{\vec{x}'\--\vec{x}}};\mspace{-33mu}d\vec{x} d\vec{x}'\-d\vec{X}}\!\integrate[\R^{3(N-1)}]{\xi(\vec{y},\vec{y}'\-,\vec{Y});\mspace{-33mu}d\vec{y}d\vec{y}'\-d\vec{Y}}\,\times\\
            \times\!\-\integrate[\R^{3(N-1)}]{\frac{e^{i\+\vec{k}\+\cdot\left(\-\vec{y}\+-\+\tfrac{\vec{x}\++\+\vec{x}'\-}{2}\-\right)+\+i\+\vec{k}'\cdot\+(\vec{y}'-\+\vec{x})\++\+i\+\vec{P}\+\cdot\+(\vec{Y}\--\vec{X})}\!\!\!}{(2\pi)^{3(N-1)}\:e^{\frac{\abs{\vec{x}'\--\vec{x}}}{\sqrt{2}}\+\sqrt{\frac{1}{2}k^2+{k\+'\+}^2+P^2+\lambda\+}\+}\+};\mspace{-33mu}d\vec{k}\+d\vec{k}'\-d\mspace{-0.75mu}\vec{P}}\+.
        \end{split}
        \end{equation*}
        Notice that the last integration is precisely the anti-Fourier transform of the right-hand side of the identity in Proposition~\ref{greenTransformed} with parameters $(\vec{x}_i,\vec{x}_j)=(\vec{x},\vec{x}')$ and $\vec{X}_{\-\sigma}\!=(\vec{x},\vec{X})$, which provides~\eqref{off1PhiPos}.\newline
        We proceed in a similar way to prove~\eqref{off0PhiPos}.
        \begin{align*}
            \Phi^\lambda_{\mathrm{off};\+0}[\xi]\!&=\!-\tfrac{(N\--2)(N\--3)\!}{2\pi^2}\!\!\integrate[\R^{3N}]{\-\frac{\conjugate*{\-\FT{\xi}(\vec{p},\vec{p}'\-,\vec{p}''\!,\vec{P}\-)\!}\;\FT{\xi}\!\left(\vec{p}'\!+\-\vec{p}''\!,\vec{q}+\tfrac{\vec{p}}{2},\tfrac{\vec{p}}{2}-\vec{q},\vec{P}\-\right)\!}{2\+q^2\-+\tfrac{1}{2}p^2\-+{p\+'\+}^2\!+{p\+''\+}^2\!+\-P^2+\lambda};\mspace{-11.5mu}d\vec{p}\+d\vec{q}d\vec{p}'\-d\vec{p}'' d\mspace{-0.75mu}\vec{P}}\\
            &=\!-\tfrac{(N\--2)(N\--3)\!}{2\pi^2}\!\!\integrate[\R^{3(N-1)}]{\conjugate*{\xi(\vec{x},\vec{x}'\-,\vec{x}''\!,\vec{X})\-};\mspace{-33mu}d\vec{x}d\vec{x}'\-d\vec{x}''\!d\vec{X}}\:\times\\
            &\mspace{94.5mu}\times\!\-\integrate[\R^{3N}]{\frac{e^{i\+\vec{x}\+\cdot\+\vec{p}\++\+i\+\vec{x}'\cdot\+\vec{p}'+\+i\+\vec{x}''\-\cdot\+\vec{p}''\-+\+i\+\vec{X}\cdot\+\vec{P}}\!\-}{(2\pi)^{\frac{3}{2}(N-1)}}\,\frac{\FT{\xi}\!\left(\vec{p}'\!+\-\vec{p}''\!,\vec{q}+\tfrac{\vec{p}}{2},\tfrac{\vec{p}}{2}-\vec{q},\vec{P}\-\right)\!}{2\+q^2\-+\tfrac{1}{2}p^2\-+{p\+'\+}^2\!+{p\+''\+}^2\!+\-P^2+\lambda}{};\mspace{-11.5mu}d\vec{q}d\vec{p}\+d\vec{p}'\-d\vec{p}''d\mspace{-0.75mu}\vec{P}}.
        \end{align*}
        In this case we consider the following change of variables
        \begin{equation*}
            \begin{dcases}
                \vec{k}=\vec{p}'\-+\vec{p}''\!,\\
                \vec{k}'\-=\tfrac{1}{2}\+\vec{p}+\vec{q},\\
                \vec{k}''\!=\tfrac{1}{2}\+\vec{p}-\vec{q},\\
                \vec{\kappa}=\tfrac{\vec{p}'-\+\vec{p}''\!}{2}
            \end{dcases}\iff \begin{dcases}
                \vec{p}=\vec{k}'\-+\vec{k}''\!,\\
                \vec{p}'\-=\tfrac{1}{2}\+\vec{k}+\vec{\kappa},\\
                \vec{p}''\!=\tfrac{1}{2}\+\vec{k}-\vec{\kappa},\\
                \vec{q}=\tfrac{\vec{k}'-\+\vec{k}''\!}{2}
            \end{dcases}
        \end{equation*}
        so that also in this situation we can integrate along $\vec{\kappa}$
        \begin{align*}
            \Phi^\lambda_{\mathrm{off};\+0}[\xi]\!&=\!-\tfrac{(N\--2)(N\--3)\!}{2\pi^2}\!\!\integrate[\R^{3(N-1)}]{\conjugate*{\xi(\vec{x},\vec{x}'\-,\vec{x}''\!,\vec{X})\-};\mspace{-33mu}d\vec{x}d\vec{x}'\-d\vec{x}''\!d\vec{X}}\:\times\\
            &\mspace{94.5mu}\times\!\-\integrate[\R^{3N}]{\frac{\FT{\xi}(\vec{k},\vec{k}'\-,\vec{k}''\!,\vec{P}\-)\,e^{i\+\vec{k}\+\cdot\left(\!\tfrac{\vec{x}''\-+\+\vec{x}'\-}{2}\-\right)+\+i\+\vec{k}'\cdot\+\vec{x}\++\+i\+\vec{k}''\-\cdot\+\vec{x}\++\+i\+\vec{\kappa}\+\cdot(\vec{x}'-\+\vec{x}''\-)\++\+i\+\vec{P}\+\cdot\vec{X}}\!\!\!}{(2\pi)^{\frac{3}{2}(N-1)}\-\left(2\+\kappa^2\-+\tfrac{1}{2}k^2\-+{k\+'\+}^2\!+{k\+''\+}^2\!+\-P^2+\lambda\right)}{};\mspace{-11.5mu}d\vec{k}\+d\vec{k}'\-d\vec{k}''\!d\vec{P}\+d\vec{\kappa}}\\
            &=\!-\tfrac{(N\--2)(N\--3)}{2}\!\!\integrate[\R^{3(N-1)}]{\frac{\conjugate*{\xi(\vec{x},\vec{x}'\-,\vec{x}''\!,\vec{X})\-}}{\abs{\vec{x}''\--\vec{x}'}};\mspace{-33mu}d\vec{x}d\vec{x}'\-d\vec{x}''\!d\vec{X}}\:\times\\
            &\mspace{94.5mu}\times\!\-\integrate[\R^{3(N-1)}]{\frac{\FT{\xi}(\vec{k},\vec{k}'\-,\vec{k}''\!,\vec{P}\-)\,e^{i\+\vec{k}\+\cdot\left(\!\tfrac{\vec{x}''\-+\+\vec{x}'\-}{2}\-\right)+\+i\+\vec{k}'\cdot\+\vec{x}\++\+i\+\vec{k}''\-\cdot\+\vec{x}\++\+i\+\vec{P}\+\cdot\vec{X}}\!\!\!}{(2\pi)^{\frac{3}{2}(N-1)}\:e^{\frac{\abs{\vec{x}''\--\+\vec{x}'\-}}{\sqrt{2}}\+\sqrt{\frac{1}{2}k^2+{k\+'\+}^2+{k\+''\+}^2+P^2+\lambda\+}}}{};\mspace{-33mu}d\vec{k}\+d\vec{k}'\-d\vec{k}''\!d\mspace{-0.75mu}\vec{P}}.
        \end{align*}
        Hence, exploiting the unitarity of the Fourier transform, one recognizes once again an anti-Fourier transform of the right-hand side of Proposition~\ref{greenTransformed} with parameters $(\vec{x}_i,\vec{x}_j)=(\vec{x}'\-,\vec{x}'')$ and $\vec{X}_{\-\sigma}\!=(\vec{x},\vec{x},\vec{X})$.

    \end{proof}
\end{prop}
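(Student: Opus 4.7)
The plan is to verify identities~\eqref{diagPhiPos}--\eqref{off1PhiPos} by toggling between Fourier and position representations via Plancherel's theorem, using two ingredients: the three-dimensional Yukawa Fourier pair
\[
\int_{\R^3}\frac{e^{-a\abs{\vec{x}}+i\vec{q}\cdot\vec{x}}}{\abs{\vec{x}}}\,d\vec{x}=\frac{4\pi}{q^2+a^2},\qquad a>0,
\]
together with the closed-form Fourier transform of $G^\lambda$ restricted to a two-particle coincidence, which I take from Proposition~\ref{greenTransformed}. The three identities admit slightly different paths but share the same basic mechanism; integrability of every intermediate integrand is secured by Proposition~\ref{upperBound} for $\xi\in\dom{\Phi^\lambda}$.

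For the two off-diagonal identities \eqref{off0PhiPos} and \eqref{off1PhiPos} I would work from Fourier space toward position space. Starting from the defining expressions~\eqref{off0Phi} and \eqref{off1Phi}, I would first inverse-Fourier-transform the conjugated factor $\overline{\hat\xi(\cdots)}$ to bring in a position-space $\overline{\xi(\vec{x},\vec{x}',\ldots)}$ times an oscillating phase. Next I would perform a linear substitution on the remaining momentum variables so that (i) the arguments of the surviving $\hat\xi$ become a fresh independent tuple, and (ii) the denominator splits as $2\kappa^2$ plus a term independent of one ``internal'' momentum $\vec\kappa$. The $\vec\kappa$-integral is then exactly the Yukawa pair above, producing a factor $\abs{\vec{x}'-\vec{x}}^{-1}$ or $\abs{\vec{x}''-\vec{x}'}^{-1}$ together with a decaying exponential in the remaining momenta. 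A final Plancherel step on the fresh momenta identifies the result as the anti-Fourier transform of the kernel supplied by Proposition~\ref{greenTransformed}, matching the right-hand sides of \eqref{off0PhiPos} and \eqref{off1PhiPos}.

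The diagonal identity~\eqref{diagPhiPos} calls for the reverse direction. I would substitute $(\vec{y},\vec{Y})=(\vec{x}+\vec{z},\vec{X}+\vec{Z})$ on the position-space right-hand side and apply Plancherel in $(\vec{x},\vec{X})$ for fixed $(\vec{z},\vec{Z})$, which turns the $L^2$-norm squared of the difference into $\int 2(1-\cos(\vec{p}\cdot\vec{z}+\vec{P}\cdot\vec{Z}))\abs{\hat\xi(\vec{p},\vec{P})}^2\,d\vec{p}\,d\vec{P}$. Swapping integrations by Tonelli, it remains to compute
\[
\int_{\R^{3(N-1)}}\!(1-\cos(\vec{p}\cdot\vec{z}+\vec{P}\cdot\vec{Z}))\,G^\lambda\!\left(\vec{0},\vec{0},\vec{0};\vec{z},\vec{z},\vec{Z}\right)d\vec{z}\,d\vec{Z}=\frac{\sqrt{\tfrac12 p^2+P^2+\lambda}-\sqrt{\lambda}}{8\pi}.
\]
Since $G^\lambda$ is singular at $\vec{z}=\vec{0}$, I would regularise by evaluating at $(\vec{r}/2,-\vec{r}/2,\vec{0};\vec{z},\vec{z},\vec{Z})$ and letting $\vec{r}\to\vec{0}$ after the integration; the monotonicity~\eqref{decreasingMacdonald} of $z^\mu\mathrm{K}_\mu(z)$ together with the elementary bound $\abs{\vec{z}+\vec{r}/2}^2+\abs{\vec{z}-\vec{r}/2}^2\ge 2\abs{\vec{z}}^2$ supplies a dominating function, so the limit commutes with the integral by dominated convergence. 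The remaining limit is computable in closed form using the Yukawa pair once more, and after rearranging the $-\sqrt{\lambda}$ term against $\tfrac{1}{\sqrt{2}}\sqrt{\lambda}\norm{\xi}^2$ one recovers~\eqref{diagPhiPos}.

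The main obstacle I anticipate is precisely this diagonal calculation: neither $G^\lambda\!\left(\vec{0},\vec{0},\vec{0};\vec{z},\vec{z},\vec{Z}\right)$ nor $1-\cos(\vec{p}\cdot\vec{z}+\vec{P}\cdot\vec{Z})$ is integrable on its own near the origin, yet their product is, and one must carefully justify every exchange of limit and integration via the monotone bound above. The off-diagonal identities, by contrast, reduce to bookkeeping once the correct momentum substitution has been identified.
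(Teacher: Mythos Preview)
Your proposal is correct and follows essentially the same route as the paper: the diagonal term is handled by the translation substitution, Plancherel, Tonelli, and the regularisation $G^\lambda(\vec{r}/2,-\vec{r}/2,\vec{0};\vec{z},\vec{z},\vec{Z})\to G^\lambda(\vec{0},\vec{0},\vec{0};\vec{z},\vec{z},\vec{Z})$ justified via the monotonicity of $z^\mu\mathrm{K}_\mu(z)$ and the bound $\abs{\vec{z}+\vec{r}/2}^2+\abs{\vec{z}-\vec{r}/2}^2\ge 2\abs{\vec{z}}^2$, while the two off-diagonal terms are obtained by inverse-Fourier-transforming one factor, a linear momentum substitution isolating a $2\kappa^2$ piece, the Yukawa integral, and recognition of the kernel from Proposition~\ref{greenTransformed}. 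The only slip is a missing factor of $1/\sqrt{2}$ in your displayed value for $\int(1-\cos)\,G^\lambda\,d\vec{z}\,d\vec{Z}$, which you would in any case detect when matching against $\sqrt{\lambda/2}\,\norm{\xi}^2$.
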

\begin{note}\label{negativeContributionsHighlighted}
    By the above proposition and the explicit integration~\eqref{greenIntegrated}, we can isolate an unbounded negative contribution of the off-diagonal terms, \ie
    \begin{subequations}
    \begin{align}
        \Phi^\lambda_{\mathrm{off;\+0}}[\xi]= &-\tfrac{(N-2)(N-3)}{2}\!\-\integrate[\R^{3(N-1)}]{\frac{\:e^{-\sqrt{\-\frac{\lambda}{2}}\,\abs{\vec{x}'-\+\vec{x}''}}\!\-}{\abs{\vec{x}'\--\vec{x}''}}\,\abs{\xi(\vec{x},\vec{x}'\-,\vec{x}''\!,\vec{X})}^2;\mspace{-33mu}d\vec{x}d\vec{x}'\-d\vec{x}''\!d\vec{X}}\++\label{off0PhiSplitted}\\
        &+2\pi\+(N\!-\-2)(N\!-\-3)\!\-\integrate[\R^{3(N-1)}]{;\mspace{-33mu}d\vec{x}d\vec{x}'\-d\vec{x}''\!d\vec{X}}\!\!\!\integrate[\R^{3(N-1)}]{\abs{\xi(\vec{y},\vec{y}'\!,\vec{y}''\!,\vec{Y})\--\xi(\vec{x},\vec{x}'\!,\vec{x}''\!,\vec{X})}^2;\mspace{-33mu}d\vec{y}d\vec{y}'\-d\vec{y}''\!d\vec{Y}}\+\times\nonumber\\[-5pt]
        &\mspace{408mu}\times\-G^\lambda\bigg(\!\!\-
            \begin{array}{r c c c l}
                \vec{x}'\!, & \!\!\-\vec{x}''\!, &\!\!\- \vec{x}, & \!\!\- \vec{x}, & \!\!\!\vec{X}\\[-2.5pt]
                \vec{y}, & \!\!\-\vec{y}, &\!\!\-\vec{y}'\!, & \!\!\-\vec{y}''\!, & \!\!\!\vec{Y}
            \end{array}\!\!\-\bigg),\nonumber
        \end{align}
        \vspace{-0.625cm}
        \begin{align}
        \Phi^\lambda_{\mathrm{off};\+1}[\xi]\-= &-\+2(N\!-\!2)\!\-\integrate[\R^{3(N-1)}]{\frac{\:e^{-\sqrt{\-\frac{\lambda}{2}}\,\abs{\vec{x}-\+\vec{x}'}}\!\-}{\abs{\vec{x}-\vec{x}'}}\,\abs{\xi(\vec{x},\vec{x}'\-,\vec{X})}^2;\mspace{-33mu}d\vec{x}d\vec{x}'\-d\vec{X}}\++\label{off1PhiSplitted}\\[-5pt]
        &+8\pi\+(N\!-\-2)\!\-\integrate[\R^{3(N-1)}]{;\mspace{-33mu}d\vec{x}d\vec{x}'\-d\vec{X}}\!\!\!\integrate[\R^{3(N-1)}]{\abs{\xi(\vec{y},\vec{y}'\-,\vec{Y})\--\xi(\vec{x},\vec{x}'\-,\vec{X})}^2;\mspace{-33mu}d\vec{y}d\vec{y}'\-d\vec{Y}}\+G^\lambda\bigg(\!\!
            \begin{array}{r c c l}
                \vec{x}, & \!\!\-\vec{x}'\-, &\!\!\- \vec{x}, & \!\!\! \vec{X}\\[-2.5pt]
                \vec{y}, & \!\!\-\vec{y}, &\!\!\-\vec{y}'\-, & \!\!\!\vec{Y}
            \end{array}\!\!\-\bigg).\nonumber
    \end{align}
    \end{subequations}
\end{note}
\begin{note}\label{4BodyNeed}
    Formulas~\eqref{off1PhiSplitted} and~\eqref{off0PhiSplitted} show that the off-diagonal terms $\Phi^{\lambda}_{\mathrm{off};\+1}$ and $\Phi^{\lambda}_{\mathrm{off};\+0}$ of the quadratic form $\Phi^{\lambda}$ contain a negative contribution characterized by the same Coulomb-type singularity.
    Such negative terms must be compensated in order to prove that $\Phi^\lambda$ is closed and coercive for $\lambda$ large enough, and therefore the resulting Hamiltonian is stable.
    This is the technical reason why we introduced the regularizing three-body $\Phi^{(3)}_{\mathrm{reg}}$ and four-body $\Phi^{(4)}_{\mathrm{reg}}$ terms in the quadratic form (see~\eqref{regPhi}), which in turn come from the last two terms in~\eqref{alfa}.
    Note that for $N\!=\-3$ the terms $\Phi^{\lambda}_{\mathrm{off};\+0}$ and $\Phi^{(4)}_{\mathrm{reg}}$ are absent, and it is known that the quadratic form is closed and coercive if $\Phi^{(3)}_{\mathrm{reg}}$ is present. 
    By analogy, it is reasonable to expect that for $N \-\geq\- 4$ the singular negative contribution of $\Phi^{\lambda}_{\mathrm{off};\+0}$ must be compensated by a term like $\Phi^{(4)}_{\mathrm{reg}}\+$.\newline
    In principle, one might hope that the three-body term $\Phi^{(3)}_{\mathrm{reg}}$ would be sufficient to guarantee stability even for $N \geq 4$.
    However, at least heuristically, it seems to us that this is hard to believe, since the two singular terms $\Phi^{\lambda}_{\mathrm{off};\+1}$,  $\Phi^{\lambda}_{\mathrm{off};\+0}$ have the same kind of singularity but, on the other hand, $\Phi^{\lambda}_{\mathrm{off};\+0}$ is a term of order $N^2$ while $\Phi^{(3)}_{\mathrm{reg}}$ grows as $N$.
    So, at least for $N$ large, $\Phi^{(3)}_{\mathrm{reg}}$ does not seem sufficient to guarantee stability.\newline
    It would be interesting to give a rigorous proof of the necessity of $\Phi^{(4)}_{\mathrm{reg}}\+$ for the stability, but at present we are unable to exhibit such a result.\newline
    It is notable that the same regularizing terms also appear in the approach via Dirichlet forms (see~\eqref{Dirichlet4Body}).
    We also stress that no further regularizing term associated with a $N$-body interaction is required if $N\-\geq\- 5$. 
\end{note}
\n We conclude this section by providing a lower bound for $\Phi^\lambda$ in terms of the diagonal contribution $\Phi^\lambda_{\mathrm{diag}}\+$.

\begin{lemma}\label{neededToClosedness}
    Let $\Phi^\lambda$ be the hermitian quadratic form in $\hilbert*_{N-1}$ defined by~\emph{(\ref{phiDefinition},~\ref{phiComponents})} and $\alpha^-_0\vcentcolon=\max\{0,-\alpha_0\}$.
    Then
    \begin{equation}\label{firstPhiLowerBound}
        \Phi^\lambda[\xi]\geq 4\pi N(N\!-\-1)\!\left[1\--\Lambda_N-\sqrt{\tfrac{2}{\lambda}\+}\+\alpha^-_0-\tfrac{(N+1)(N-2)\gamma}{\sqrt{8\+\lambda}\, b}\right]\-\Phi^\lambda_{\mathrm{diag}}[\xi],\quad\forall \lambda\->\-0,\, \xi\-\in\-\dom{\Phi^\lambda},
    \end{equation}
    with $\Lambda_N$ given by~\eqref{simplerLambda}.
    Furthermore, $\Phi^\lambda$ is coercive in $H^{\frac{1}{2}}(\R^{3(N-1)})$ whenever $\lambda\->\-\lambda^\ast_0$ with
    \begin{equation}\label{firstLambdaNote}
        \lambda^\ast_0\vcentcolon=\frac{2\left[\alpha^-_0+(N\!+\-1)(N\!-\-2)\,\gamma/(4\+b)\right]^2}{(1\--\Lambda_N)^2}
    \end{equation}
    provided $\gamma>\gamma_c$ defined by~\eqref{criticalGammaN}.
    \begin{proof}
        Taking account of Remark~\ref{negativeContributionsHighlighted} and the lower bound for $\theta$ provided by assumption~\eqref{positiveBoundedCondition}, we get
        \begin{align*}
            (\Phi^{(3)}_{\mathrm{reg}}\!+\-\Phi^\lambda_{\mathrm{off;\+1}})[\xi]\mspace{-2.25mu}\geq \mspace{-0.75mu}(N\!-\-2)\!\-&\integrate[\R^{3(N-1)}]{\frac{\gamma\--\-2\+e^{-\sqrt{\frac{\lambda}{2}\+}\,\abs{\vec{x}\+-\+\vec{x}'}}\nquad}{\abs{\vec{x}\--\-\vec{x}'}}\:\abs{\xi(\vec{x},\vec{x}'\-,\vec{X}\-)}^2;\mspace{-33mu}d\vec{x}d\vec{x}'\-d\vec{X}}-\tfrac{(N\--2)\+\gamma}{b}\norm{\xi}[\hilbert*_{N-1}]^2\!,\\
            (\Phi^{(4)}_{\mathrm{reg}}\!+\-\Phi^\lambda_{\mathrm{off;\+0}})[\xi]\mspace{-2.25mu}\geq \mspace{-0.75mu}\tfrac{\-(N\--2)(N\--3)\-}{4}\!\!&\integrate[\R^{3(N-1)}]{\frac{\gamma\--\-2\+e^{-\sqrt{\frac{\lambda}{2}\+}\,\abs{\vec{x}'-\+\vec{x}''}}\nquad}{\abs{\vec{x}'\--\vec{x}''}}\:\abs{\xi(\vec{x},\vec{x}'\-,\vec{x}''\!,\vec{X}\-)}^2;\mspace{-33mu}d\vec{x}d\vec{x}'\-d\vec{x}''\!d\vec{X}}\--\tfrac{(N\--2)(N\--3)\+\gamma}{4\+b}\norm{\xi}[\hilbert*_{N-1}]^2\!.
        \end{align*}
        This implies
        \begin{align*}
            (\Phi_{\mathrm{reg}}\!+\Phi^\lambda_{\mathrm{off;\+0}}\-+\Phi^\lambda_{\mathrm{off;\+1}})[\xi]\-
            \geq (N\!-\-2)&\min\{0,\gamma\--\-2\}\mspace{9mu}\mspace{-12mu}\left[\integrate[\R^{3(N-1)}]{\frac{1}{r}\left\lvert\xi\!\left(\tfrac{1}{3}\+\vec{r}+\vec{r}'\-,\vec{r}'\--\tfrac{2}{3}\+\vec{r},\vec{X}\right)\-\right\rvert^2;\mspace{-33mu}d\vec{r}d\vec{r}'\-d\vec{X}}\-+\right.\\
            &\left.+\+\frac{\!(N\!-\-3)\!}{4}\!\-\integrate[\R^{3(N-1)}]{\frac{1}{r\+'}\left\lvert\xi\!\left(\vec{x},\vec{r}''\!+\tfrac{1}{2}\+\vec{r}'\-,\vec{r}''\!-\tfrac{1}{2}\+\vec{r}'\-,\vec{X}\right)\-\right\rvert^2;\mspace{-33mu}d\vec{x}d\vec{r}'\-d\vec{r}''\!d\vec{X}}\right]+\\
            &\pushright{+\!\left[\alpha_0-\frac{\!(N\!+\-1)(N\!-\-2)\+\gamma}{4\,b}\right]\!\norm{\xi}[\hilbert*_{N-1}]^2}.
        \end{align*}
        We recall that in order to compute the Fourier transform of a function $f(\vec{x},\vec{y})\-\in\-\Lp{2}[\R^d\-\otimes\-\R^d\-,\,d\vec{x}d\vec{y}]$ composed with a $2\-\times\-2$ non-singular complex matrix $A$ applied to its arguments, apart from a Jacobian factor, one needs to compose the Fourier transform of $f$ with the inverse of the transpose of $A$, namely
        \begin{equation}
            \widehat{f\circ A}=\abs{\det{A}}^d\,\FT{f}\circ(A^\intercal)^{-1}.
        \end{equation}
        Therefore, exploiting the Hardy-Rellich inequality~\eqref{Hardy-Rellich} one gets
        \begin{align*}
            (\Phi_{\mathrm{reg}}\!+\Phi^\lambda_{\mathrm{off;\+0}}\-&+\Phi^\lambda_{\mathrm{off;\+1}})[\xi]\-\geq \tfrac{(N\--\+2)\,\pi}{2}\min\{0,\gamma\--\-2\}\!\left[\integrate[\R^{3(N-1)}]{\+q\-\left\lvert\FT{\xi}\!\left(\vec{q}+\tfrac{2}{3}\+\vec{q}'\-,\tfrac{1}{3}\+\vec{q}'\--\vec{q},\vec{P}\right)\-\right\rvert^2;\mspace{-33mu}d\vec{q}d\vec{q}'\-d\mspace{-0.75mu}\vec{P}}\!\-+\right.\\
            &\left.+\+\tfrac{(N\--\+3)}{4}\!\-\integrate[\R^{3(N-1)}]{\+q'\-\left\lvert\FT{\xi}\!\left(\vec{q},\vec{q}'\-+\tfrac{\vec{q}''}{2},\tfrac{\vec{q}''}{2}-\vec{q}'\-,\vec{P}\right)\-\right\rvert^2;\mspace{-33mu}d\vec{q}d\vec{q}'d\vec{q}''\-d\mspace{-0.75mu}\vec{P}}\right]\-+\!\left[\alpha_0-\tfrac{\!(N\-+1)(N\--2)\+\gamma}{4\+b}\right]\!\norm{\xi}[\hilbert*_{N-1}]^2\\
            =&\,\tfrac{(N\--\+2)\,\pi}{2}\min\{0,\gamma\--\-2\}\!\left[\integrate[\R^{3(N-1)}]{\+\tfrac{\abs{\vec{p}\+-2\vec{p}'}}{3}\+\abs{\FT{\xi}(\vec{p},\vec{p}'\-,\vec{P})}^2;\mspace{-33mu}d\vec{p}\+d\vec{p}'\-d\mspace{-0.75mu}\vec{P}}+\right.\\
            &\left.+\+\tfrac{(N\--\+3)}{4}\!\-\integrate[\R^{3(N-1)}]{\tfrac{\abs{\vec{p}'-\+\vec{p}''}}{2}\+\abs{\FT{\xi}(\vec{p},\vec{p}'\-,\vec{p}''\!,\vec{P})}^2;\mspace{-33mu}d\vec{p}\+d\vec{p}'\-d\vec{p}''\!d\mspace{-0.75mu}\vec{P}}\right]\-+\!\left[\alpha_0-\tfrac{\!(N\-+1)(N\--2)\+\gamma}{4\+b}\right]\!\norm{\xi}[\hilbert*_{N-1}]^2\-.
        \end{align*}
        Moreover, taking into account the elementary (sharp) inequalities
        \begin{align}
            a+2b\leq \sqrt{6}\+\sqrt{\tfrac{\,a^2\!}{2}+b^2}, && a+b\leq \sqrt{2}\+\sqrt{a^2+b^2}, \qquad \forall a,b>0            
        \end{align}
        one can deduce
        \begin{align*}
            (\Phi_{\mathrm{reg}}\!+\Phi^\lambda_{\mathrm{off;\+0}}\-&+\Phi^\lambda_{\mathrm{off;\+1}})[\xi]\-\geq\,\tfrac{(N\--2)\,\pi}{2}\min\{0,\gamma\--\-2\}\!\left[\sqrt{\tfrac{2}{3}\+}\!\-\integrate[\R^{3(N-1)}]{\sqrt{\tfrac{p^2}{2}+{p\+'\+}^2\+}\+\abs{\FT{\xi}(\vec{p},\vec{p}'\-,\vec{P})}^2;\mspace{-33mu}d\vec{p}\+d\vec{p}'\-d\mspace{-0.75mu}\vec{P}}+\right.\\
            &\left.+\+\tfrac{(N\--3)}{4\sqrt{2}}\!\-\integrate[\R^{3(N-1)}]{\sqrt{{p\+'\+}^2\!+{p\+''\+}^2\+}\+\abs{\FT{\xi}(\vec{p},\vec{p}'\-,\vec{p}''\!,\vec{P})}^2;\mspace{-33mu}d\vec{p}\+d\vec{p}'\-d\vec{p}''\!d\mspace{-0.75mu}\vec{P}}\right]\-+\!\left[\alpha_0-\tfrac{\!(N\-+1)(N\--2)\+\gamma}{4\+b}\right]\!\norm{\xi}[\hilbert*_{N-1}]^2\-.
        \end{align*}
        Hence,
        \begin{equation}\label{actuallyUsefulLowerBound} (\Phi_{\mathrm{reg}}\!+\Phi^\lambda_{\mathrm{off;\+0}}\-+\Phi^\lambda_{\mathrm{off;\+1}})[\xi]\-\geq -\Lambda_N\,\Phi^0_{\mathrm{diag}}[\xi]+\!\left[\alpha_0-\tfrac{\!(N\-+1)(N\--2)\,\gamma}{4\+b}\right]\!\norm{\xi}[\hilbert*_{N-1}]^2
        \end{equation}
        which implies
        \begin{equation*}
            (\Phi_{\mathrm{reg}}\!+\Phi^\lambda_{\mathrm{off;\+0}}\-+\Phi^\lambda_{\mathrm{off;\+1}})[\xi]\geq -\Lambda_N\,\Phi^\lambda_{\mathrm{diag}}[\xi]-\!\left[\alpha^-_0+\tfrac{\!(N\-+1)(N\--2)\,\gamma}{4\+b}\right]\!\sqrt{\tfrac{2}{\lambda}\+}\,\Phi^\lambda_{\mathrm{diag}}[\xi]
        \end{equation*}
        proving the first statement.
        Concerning the coercivity in $H^{\frac{1}{2}}(\R^{3(N-1)})$, the result can be obtained by noticing that $\Phi^\lambda_{\mathrm{diag}}$ defines an equivalent norm to $H^{\frac{1}{2}}(\R^{3(N-1)})$
        so that inequality~\eqref{firstPhiLowerBound} provides the result assuming $\Lambda_N<1$ (that holds if and only if $\gamma\->\-\gamma_c$ is fulfilled) and $\lambda$ large enough.

    \end{proof}
\end{lemma}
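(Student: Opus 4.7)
The plan is to isolate the divergent Coulomb contributions hidden inside $\Phi^\lambda_{\mathrm{off};\,0}$ and $\Phi^\lambda_{\mathrm{off};\,1}$, compensate them with the singular pieces $\Phi^{(3)}_{\mathrm{reg}}\,,\Phi^{(4)}_{\mathrm{reg}}$ of the regularising quadratic form, and then control what is left with $\Phi^\lambda_{\mathrm{diag}}[\xi]$ via the sharp Hardy--Rellich inequality~\eqref{Hardy-Rellich}. First, I would invoke the splitting in Remark~\ref{negativeContributionsHighlighted}, which rewrites each off-diagonal term as an explicit negative Coulomb-type integral (of the form $-c(N)\int e^{-\sqrt{\lambda/2}\,r}|\xi|^2/r$) plus a non-negative squared-difference term weighted by $G^\lambda$. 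Dropping the non-negative pieces produces a lower bound in which only the singular Coulomb integrals remain on the ``off'' side.

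Next, I would combine these Coulomb integrals with $\Phi^{(3)}_{\mathrm{reg}}[\xi]$ and $\Phi^{(4)}_{\mathrm{reg}}[\xi]$, using the pointwise lower bound $\theta(r)\geq 1-r/b$ from~\eqref{positiveBoundedCondition}. This replaces $\theta(r)$ by $1$ at the cost of the bounded error $-\tfrac{(N+1)(N-2)\gamma}{4\,b}\norm{\xi}[\hilbert*_{N-1}]^2$. Adding $\Phi^{(2)}_{\mathrm{reg}}[\xi]=\alpha_0\norm{\xi}[\hilbert*_{N-1}]^2$ gives, up to a bounded additive term proportional to $\alpha_0-(N+1)(N-2)\gamma/(4b)$, a pair of Coulomb integrals with prefactor $(\gamma-2)$. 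When $\gamma\geq 2$ these are non-negative and can be discarded; when $\gamma<2$ they contribute a factor $\min\{0,\gamma-2\}$, which is the only regime that needs further work.

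The heart of the argument is then the Hardy--Rellich inequality~\eqref{Hardy-Rellich}, applied in the relative coordinates $(\vec{x}-\vec{x}')$ and $(\vec{x}'-\vec{x}'')$ after a volume-preserving change of variables. This converts the Coulomb integrals into fractional momentum integrals of the form $\int q\,|\FT{\xi}(\vec{q}+\tfrac{2}{3}\vec{q}',\tfrac{1}{3}\vec{q}'-\vec{q},\vec{P})|^{2}$ and the analogous four-body one. Undoing the change of variables and using the elementary sharp bounds $a+2b\leq\sqrt{6}\sqrt{a^2/2+b^2}$ and $a+b\leq\sqrt{2}\sqrt{a^2+b^2}$, these momentum integrals are dominated by $\Phi^{0}_{\mathrm{diag}}[\xi]$ with exactly the coefficient appearing in $\Lambda_N$ from~\eqref{simplerLambda}. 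Writing $\Phi^0_{\mathrm{diag}}\leq\Phi^\lambda_{\mathrm{diag}}$ and controlling the remaining $\norm{\xi}[\hilbert*_{N-1}]^2$ terms by $\sqrt{2/\lambda}\,\Phi^\lambda_{\mathrm{diag}}[\xi]$ (a consequence of $\Phi^\lambda_{\mathrm{diag}}[\xi]\geq\sqrt{\lambda/2}\norm{\xi}[\hilbert*_{N-1}]^2$, itself evident from~\eqref{diagPhi}) yields~\eqref{firstPhiLowerBound}.

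For the coercivity statement, I would use that $\Phi^\lambda_{\mathrm{diag}}[\xi]$ is, up to a $\lambda$-dependent multiplicative constant, equivalent to $\norm{\xi}[H^{1/2}(\R^{3(N-1)})]^2$. When $\gamma>\gamma_c$ one has $\Lambda_N<1$ by~\eqref{simplerLambda}, and choosing $\lambda>\lambda_0^\ast$ as in~\eqref{firstLambdaNote} makes the bracketed coefficient in~\eqref{firstPhiLowerBound} strictly positive, delivering coercivity. The main technical obstacle is the constant-chasing in the Hardy--Rellich step: one must simultaneously track the exchange symmetries of $\xi$ (which multiply the three- and four-body terms), pick the right Jacobi-type change of variables, and use the sharpest elementary inequalities so that the resulting coefficient matches exactly the expression defining $\gamma_c$; any looser inequality would raise the threshold above the claimed $\gamma_c$.
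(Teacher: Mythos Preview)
Your proposal is correct and follows essentially the same approach as the paper's proof: the same splitting from Remark~\ref{negativeContributionsHighlighted}, the same use of $\theta(r)\geq 1-r/b$, the same Hardy--Rellich step in relative coordinates, the same elementary sharp inequalities $a+2b\leq\sqrt{6}\sqrt{a^2/2+b^2}$ and $a+b\leq\sqrt{2}\sqrt{a^2+b^2}$, and the same final estimates $\Phi^0_{\mathrm{diag}}\leq\Phi^\lambda_{\mathrm{diag}}$ and $\norm{\xi}[\hilbert*_{N-1}]^2\leq\sqrt{2/\lambda}\,\Phi^\lambda_{\mathrm{diag}}[\xi]$. Your remark about the ``main technical obstacle'' being the constant-chasing is also apt, since the paper tracks these coefficients exactly to recover the threshold $\gamma_c$.
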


\comment{
To this end, it is convenient to introduce the Gagliardo semi-norm of the Sobolev space $H^{\frac{1}{2}}(\R^n)$, defined as
\begin{equation}\label{GagliardoHilbertSobolev}
    [\+u\+]^2_{\frac{1}{2}}\-\vcentcolon=\!\-\integrate[\R^{2n}]{\frac{\abs{u(\vec{x})\--u(\vec{y})}^2\!\!}{\;\abs{\vec{x}-\vec{y}}^{n+1}}; \!\!d\vec{x}d\vec{y}},\qquad u\in H^{\frac{1}{2}}(\R^n),
\end{equation}
so that $\norm{u}[H^{1/2}(\R^n)]^2\!\-=\norm{u}^2\!+[\+u\+]^2_{\frac{1}{2}}$.
In terms of the Fourier transform of $u$ one also has (see \eg,~\cite[proposition 3.4]{DNPV} or~\cite[section 7.12 (4)]{LL})
\begin{equation}\label{GagliardoHilbertSobolevFourier}
   [\+u\+]^2_{\frac{1}{2}}\!=\frac{\;2\+\pi^{\frac{n+1}{2}}\!}{\Gamma\!\left(\frac{n\,+\+1}{2}\-\right)}\-\integrate[\R^n]{\abs{\vec{k}}\abs{\FT{u}(\vec{k})}^2;\-d\vec{k}}.
\end{equation}
}

\section{Proof of the Main Results}\label{proveMain}
We are now ready to prove the results stated in Section~\ref{main}.

\begin{proof}[Proof of Proposition~\ref{closednessTheo}]
First, let us show the lower bound for $\Phi^\lambda$.
Let us rewrite inequality~\eqref{actuallyUsefulLowerBound} in the following way
\begin{equation*}
        \Phi^\lambda[\xi]\!\geq \- \tfrac{4\pi N (N-1)\!}{\sqrt{2}}\!\!\-\integrate[\R^{3(N-1)}]{\!\-\left[\-\sqrt{\tfrac{1}{2}\+p^2\!+\!P^2\!+\!\lambda}-\-\Lambda_N\sqrt{\tfrac{1}{2}\+p^2\!+\!P^2}\-+\-\sqrt{2\+}\+\alpha_0\--\-\tfrac{(N\-+1)(N\--2)\+\gamma}{2\+\sqrt{2\+}\+b}\right]\!\abs{\FT{\xi}(\vec{p},\-\vec{P})}^2;\mspace{-33mu}d\vec{p}\+d\mspace{-0.75mu}\vec{P}}\!.
\end{equation*}
One can check that, as long as $\Lambda_N\-<\-1$, the integrand attains its minimum when
\begin{equation*}
    \sqrt{\tfrac{1}{2}\+p^2\-+\!P^2}=\Lambda_N\+\sqrt{\frac{\lambda}{1\--\-\Lambda_N^2\!\-}\,}\,,
\end{equation*}
therefore
\begin{align*}
        \Phi^\lambda[\xi]\-&\geq \tfrac{4\pi N (N\--1)\!}{\sqrt{2}}\!\-\integrate[\R^{3(N-1)}]{\!\left[\sqrt{\lambda}\,\sqrt{1\--\-\Lambda_N^2}+\sqrt{2\+}\+\alpha_0-\tfrac{(N\-+1)(N\--2)\+\gamma}{2\+\sqrt{2\+}\+b}\right]\!\abs{\FT{\xi}(\vec{p},\vec{P})}^2;\mspace{-33mu}d\vec{p}\+d\mspace{-0.75mu}\vec{P}}\\
        &=\tfrac{4\pi N (N\--1)\!}{\sqrt{2}}\!\left[\sqrt{\lambda}\,\sqrt{1\--\-\Lambda_N^2}+\sqrt{2\+}\+\alpha_0-\tfrac{(N\-+1)(N\--2)\+\gamma}{2\+\sqrt{2\+}\+b}\right]\!\norm{\xi}[\hilbert*_{N-1}]^2\!.
\end{align*}
Hence, we have proved that $\Phi^\lambda$ is coercive for any $\lambda\->\-\lambda_0$ with
\begin{equation}
    \lambda_0\vcentcolon=\frac{2\max\{0,-\alpha_0+(N\!+\-1)(N\!-\-2)\,\gamma/(4\+b)\}^2\!\-}{1\--\-\Lambda_N^2}\,.
\end{equation}
Observe that $\lambda_0<\lambda_0^\ast$, with $\lambda_0^\ast$ given by~\eqref{firstLambdaNote}.\newline
Next, we focus on the closedness of the quadratic form.
In light of Proposition~\ref{neededToClosedness}, we already know that $\Phi^\lambda$ is closed for any $\lambda\->\-\lambda^\ast_0\+$; however, we are going to extend this result for any value of $\lambda\->\-0$.
Since $\Phi^\lambda$ is hermitian, the quantity
\begin{equation}
    \define{\norm{\cdot}[\Phi^\lambda]^2;\Phi^\lambda[\+\cdot\+]+4\pi N(N\!-\-1)\!\left[\sqrt{\tfrac{\lambda}{2}\+}\-\left(1\!-\-\sqrt{1\--\-\Lambda_N^2}\right)\--\alpha_0+\tfrac{(N\-+1)(N\--2)\,\gamma}{4\+b}\right]\!\norm{\cdot}[\hilbert*_{N-1}]^2}
\end{equation}
defines a norm in $\hilbert*_{N-1}$ for any fixed $\lambda\->\-0$.
Therefore, let us show that $H^{\frac{1}{2}}(\R^{3(N-1)})$ is closed under this newly defined norm.
Let $\psi\in\hilbert*_{N-1}$ and $\{\psi_n\}_{n\+\in\+\N}\subset \dom{\Phi^\lambda}$ be, respectively, a vector and a Cauchy sequence for $\norm{\cdot}[\Phi^\lambda]\+$ such that $\norm{\psi_n\--\psi}[\hilbert*_{N-1}]\!\!\longrightarrow 0$.
Therefore, since inequality~\eqref{actuallyUsefulLowerBound} reads
\begin{equation*}
    \Phi^\lambda[\+\cdot\+]\geq 4\pi N(N\!-\-1)\!\left\{(1-\Lambda_N)\+
    \Phi^\lambda_{\mathrm{diag}}[\+\cdot\+]+\-\left[\alpha_0-\tfrac{(N\-+1)(N\--2)\,\gamma}{4\+b}\right]\-\norm{\cdot}[\hilbert*_{N-1}]^2\+\right\}\!,\qquad \lambda>0
\end{equation*}
one has
\begin{equation*}
    (1-\Lambda_N)\+
    \Phi^\lambda_{\mathrm{diag}}[\psi_n\--\psi_m]+\sqrt{\tfrac{\lambda}{2}\+}\-\left(1\--\-\sqrt{1\--\-\Lambda_N^2}\right)\!\norm{\psi_n\--\psi_m}[\hilbert*_{N-1}]^2\-\leq\frac{\:\norm{\psi_n\--\psi_m}[\Phi^\lambda]^2\!\!}{4\pi N(N\!-\-1)}\longrightarrow 0.
\end{equation*}
The left-hand side is composed of positive terms, the second of which vanishes in the limit.
This means that
\begin{equation*}
    \Phi^\lambda_{\mathrm{diag}}[\psi_n\--\psi_m]\longrightarrow 0.
\end{equation*}
In other words, $\{\psi_n\}_{n\+\in\+\N}$ is a Cauchy sequence also in $H^{\frac{1}{2}}(\R^{3(N-1)})$ and, because of the uniqueness of the limit, it converges to $\psi\!\in\!H^{\frac{1}{2}}(\R^{3(N-1)})$.
We have shown that $(\dom{\Phi^\lambda}, \norm{\cdot}[\Phi^\lambda])$ is a Banach space for any $\lambda\->\-0$, therefore $\Phi^\lambda$ is closed regardless of the value of $\lambda\->\-0$.

\end{proof}

\begin{prop}\label{GammaDomain}
    Let $\lambda\->\-0$ and $\Gamma^\lambda\-\in\-\linear{\hilbert*_{N-1}}$ be uniquely defined by the hermitian, closed and lower semi-bounded quadratic form $\Phi^\lambda$ in $\hilbert*_{N-1}$ given by~(\ref{phiDefinition},~\ref{phiComponents}), \ie
    \begin{equation}
        \Gamma^\lambda=4\pi N(N\!-\-1)\big(\Gamma^\lambda_{\!\mathrm{diag}}\! +\Gamma^\lambda_{\!\mathrm{off};\+0}+\Gamma^\lambda_{\!\mathrm{off};\+1}\-+\Gamma_{\!\mathrm{reg}}\big)
    \end{equation}
    with
    \begin{subequations}\label{GammaDefs}
    \begin{gather}
        (\widehat{\Gamma^\lambda_{\!\mathrm{diag}}\+\xi})(\vec{k},\vec{p}_1,\ldots,\vec{p}_{N-2})=\tfrac{1}{\!\-\sqrt{2\+}\+}\sqrt{\tfrac{1}{2}\+k^2\-+p_1^2+\ldots+p_{N-2}^2+\lambda\,}\,\FT{\xi}(\vec{k},\vec{p}_1,\ldots,\vec{p}_{N-2}),\label{diagGamma}\\(\widehat{\Gamma^\lambda_{\!\mathrm{off};\+1}\+\xi})(\vec{k},\vec{p}_1,\ldots,\vec{p}_{N-2})= -\tfrac{2}{\:\pi^2\!}\sum_{i\+=\+1}^{N\--2}\-\integrate[\R^3]{\frac{\FT{\xi}(\vec{k}-\vec{q}+\vec{p}_i,\vec{q},\vec{p}_1,\ldots\check{\vec{p}}_i\ldots,\vec{p}_{N-2})}{\abs{\vec{k}\--\-\vec{q}}^2+q^2+p_1^2\-+\-\ldots+p_{N-2}^2+\lambda};\-d\vec{q}},\label{off1Gamma}\\
        (\widehat{\Gamma^\lambda_{\!\mathrm{off};\+0}\+\xi})(\vec{k},\vec{p}_1,\ldots,\vec{p}_{N-2})= -\tfrac{1}{\:\pi^2\!}\!\sum_{\substack{i,\,j\+=1 \\ i\+<\+j}}^{N\--2}\integrate[\R^3]{\frac{\FT{\xi}(\vec{p}_i+\vec{p}_j,\vec{k}-\vec{q},\vec{q},\vec{p}_1,\ldots\check{\vec{p}}_{ij}\ldots,\vec{p}_{N-2})}{\abs{\vec{k}\--\-\vec{q}}^2+q^2+p_1^2\-+\ldots+\-p_{N-2}^2+\lambda};\-d\vec{q}},\label{off0Gamma}\\[-2.5pt]
        (\Gamma_{\!\mathrm{reg}}\+\xi)(\vec{z},\vec{y}_1,\ldots,\vec{y}_{N-2})=\-\Bigg[\alpha_0\-+\gamma\sum_{i\+=\+1}^{N\--2}\,\frac{\theta(\vec{z}\--\-\vec{y}_i)}{\abs{\vec{z}\--\-\vec{y}_i}}+\tfrac{\gamma}{2}\!\sum_{\substack{i,\,j\+=1 \\ i\+<\+j}}^{N\--2}\frac{\theta(\vec{y}_i\--\-\vec{y}_j)}{\abs{\vec{y}_i\--\-\vec{y}_j}}\Bigg]\xi(\vec{z},\vec{y}_1,\ldots,\vec{y}_{N-2}).\label{regGamma}
    \end{gather}
    \end{subequations}

    \vspace{-0.25cm}
    
    \n Then, one has $\dom{\Gamma^\lambda}=H^1(\R^{3(N-1)})\cap\hilbert*_{N-1}$.
    \begin{proof}
        First of all, the domain of $\Gamma^\lambda_{\mathrm{diag}}$ is clearly $H^1(\R^{3(N-1)})\cap\hilbert*_{N-1}$, hence the result is proven as soon as we show that $\dom{\Gamma^\lambda}\-\supseteq\- H^1(\R^{3(N-1)})\cap\hilbert*_{N-1}$ or, equivalently,
        $$\Gamma^\lambda\in\bounded{H^1(\R^{3(N-1)})\cap\hilbert*_{N-1},\hilbert*_{N-1}}\!.$$
        To this end, assume $\xi\-\in\- \hilbert*_{N-1}\cap H^1(\R^{3(N-1)})$.\newline
        Focusing on $\Gamma_{\!\mathrm{reg}}\+$, it is sufficient to recall $\theta\in\Lp{\infty}[\R^3]$ and the Hardy inequality
        \begin{equation}
            \integrate[\R^3]{\frac{\abs{f(\vec{x})}^2\!\-}{x^2};\-d\vec{x}}\leq 4 \norm{\nabla f}^2\!,\qquad f\in H^1(\R^3).
        \end{equation}
        Regarding $\Gamma^\lambda_{\!\mathrm{off};\+1}$, first we remind the triangular inequality for higher powers which is obtained via Jensen's inequality
        \begin{equation}\label{highPowerTriangular}
            \bigg\lvert\sum_{i=1}^n a_i\bigg\rvert^s\!\!=n^s\bigg\lvert\frac{1}{n}\sum_{i=1}^n a_i\bigg\rvert^s\!\!\leq n^s\,\frac{1}{n}\sum_{i=1}^n \,\abs{a_i}^s=n^{s-1}\sum_{i=1}^n \,\abs{a_i}^s,\qquad \forall \{a_i\}^n_{i=1}\subset\C,\, s\geq 1.
        \end{equation}
        We exploit the previous inequality for $s=2$ and, due to the bosonic symmetry,
        $$\lVert\Gamma^\lambda_{\!\mathrm{off};\+1}\+\xi\rVert_{\hilbert*_{N-1}}^2\!\leq\frac{4\+(N\!-\-2)^2\!\-}{\pi^4}\-\integrate[\R^{3(N-1)}]{;\mspace{-33mu}d\vec{p}\+d\vec{p}'\-d\mspace{-0.75mu}\vec{P}}\-\left\lvert\integrate[\R^3]{\frac{\FT{\xi}(\vec{p}-\vec{q}+\vec{p}'\-,\vec{q},\vec{P})}{\abs{\vec{p}\--\-\vec{q}}^2+q^2+p{\+'\+}^2\-+\-P^2+\lambda};\-d\vec{q}}\right\rvert^2\!.$$
        Furthermore, by replacing $(\vec{p},\vec{p}')\longmapsto\big(\tfrac{1}{2}\+\vec{k}+\vec{k}'\-,\tfrac{1}{2}\+\vec{k} \--\-\vec{k}'\big)$
        \begin{align*}
            \lVert\Gamma^\lambda_{\!\mathrm{off};\+1}\+\xi\rVert_{\hilbert*_{N-1}}^2\!&\leq\frac{4\+(N\!-\-2)^2\!\-}{\pi^4}\-\integrate[\R^{3(N-1)}]{;\mspace{-33mu}d\vec{k}\+d\vec{k}'\-d\mspace{-0.75mu}\vec{P}}\-\left\lvert\integrate[\R^3]{\frac{\FT{\xi}(\vec{k}-\vec{q},\vec{q},\vec{P})}{\abs{\frac{1}{2}\+\vec{k}+\vec{k}'\--\vec{q}}^2\-+q^2\-+\abs{\frac{1}{2}\+\vec{k}-\vec{k}'}^2\-+\-P^2\-+\-\lambda};\-d\vec{q}}\right\rvert^2\\
            &=\frac{4\+(N\!-\-2)^2\!\-}{\pi^4}\-\integrate[\R^{3(N-1)}]{;\mspace{-33mu}d\vec{k}\+d\vec{k}'\-d\mspace{-0.75mu}\vec{P}}\-\left\lvert\integrate[\R^3]{\frac{\FT{\xi}(\vec{k}-\vec{q},\vec{q},\vec{P})}{\frac{1}{2}\+\abs{\vec{k}-\vec{q}}^2\-+q^2\-+2\+\abs{\vec{k}'\--\frac{1}{2}\+\vec{q}}^2\-+\-P^2\-+\-\lambda};\-d\vec{q}}\right\rvert^2\\
            &\leq\frac{4\+(N\!-\-2)^2\!\-}{\pi^4}\-\integrate[\R^{3(N-1)}]{;\mspace{-33mu}d\vec{k}\+d\vec{k}'\-d\mspace{-0.75mu}\vec{P}}\-\left[\integrate[\R^3]{\frac{\abs{\FT{\xi}(\vec{k}-\vec{q},\vec{q},\vec{P})}}{q^2\-+2\+\abs{\vec{k}'\--\frac{1}{2}\+\vec{q}}^2};\-d\vec{q}}\right]^2\\
            &\leq \frac{4\+(N\!-\-2)^2\!\-}{\pi^4}\-\integrate[\R^{3(N-1)}]{;\mspace{-33mu}d\vec{k}\+d\vec{k}'\-d\mspace{-0.75mu}\vec{P}}\-\left[\integrate[\R^3]{\frac{\abs{\FT{\xi}(\vec{k}-\vec{q},\vec{q},\vec{P})}}{k{\+'\+}^2+\frac{1}{2}\+q^2};\-d\vec{q}}\right]^2\!\!,
        \end{align*}
        where in the last step we exploited the elementary lower bound
        $$ 2 \+\abs{\vec{v}-\tfrac{1}{2}\+\vec{w}}^2+w^2\geq v^2+\tfrac{1}{2} w^2\qquad\iff\qquad \abs{\vec{v}-\vec{w}}^2\geq 0.$$
        Next, writing the square of the previous integrand, one gets
        \begin{align*}
            \lVert\Gamma^\lambda_{\!\mathrm{off};\+1}\+\xi\rVert_{\hilbert*_{N-1}}^2\!&\leq\frac{4\+(N\!-\-2)^2\!\-}{\pi^4}\-\integrate[\R^{3(N-1)}]{;\mspace{-33mu}d\vec{k}\+d\vec{k}'\-d\mspace{-0.75mu}\vec{P}}\!\!\integrate[\R^6]{\frac{\abs{\FT{\xi}(\vec{k}-\vec{q},\vec{q},\vec{P})}\,\abs{\FT{\xi}(\vec{k}-\vec{q}'\-,\vec{q}'\-,\vec{P})}}{\left(k{\+'\+}^2+\frac{1}{2}\+q^2\right)\!\left(k{\+'\+}^2+\frac{1}{2}\+q{\+'\+}^2\right)};\-d\vec{q}d\vec{q}'\-}\\
            &\leq\frac{4\+(N\!-\-2)^2\!\-}{\pi^4}\-\integrate[\R^6]{;\-d\vec{q}d\vec{q}'\-}\!\!\integrate[\R^3]{;\-d\vec{k}'\-}\frac{\FT{f}(\vec{q})\FT{f}(\vec{q}')}{(k{\+'\+}^2\-+\frac{1}{2}\+q^2)(k{\+'\+}^2\-+\frac{1}{2}\+q{\+'\+}^2)},
        \end{align*}
        owing to the Cauchy–Schwarz inequality, with
        $$\FT{f}(\vec{p})\vcentcolon=\sqrt{\integrate[\R^{3(N-2)}]{\abs{\FT{\xi}(\vec{k},\vec{p},\vec{P})}^2;\mspace{-34.5mu}d\vec{k}\+d\mspace{-0.75mu}\vec{P}}\,}\+\in\Lp{2}[\R^3,\,(1+p^2)\+d\vec{p}].$$
        The integration along the variable $\vec{k}'$ can be performed explicitly, since
        \begin{equation}\label{explicitIntegral}
            \integrate[0;\pInfty]{\frac{x^2}{(x^2+a^2)(x^2+b^2)};\nquad dx}=\frac{\pi}{2\,(\abs{a}+\abs{b})},\qquad \forall a,b\in\R \,:\; a\neq 0 \lor b \neq 0.
        \end{equation}
        Therefore, one obtains
        $$\lVert\Gamma^\lambda_{\!\mathrm{off};\+1}\+\xi\rVert_{\hilbert*_{N-1}}^2\!\leq\frac{8\sqrt{2}\+(N\!-\-2)^2\!\-}{\pi^2}\-\integrate[\R^6]{\frac{\FT{f}(\vec{q})\FT{f}(\vec{q}')}{q+q{\+'}};\-d\vec{q}d\vec{q}'\-}.$$
        Since the kernel of the previous integral operator is symmetric, the Schur test reads
        $$    \lVert\Gamma^\lambda_{\!\mathrm{off};\+1}\+\xi\rVert_{\hilbert*_{N-1}}^2\!\leq\frac{8\sqrt{2}\+(N\!-\-2)^2\!\-}{\pi^2}\-\integrate[\R^3]{q^2\abs{\FT{f}(\vec{q})}^2;\-d\vec{q}}\;\mathop{\mathrm{ess} \;\mathrm{sup}}_{\vec{p}\,\in\,\R^3}\,\frac{h(\vec{p})}{p}\integrate[\R^3]{\frac{1}{(p{\+'}\-+p)\+p{\+'}}\+\frac{1}{h(\vec{p}')};\-d\vec{p}'}
        $$
        for any (almost everywhere) positive measurable function $\maps{h}{\R^3;\Rplus}$ for which the right-hand side of the previous expression is finite.
        In particular, choosing $h\-:\,\vec{s}\:\longmapsto\,\abs{\vec{s}}^{\frac{3}{2}}$
        \begin{align*}    \lVert\Gamma^\lambda_{\!\mathrm{off};\+1}\+\xi\rVert_{\hilbert*_{N-1}}^2\!&\leq\frac{32\sqrt{2}\+(N\!-\-2)^2\!\-}{\pi}\-\integrate[\R^3]{q^2\abs{\FT{f}(\vec{q})}^2;\-d\vec{q}}\;\sup_{p \,>\, 0}\sqrt{p}\integrate[0;\pInfty]{\frac{1}{(p{\+'}\-+p)\+\sqrt{p{\+'}}};\mspace{-18mu}dp{\+'}}\\
        &=\frac{32\sqrt{2}\+(N\!-\-2)^2\!\-}{\pi}\-\integrate[\R^3]{q^2\abs{\FT{f}(\vec{q})}^2;\-d\vec{q}}\integrate[0;\pInfty]{\frac{1}{(k+\-1)\+\sqrt{k}};\nquad dk}
        =32\sqrt{2}\+(N\!-\-2)^2\!\-\integrate[\R^3]{q^2\abs{\FT{f}(\vec{q})}^2;\-d\vec{q}}\\
        &\leq 32\sqrt{2}\,(N\!-\-2)^2\norm{\xi}[H^1(\R^{3(N-1)})]^2\-.
        \end{align*}
        Analogously, for $\Gamma^\lambda_{\!\mathrm{off};\+0}$ we consider the substitution $(\vec{p}'\-,\vec{p}'')\longmapsto\big(\tfrac{1}{2}\vec{k}+\vec{k}'\-,\tfrac{1}{2}\+\vec{k}\--\-\vec{k}'\big)$
        \begin{align*}
            \lVert\Gamma^\lambda_{\!\mathrm{off};\+0}\+\xi\rVert_{\hilbert*_{N-1}}^2\!&\leq\tfrac{(N\--2)^2(N\--3)^2\!\-}{4\+\pi^4}\!\integrate[\R^{3(N-1)}]{;\mspace{-33mu}d\vec{p}\+d\vec{p}'\-d\vec{p}''\!d\mspace{-0.75mu}\vec{P}}\-\left\lvert\integrate[\R^3]{\frac{\FT{\xi}(\vec{p}'\-+\vec{p}''\!,\vec{p}-\vec{q},\vec{q},\vec{P})}{\abs{\vec{p}\--\-\vec{q}}^2+q^2+p{\+'\+}^2+p{\+''\+}^2\-+\-P^2+\lambda};\-d\vec{q}}\right\rvert^2\\
            &=\tfrac{(N\--2)^2(N\--3)^2\!\-}{4\+\pi^4}\!\integrate[\R^{3(N-1)}]{;\mspace{-33mu}d\vec{p}\+d\vec{k}\+d\vec{k}'\-d\mspace{-0.75mu}\vec{P}}\-\left\lvert\integrate[\R^3]{\frac{\FT{\xi}(\vec{k},\vec{p}-\vec{q},\vec{q},\vec{P})}{\abs{\vec{p}\--\-\vec{q}}^2\-+q^2\-+\frac{1}{2}\+k^2\-+2\+k{\+'\+}^2\-+\-P^2\-+\-\lambda};\-d\vec{q}}\right\rvert^2\\
            &\leq\tfrac{(N\--2)^2(N\--3)^2\!\-}{4\+\pi^4}\!\integrate[\R^{3(N-1)}]{;\mspace{-33mu}d\vec{p}\+d\vec{k}\+d\vec{k}'\-d\mspace{-0.75mu}\vec{P}}\-\left[\integrate[\R^3]{\frac{\abs{\FT{\xi}(\vec{k},\vec{p}-\vec{q},\vec{q},\vec{P})}}{q^2\-+2\+k{\+'\+}^2};\-d\vec{q}}\right]^2.
        \end{align*}
        Setting $g\in H^1(\R^3)$ as
        $$\FT{g}(\vec{q})\vcentcolon=\sqrt{\integrate[\R^{3(N-2)}]{\abs{\FT{\xi}(\vec{k},\vec{p},\vec{q},\vec{P})}^2;\mspace{-34.5mu}d\vec{k}\+d\vec{p}\+d\mspace{-0.75mu}\vec{P}}\,}\+,$$
        the following is due to the Cauchy-Schwarz inequality and~\eqref{explicitIntegral}
        \begin{align*}
            \lVert\Gamma^\lambda_{\!\mathrm{off};\+0}\+\xi\rVert_{\hilbert*_{N-1}}^2\!&\leq\tfrac{(N\--2)^2(N\--3)^2\!\-}{4\+\pi^4}\!\integrate[\R^6]{\FT{g}(\vec{q})\,\FT{g}(\vec{q}');\-d\vec{q}d\vec{q}'\-}\!\integrate[\R^3]{\frac{1}{\left(q^2\-+2k{\+'\+}^2\right)\!\left(q{\+'\+}^2\-+2k{\+'\+}^2\right)};\-d\vec{k}'\-}\\
            &=\tfrac{(N\--2)^2(N\--3)^2\!\-}{2\sqrt{2}\+\pi^2}\!\integrate[\R^6]{\frac{\FT{g}(\vec{q})\,\FT{g}(\vec{q}')}{q+q{\+'}};\-d\vec{q}d\vec{q}'\-}.
        \end{align*}
        Exploiting the same Schur test previously carried out, we conclude 
        $$\lVert\Gamma^\lambda_{\!\mathrm{off};\+0}\+\xi\rVert_{\hilbert*_{N-1}}^2\!\leq \sqrt{2}\,(N\--2)^2(N\--3)^2\norm{\xi}[H^1(\R^{3(N-1)})]^2\!.$$
        
    \end{proof}
\end{prop}

\begin{proof}[Proof of Theorem~\ref{hamiltonianCharacterizationTheo}]
    Given the result of Proposition~\ref{closednessTheo}, the proof can be obtained with standard techniques.
    However, we provide the details for the reader's convenience.\newline
    First of all, by definition~\eqref{QF}, $\mathcal{Q}\geq\--\lambda_0$ since $\Phi^\lambda$ is coercive as soon as $\lambda\->\-\lambda_0$; therefore, the associated Hamiltonian $\mathcal{H}$ has the same lower bound.\newline
    In order to characterize the domain and action of the Hamiltonian, let $\psi=\varphi_\lambda\-+\mathcal{G}^\lambda\xi\-\in\-\dom{\mathcal{H}}$, with $\lambda\->\-0\+$.
    Then, there exists $f\-\in\-\hilbert_N$ such that the sesquilinear form $\mathcal{Q}[\+\cdot,\cdot\+]$ associated with $\mathcal{Q}[\+\cdot\+]$ via the polarization identity satisfies \begin{equation}\label{sesquilinearQRelation}
        \mathcal{Q}[v,\psi]=\scalar{v}{f},\quad \forall v=w^\lambda_v+\mathcal{G}^\lambda\xi_v\in\dom{Q},
    \end{equation}
    where $\define*{f;\mathcal{H}\psi}$.
    By definition, one has
    \begin{equation}\label{sesquilinearQ}
        Q[v,\psi]=\scalar{\mathcal{H}_0^{\frac{1}{2}}w^\lambda_v}{\mathcal{H}_0^{\frac{1}{2}} \varphi_\lambda}+\lambda\+\scalar{w^\lambda_v}{\varphi_\lambda}-\lambda\scalar{v}{\psi}+\Phi^\lambda[\xi_v,\xi].
    \end{equation}
    Let us consider $v\in H^1(\R^{3N})\cap\hilbert_N$, so that $\xi_v\equiv 0$ by injectivity of $\mathcal{G}^\lambda$.
    Then,
    \begin{equation*}
        \scalar{\mathcal{H}_0^{\frac{1}{2}} v}{\mathcal{H}_0^{\frac{1}{2}} \varphi_\lambda}+\lambda\+\scalar{v}{\varphi_\lambda}-\lambda\scalar{v}{\psi}=\scalar{v}{f},\quad\forall v\in H^1(\R^{3N})\cap\hilbert_N\+.
    \end{equation*}
    Hence, $\varphi_\lambda\-\in\- H^2(\R^{3N})\cap\hilbert_N$ and
    \begin{equation}\label{HpsiHfreew}
        (\mathcal{H}_0+\lambda)\varphi_\lambda-\lambda\+ \psi=f,
    \end{equation}
    which is equivalent to
    \begin{equation}
        \mathcal{H}\psi=\mathcal{H}_0\+ \varphi_\lambda-\lambda\+ \mathcal{G}^\lambda\xi.
    \end{equation}
    Next, let $v\-\in\-\dom{Q}$.
    Taking account of~\eqref{HpsiHfreew}, we have
    \begin{equation*}
        \scalar{v}{f+\lambda\+ \psi}=\scalar{w^\lambda_v}{(\mathcal{H}_0+\lambda)\varphi_\lambda}+\scalar{\mathcal{G}^\lambda\xi_v}{(\mathcal{H}_0+\lambda)\varphi_\lambda}.
    \end{equation*}
    On the other hand, recalling~\eqref{sesquilinearQRelation} and~\eqref{sesquilinearQ},
    \begin{gather*}
        \scalar{v}{f+\lambda\+ \psi}=Q[v,\psi]+\lambda\scalar{v}{\psi}=\scalar{w^\lambda_v}{(\mathcal{H}_0+\lambda)\varphi_\lambda}+\Phi^\lambda[\xi_v,\xi],
    \intertext{hence, exploiting~(\ref{potentialDef},~\ref{potentialDistributionalDef})}
        \Phi^\lambda[\xi_v,\xi]=\scalar{\mathcal{G}^\lambda\xi_v}{(\mathcal{H}_0+\lambda)\varphi_\lambda}=4\pi N(N\!-\-1)\scalar{\xi_v}{\varphi_\lambda|_{\pi_{\{N-1,N\}}}}[\hilbert*_{N-1}],\quad\forall\xi_v\in H^{\frac{1}{2}}(\R^{3(N-1)})\cap\hilbert*_{N-1}.
    \end{gather*}
    Therefore, by means of Proposition~\ref{GammaDomain}, we conclude that $\xi\-\in\-\dom{\Gamma^\lambda}=H^1(\R^{3(N-1)})$ and \begin{equation}
        \Gamma^\lambda\xi=4\pi N(N\!-\-1)\varphi_\lambda\big|_{\pi_{\{N-1,N\}}}.
    \end{equation}
    In the following, we show that our Hamiltonian satisfies boundary conditions~\eqref{mfBC} in the $\Lp{2}$-topology.
    In other words, we want to show the following identity
    \begin{equation}\label{strongTopology}
        \lim_{\vec{r}\to\vec{0}}\integrate[\R^{3(N-1)}]{\!\left\lvert (U_\sigma\psi)(\vec{r},\vec{x},\vec{X}_{\-\sigma})-\tfrac{\xi(\vec{x},\vec{X}_{\-\sigma}\-)}{r}-(\Gamma^{\+\sigma}_{\!\mathrm{reg}}\+\xi)(\vec{x},\vec{X}_{\-\sigma})\right\rvert^2;\mspace{-33mu}d\vec{x}d\vec{X}_{\-\sigma}}\!=0,\qquad\forall \psi \in\dom{\mathcal{H}},\,\sigma\in\mathcal{P}_N,
    \end{equation}
    where, given $\sigma=\{i,j\}\-\in\-\mathcal{P}_N$, $U_\sigma$
    is the following unitary operator
    \begin{equation}
    \label{unitaryCoordinate}\begin{split}
        &\maps{U_\sigma}{\hilbert_N; \left\{\psi\in\Lp{2}[\R^6,d\vec{r}d\vec{x}]\-\otimes\-\LpS{2}[\R^{3(N-2)},d\vec{X}_{\-\sigma}]\:\Big|\;\psi(\vec{r},\vec{x},\vec{X}_{\-\sigma})=\psi(-\vec{r},\vec{x},\vec{X}_{\-\sigma})\-\right\}}\\
        &\mspace{175mu}(U_\sigma\+f)(\vec{r},\vec{x},\vec{X}_{\-\sigma})=f(\vec{x}+\tfrac{\vec{r}}{2},\vec{x}-\tfrac{\vec{r}}{2},\vec{X}_{\-\sigma}).
        \end{split}
    \end{equation}
    Exploiting the decomposition of $\psi$ given by~\eqref{psiDec}, we write
    \begin{align*}
        \integrate[\R^{3(N-1)}]{;\mspace{-33mu}d\vec{x}d\vec{X}_{\-\sigma}}&\!\left\lvert (U_\sigma\psi)(\vec{r},\vec{x},\vec{X}_{\-\sigma})-\tfrac{\xi(\vec{x},\vec{X}_{\-\sigma}\-)}{r}-(\Gamma^{\+\sigma}_{\!\mathrm{reg}}\+\xi)(\vec{x},\vec{X}_{\-\sigma})\right\rvert^2\!=\\
        =&\integrate[\R^{3(N-1)}]{\!\left\lvert (U_\sigma\+\varphi_\lambda)(\vec{r},\vec{x},\vec{X}_{\-\sigma})+(U_\sigma\+\mathcal{G}^\lambda\xi)(\vec{r},\vec{x},\vec{X}_{\-\sigma})-\tfrac{\xi(\vec{x},\vec{X}_{\-\sigma}\-)}{r}-(\Gamma^{\+\sigma}_{\!\mathrm{reg}}\+\xi)(\vec{x},\vec{X}_{\-\sigma})\right\rvert^2;\mspace{-33mu}d\vec{x}d\vec{X}_{\-\sigma}}\\
        \leq&\:3\!\integrate[\R^{3(N-1)}]{\!\left\lvert (U_\sigma\+\varphi_\lambda)(\vec{r},\vec{x},\vec{X}_{\-\sigma})-(\varphi_\lambda|_{\pi_\sigma})(\vec{x},\vec{X}_\sigma)\right\rvert^2;\mspace{-33mu}d\vec{x}d\vec{X}_{\-\sigma}}\!+\tag{\textasteriskcentered}\label{firstStrongConvergence}\\[-2.5pt]
        &+3\!\integrate[\R^{3(N-1)}]{\!\bigg\lvert \Big(U_\sigma\!\sum_{\nu\+\neq\+\sigma}\mathcal{G}^\lambda_\nu\+\xi\Big)(\vec{r},\vec{x},\vec{X}_{\-\sigma})+\sum_{\nu\+\neq\+\sigma}(\Gamma^{\+\nu\+\smallsetminus\sigma,\+\lambda}_{\!\mathrm{off};\+\abs{\sigma\+\cap\+\nu}}\+\xi)(\vec{x},\vec{X}_{\-\sigma})\bigg\rvert^2;\mspace{-33mu}d\vec{x}d\vec{X}_{\-\sigma}}\!+\tag{\textasteriskcentered\textasteriskcentered}\label{secondStrongConvergence}\\
        &+3\!\integrate[\R^{3(N-1)}]{\!\left\lvert(U_\sigma\+\mathcal{G}^\lambda_\sigma\+\xi)(\vec{r},\vec{x},\vec{X}_{\-\sigma})-\tfrac{\xi(\vec{x},\vec{X}_{\-\sigma}\-)}{r}+(\Gamma^{\+\lambda}_{\!\mathrm{diag}}\+\xi)(\vec{x},\vec{X}_{\-\sigma}\-)\right\rvert^2;\mspace{-33mu}d\vec{x}d\vec{X}_{\-\sigma}}\!,\tag{\textasteriskcentered\textasteriskcentered\textasteriskcentered}\label{thirdStrongConvergence}
    \end{align*}
    where we exploited~\eqref{GammaTMSHyp}, \eqref{regularComponentTraced} and the triangular inequality~\eqref{highPowerTriangular}.
    Then, we prove that each of the previous integrals vanishes as $\vec{r}$ approaches zero.\newline
    To this end, it is convenient to mention the action of the operator $U_\sigma$ in the Fourier space representation of $\Lp{2}[\R^{3(N-1)}, d\vec{x}d\vec{X}_{\-\sigma}]$ for a given $f\-\in\-\LpS{2}[\R^{3N}]$, namely
    \begin{equation}\label{unitaryCoordinateTransformed}
        (\widehat{U_\sigma f})(\vec{r},\vec{p},\vec{P}_{\!\sigma})
        =\tfrac{1}{(2\pi)^{3/2}\!\-}\!\integrate[\R^3]{\cos\!\left[\vec{r}\-\cdot\!\left(\tfrac{\vec{p}}{2}\--\vec{q}\right)\-\right]\-\FT{f}(\vec{p}-\vec{q},\vec{q},\vec{P}_{\!\sigma});\-d\vec{q}}.
    \end{equation}
    Indeed, we can now make use of Plancherel's theorem to study each of the three terms.\newline
    First of all, we have
    \begin{align*}
        \eqref{firstStrongConvergence}=&\,\frac{3}{\+(2\pi)^3\!\-}\-\integrate[\R^{3(N-1)}]{\!\left\lvert \integrate[\R^3]{\!\left(\cos\!\left[\vec{r}\-\cdot\!\left(\tfrac{\vec{p}}{2}\--\vec{q}\right)\-\right]\--\-1\right)\FT{\varphi}_\lambda(\vec{p}\--\-\vec{q},\vec{q},\vec{P}_{\!\sigma});\-d\vec{q}}\right\rvert^2;\mspace{-33mu}d\vec{p}\+d\mspace{-0.75mu}\vec{P}_{\!\sigma}}\\
        =&\,\frac{3}{\+(2\pi)^3\!\-}\-\integrate[\R^{3(N+1)}]{\!\left(\cos\!\left[\vec{r}\!\cdot\!\left(\tfrac{\vec{p}}{2}\--\vec{q}\right)\-\right]\!-\-1\right)\!\-\left(\cos\!\left[\vec{r}\!\cdot\!\left(\tfrac{\vec{p}}{2}\--\vec{q}'\right)\-\right]\!-\-1\right)\conjugate*{\FT{\varphi}_{\mspace{-0.5mu}\lambda}(\vec{p}\--\-\vec{q},\vec{q},\vec{P}_{\mspace{-3.5mu}\sigma})}\,\FT{\varphi}_{\mspace{-0.5mu}\lambda}(\vec{p}\--\-\vec{q}'\!,\vec{q}'\-,\vec{P}_{\mspace{-3.5mu}\sigma});\mspace{-33mu}d\vec{p}\+d\mspace{-0.75mu}\vec{P}_{\mspace{-3.5mu}\sigma}d\vec{q}d\vec{q}'\-}.
    \end{align*}
    We claim that the function
    $$4\,\abs{\FT{\varphi}_\lambda(\vec{p}\--\-\vec{q},\vec{q},\vec{P}_{\!\sigma})}\,\abs{\FT{\varphi}_\lambda(\vec{p}\--\-\vec{q}'\-,\vec{q}'\-,\vec{P}_{\!\sigma})}$$
    is a uniform integrable majorant, so that the limit $\vec{r}\longrightarrow\vec{0}$ can be computed inside the integral, obtaining the result.
    Indeed, due to the Cauchy-Schwarz inequality
    \begin{align*}
        4\!\integrate[\R^{3(N+1)}]{\abs{\FT{\varphi}_\lambda(\vec{p}\--\-\vec{q},\vec{q},\vec{P}_{\!\sigma})}\,\abs{\FT{\varphi}_\lambda(\vec{p}\--\-\vec{q}'\-,\vec{q}'\-,\vec{P}_{\!\sigma})};\mspace{-33mu}d\vec{q}d\vec{q}'\-d\vec{p}\+d\mspace{-0.75mu}\vec{P}_{\!\sigma}}\leq 4\!\left[\integrate[\R^3]{\-\sqrt{\-\integrate[\R^{3(N-1)}]{\abs{\FT{\varphi}_\lambda(\vec{p},\vec{q},\vec{P}_{\!\sigma})}^2\+;\mspace{-33mu}d\vec{p}\+d\mspace{-0.75mu}\vec{P}_{\!\sigma}}}\,;\-d\vec{q}}\right]^2\!\\[-2.5pt]
        \leq 4\!\integrate[\R^3]{\frac{1}{1+q{\+'\+}^4};\-d\vec{q}'\-}\integrate[\R^{3N}]{(1\-+q^4)\,\abs{\FT{\varphi}_\lambda(\vec{p},\vec{q},\vec{P}_{\!\sigma})}^2;\mspace{-10mu}d\vec{q}d\vec{p}\+d\mspace{-0.75mu}\vec{P}_{\!\sigma}}\leq 4\sqrt{2\+}\+\pi^2\norm{\varphi_\lambda}[H^2(\R^{3N})]^2\!.
    \end{align*}
    Concerning the second term, taking into account~\eqref{potentialFourier}, (\ref{off0GammaHyp},~\ref{off1GammaHyp}) and~\eqref{unitaryCoordinateTransformed} one has

    \vspace{-0.35cm}
    
    \begin{align*}
        \eqref{secondStrongConvergence}=3&\!\integrate[\R^{3(N-1)}]{\!\bigg\lvert\Big(\widehat{U_\sigma\!\!\sum\limits_{\nu\+\neq\+\sigma}\!\mathcal{G}_\nu^\lambda\+\xi}\Big)(\vec{r},\vec{p},\vec{P}_{\!\sigma})+\sum_{\nu\+\neq\+\sigma}(\widehat{\Gamma^{\+\nu,\+\lambda}_{\!\mathrm{off};\+\abs{\sigma\+\cap\+\nu}}\+\xi})(\vec{p},\vec{P}_{\!\sigma})\bigg\rvert^2\!;\mspace{-33mu}d\vec{p}\+d\mspace{-0.75mu}\vec{P}_{\!\sigma}}\\[-2.5pt]
        =3&\!\integrate[\R^{3(N-1)}]{\!\bigg\lvert\-\sum_{\substack{\nu\+=\{k,\+\ell\}\+:\\\sigma\,\cap\,\nu\+=\+\emptyset}}\mspace{-6mu}\frac{1}{\,\pi^2\!\-}\-\integrate[\R^3]{\!\left(\cos\!\left[\vec{r}\!\cdot\!\left(\tfrac{\vec{p}}{2}\--\vec{q}\right)\-\right]\!-\-1\right)\frac{\FT{\xi}(\vec{p}_k\-+\vec{p}_\ell\+,\vec{p}\--\-\vec{q},\vec{q},\vec{P}_{\!\sigma\,\cup\,\nu})}{\abs{\vec{p}\--\-\vec{q}}^2+q^2+P^2_{\!\sigma}+\lambda};\-d\vec{q}}\,+;\mspace{-33mu}d\vec{p}\+d\mspace{-0.75mu}\vec{P}_{\!\sigma}}\\[-7.5pt]
        &\pushright{+\mspace{-9mu}\sum_{\substack{\nu\+:\\ |\sigma\,\cap\,\nu|\+=\+1}}\mspace{-6mu}\frac{1}{\,\pi^2\!\-}\-\integrate[\R^3]{\!\left(\cos\!\left[\vec{r}\!\cdot\!\left(\tfrac{\vec{p}}{2}\--\vec{q}\right)\-\right]\!-\-1\right)\frac{\FT{\xi}(\vec{p}\--\-\vec{q}+\vec{p}_{\nu\+\smallsetminus\sigma}\+,\vec{q},\vec{P}_{\!\sigma\,\cup\:\nu})}{\abs{\vec{p}\--\-\vec{q}}^2+q^2+P^2_{\!\sigma}+\lambda};\-d\vec{q}}\bigg\rvert^2\!.}
    \end{align*}

    \vspace{-0.2cm}
    
    \n The dominated convergence theorem can be easily applied also in this case, as long as $\xi$ is in the domain of $\Gamma^\lambda$.\newline
    Lastly,

    \vspace{-0.8cm}
    
    \begin{align*}
        \eqref{thirdStrongConvergence}&=3\!\integrate[\R^{3(N-1)}]{\!\left\lvert(\widehat{U_\sigma\+\mathcal{G}^\lambda_\sigma\+\xi})(\vec{r},\vec{p},\vec{P}_{\!\sigma})-\tfrac{\FT{\xi}(\vec{p},\vec{P}_{\!\sigma}\-)}{r}+(\widehat{\Gamma^\lambda_{\!\mathrm{diag}}\+\xi})(\vec{p},\vec{P}_{\!\sigma}\-)\right\rvert^2;\mspace{-33mu}d\vec{p}\+d\mspace{-0.75mu}\vec{P}_{\-\sigma}}\\
        &=3\!\integrate[\R^{3(N-1)}]{\!\left\lvert\tfrac{1}{\+\pi^2\!\-}\:\FT{\xi}(\vec{p},\vec{P}_{\!\sigma})\!\integrate[\R^3]{\frac{e^{i\+\vec{r}\+\cdot\left(\frac{\vec{p}}{2}-\+\vec{q}\right)}}{\abs{\vec{p}\--\-\vec{q}}^2+q^2+P_\sigma^2+\lambda};\-d\vec{q}}-\tfrac{\FT{\xi}(\vec{p},\vec{P}_{\!\sigma}\-)}{r}+(\widehat{\Gamma^\lambda_{\!\mathrm{diag}}\+\xi})(\vec{p},\vec{P}_{\!\sigma}\-)\right\rvert^2;\mspace{-33mu}d\vec{p}\+d\mspace{-0.75mu}\vec{P}_{\-\sigma}}\\
        &=3\!\integrate[\R^{3(N-1)}]{\!\left\lvert\tfrac{1}{\+\pi^2\!\-}\:\FT{\xi}(\vec{p},\vec{P}_{\!\sigma})\!\integrate[\R^3]{\frac{e^{i\+\vec{r}\+\cdot\+\vec{q}}}{\frac{1}{2}\+p^2+2q^2+P_\sigma^2+\lambda};\-d\vec{q}}-\tfrac{\FT{\xi}(\vec{p},\vec{P}_{\!\sigma}\-)}{r}+\tfrac{1}{\!\-\sqrt{2\+}\+}\sqrt{\tfrac{1}{2}\+p^2\-+\-P_{\!\sigma}^2\-+\-\lambda\,}\,
        \FT{\xi}(\vec{p},\vec{P}_{\!\sigma})\right\rvert^2;\mspace{-33mu}d\vec{p}\+d\mspace{-0.75mu}\vec{P}_{\-\sigma}}\\
        &=3\!\integrate[\R^{3(N-1)}]{\!\left[\tfrac{e^{-\frac{r}{\sqrt{2}}\sqrt{\frac{1}{2}\+p^2+P_{\!\sigma}^2+\lambda}}-1}{r}+\tfrac{1}{\!\-\sqrt{2\+}\+}\sqrt{\tfrac{1}{2}\+p^2\-+\-P_{\!\sigma}^2\-+\-\lambda\,}\right]^2\!\abs{\FT{\xi}(\vec{p},\vec{P}_{\!\sigma})}^2;\mspace{-33mu}d\vec{p}\+d\mspace{-0.75mu}\vec{P}_{\-\sigma}}.
    \end{align*}
    Since $\xi\in H^1(\R^{3(N-1)})$, once again the dominated convergence theorem can be exploited by pointing out that $\frac{1}{2}(\frac{1}{2}\+p^2+P_{\!\sigma}^2+\lambda)\+\abs{\FT{\xi}(\vec{p},\vec{P}_{\!\sigma})}^2$ is a uniform majorant of the integrand.\newline
    Finally, for the representation of the resolvent~\eqref{resolventH}, given $f\-\in\-\hilbert_N$ we have to find $\psi=\varphi_\lambda+\mathcal{G}^\lambda\xi\-\in\-\dom{\mathcal{H}}$ such that $(\mathcal{H}+\lambda)\psi=f$ or, equivalently, $(\mathcal{H}_0+\lambda)\varphi_\lambda=f$ owing to~\eqref{HpsiHfreew}.
    Then, $\varphi_\lambda=(\mathcal{H}_0+\lambda)^{-1}f$ and $\xi$ must be the solution of equation~\eqref{chargeEq}.
    
\end{proof}

\section{Comparison with the Approach via Dirichlet Forms}\label{dirichletSection}

In this section, we compare our results with those obtained in~\cite{AHK}, following an approach based on the theory of Dirichlet forms.\newline
First, we analyze the example provided in~\cite[section 2 - example 4]{AHK}.
Consider the Dirichlet form $E_\phi$ in $\hilbert_N$ given by~\eqref{AlbeverioDF} and the function $\phi\in\Lp*{2}[\R^{3N}]$ defined by~\eqref{weightFunction}.
It is noteworthy that
\begin{equation}
    \Delta\+\phi\mspace{0.75mu}(\vec{\mathrm{x}})=(-\mathcal{H}_0\+\phi)(\vec{\mathrm{x}})=2m^2\phi\mspace{0.75mu}(\vec{\mathrm{x}}),\qquad \forall \vec{\mathrm{x}}\in\R^{3N}\-\setminus \pi.
\end{equation}
In~\cite[section 2]{AHK}
it is shown that the quantity \begin{equation}\label{dirichletQF}
\mathcal{Q}_D[\psi]\vcentcolon=E_\phi\!\left[\psi/\phi\right]-2m^2\norm{\psi}^2\-,\qquad\psi\in\hilbert_N : \nabla\tfrac{\psi}{\phi}\in \Lp{2}[\R^{3N}\-,\+\phi^2(\vec{\mathrm{x}})\+d\vec{\mathrm{x}}]
\end{equation}
uniquely characterizes a singular perturbation of $\mathcal{H}_0\+,\hilbert_N\cap H^2(\R^{3N})$ supported on $\pi$.
More precisely, for any non-negative value of $m$, the quadratic form $\mathcal{Q}_D$ is associated to a bounded from below operator, denoted by $-\Delta_m$, such that
\begin{gather*}
    -\Delta_m \psi=\mathcal{H}_0\+\psi,\qquad \forall \psi\-\in\!\hilbert_N\cap H_0^2(\R^{3N}\!\smallsetminus \pi),\\
    -\Delta_m\-\geq\- -2\+m^2.
\end{gather*}
Thus, in this example, a class of zero-range Hamiltonians with preassigned lower bound $-2m^2$ is defined, circumventing instability issues.
However, the authors in~\cite{AHK} do not characterize the domain of the Hamiltonian, and then it is not clear which boundary condition is satisfied along the coincidence hyperplanes. 
To address this gap, we reformulate $\mathcal{Q}_D$ within our framework.
In this way, the domain of the Hamiltonian is made explicit, and a comparison with our results is therefore possible.\newline
According to~\cite{P1, P2}, every singular perturbation of a densely defined symmetric operator (which is closed with respect to its graph norm) can be exhaustively characterized.
In particular, $-\Delta_m$ is a singular perturbation of the free Hamiltonian supported on the coincidence hyperplanes, hence we know there exists a (locally Lipschitz) continuous map $\maps{\Gamma_{\!D\+}}{\rho(\mathcal{H}_0);\linear{\hilbert*_{N-1}}}$ fulfilling some suitable conditions (see \eg, \cite[eq.~(2.19)]{FeT2}), such that $-\Delta_m$ coincides with the s.a. extension $-\Delta^T_{\Gamma_{\!D\+}}$ characterized by
\begin{equation}\label{bcHamiltonianCharacterization}
    \begin{dcases}
    \dom{-\Delta^T_{\Gamma_{\!D}}}=\left\{\psi\in\hilbert_N\,\big|\;\psi=\varphi_z\-+\mathcal{G}^{-z}\+\xi,\:\Gamma_{\!D}(z)\+\xi=T\varphi_z\+,\;\varphi_z\in\dom{\mathcal{H}_0},\:\xi\in \dom{\Gamma_{\!D}(z)\-},\: z\in\rho(\mathcal{H}_0) \right\}\-,\\
    -\Delta^T_{\Gamma_{\!D}}\+\psi=\mathcal{H}_0\+\varphi_z+z\,\mathcal{G}^{-z}\+\xi.
    \end{dcases}
\end{equation}
Here $\mathcal{G}^{-z}$ stands for the analytic continuation of the potential defined in~\eqref{potentialDef} and, setting $\vec{x}=\tfrac{\vec{x}_N\++\,\vec{x}_{N\--1}}{2}$ and $\vec{r}\-=\vec{x}_N\!-\vec{x}_{N\--1}$, we have $T\!\in\-\bounded{\dom{\mathcal{H}_0}, H^{\frac{1}{2}}(\R^{3(N\--1)})}$ given by
$$(Tf)(\vec{x},\vec{x}_1,\ldots,\vec{x}_{N-2})=4\pi\+N(N\!-\-1) f(\vec{r},\vec{x},\vec{x}_1,\ldots,\vec{x}_{N-2})\big|_{\vec{r}=0}\+,\qquad f\in \dom{\mathcal{H}_0}.$$
In particular, any $\psi\-\in\-\dom{-\Delta^T_{\Gamma_{\!D\+}}}$ can be decomposed as in equation~\eqref{psiDec} for any fixed $z\-\in\-\C\smallsetminus\R_+$, in terms of a regular component
$\varphi_z\-\in\- H^2(\R^{3N})$ and an element $\xi$ (independent of $z$) in the domain of the closed operator $\Gamma_{\!D\+}(z)$.
Moreover, analogously to~\eqref{regularComponentTraced}, the following boundary condition holds
$$\Gamma_{\!D\+}(z)\,\xi = T \varphi_z.$$
Therefore, in order to understand the boundary conditions along $\pi$ encoded by $-\Delta_m$, we need to clarify what kind of s.a. extension is described by the map $\Gamma_{\!D\+}$.
To this end, let us mention that the energy form $\mathcal{Q}_D$ associated to $-\Delta^T_{\Gamma_{\!D\+}}$ can be written as follows for $\psi\in\dom{-\Delta^T_{\Gamma_{\!D}}}$
\begin{equation}\label{singularPerturbationQF}\begin{split}
    \mathcal{Q}_D[\psi]&=E_\phi[\psi/\phi]-2m^2\norm{\psi}^2=\scalar{\psi}{-\Delta^T_{\Gamma_{\!D}} \psi}\\
    &=z\norm{\psi}^2\!+\scalar{\varphi_z}{(\mathcal{H}_0-z)\+\varphi_z}+\scalar{\xi}{\Gamma_{\!D\+}(z)\+\xi}[\hilbert*_{N-1}],
\end{split}
\end{equation}
with either $z\-<\-0$ or $\varphi_z\-\perp \mathcal{G}^{-z}\+\xi$ (required in order for $\mathcal{Q}_D$ to be real valued).
From~\eqref{singularPerturbationQF}, one deduces that the form domain of $\mspace{-0.75mu}-\Delta^T_{\Gamma_{\!D}}\mspace{-0.75mu}$ is composed of elements $\psi\!\in\mspace{-2.25mu}\hilbert_N$ that can be written as $\psi\mspace{-2.25mu}=\mspace{-2.25mu}w_\lambda\mspace{-2.25mu}+\mathcal{G}^\lambda \mspace{0.75mu}\xi$, for some $w_\lambda\-\in\- H^1(\R^{3N})$ and $\xi$ in a suitable dense subspace of $\hilbert*_{N-1}$ (which is independent of $\lambda\->\-0$).
Therefore, let us consider a vector in this form domain reading as $$\psi=\mathcal{G}^\lambda\xi,\qquad \text{for some }\,\lambda>0.$$
With this choice, we can isolate in~\eqref{singularPerturbationQF} the quadratic form of the charges $\Phi^\lambda_D$ associated to the operator $\Gamma_{\!D}(-\lambda)$, \ie
\begin{equation}\label{dirichletChargeQF}
    \define{\Phi_{\-D\+}^\lambda[\xi];E_\phi[\+\mathcal{G}^\lambda\xi/\phi\+]+(\lambda-2m^2)\lVert\mathcal{G}^\lambda\xi\rVert^2}.
\end{equation}
At this point, it is sufficient to study the relation between $\Phi^\lambda_D$ and our quadratic form of the charges $\Phi^\lambda$ given by~\eqref{phiDefinition}, in order to compare the associated zero-range Hamiltonians $-\Delta_m$ and $\mathcal{H}$.

\vs

While the above example illustrates the approach for a specific weight function $\phi$, the framework of~\cite{AHK} is more general.
As far as the subject of this paper is concerned, we refer to~\cite[theorem 2.4]{AHK}, which allows to construct s.a. extensions of any operator $-\Delta+\frac{\Delta\phi}{\phi},\, C_c^2(\R^{3N}\-\smallsetminus\- \pi)$, where $\frac{\Delta\phi}{\phi}$ is meant in distributional sense, provided the real function $\phi\-\in\- \Lp*{2}[\R^{3N}]\cap\hilbert_N$ such that $\nabla\phi, \frac{\nabla\phi}{\phi}$ and $\frac{\Delta\phi}{\phi}$ are all in $\Lp*{2}[\R^{3N}\-\smallsetminus\- \pi]$.
In particular, these conditions ensure the closability of the Dirichlet form $E_\phi$ introduced in~\eqref{AlbeverioDF}.\newline
Indeed, the isometry $f\longmapsto \phi^{-1} f$ between $\Lp{2}[\R^{3N}]$ and $\Lp{2}[\R^{3N}\!,\, \phi^2(\vec{\mathrm{x}})d\vec{\mathrm{x}}]$ takes the s.a. operator associated with $E_\phi$, which we denote by $H_\phi,$ into $-\Delta + \frac{\Delta\phi}{\phi}$ for functions in $C^2_c(\R^{3N}\-\smallsetminus\- \pi)$.
Namely from one hand one has
$$\integrate[\R^{3N}]{\phi^2(\vec{\mathrm{x}})\frac{\conjugate*{\varphi(\vec{\mathrm{x}})}}{\phi(\vec{\mathrm{x}})}\big(H_\phi\,\phi^{-1}\psi\big)\-(\vec{\mathrm{x}});\mspace{-10mu}d\vec{\mathrm{x}}}\-=\!\-\integrate[\R^{3N}]{\conjugate*{\varphi(\vec{\mathrm{x}})}\big(\phi\+H_\phi\,\phi^{-1}\psi\big)\-(\vec{\mathrm{x}});\mspace{-10mu}d\vec{\mathrm{x}}},\quad\forall \psi\-\in\- C_c^2(\R^{3N}\!\smallsetminus\- \pi),\+ \varphi\-\in\- C_c^1(\R^{3N}),$$
while, on the other hand
\begin{equation*}
    \integrate[\R^{3N}]{\phi^2(\vec{\mathrm{x}})\,\nabla\tfrac{\conjugate{\varphi}}{\phi} \-\cdot\- \nabla\tfrac{\psi}{\phi};\mspace{-10mu} d\vec{\mathrm{x}}}=\!\-\integrate[\R^{3N}]{\conjugate*{\varphi(\vec{\mathrm{x}})}(-\Delta \psi)(\vec{\mathrm{x}})+\conjugate*{\varphi(\vec{\mathrm{x}})}\,\tfrac{(\Delta\phi)(\vec{\mathrm{x}})\-}{\phi(\vec{\mathrm{x}})} \,\psi(\vec{\mathrm{x}});\mspace{-10mu} d\vec{\mathrm{x}}},\quad\forall \psi\in C_c^2(\R^{3N}\mspace{-3.75mu}\smallsetminus\- \pi),\+\varphi\in C_c^1(\R^{3N}).
\end{equation*}
Therefore,
\begin{equation}\phi\+H_\phi\,\phi^{-1}=-\Delta+\tfrac{\Delta\phi}{\phi},\qquad \text{in }\,C_c^2(\R^{3N}\-\smallsetminus\-\pi).
\end{equation}
Let us assume $\frac{\Delta\phi}{\phi}\!\in\!\Lp{\infty}[\R^{3N}]$ so that the associated multiplication operator is an infinitesimal perturbation for both $-\Delta$ and $H_\phi$.
In particular, we obtain that $\phi\big(H_\phi\--\tfrac{\Delta\phi}{\phi}\big)\phi^{-1}$ is s.a. and it is equal to $-\Delta$ in $C_c^{2}(\R^{3N}\-\smallsetminus \pi)$.
Moreover, the boundedness of $\tfrac{\Delta\phi}{\phi}$ also implies
$$\phi\big(H_\phi-\tfrac{\Delta\phi}{\phi}\big)\phi^{-1}\geq -\mathop{\mathrm{ess\,sup}} \tfrac{\Delta\phi}{\phi}.$$
In other words, the quadratic form
$E_\phi[\psi/\phi]\--\-\scalar{\psi}{\!\tfrac{\Delta\phi}{\phi}\mspace{-0.75mu}\psi}$ uniquely defines a s.a. and lower-bounded extension $-\Delta_\phi$ of $\--\Delta, C_c^2(\R^{3N}\!\-\smallsetminus\mspace{-2.25mu} \pi\mspace{-0.75mu})$, \ie~a many-body Hamiltonian with zero-range interaction.
Our aim is to understand the boundary conditions along $\pi$ satisfied by the elements of the domain of such a Hamiltonian by establishing a connection with the class of Hamiltonians constructed in Theorem~\ref{hamiltonianCharacterizationTheo}.\newline
By the same argument exploited in the previous example, we know there exists a quadratic form $\Phi^\lambda_\phi$ such that
\begin{equation*}
E_\phi[\psi/\phi]-\scalar{\psi}{\tfrac{\Delta\phi}{\phi}\psi}=-\lambda\norm{\psi}^2+\norm{\nabla w_\lambda}^2+\lambda\norm{w_\lambda}^2+\Phi_\phi^\lambda[\xi],
\end{equation*}
where we are again using the decomposition $\psi=w_\lambda\!+\mathcal{G}^\lambda\xi$ for some $w_\lambda\-\in\- H^1(\R^{3N})$, $\lambda\->\-0$ and $\xi$ in a proper dense subspace of $\hilbert*_{N-1}$.
In analogy with the previous case, we select $w_\lambda=0$, or equivalently $\psi=\mathcal{G}^\lambda\xi$ for some $\lambda\->\-0$, so that we find
\begin{equation}\label{dirichletGeneralChargeQF}
\Phi^\lambda_\phi[\xi]=\lambda\lVert\mathcal{G}^\lambda\xi\rVert^2+E_\phi[\mathcal{G}^\lambda\xi/\phi]-\scalar{\mathcal{G}^\lambda\xi}{\tfrac{\Delta\phi}{\phi}\mathcal{G}^\lambda\xi}.
\end{equation}
We now have all the ingredients to compare the zero-range Hamiltonians $-\Delta_\phi$, where $\phi$ is given by~\eqref{genericPhiDF}, with the class of operators constructed in Theorem~\ref{hamiltonianCharacterizationTheo}.
\begin{proof}[Proof of Proposition~\ref{lastResult}]
    The result is obtained as soon as we show that, under our assumptions, $\Phi^\lambda\!=\Phi^\lambda_\phi$, with $\Phi^\lambda_\phi$ given by~\eqref{dirichletGeneralChargeQF}.
    To this end, we proceed with the evaluation of $E_\phi[\mathcal{G}^\lambda\xi/\phi]$.\newline
    Let $D^{\+\epsilon}\vcentcolon=\big\{\-(\vec{x}_1,\ldots,\vec{x}_N)\-\in\R^{3N}\,\big|\;\min\limits_{1\leq \+i\,<\,j\+\leq N}\+\abs{\vec{x}_i-\vec{x}_j}\->\epsilon\big\}$ so that
\begin{align*}
    E_\phi[\mathcal{G}^\lambda\xi/\phi]\-&=\!\lim_{\epsilon\to 0^+}\integrate[D^{\+\epsilon}]{\phi^2(\vec{\mathrm{x}})\left\lvert\frac{\nabla \mathcal{G}^\lambda\xi}{\phi}-\mathcal{G}^\lambda\xi\;\frac{\!\nabla\phi}{\,\phi^2\!}\+\right\rvert^2;\-d\vec{\mathrm{x}}}\\
    &=\!\lim_{\epsilon\to 0^+}\integrate[D^{\+\epsilon}]{\!\left[\abs{\nabla \mathcal{G}^\lambda\xi}^2\--2\+\Re\,\mathcal{G}^\lambda\xi\;\nabla\mathcal{G}^\lambda\xi\cdot\frac{\!\nabla\phi}{\phi}+\abs{\mathcal{G}^\lambda\xi}^2\,\frac{\abs{\nabla\phi}^2\!\-}{\,\phi^2}\,\right];\-d\vec{\mathrm{x}}}\\
    &=\!\lim_{\epsilon\to 0^+}\integrate[D^{\+\epsilon}]{\!\left[\abs{\nabla \mathcal{G}^\lambda\xi}^2\--\nabla\abs{\mathcal{G}^\lambda\xi}^2\-\cdot\nabla\ln\phi+\abs{\mathcal{G}^\lambda\xi}^2\,\frac{\abs{\nabla\phi}^2\!\-}{\,\phi^2}\,\right];\-d\vec{\mathrm{x}}}\!.
\end{align*}
Exploiting the first Green's identity, one gets
\begin{align*}
    E_\phi[\mathcal{G}^\lambda\xi/\phi]\-=\!\lim_{\epsilon\to 0^+}\!&\left[\integrate[\partial D^{\+\epsilon}]{\!\left(\mathcal{G}^\lambda\xi\,\partial_{\vec{n}}\mathcal{G}^\lambda\xi-\abs{\mathcal{G}^\lambda\xi}^2\partial_{\vec{n}}\ln\phi\right);\mspace{-10mu}d\vec{\mathrm{s}}}+\right.\\
    &\;\:-\left.\!\!\integrate[D^{\+\epsilon}]{\!\left(\!\mathcal{G}^\lambda\xi\,\Delta\mathcal{G}^\lambda\xi\--\abs{\mathcal{G}^\lambda\xi}^2\+\Delta\ln\phi-\abs{\mathcal{G}^\lambda\xi}^2\,\frac{\abs{\nabla\phi}^2\!\-}{\,\phi^2}\:\right)\!;\-d\vec{\mathrm{x}}}\right]\!,
\end{align*}
where $\partial_{\vec{n}}$ denotes the outer normal derivative along $\partial D^{\+\epsilon}$.
Furthermore, we stress that
\begin{equation*}
    \partial D^{\+\epsilon}=\nquad\bigcup_{\{i,\+j\}\+\in\,\mathcal{P}_N}\nquad\big\{(\vec{x}_1,\ldots,\vec{x}_N)\-\in\R^{3N}\,\big|\;\abs{\vec{x}_i\--\vec{x}_j}\-=\epsilon,\,\min\limits_{\substack{\{k,\,\ell\}\+\in\,\mathcal{P}_N\,:\\ \{k,\,\ell\}\+\neq\+\{i,\,j\}}}\!\abs{\vec{x}_k\--\vec{x}_\ell}>\epsilon\big\}=\vcentcolon \!\!\bigcup_{\sigma\+\in\+\mathcal{P}_N} \!\-\partial D^{\+\epsilon}_{\-\sigma}\+,
\end{equation*}
where $\{\partial D^{\+\epsilon}_{\-\sigma}\}_{\sigma\+\in\,\mathcal{P}_N}$ are pairwise disjoint.
Taking into account the following identities for any $\vec{\mathrm{x}}\-\in\-\R^{3N}\-\setminus \pi$
\begin{gather}
    (\mathcal{H}_0\-+\-\lambda)\+ \mathcal{G}^\lambda\xi(\vec{\mathrm{x}})\-=0,\\
    (\Delta\ln\phi)(\vec{\mathrm{x}})=\frac{(\Delta\+\phi)(\vec{\mathrm{x}})}{\phi\mspace{0.75mu}(\vec{\mathrm{x}})}-\frac{\abs{\nabla\phi\mspace{0.75mu}(\vec{\mathrm{x}})}^2\!\-}{\,\phi^2(\vec{\mathrm{x}})},
\end{gather}
the computation in the previous integral yields
\begin{align}
    E_\phi[\mathcal{G}^\lambda\xi/\phi]\-=\!\-\integrate[\R^{3N}]{\abs{\mathcal{G}^\lambda\xi}^2\!\left(\tfrac{\Delta\phi}{\phi}-\lambda\right);\mspace{-10mu}d\vec{\mathrm{x}}}+\tfrac{1}{2}\-\lim_{\epsilon\to 0^+}\!\integrate[\partial D^{\+\epsilon}]{\phi^2(\vec{\mathrm{x}})\,\partial_{\vec{n}}\tfrac{\abs{\mathcal{G}^\lambda\xi}^2\!\-}{\phi^2};\mspace{-10mu}d\vec{\mathrm{s}}}\nonumber\\
    =\!\-\integrate[\R^{3N}]{\abs{\mathcal{G}^\lambda\xi}^2\!\left(\tfrac{\Delta\phi}{\phi}-\lambda\right);\mspace{-10mu}d\vec{\mathrm{x}}}+\frac{1}{2}\-\sum_{\sigma\+\in\+\mathcal{P}_N}\lim_{\epsilon\to 0^+}\!\integrate[\partial D^{\+\epsilon}_{\-\sigma}]{\phi^2(\vec{\mathrm{x}})\,\partial_{\vec{n}_\sigma}\!\tfrac{\abs{\mathcal{G}^\lambda\xi}^2\!\-}{\phi^2};\mspace{-10mu}d\vec{\mathrm{s}}}.\label{middleStepDirichletParticularCase}
\end{align}
Here it is possible to compute the outer normal derivative along each $\partial D^{\+\epsilon}_{\-\sigma}$, since they all are a sort of cylindrical hypersurfaces cut in a non-trivial manner nearby the singularities represented by the intersection of the coincidence hyperplanes.
More precisely, the tangent plane of $\partial D^{\+\epsilon}_{\-\sigma\+=\{i,\,j\}}$ at a fixed point $(\vec{x}_1,\ldots,\vec{x}_N)\-\in\R^{3N}$ is $\{(\vec{v}_1,\ldots,\vec{v}_N)\-\in\R^{3N}\,\big|\;(\vec{x}_i-\vec{x}_j)\-\cdot\- (\vec{v}_i-\vec{v}_j)=0\}$.
Thus,
\begin{equation}\label{bloodyNormal}
    \partial_{\vec{n}_{\sigma=\{i,\,j\}}}\!=-\frac{1}{\!\-\sqrt{2}\,\abs{\vec{x}_i\!-\-\vec{x}_j}}\, (\vec{x}_i\!-\-\vec{x}_j,\vec{x}_j\!-\-\vec{x}_i,\vec{0})\cdot (\nabla_{\!\vec{x}_i}, \nabla_{\!\vec{x}_j},\nabla_{\!\vec{X}_{\-\sigma}})=-\frac{(\vec{x}_i\!-\-\vec{x}_j)\-\cdot\- (\nabla_{\!\vec{x}_i}\!-\nabla_{\!\vec{x}_j})}{\sqrt{2}\,\abs{\vec{x}_i\!-\-\vec{x}_j}}.
\end{equation}
Hence, because of~\eqref{dirichletGeneralChargeQF} and the symmetry of the integrand in exchanging any couple $\vec{x}_i\-\longleftrightarrow\-\vec{x}_j\+$, we obtain
\begin{equation}\label{startingPointChargeDirichletQF}
    \Phi_\phi^\lambda[\xi]=\tfrac{N(N\--1)\!}{4}\-\lim_{\epsilon\to 0^+}\!\integrate[\partial D^{\+\epsilon}_{\-\sigma}]{\phi^2(\vec{\mathrm{x}})\,\partial_{\vec{n}_\sigma}\!\tfrac{\abs{\mathcal{G}^\lambda\xi}^2\!\-}{\phi^2};\mspace{-10mu}d\vec{\mathrm{s}}},\qquad\forall\sigma\-\in\-\mathcal{P}_N.
\end{equation}
    In order to have a better representation of the previous identity, we adopt the change of variables encoded by the unitary operator $U_\sigma\+$, given by~\eqref{unitaryCoordinate}.
    \comment{We stress that for generic $f,g\-\in\- \hilbert_N \cap H^2(\R^{3N}\setminus\pi)$, one has\footnote{We are replacing each domain $\partial D^{\+\epsilon}_{\{i,j\}}$ with $\{(\vec{x}_1,\ldots,\vec{x}_N)\-\in\-\R^{3N}\:|\;\abs{\vec{x}_i\!-\-\vec{x}_j}\-=\-\epsilon\}$ taking into account subleading orders vanishing as $\epsilon$ goes to zero.}
    \begin{equation*}
        \integrate[D^{\+\epsilon}]{\nabla \conjugate{g}\cdot\-\nabla f;\-d\vec{\mathrm{x}}}=\!\-\integrate[U_\sigma\- D^{\+\epsilon}]{2\+\nabla_{\!\vec{r}}\+U_\sigma\+ \conjugate{g} \cdot \-\nabla_{\!\vec{r}}\+ U_\sigma f\-+\tfrac{1}{2}\+\nabla_{\!\vec{x}}\+U_\sigma\+\conjugate{g}\cdot\-\nabla_{\!\vec{x}}\+U_\sigma f\-+\nabla_{\!\vec{X_\sigma}}U_\sigma\+\conjugate{g}\cdot\-\nabla_{\!\vec{X_\sigma}}U_\sigma f;\nquad d\vec{r}d\vec{x}d\vec{X}_{\-\sigma}},
    \end{equation*}
    simply because $U_\sigma\nabla_{\!\vec{x}_i}\+\adj{U}_\sigma\-=\frac{1}{2}\+\nabla_{\!\vec{x}}\-+\nabla_{\!\vec{r}}$ and $U_\sigma\nabla_{\!\vec{x}_j}\+\adj{U}_\sigma\-=\frac{1}{2}\+\nabla_{\!\vec{x}}\--\nabla_{\!\vec{r}}$. 
    Therefore, the first Green's identity reads
    \begin{equation*}
    \begin{split}
        \integrate[D^{\+\epsilon}]{\nabla \conjugate{g}\cdot\-\nabla f;\-d\vec{\mathrm{x}}}=&\,-\!\-\integrate[U_\sigma\-D^{\+\epsilon}]{(U_\sigma\+ \conjugate{g})\,(2\+\Delta_{\vec{r}} \+U_\sigma f+\tfrac{1}{2}\+\Delta_{\vec{x}}\+U_\sigma f+\Delta_{\-\vec{X_\sigma}}U_\sigma f);\mspace{-16.5mu}d\vec{r}d\vec{x}d\vec{X}_{\-\sigma}}\,+\\
        &-2\!\-\integrate[\R^{3(N\--1)}]{;\mspace{-33mu}d\vec{x}d\vec{X}_{\-\sigma}}\!\!\!\integrate[\abs{\vec{r}}\+=\+\epsilon]{(U_\sigma\+ \conjugate{g})\left(\tfrac{\vec{r}}{\epsilon}\-\cdot\-\nabla_{\!\vec{r}} \+U_\sigma f\right)\!;\mspace{-16mu}d\vec{r}}+\oSmall{1}\-,\quad \text{as }\,\epsilon\longrightarrow 0^+,
    \end{split}
    \end{equation*}
    where $-\frac{\vec{r}\,\cdot\+\nabla_{\!\vec{r}}}{\abs{\vec{r}}}$ is the outer normal derivative in this framework.
    }
    We notice that $\partial D_{\-\sigma}^{\+\epsilon}$ transforms in the following way
    \begin{equation*}
    \partial \+U_\sigma D_{\-\sigma}^{\+\epsilon}=\Big\{\-(\vec{r},\vec{x},\vec{X}_{\-\sigma})\-\in\R^{3N}\,\Big|\;\abs{\vec{r}}=\epsilon, \min_{\substack{\{k,\,\ell\}\+\in\,\mathcal{P}_N\+:\\\{k,\,\ell\}\,\cap\,\sigma\+=\,\emptyset}}\!\abs{\vec{x}_k\!-\-\vec{x}_\ell}\->\mspace{-0.75mu}\epsilon, \min_{\substack{k\+\in\mspace{2.25mu}\{1,\,\ldots\+,\,N\}\+:\\ k\+\notin\+\sigma}}\!\abs{\vec{x}_k\!-\-\vec{x}\-\pm\tfrac{\vec{r}}{2}}\->\mspace{-0.75mu}\epsilon\Big\}\subset \R^{3N-1}.
    \end{equation*}
    Additionally, the outer normal derivative is now given by $-\frac{\vec{r}\,\cdot\+\nabla_{\!\vec{r}}}{\abs{\vec{r}}}$.
    We claim that equation~\eqref{startingPointChargeDirichletQF} can be represented in terms of these coordinates in the following way
    \begin{equation}\label{goodPointChargeDirichletQF}
        \Phi_\phi^\lambda[\xi]=\tfrac{N(N\--1)\!}{2}\-\lim_{\epsilon\to 0^+}\!\integrate[\R^{3(N-1)}]{;\mspace{-33mu}d\vec{x}d\vec{X}_{\-\sigma}}\!\!\!\integrate[\abs{\vec{r}}\+=\+\epsilon]{(U_\sigma\+\phi)^2(\vec{r}\-,\vec{x},\vec{X}_{\-\sigma})\!\left(\--\tfrac{\vec{r}}{\epsilon}\-\cdot\-\nabla_{\!\vec{r}}\+\tfrac{\abs{U_\sigma\+\mathcal{G}^\lambda\xi}^2\!\-}{(U_\sigma\+\phi)^2}\+\right);\mspace{-16mu} d\vec{r}}\!.
    \end{equation}
    Indeed, we shall see in the following that the integrand can be written as $\oBig{\frac{1}{r^2}}\-f(\vec{x},\vec{X}_{\-\sigma})$ for a proper function $f\!\in\!\Lp{1}[\R^{3(N-1)}\!,\,d\vec{x}d\vec{X}_{\-\sigma}]$, and therefore $\partial \+U_\sigma D^{\+\epsilon}_{\-\sigma}$ can be replaced with $\{(\vec{r},\vec{x},\vec{X}_{\-\sigma})\-\in\R^{3N}\,\big|\;\abs{\vec{r}}\-=\-\epsilon\}$, since the rest is a set whose Hausdorff measure (in $\R^{3N-1}$) is vanishing as $\epsilon$ goes to zero.
    Furthermore, for generic $f,g\-\in\- \hilbert_N \cap H^2(\R^{3N}\setminus\pi_\sigma)$, one has
    \begin{equation*}
        \integrate[\abs{\vec{x}_i-\+\vec{x}_j}\+>\+\epsilon]{\nabla \conjugate{g}\cdot\-\nabla f;\mspace{-51mu} d\vec{\mathrm{x}}}=\!\-\integrate[\abs{\vec{r}}\+>\+\epsilon]{2\+\nabla_{\!\vec{r}}\+U_\sigma\+ \conjugate{g} \cdot \-\nabla_{\!\vec{r}}\+ U_\sigma f\-+\tfrac{1}{2}\+\nabla_{\!\vec{x}}\+U_\sigma\+\conjugate{g}\cdot\-\nabla_{\!\vec{x}}\+U_\sigma f\-+\nabla_{\!\vec{X_\sigma}}U_\sigma\+\conjugate{g}\cdot\-\nabla_{\!\vec{X_\sigma}}U_\sigma f;\nquad d\vec{r}d\vec{x}d\vec{X}_{\-\sigma}},
    \end{equation*}
    simply because $U_\sigma\nabla_{\!\vec{x}_i}\+\adj{U}_\sigma\-=\frac{1}{2}\+\nabla_{\!\vec{x}}\-+\nabla_{\!\vec{r}}$ and $U_\sigma\nabla_{\!\vec{x}_j}\+\adj{U}_\sigma\-=\frac{1}{2}\+\nabla_{\!\vec{x}}\--\nabla_{\!\vec{r}}$. 
    Therefore, the first Green's identity reads
    \begin{equation*}
    \begin{split}
        \integrate[\abs{\vec{x}_i-\+\vec{x}_j}\+>\+\epsilon]{\nabla \conjugate{g}\cdot\-\nabla f;\mspace{-51mu} d\vec{\mathrm{x}}}=&\,-\!\-\integrate[\abs{\vec{r}}\+>\+\epsilon]{(U_\sigma\+ \conjugate{g})\,(2\+\Delta_{\vec{r}} \+U_\sigma f+\tfrac{1}{2}\+\Delta_{\vec{x}}\+U_\sigma f+\Delta_{\-\vec{X_\sigma}}U_\sigma f);\mspace{-16.5mu}d\vec{r}d\vec{x}d\vec{X}_{\-\sigma}}\,+\\
        &-2\!\-\integrate[\R^{3(N\--1)}]{;\mspace{-33mu}d\vec{x}d\vec{X}_{\-\sigma}}\!\!\!\integrate[\abs{\vec{r}}\+=\+\epsilon]{(U_\sigma\+ \conjugate{g})\left(\tfrac{\vec{r}}{\epsilon}\-\cdot\-\nabla_{\!\vec{r}} \+U_\sigma f\right)\!;\mspace{-16mu}d\vec{r}}.
    \end{split}
    \end{equation*}
    This means that applying the change of variables encoded by $U_\sigma$ in equation~\eqref{startingPointChargeDirichletQF}, taking account of~\eqref{bloodyNormal}, we find out the proper induced transformation of the hypersurface in $\R^{3N\mspace{-0.75mu}-1}$ which is needed to be applied to the Hausdorff measure $d\vec{\mathrm{s}}\mspace{-0.75mu}\longmapsto\mspace{-2.25mu}\sqrt{2}\, d\vec{x}d\vec{X}_{\-\sigma}\+d\vec{r}|_{\abs{\vec{r}}\+=\+\epsilon}$.
    This proves equation~\eqref{goodPointChargeDirichletQF}.\newline
    We stress that the quadratic form $\Phi^\lambda_\phi$, and therefore the operator $-\Delta_\phi$, remains unchanged if we multiply $\phi$ by any nonzero constant, owing to equation~\eqref{goodPointChargeDirichletQF},
    namely,
    \begin{equation}
        -\Delta_{\+c\,\phi}=-\Delta_\phi,\qquad \forall c\in\R\-\smallsetminus\-\{0\}.
    \end{equation}
    Next, we compute the asymptotic expansion of the integrand for $r$ small.\newline
    First of all, one can check that the assumption $\theta\-\in\! H^2(\Rplus)$ implies $\tfrac{\theta(\abs{\+\vec{\cdot}\+})}{\abs{\+\vec{\cdot}\+}}\-\in\! H^2(\R^3)$, and therefore one has the following asymptotic behavior in the topology induced by $\Lp*{2}[\R^{3(N-1)}]$ for the weight function $\phi$ defined in~\eqref{genericPhiDF}
    \begin{subequations}\label{Dirichlet4Body}
    \begin{equation}\label{genericPhiExpansion}
        (U_\sigma \+\phi)(\vec{r},\vec{x},\vec{X}_{\-\sigma})=\frac{1}{r}+\alpha_0+A_\theta(\vec{x},\vec{X}_{\-\sigma})+\oSmall{1}\-, \qquad r\longrightarrow 0^+,
    \end{equation}
    where we have defined the shortcut
    \begin{equation}
        A_\theta(\vec{x},\vec{X}_{\-\sigma}) = 2\sum^N_{\substack{\ell\,=\+1\,:\\ \ell\,\notin\+ \sigma}} \frac{\theta(\abs{\vec{x}\--\-\vec{x}_\ell})}{\abs{\vec{x}\--\-\vec{x}_\ell}}+\nquad\sum_{\substack{1\leq\+k\,<\, \ell\+\leq N\\ k\,\notin\+\sigma, \:\ell\,\notin\+\sigma}} \mspace{-12mu}\frac{\theta(\abs{\vec{x}_k\!-\-\vec{x}_\ell})}{\abs{\vec{x}_k\!-\-\vec{x}_\ell}}.
    \end{equation}
    \end{subequations}
    In fact, under the assumption $\theta\-\in\- H^2(\Rplus)\!\subset\! C^1(\Rplus)$, one has $\!\lim\limits_{r\to 0^+}\! \frac{\theta(r)\+-\+1}{r}\-=\-\theta'\-(0)$.
    Moreover, a similar expansion for the potential, holding in the weak topology near $\pi_\sigma$, is provided by~\eqref{potentialAsymptotics}
    \begin{equation*}
        (U_\sigma\+\mathcal{G}^\lambda\xi)(\vec{r},\vec{x},\vec{X}_{\-\sigma})=\frac{\xi(\vec{x},\vec{X}_{\-\sigma})}{r}-(\Gamma^{\+\sigma, \+\lambda}\+\xi)(\vec{x},\vec{X}_{\-\sigma})+\oSmall{1}\-,\qquad \text{as }\, r\longrightarrow0^+.
    \end{equation*}
    Elementary calculations yield
        \begin{equation}\label{sayElementaryCalculations}
        \begin{split}
        \frac{\abs{(U_\sigma\+\mathcal{G}^\lambda\xi)(\vec{r},\vec{x},\vec{X}_{\-\sigma})}^2\!\-}{(U_\sigma\+\phi)^2(\vec{r},\vec{x},\vec{X}_{\-\sigma})}=&\:\abs{\xi(\vec{x},\vec{X}_{\-\sigma})}^2\--2r\+\Re\,\conjugate*{\xi(\vec{x},\vec{X}_{\-\sigma})} (\Gamma^{\+\sigma, \+\lambda}\+\xi)(\vec{x},\vec{X}_{\-\sigma})\++\\
        &-2r\+\big[\alpha_0+A_\theta(\vec{x},\vec{X}_{\-\sigma})\big]\abs{\xi(\vec{x},\vec{X}_{\-\sigma})}^2\-+\oSmall{r}\!.
        \end{split}
    \end{equation}
    Thus, the integrand in equation~\eqref{goodPointChargeDirichletQF} for $r\longrightarrow 0^+$ reads
    \begin{equation*}
        (U_\sigma\+\phi)^2\!\left[\--\tfrac{\vec{r}}{\abs{\vec{r}}}\mspace{-2.25mu}\cdot\mspace{-2.25mu}\nabla_{\!\vec{r}}\+\tfrac{\abs{(U_\sigma\+\mathcal{G}^\lambda\xi)}^2\!\!}{(U_\sigma\+\phi)^2}\+\right]\!(\vec{r}\-,\vec{x},\vec{X}_{\-\sigma}\-)\-=\-\tfrac{\-2\-\big[\-A_\theta(\vec{x},\+\vec{X}_{\-\sigma})+\+\alpha_0\-\big]\-\abs{\xi(\vec{x},\+\vec{X}_{\-\sigma})}^2\++2\+\Re\:\conjugate*{\xi(\vec{x},\+\vec{X}_{\-\sigma})} \+(\Gamma^{\+\sigma, \+\lambda}\+\xi)(\vec{x},\+\vec{X}_{\-\sigma})\!}{r^2}+\oSmall{\tfrac{1}{r^2}}\!.
    \end{equation*}
    Hence, we have obtained
    \begin{equation}\label{isThisOurQF?}
        \Phi_\phi^\lambda[\xi]=4\pi \+N(N\!-\-1)\-\!\integrate[\R^{3(N\--1)}]{\big[A_\theta(\vec{x},\vec{X}_{\-\sigma})+\alpha_0\big]\abs{\xi(\vec{x},\vec{X}_{\-\sigma})}^2\-+\Re\,\conjugate*{\xi(\vec{x},\vec{X}_{\-\sigma})}(\Gamma^{\+\sigma, \+\lambda}\+\xi)(\vec{x},\vec{X}_{\-\sigma});\mspace{-33mu}d\vec{x}d\vec{X}_{\-\sigma}}.
    \end{equation}
    In particular, notice that the first term on the right-hand side of~\eqref{isThisOurQF?} coincides with $\Phi_{\mathrm{reg}}[\xi]$, given $\gamma\-=\-2$ and $\theta\in H^2(\Rplus)$ with $\theta(0)=1$.
    Furthermore, by construction, the second term satisfies (see Appendix~\ref{heuristics})
    \begin{equation}\label{claimDirichlet}
    \Re\!\integrate[\R^{3(N-1)}]{\conjugate*{\xi(\vec{x},\vec{X}_{\-\sigma})} (\Gamma^{\+\sigma, \+\lambda}\+\xi)(\vec{x},\vec{X}_{\-\sigma});\mspace{-33mu}d\vec{x}d\vec{X}_{\-\sigma}}=(\Phi^\lambda_{\mathrm{diag}}\!+\Phi^\lambda_{\mathrm{off,\,0}}\-+\Phi^\lambda_{\mathrm{off,\+1}})[\xi].
    \end{equation}
    
\end{proof}
In conclusion, we provide an explicit example represented by
\begin{equation}
    \theta:\,r\:\longmapsto \, \mathop{\mathrm{sech}}\!\Big(\-\sqrt{\tfrac{Ne_0}{2}\+}\+ r\-\Big),\qquad r, e_0\geq 0.
\end{equation}
Despite this repulsive many-body interaction depends on the number of particles, it is settled so that it vanishes pointwise almost everywhere as $N$ grows large.
In this case one has $\theta'\-(0)=0$, so that if we fix $\gamma=2$ our many-body Hamiltonian encoding a contact interaction can be written in terms of a Dirichlet form where the weight function is
$$\phi(\vec{x}_1,\ldots,\vec{x}_N)=\nquad\sum_{1\leq\+k\,<\,\ell\+\leq N}\mspace{-12mu} \frac{\mathop{\mathrm{sech}}\!\Big(\-\sqrt{\frac{Ne_0}{2}\+}\+\abs{\vec{x}_k\!-\-\vec{x}_\ell}\Big)}{\abs{\vec{x}_k\!-\-\vec{x}_\ell}}+\alpha_0.$$
Our claim is that the many-body Hamiltonian in $\hilbert_N$ defined in this way is not only lower-bounded, but also stable of second kind, namely there holds
$\inf\spectrum{\mathcal{H}}\geq -const \,N$ (improving the estimates obtained in~\eqref{unoptimalLowerBound}).
To this end, let us briefly show that for any $\theta\in H^2(\Rplus)$
$$\Delta\phi(\vec{x}_1,\ldots,\vec{x}_N)=2\nquad \sum_{1\leq\+k\,<\,\ell\+\leq N}\mspace{-12mu} \frac{\theta''(\abs{\vec{x}_k\!-\-\vec{x}_\ell})}{\abs{\vec{x}_k\!-\-\vec{x}_\ell}}.$$
Ordering properly the elements of the sum one gets 
\begin{align*}
    \Delta\phi(\vec{x}_1,\ldots,\vec{x}_N)&=\sum_{k\+=\+1}^N\Delta_{\vec{x}_k}\nquad\sum_{1\leq\+i\,<\,j\+\leq N}\mspace{-12mu}\frac{\theta(\abs{\vec{x}_i\--\-\vec{x}_j})}{\abs{\vec{x}_i\--\-\vec{x}_j}}\\
    &=\sum_{k\+=\+1}^N\Delta_{\vec{x}_k}\Bigg[\sum_{\substack{1\leq\+i\,<\,j\+\leq N\,:\\ k\+\notin\+\{i,\,j\}}}\mspace{-12mu}\frac{\theta(\abs{\vec{x}_i\--\-\vec{x}_j})}{\abs{\vec{x}_i\--\-\vec{x}_j}}+\!\!\sum_{j\+=\+k+1}^N \frac{\theta(\abs{\vec{x}_k\--\-\vec{x}_j})}{\abs{\vec{x}_k\--\-\vec{x}_j}}+\sum_{i\+=\+1}^{k-1} \frac{\theta(\abs{\vec{x}_i\--\-\vec{x}_k})}{\abs{\vec{x}_i\--\-\vec{x}_k}}\Bigg]\\
    &=\sum_{k\+=\+1}^N\mathop{\mathrm{div}_{\vec{x}_k}}\sum_{\substack{\ell\+=\+1\,:\\ \ell\+\neq\+ k}}^N\frac{\vec{x}_k\--\-\vec{x}_\ell}{\abs{\vec{x}_k\--\-\vec{x}_\ell}^2}\Bigg[\theta'\-(\abs{\vec{x}_k\--\-\vec{x}_\ell})- \frac{\theta(\abs{\vec{x}_k\--\-\vec{x}_\ell})}{\abs{\vec{x}_k\--\-\vec{x}_\ell}}\Bigg].
\end{align*}
Since one has $\mathop{\mathrm{div}_{\vec{x}_k}} \frac{\vec{x}_k-\,\vec{x}_\ell}{\abs{\vec{x}_k-\,\vec{x}_\ell}^2}=\frac{1}{\abs{\vec{x}_k-\,\vec{x}_\ell}^2}$, one obtains
\begin{align*}
    \Delta\phi(\vec{x}_1,\ldots,\vec{x}_N)= \sum_{k\+=\+1}^N \sum_{\substack{\ell\+=\+1\,:\\\ell\+\neq\+ k}}^N &\,\frac{1}{\abs{\vec{x}_k\--\-\vec{x}_\ell}^2}\Bigg[\theta'\-(\abs{\vec{x}_k\--\-\vec{x}_\ell})- \frac{\theta(\abs{\vec{x}_k\--\-\vec{x}_\ell})}{\abs{\vec{x}_k\--\-\vec{x}_\ell}}\Bigg]+\\
    &+\frac{1}{\abs{\vec{x}_k\--\-\vec{x}_\ell}}\Bigg[\theta''\-(\abs{\vec{x}_k\--\-\vec{x}_\ell})- \frac{\theta'(\abs{\vec{x}_k\--\-\vec{x}_\ell})}{\abs{\vec{x}_k\--\-\vec{x}_\ell}}+\frac{\theta(\abs{\vec{x}_k\--\-\vec{x}_\ell})}{\abs{\vec{x}_k\--\-\vec{x}_\ell}^2}\Bigg]
\end{align*}
which proves the claim.
Next, we stress that
\begin{equation}
    \frac{d^2}{dr^2} \mathop{\mathrm{sech}}(a\+r)=a^2\- \mathop{\mathrm{sech}}(a\+r)\big[1-2\mathop{\mathrm{sech}^2}(a\+r)\big]\leq a^2\-\mathop{\mathrm{sech}}(a\+r), \qquad \forall a\geq 0,
\end{equation}
therefore, assuming $\alpha_0\-\geq\- 0$ one obtains $\frac{\Delta\phi}{\phi}\leq N e_0$, yielding
\begin{equation}\label{secondKindStability}
\inf\spectrum{\mathcal{H}}\geq - N \+e_0.
\end{equation}
Of course the example provided by~\cite{AHK} is quite similar to this one if one sets $m\-=\-\sqrt{\frac{Ne_0}{2}}$, but the crucial difference is that the scattering length $\frac{1}{m}$ in this case must vanish for $N$ large, whereas we are providing a simple modification which allows to keep the repulsive interaction $\theta$ and the scattering length $-\frac{1}{\alpha_0}$ independent from each other (for instance in the so-called unitary limit $\alpha_0=0$ we have the non-trivial lower bound~\eqref{secondKindStability} for the many-body Hamiltonian, while considering~\cite[example 4]{AHK} $m=0$ implies $\mathcal{H}\geq 0$).

\appendix

\section{Heuristic Derivation of the Quadratic Form}\label{heuristics}

Here we provide a heuristic discussion meant to justify the definition of the quadratic form $\mathcal{Q}$ given in~\eqref{QF}.
Considering a vector $\psi\!\in\!\hilbert_N \cap H^2(\R^{3N}\setminus\pi)$ fulfilling boundary condition~\eqref{mfBC}, our aim is to compute the energy form $\scalar{\psi}{\tilde{\mathcal{H}}\psi}$ associated with the formal Hamiltonian $\tilde{\mathcal{H}}$ discussed in the introduction.
Recalling that, by construction, the Hamiltonian acts as $\mathcal{H}_0$ outside $\pi$, given $\epsilon\->\- 0$, let $$\define{D_\epsilon\!;\-\big\{\!(\vec{x}_1,\ldots,\vec{x}_N)\!\in\R^{3N}\,\big | \min\limits_{1\leq\+i\,<\,j\+\leq N}\abs{\vec{x}_i\--\vec{x}_j}\!>\epsilon\big\}}.$$
Then
\begin{equation}\label{freeLimitQF}
    \scalar{\psi}{\tilde{\mathcal{H}}\psi}=\lim_{\epsilon\to 0}\- \integrate[D_\epsilon]{\conjugate*{\psi(\vec{x}_1,\ldots, \vec{x}_N)}\+(\mathcal{H}_0\+\psi)(\vec{x}_1,\ldots, \vec{x}_N); \-d\vec{x}_1\cdots d\vec{x}_N}.
\end{equation}
Introducing the decomposition~\eqref{psiDec}, $\psi=w^\lambda\!+\mathcal{G}^\lambda\xi\+$, equation~\eqref{freeLimitQF} reads
\begin{equation}
    \scalar{\psi}{\tilde{\mathcal{H}}\psi}=\scalar{w^\lambda}{(\mathcal{H}_0+\lambda)w^\lambda}-\lambda\norm{\psi}^2\!+\scalar{\mathcal{G}^\lambda\xi}{(\mathcal{H}_0+\lambda)w^\lambda},
\end{equation}
since $(\mathcal{H}_0+\-\lambda)\+\mathcal{G}^\lambda\xi=0$ in $D_\epsilon\+$.
The last term can be simplified using~\eqref{potentialDistributionalDef}
\begin{align*}
    \scalar{\psi}{\tilde{\mathcal{H}}\psi}&=\scalar{w^\lambda}{(\mathcal{H}_0+\lambda)w^\lambda}-\lambda\norm{\psi}^2\!+8\pi\,\scalar{\xi}{\!{\textstyle\sum\limits_{\sigma\+\in\+\mathcal{P}_N}}\!\- w^\lambda|_{\pi_\sigma}}[\hilbert*_{N-1}]\\[-2.5pt]
    &=\scalar{w^\lambda}{(\mathcal{H}_0+\lambda)w^\lambda}-\lambda\norm{\psi}^2\!+8\pi\,\scalar{\xi}{\!{\textstyle \sum\limits_{\sigma\+\in\+\mathcal{P}_N}\!}(\Gamma^{\+\sigma,\+\lambda}\!+\Gamma^{\+\sigma}_{\!\mathrm{reg}})\+\xi}[\hilbert*_{N-1}]\+.
\end{align*}
In the last step, we took account of~\eqref{regularComponentTraced}.
Moreover, considering the definition of $\Gamma^{\+\sigma,\+\lambda}\-$,
\begin{equation}\label{inProgressQF1}
    \scalar{\psi}{\tilde{\mathcal{H}}\psi}=\scalar{w^\lambda}{(\mathcal{H}_0+\lambda)w^\lambda}-\lambda\norm{\psi}^2\!+8\pi\,\scalar{\xi}{\!{\textstyle \sum\limits_{\sigma\+\in\+\mathcal{P}_N}}\!\big(\Gamma_{\!\mathrm{diag}}^\lambda\-+\!{\textstyle \sum\limits_{\nu\+\neq\+\sigma}}\Gamma_{\!\mathrm{off};\+|\nu\+\cap\+\sigma|}^{\+\nu\+\smallsetminus\sigma,\+\lambda}\-+\Gamma_{\!\mathrm{reg}}^{\+\sigma}\big)\+\xi}[\hilbert*_{N-1}]\+.
\end{equation}
Exploiting the bosonic symmetry and considering definitions~\eqref{GammaDefs}
$$\scalar{\psi}{\tilde{\mathcal{H}}\psi}=\scalar{w^\lambda}{(\mathcal{H}_0+\lambda)w^\lambda}-\lambda\norm{\psi}^2\!+4\pi\+ N(N\!-\-1)\,\scalar{\xi}{\big(\Gamma_{\!\mathrm{diag}}^\lambda\!+\Gamma_{\!\mathrm{off};\+0}^\lambda\-+\Gamma_{\!\mathrm{off};\+1}^\lambda\-+\Gamma_{\!\mathrm{reg}}\big)\+\xi}[\hilbert*_{N-1}]\+.$$
In conclusion, adopting suitable changes of variables, where required, one obtains
\begin{equation}\label{inProgressQF2}
    \scalar{\psi}{\tilde{\mathcal{H}}\psi}=\scalar{w^\lambda}{(\mathcal{H}_0+\lambda)w^\lambda}-\lambda\norm{\psi}^2\!+4\pi\+ N (N\!-\-1)(\Phi^\lambda_{\mathrm{diag}}\!+\Phi^\lambda_{\mathrm{off};\+0}\-+\Phi^\lambda_{\mathrm{off};\+1}\-+\Phi_{\mathrm{reg}})[\xi],
\end{equation}
coming up with the definition of the quadratic form $\mathcal{Q}$ provided by~\eqref{QF}.
We point out that for any given $\sigma\-\in\-\mathcal{P}_N$
\begin{align*}
    &\left|\left\{\nu\-\in\-\mathcal{P}_N\,\big|\;|\nu\cap\sigma|\-=0\right\}\right|=\tfrac{(N-2)(N-3)}{2}, & \left|\left\{\nu\-\in\-\mathcal{P}_N\,\big|\;|\nu\cap\sigma|\-=\-1\right\}\right|=2\+(N\!-\-2)\+.
\end{align*}

\comment{
\section{Instability of the 4-Body singularity}

In this section, our aim is to prove that the four-body contribution in boundary condition~\eqref{mfBC} is crucial for our results whenever $N\!\geq\- 5$.
More precisely, we show that the quadratic form in $\hilbert_N$ associated with~\eqref{formalH} is unbounded from below if the boundary condition only takes account of the three-body repulsion.\newline[5]
Here we therefore focus on the following quadratic form
\begin{align}
    &\dom{\breve{\mathcal{Q}}}\-\vcentcolon= \!\left\{\psi \in \hilbert_N \,\big |\;\psi-\mathcal{G}^\lambda\xi=w^\lambda\-\in\- H^1(\R^{3N}),\:\xi \in \dom{\breve{\Phi}^{\lambda}}, 
    \:\lambda\->\-0\right\}\-,\nonumber\\
    \label{QF4}&\mspace{7.5mu}\breve{\mathcal{Q}}[\psi]\vcentcolon=\|\mathcal{H}_0^{\frac{1}{2}}w^\lambda\|^2+\lambda\|w^\lambda\|^2\--\lambda\- \norm{\psi}^2\-+\breve{\Phi}^\lambda[\xi], 
\end{align}
where $\breve{\Phi}^\lambda$ is now missing the $4$-body regularizing term
\begin{gather}
 \dom{\breve{\Phi}^{\lambda}} =\dom{\Phi^\lambda}, \nonumber\\
  \label{phi4Definition} \breve{\Phi}^\lambda \vcentcolon= \Phi^\lambda-\Phi^{(4)}_{\mathrm{reg}}.
\end{gather}

\n Consider the trial sequence of vectors in $\dom{\breve{\mathcal{Q}}}$ given by
\begin{subequations}
\begin{gather}
    \psi^{\lambda,\+\beta}_n(\vec{x}_1,\ldots,\vec{x}_N)=\frac{(\mathcal{G}^\lambda\xi^\beta_n)(\vec{x}_1,\ldots,\vec{x}_N)}{\lVert\mathcal{G}^\lambda\xi^\beta_n\rVert},\qquad n\in\N,\:\beta,\lambda>0,\\[-5pt]
    \xi^\beta_n(\vec{z},\vec{y}_1,\ldots,\vec{y}_{N-2})\-=\frac{\sqrt{2\+}\+n^2}{\!\-\sqrt{(N\!-\-2)(N\!-\-3)}\,}\!\-\sum_{\substack{\nu\+\in\+\mathcal{P}_{N\--2}\+:\\\nu\+=\+\{k,\,\ell\}}}\!\!\!\! g_\beta\big(n\+(\vec{y}_k\!-\vec{y}_\ell)\-\big)\,f(\vec{z},\tfrac{\vec{y}_k+\+\vec{y}_\ell}{2},\vec{y}_1,\ldots\check{\vec{y}}_\nu\ldots,\vec{y}_{N-2}),\label{trialChargeSequence}
\end{gather}
\end{subequations}
where, denoting by $\Char{S}$ the characteristic function of the (measurable) set $S$
\begin{gather*}
    \FT{g}_\beta\-:\;\vec{q}\:\longmapsto\: \sqrt{\-\frac{\beta}{4\pi}\+}\, q^{\frac{\beta}{4}-2}\:\Char{\!\big(4^{-\frac{1}{\beta}}\!,\: 1\big)}\-(q),\\
    \FT{f}: (\vec{k},\vec{\kappa},\vec{p}_1,\ldots,\vec{p}_{N-4})\longmapsto \tfrac{1}{\pi^{3(N\--2)/4}}\,e^{-\frac{1}{2}\left(k^2+\+\kappa^2+\+p_1^2+\,\ldots\,+\+p_{N-4}^2\right)}.
\end{gather*}
With this choice we are considering a charge $\xi^\beta_n\!\in\!\dom{\breve{\Phi}^\lambda}$ which focuses around the $4$-body singularities as $n$ grows, according to equation~\eqref{off0PhiSplitted}.
Moreover, we stress for later purposes that $g_\beta$ has been chosen in such a way that
\begin{equation}\label{gShape}
  \FT{g}_\beta(\vec{q})=\sqrt{\frac{\beta}{4\pi}}\;\frac{u\+(\beta \+\ln{q})}{q^2},\qquad u:\;t\:\longmapsto e^{\frac{t}{4}}\,\Char{(-\ln{4},\,0)}(t),  
\end{equation}
where, in particular
$$\integrate[\R^3]{q\,\abs{\FT{g}_\beta(\vec{q})}^2;\!d\vec{q}}=1,\qquad\norm{u}[\Lp{2}[\R]]=1.$$
With this setting we have
\begin{equation}
\FT{\xi}^\beta_n(\vec{k},\vec{p}_1,\ldots,\vec{p}_{N-2})=\frac{1}{n}\:\frac{\sqrt{2}}{\!\-\sqrt{(N\!-\-2)(N\!-\-3)}\,}\-\!\sum_{\substack{\nu\+\in\+\mathcal{P}_{N\--2}\+:\\\nu\+=\+\{k,\,\ell\}}}\!\!\! \FT{g}_\beta\!\left(\-\tfrac{\vec{p}_k-\+\vec{p}_\ell}{2\+n}\-\right)\FT{f}(\vec{k},\vec{p}_k\!+\vec{p}_\ell,\vec{p}_1,\ldots\check{\vec{p}}_\nu\ldots,\vec{p}_{N-2}).\label{trialChargeFourier}
\end{equation}

\vspace{-0.15cm}

\n Then, let us ensure that $\psi^{\lambda,\+\beta}_n$ is well defined by showing that $\inf\limits_{n\+\in\+\N}\,\lVert\mathcal{G}^\lambda\xi^\beta_n\rVert>0$ for all $\beta,\lambda\->\-0$.
\vspace{-0.15cm}
\begin{align*}
    \lVert\mathcal{G}^\lambda\xi^\beta_n\rVert^2&=\frac{8}{\pi}\-\integrate[\R^{3N}]{\frac{\left\lvert\sum_{\mathcal{P}_N\+\ni\,\sigma=\{i,\+j\}} \FT{\xi}^\beta_n(\vec{p}_i\!+\vec{p}_j,\vec{p}_1,\ldots\check{\vec{p}}_\sigma\ldots,\vec{p}_N\-)\-\right\rvert^2}{(p_1^2+\ldots+p_N^2+\lambda)^2};\mspace{-10mu}d\vec{p}_1\-\cdots d\vec{p}_N}\\
    &\geq \frac{8}{\pi}\frac{N(N\!-\-1)}{2}\!\-\integrate[\R^{3N}]{\frac{ \abs{\FT{\xi}^\beta_n(\vec{p}_1\!+\vec{p}_2,\vec{p}_3,\ldots,\vec{p}_N\-)}^2\!\-}{(p_1^2+\ldots+p_N^2+\lambda)^2};\mspace{-10mu}d\vec{p}_1\-\cdots d\vec{p}_N},
\end{align*}
where we exploited the non-negativity of $\FT{\xi}^\beta_n$ and the bosonic symmetry.
Hence, we simplify the latter lower bound by changing properly the integration variables
\begin{align*}
    \lVert\mathcal{G}^\lambda\xi^\beta_n\rVert^2 &\geq \frac{4\+N(N\!-\-1)}{\pi}\!\-\integrate[\R^{3(N-1)}]{\abs{\FT{\xi}^\beta_n(\vec{k},\vec{p}_3,\ldots,\vec{p}_N\-)}^2;\mspace{-33mu}d\vec{k}d\vec{p}_3\cdots d\vec{p}_N}\!\!\integrate[\R^3]{\frac{1}{(\frac{1}{2}\+k^2\-+2\+\tau^2\-+p_3^2+\ldots+p_N^2+\-\lambda)^2\!\-};\-d\vec{\tau}}\\
    &=\sqrt{2\+}\+\pi\,N(N\!-\-1)\!\-\integrate[\R^{3(N-1)}]{\frac{\abs{\FT{\xi}^\beta_n(\vec{k},\vec{p}_1,\ldots,\vec{p}_{N-2}\-)}^2}{\!\sqrt{\frac{1}{2}\+k^2\-+p_1^2+\ldots+p_{N-2}^2+\-\lambda\+}\+};\mspace{-33mu}d\vec{k}d\vec{p}_1\-\cdots d\vec{p}_{N-2}}.
\end{align*}
Taking account of the explicit expression for $\FT{\xi}^\beta_n$, given by~\eqref{trialChargeFourier}, we again make use of its symmetry and non-negativity in order to achieve the following lower bound
\begin{align*}
    \lVert\mathcal{G}^\lambda\xi^\beta_n\rVert^2 &\geq\frac{\sqrt{2\+}\+\pi\,N(N\!-\-1)}{n^2}\!\-\integrate[\R^{3(N-1)}]{\frac{\abs{\FT{g}_\beta\big(\tfrac{\vec{q}}{n}\big)\+\FT{f}(\vec{k},\vec{\kappa},\vec{p}_3,\ldots,\vec{p}_{N-2})}^2}{\!\sqrt{\frac{1}{2}\+k^2\-+\frac{1}{2}\+\kappa^2\-+2\+q^2\-+p_3^2+\ldots+p_{N-2}^2+\-\lambda\+}\+};\mspace{-33mu}d\vec{q}d\vec{k}d\vec{\kappa}d\vec{p}_3\cdots d\vec{p}_{N-2}}\\
    &=\sqrt{2\+}\+\pi\,N(N\!-\-1)\!\-\integrate[\R^{3(N-1)}]{\frac{\abs{\FT{g}_\beta(\vec{q})\+\FT{f}(\vec{k},\vec{\kappa},\vec{p}_1,\ldots,\vec{p}_{N-4})}^2}{\!\sqrt{2\+q^2\-+\frac{1}{2n^2}\+(k^2\-+\kappa^2)+\frac{1}{n^2}(p_1^2+\ldots+p_{N-4}^2+\-\lambda)\+}\+};\mspace{-33mu}d\vec{q}d\vec{k}d\vec{\kappa}d\vec{p}_1\-\cdots d\vec{p}_{N-4}}\\
    &\geq\sqrt{2\+}\+\pi\,N(N\!-\-1)\!\-\integrate[\R^{3(N-1)}]{\frac{\abs{\FT{g}_\beta(\vec{q})\+\FT{f}(\vec{k},\vec{\kappa},\vec{p}_1,\ldots,\vec{p}_{N-4})}^2}{\!\sqrt{2\+q^2\-+\frac{1}{2}\+k^2\-+\frac{1}{2}\+\kappa^2\-+p_1^2+\ldots+p_{N-4}^2+\-\lambda\+}\+};\mspace{-33mu}d\vec{q}d\vec{k}d\vec{\kappa}d\vec{p}_1\-\cdots d\vec{p}_{N-4}}.
\end{align*}
Since this estimate from below does not depend on $n$, we proved that $\big\{\lVert\mathcal{G}^\lambda\xi^\beta_n\rVert\big\}_{n\+\in\+\N}$ is away from $0$ for any value of $\beta,\lambda\->\-0$.

\n Our goal is to show that, for some $\beta\->\-0$ one has
$$\lim_{n\to\infty}\breve{\mathcal{Q}}[\psi^{\lambda,\+\beta}_n]= \lim_{n\to\infty} \!\left(\!-1 +\tfrac{1}{\vphantom{\big(\big)}\lVert\mathcal{G}^\lambda\xi^\beta_n\rVert^2\!\-}\:\breve{\Phi}^\lambda[\xi^\beta_n]\-\right)=\mInfty,\qquad\forall\lambda>0.$$
Clearly, to this end it suffices to show the same statement only for $\breve{\Phi}^\lambda[\xi^\beta_n]$, whose leading term, for $n$ large, is provided by the following lemma.

\begin{lemma}
    Let $\breve{\Phi}^\lambda$ be the hermitian quadratic form given by~\eqref{phi4Definition} and $\xi^\beta_n$ be the sequence defined by~\eqref{trialChargeSequence}.
    Then, one has for any fixed $\lambda,\beta \->\- 0$
    \begin{gather*}
    \breve{\Phi}^\lambda[\xi^\beta_n]=n^2\!\left(\Phi^0_{\mathrm{diag}}[\xi]+\Phi^0_{\mathrm{off;}\+0}[\xi]+\Phi^0_{\mathrm{off;}\+1}[\xi]+\breve{\Phi}_0[\xi]\right)\!+\oSmall{n^2}\-,\qquad n\to\infty.
    \end{gather*}
    \begin{proof}
        We start by considering the component $\breve{\Phi}_{\mathrm{reg}}[\xi^\beta_n]$.
        By exploiting~\eqref{positiveBoundedCondition}
        \begin{equation*}
            \big(\Phi^{(2)}_{\mathrm{reg}}+\Phi^{(3)}_{\mathrm{reg}}\big)[\xi^\beta_n]\leq (N\!-\-2)\gamma\-\!\integrate[\R^{3(N-1)}]{\frac{\abs{\xi^\beta_n(\vec{x},\vec{x}'\-,\vec{X})}^2}{\abs{\vec{x}\--\-\vec{x}'}} ;\mspace{-33mu}d\vec{x}d\vec{x}'\-d\vec{X}}+\-\left[\abs{\alpha_0}+\tfrac{(N\--2)\+\gamma}{b}\right]\!\lVert\xi^\beta_n\rVert_{\hilbert*_{N-1}}^2.
        \end{equation*}
        First of all, making use of~\eqref{highPowerTriangular} one has
        \begin{align*}
            \lVert\xi_n^\beta\rVert_{\hilbert*_{N-1}}^2 &\leq n^4\nqquad\sum_{\mathcal{P}_{N\--2}\+\ni\+\nu\+=\+\{k,\,\ell\}}\!\integrate[\R^{3(N-1)}]{\abs{g_\beta(n\+\abs{\vec{x}_k\!-\vec{x}_\ell})\+f(\vec{x},\tfrac{\vec{x}_k+\+\vec{x}_\ell}{2},\vec{X}_{\-\nu})}^2;\mspace{-33mu}d\vec{x}d\vec{x}_k d\vec{x}_\ell\+ d\vec{X}_{\-\nu}}\\
            &= n\,\tfrac{(N\--2)(N\--3)}{2} \!\-\integrate[\R^{3(N-1)}]{\abs{g_\beta(\vec{r})\+f(\vec{x},\vec{R},\vec{X})}^2;\mspace{-33mu}d\vec{x}d\vec{r}d\vec{R}\+d\vec{X}}.
        \end{align*}

        \vspace{-0.1cm}
        
        \n This proves
        \begin{equation}
            \lVert\xi_n^\beta\rVert_{\hilbert*_{N-1}}^2 \!= \oBig{n},\qquad n\longrightarrow\pInfty,\, \forall \beta>0.
        \end{equation}
        Then, considering the following decomposition
        \begin{equation*}
            \begin{split}
                \xi^\beta_n(\vec{x},\vec{x}_1,\ldots,\vec{x}_{N-2})=\tfrac{\sqrt{2\+}\+n^2}{\!\-\sqrt{(N\--2)(N\--3)\+}\+}\-\Bigg[&\sum_{\ell\+=\+1}^{N\--3} g_\beta\big(n(\vec{x}_\ell\!-\vec{x}_{N-2})\big)\+f\big(\vec{x},\tfrac{\vec{x}_\ell+\+\vec{x}_{N-2}}{2},\vec{x}_1,\ldots\check{\vec{x}}_\ell\ldots,\vec{x}_{N-3})\big)\++\\[-7.5pt]
                &+\nqquad\sum_{\mathcal{P}_{N-3}\+\ni\+\nu\+=\+\{k,\,\ell\}}\nqquad g_\beta\big(n(\vec{x}_k\!-\vec{x}_\ell)\big)\+f\big(\vec{x},\tfrac{\vec{x}_k+\+\vec{x}_\ell}{2},\vec{x}_1,\ldots\check{\vec{x}}_\nu\ldots,\vec{x}_{N-2}\big)\-\Bigg]\-,
            \end{split}
        \end{equation*}

        \vspace{-0.25cm}
        
        \n one can obtain
        \begin{equation*}
            \begin{split}
            \integrate[\R^{3(N-1)}]{\frac{\abs{\xi^\beta_n(\vec{x},\vec{x}'\-,\vec{X})}^2\!\-}{\abs{\vec{x}\--\-\vec{x}'}} ;\mspace{-33mu}d\vec{x}d\vec{x}'\-d\vec{X}}\leq &\,\tfrac{4\,n^4}{N\--2}\!\integrate[\R^{3(N-1)}]{\frac{\abs{g_\beta\big(n(\vec{x}'\!\--\vec{x}'')\big)\+f(\vec{x},\tfrac{\vec{x}'\-+\+\vec{x}''\!}{2},\vec{X})}^2\!\-}{\abs{\vec{x}\--\vec{x}'}};\mspace{-33mu}d\vec{x}d\vec{x}'\-d\vec{x}''\!d\vec{X}}\,+\\
            &+\tfrac{2\+(N\--4)\+n^4\!\-}{N\--2}\!\-\integrate[\R^{3(N-1)}]{\frac{\abs{g_\beta\big(n(\vec{x}''\!\--\vec{x}''')\big)\+f\big(\vec{x},\tfrac{\vec{x}''\!+\+\vec{x}'''\!\-}{2},\vec{x}'\-,\vec{X}\big)}^2\!\-}{\abs{\vec{x}\--\vec{x}'}};\mspace{-33mu}d\vec{x}d\vec{x}'\-d\vec{x}''\!d\vec{x}'''\!d\vec{X}}.
            \end{split}
        \end{equation*}
        By the means of proper changes of variables the previous inequality reads
        \begin{equation*}
            \begin{split}
            \integrate[\R^{3(N-1)}]{\frac{\abs{\xi^\beta_n(\vec{x},\vec{x}'\-,\vec{X})}^2\!\-}{\abs{\vec{x}\--\-\vec{x}'}} ;\mspace{-33mu}d\vec{x}d\vec{x}'\-d\vec{X}}\leq &\,\tfrac{4\,n^4}{N\--2}\!\integrate[\R^{3(N-1)}]{\frac{\abs{g_\beta\big(n\+\vec{y}''\big)\+f(\vec{y}+\vec{y}'\-+\tfrac{\vec{y}''\!}{2},\vec{y}'\-,\vec{Y})}^2\!\-}{y};\mspace{-33mu}d\vec{y}d\vec{y}'\-d\vec{y}''\!d\vec{Y}}\,+\\
            &+\tfrac{2\+(N\--4)\+n^4\!\-}{N\--2}\!\-\integrate[\R^{3(N-1)}]{\frac{\abs{g_\beta\big(n\+\vec{y}''\big)\+f\big(\!-\!\vec{y}+\tfrac{\vec{y}'\-}{2},\vec{y}'''\!,-\vec{y}-\tfrac{\vec{y}'\-}{2},\vec{Y}\big)}^2\!\-}{y'};\mspace{-33mu}d\vec{y}d\vec{y}'\-d\vec{y}''\!d\vec{y}'''\!d\vec{Y}}.
            \end{split}
        \end{equation*}
        Now we can exploit the Hardy-Rellich inequality~\eqref{Hardy-Rellich} to get
        \begin{align*}
            \integrate[\R^{3(N-1)}]{\frac{\abs{\xi^\beta_n(\vec{x},\vec{x}'\-,\vec{X})}^2\!\-}{\abs{\vec{x}\--\-\vec{x}'}} ;\mspace{-33mu}d\vec{x}d\vec{x}'\-d\vec{X}}\leq &\,\tfrac{1}{n^2}\,\tfrac{2\+\pi}{N\--2}\!\integrate[\R^{3(N-1)}]{q\:\abs{\FT{g}_\beta\big(\tfrac{\vec{q}''\-}{n}-\tfrac{\vec{q}}{2\+n}\big)\+\FT{f}(\vec{q},\vec{q}'\!\--\-\vec{q},\vec{Q})}^2;\mspace{-33mu}d\vec{q}d\vec{q}'\-d\vec{q}''\!d\vec{Q}}+\\
            &+\tfrac{1}{n^2}\,\tfrac{(N\--4)\+\pi}{N\--2}\!\-\integrate[\R^{3(N-1)}]{q'\-\:\abs{\FT{g}_\beta\big(\tfrac{\vec{q}''\!}{n}\big)\+\FT{f}\big(\vec{q}'\-\!-\tfrac{\vec{q}}{2},\vec{q}'''\!,-\vec{q}'\-\!-\tfrac{\vec{q}}{2},\vec{Q}\big)}^2;\mspace{-33mu}d\vec{q}d\vec{q}'\-d\vec{q}''\!d\vec{q}'''\!d\vec{Q}}\\
            =&\,n\:\tfrac{2\+\pi}{N\--2}\!\integrate[\R^{3(N-1)}]{p\:\abs{\FT{g}_\beta\big(\vec{p}''\big)\+\FT{f}(\vec{p},\vec{p}',\vec{P})}^2;\mspace{-33mu}d\vec{p}\+d\vec{p}'\-d\vec{p}''\!d\mspace{-0.75mu}\vec{P}}+\\
            &+n\:\tfrac{(N\--4)\+\pi}{N\--2}\!\-\integrate[\R^{3(N-1)}]{\tfrac{\abs{\vec{p}\+-\+\vec{p}'''\-}}{2}\,\abs{\FT{g}_\beta\big(\vec{p}''\big)\+\FT{f}\big(\vec{p},\vec{p}'\-,\vec{p}'''\!,\vec{P}\big)}^2;\mspace{-33mu}d\vec{p}\+d\vec{p}'\-d\vec{p}''\!d\vec{p}'''\!d\mspace{-0.75mu}\vec{P}}.
        \end{align*}
        This yields
        \begin{equation}
            \breve{\Phi}_{\mathrm{reg}}[\xi^\beta_n]=\oBig{n}\-,\qquad n\longrightarrow\pInfty,\,\forall\beta>0.
        \end{equation}
        Next, we take account of $\Phi^\lambda_{\mathrm{diag}}[\xi^\beta_n]$.
        By assumption, one has
        \begin{align*}
            &\Phi^\lambda_{\mathrm{diag}}[\xi^\beta_n]\-=\tfrac{1}{\!\sqrt{2\,}\,}\!\-\integrate[\R^{3(N-1)}]{\sqrt{\tfrac{1}{2}\+p^2\-+\-P^2\-+\lambda\,}\,\abs{\FT{\xi}^\beta_n(\vec{p},\vec{P})}^2;\mspace{-33mu}d\vec{p}\+d\mspace{-0.75mu}\vec{P}}\\[-5pt]
            &\pushright{=\!\tfrac{1}{\!\sqrt{2\+}\+n^2\!}\,\tfrac{2}{\!(N\--2)(N\--3)\!}\-\integrate[\R^{3(N-1)}]{\!\sqrt{\-\tfrac{1}{2}\+p^2\!+\-p_1^2\!+\-\ldots\-+\-p_{N-2}^2\!+\-\lambda\mspace{1mu}}\mspace{1mu}\Bigg[\mspace{-0.75mu}\sum_{\substack{\nu\+\in\+\mathcal{P}_{N\--2}\+:\\ \nu\+=\+\{k,\,\ell\}}}\mspace{-10.5mu}\abs{\FT{g}_\beta\mspace{-0.75mu}\big(\tfrac{\vec{p}_k\mspace{-0.75mu}-\+\vec{p}_\ell}{2\+n}\-\big)\FT{f}(\vec{p},\vec{p}_k\!\-+\-\vec{p}_\ell,\-\vec{P}_{\!\nu}\-)}^2;\mspace{-36mu}d\vec{p}\+d\vec{p}_1\!\cdots d\vec{p}_{N-2}}+}\\[-7.5pt]
            &\pushright{+\nqquad\sum_{\substack{\mu,\,\nu\+\in\+\mathcal{P}_{N\--2}\+: \\\mu\+=\+\{i,\,j\},\:\nu\+=\+\{k,\,\ell\},\\\mu\+\neq\+\nu}}\nqquad\FT{g}_\beta\mspace{-0.75mu}\big(\tfrac{\vec{p}_k\mspace{-0.75mu}-\+\vec{p}_\ell}{2\+n}\-\big)\FT{f}(\vec{p},\vec{p}_k\!\-+\-\vec{p}_\ell,\-\vec{P}_{\!\nu}\-)\,\FT{g}_\beta\mspace{-0.75mu}\big(\tfrac{\vec{p}_i-\+\vec{p}_j}{2\+n}\-\big)\FT{f}(\vec{p},\vec{p}_i\!+\-\vec{p}_j,\-\vec{P}_{\!\mu})\Bigg]}.
        \end{align*}


        \n Then, exploiting the symmetry of the charge
        \begin{align*}
            \Phi^\lambda_{\mathrm{diag}}&[\xi^\beta_n]\-=\\
            =&\,\tfrac{1}{\!\sqrt{2\,}\,n^2}\!\-\integrate[\R^{3(N-1)}]{\sqrt{\tfrac{1}{2}\+p^2\!+\-p_1^2\-+\ldots+\-p_{N-2}^2\-+\-\lambda\,}\,\abs{\FT{g}_\beta(\tfrac{\vec{p}_1-\+\vec{p}_2}{2\+n})\+ \FT{f}(\vec{p}, \vec{p}_1\!+\vec{p}_2,\vec{p}_3,\ldots,\vec{p}_{N-2})}^2;\mspace{-33mu}d\vec{p}\+d\vec{p}_1\-\cdots d\vec{p}_{N-2}}+\\
            &+\tfrac{1}{\!\sqrt{2\,}\,n^2}\!\-\integrate[\R^{3(N-1)}]{\sqrt{\tfrac{1}{2}\+p^2\!+\-p_1^2\-+\ldots+\-p_{N-2}^2\-+\-\lambda\,}\,\FT{g}_\beta(\tfrac{\vec{p}_1-\+\vec{p}_2}{2\+n})\+\FT{f}(\vec{p},\vec{p}_1\!+\vec{p}_2,\vec{p}_3,\ldots,\vec{p}_{N-2}); \mspace{-33mu}d\vec{p}\+d\vec{p}_1\-\cdots d\vec{p}_{N-2}}\times\\
            &\mspace{213mu}\times \!\left[2\+(N\!-\-4)\,\FT{g}_\beta(\tfrac{\vec{p}_1-\+\vec{p}_3}{2\+n})\+\FT{f}(\vec{p},\vec{p}_1\!+\-\vec{p}_3,\vec{p}_2,\vec{p}_4,\ldots,\vec{p}_{N-2})\,+\right.\\
            &\mspace{234mu}+\!\left.\tfrac{(N\--4)(N\--5)}{2}\FT{g}_\beta(\tfrac{\vec{p}_3-\+\vec{p}_4}{2\+n})\+\FT{f}(\vec{p},\vec{p}_3\mspace{-2.25mu}+\-\vec{p}_4,\vec{p}_1,\vec{p}_2,\vec{p}_5,\ldots,\vec{p}_{N-2})\right]\!.
        \end{align*}
        We define the quantities $\Phi^{\lambda,\+\beta}_{\mathrm{diag};\+\iota}(n)$ as the contributions of the previous identity where $\iota\in\{0,1,2\}$ represents the number of elements in common between the couples $\mu,\nu$ involved in the summation coming from the squaring of the trial charge.
        By definition, there holds $$\Phi^\lambda_{\mathrm{diag}}[\xi^\beta_n]=\Phi^{\lambda,\+\beta}_{\mathrm{diag};\+0}(n)+\Phi^{\lambda,\+\beta}_{\mathrm{diag};\+1}(n)+\Phi^{\lambda,\+\beta}_{\mathrm{diag};\+2}(n).$$
        We claim that for all positive $\beta$ and $\lambda$
        \begin{equation}\label{diagLeadingOrder}
            \Phi^\lambda_{\mathrm{diag}}[\xi^\beta_n]=n^2\!\-\integrate[\R^{3(N-1)}]{q\,\abs{\FT{g}_\beta(\vec{q})\+\FT{f}(\vec{p},\vec{P})}^2;\mspace{-33mu}d\vec{q}d\vec{p}\+d\mspace{-0.75mu}\vec{P}}+\oSmall{n^2}\-,\qquad n\longrightarrow\pInfty.
        \end{equation}
        In order to show this, we first focus on studying $\Phi^{\lambda,\+\beta}_{\mathrm{diag};\+2}(n)$.
        By a simple substitution, one has
        \begin{align*}
        \Phi^{\lambda,\+\beta}_{\mathrm{diag};\+2}(n)=&\,\tfrac{1}{\!\sqrt{2\,}\,n^2}\!\-\integrate[\R^{3(N-1)}]{\sqrt{\tfrac{1}{2}\+p^2\!+\-2\+q_1^2\-+\tfrac{1}{2}\+q_2^2\-+\-P^2\-+\-\lambda\,}\,\abs{\FT{g}_\beta(\tfrac{\vec{q}_1}{n})\+ \FT{f}(\vec{p}, \vec{q}_2,\vec{P})}^2;\mspace{-33mu}d\vec{p}\+d\vec{q}_1d\vec{q}_2d\mspace{-0.75mu}\vec{P}}\\
        =&\,n^2\!\-\integrate[\R^{3(N-1)}]{\sqrt{q_1^2\-+\tfrac{1}{4\+n^2}\+(p^2\!+q_2^2)\-+\tfrac{1}{2\+n^2}(\-P^2\-+\-\lambda)\+}\,\abs{\FT{g}_\beta(\vec{q}_1)\+ \FT{f}(\vec{p}, \vec{q}_2,\vec{P})}^2;\mspace{-33mu}d\vec{p}\+d\vec{q}_1d\vec{q}_2d\mspace{-0.75mu}\vec{P}}\\
        =&\,n^2\!\-\integrate[\R^{3(N-1)}]{q\,\abs{\FT{g}_\beta(\vec{q})\+\FT{f}(\vec{p},\vec{p}_1,\ldots,\vec{p}_{N-3})}^2;\mspace{-33mu}d\vec{q}d\vec{p}\+d\vec{p}_1\-\cdots\vec{p}_{N-3}}+\\
        &+n^2\!\-\integrate[\R^{3(N-1)}]{\!\left(\sqrt{q^2\!+\tfrac{p^2+\+p{'\+}^2+\+2P^2+\+2\+\lambda}{4n^2}}-q\right)\!\abs{\FT{g}_\beta(\vec{q})\+\FT{f}(\vec{p},\vec{p}'\-,\vec{P})}^2;\mspace{-33mu}d\vec{q}d\vec{p}\+d\vec{p}'\-d\mspace{-0.75mu}\vec{P}}.
        \end{align*}
        Because of the elementary inequality $\sqrt{a^2+b^2}-\abs{a}\leq\abs{b}$, dominated convergence theorem implies
        $$\lim_{n\to\infty} n\!\-\integrate[\R^{3(N-1)}]{\!\left(\sqrt{q^2\!+\tfrac{p^2+\+p{'\+}^2+\+2P^2+\+2\+\lambda}{4n^2}}-q\right)\!\abs{\FT{g}_\beta(\vec{q})\+\FT{f}(\vec{p},\vec{p}'\-,\vec{P})}^2;\mspace{-33mu}d\vec{q}d\vec{p}\+d\vec{p}'\-d\mspace{-0.75mu}\vec{P}}=0.$$
        Hence, we have just proved
        \begin{subequations}
        \begin{equation}\label{diagLeadingOrder2}
            \Phi^{\lambda,\+\beta}_{\mathrm{diag};\+2}(n)= n^2\!\-\integrate[\R^{3(N-1)}]{q\,\abs{\FT{g}_\beta(\vec{q})\+\FT{f}(\vec{p},\vec{P})}^2;\mspace{-33mu}d\vec{q}d\vec{p}\+d\mspace{-0.75mu}\vec{P}}+\oSmall{n}\-,\qquad n\longrightarrow \pInfty.
        \end{equation}
        Then, taking into account $\Phi^{\lambda,\+\beta}_{\mathrm{diag};1}(n)$, we perform the following change of variables
        $$\left[\!\!\begin{array}{c}
            \vec{p}_1\\
            \vec{p}_2\\
            \vec{p}_3
        \end{array}\!\!\right]\!=\!\left[\begin{array}{l c r}
            \;\:\tfrac{3}{5} & \;\:\tfrac{3}{5} & \;\tfrac{1}{4}\\
            \!\--\tfrac{7}{5} & \;\:\tfrac{3}{5} & \;\tfrac{1}{4}\\
            \;\:\tfrac{3}{5} & \!\--\tfrac{7}{5} & \;\tfrac{1}{4}
        \end{array}\right]\!\left[\!\!\begin{array}{c}
            \vec{q}_1\\
            \vec{q}_2\\
            \vec{q}_3
        \end{array}\!\!\right]\qquad\iff\qquad\left[\!\!\begin{array}{c}
            \vec{q}_1\\
            \vec{q}_2\\
            \vec{q}_3
        \end{array}\!\!\right]\!=\!\left[\begin{array}{l c r}
            \tfrac{1}{2} & \--\tfrac{1}{2} & 0\\
            \tfrac{1}{2} & \;\;\:0 & -\tfrac{1}{2}\\
            \tfrac{8}{5} & \;\;\:\tfrac{6}{5} & \tfrac{6}{5}
        \end{array}\right]\!\left[\!\!\begin{array}{c}
            \vec{p}_1\\
            \vec{p}_2\\
            \vec{p}_3
        \end{array}\!\!\right]$$
        so that, with straightforward manipulations we get 
        \begin{align*}
        \Phi^{\lambda,\+\beta}_{\-\mathrm{diag};\+1}\-(\mspace{-0.75mu}n\mspace{-0.75mu})\mspace{-2.25mu}&=\mspace{-2.25mu}\tfrac{\sqrt{2\mspace{0.75mu}}\mspace{0.75mu}(N\--4)\!}{n^2}\!\!\integrate[\R^{3(N-1)}]{\!\sqrt{\-\tfrac{1}{2}\+p^2\!+\tfrac{67}{25}(q_1^2\-+\-q_2^2)\-+\tfrac{18}{25}\+\vec{q}_1\mspace{-3.75mu}\cdot\mspace{-0.75mu}\vec{q}_2\-+\-\tfrac{3}{16}\+q_3^2\--\tfrac{\vec{q}_3}{10}\mspace{-2.25mu}\cdot\-(\vec{\vec{q}_1\mspace{-3.75mu}+\mspace{-2.25mu}\vec{q}_2})\-+\!P^2\!+\-\lambda\mspace{0.75mu}};\mspace{-36mu}d\vec{p}\+d\vec{q}_1d\vec{q}_2d\vec{q}_3d\mspace{-0.75mu}\vec{P}}\mspace{0.75mu}\times\\
        &\pushright{\times\+ \tfrac{\beta \, n^{4-\beta/2}\!}{4\pi^{3N/2-2}}\,q_1^{\frac{\beta}{4}-2}q_2^{\frac{\beta}{4}-2}\,\Char{\big(\tfrac{n}{4^{1/\beta}},\,n\big)}\!(q_1)\,\Char{\big(\tfrac{n}{4^{1/\beta}},\,n\big)}\!(q_2)\:e^{-p^2-\+\frac{5}{16}\+q_3^2\+-\+\frac{11}{5}(q_1^2+\,q_2^2)\++\frac{18}{5}\+\vec{q}_1\cdot\,\vec{q}_2-P^2}}\\
        &\leq \mspace{-2.25mu}\tfrac{\sqrt{2\mspace{0.75mu}}\mspace{0.75mu}\beta\+(N\--4)\+n^{2-\beta/2}\!}{4\pi^{3N/2-2}}\!\!\integrate[\R^{3(N-1)}]{\!\sqrt{\-\tfrac{1}{2}\+p^2\!+\tfrac{309}{100}(q_1^2\-+\-q_2^2)\-+\-\tfrac{23}{80}\+q_3^2+\!P^2\!+\-\lambda\+};\mspace{-36mu}d\vec{p}\+d\vec{q}_1d\vec{q}_2d\vec{q}_3d\mspace{-0.75mu}\vec{P}}\+\times\\
        &\pushright{q_1^{\frac{\beta}{4}-2}q_2^{\frac{\beta}{4}-2}\,e^{-p^2-\+\frac{2}{5}(q_1^2+\,q_2^2)-\+\frac{5}{16}\+q_3^2\+-P^2}}.
        \end{align*}
        Thus,
        \begin{equation}\label{diagLeadingOrder1}
            \Phi^{\lambda,\+\beta}_{\mathrm{diag};\+1}(n)=\oBig{n^{2-\frac{\beta}{2}}}\-,\qquad n\longrightarrow\pInfty.
        \end{equation}
        Similarly, taking account of $\Phi^{\lambda,\+\beta}_{\mathrm{diag};\+0}(n)$, we consider the following change of variables
        $$\left[\!\!\begin{array}{c}
            \vec{p}_1\\
            \vec{p}_2\\
            \vec{p}_3\\
            \vec{p}_4
        \end{array}\!\!\right]\!=\!\left[\begin{array}{l c c r}
            \;\:1 & \;\:0 & \;\tfrac{1}{2} & \;0\\
            \!\--1 & \;\:0 & \;\tfrac{1}{2} & 0\\
            \;\:0 & \;\:1 & \;0 & \tfrac{1}{2} \\
            \;\:0 & \!\--1 & \;0 & \tfrac{1}{2}
        \end{array}\right]\!\left[\!\!\begin{array}{c}
            \vec{q}_1\\
            \vec{q}_2\\
            \vec{q}_3\\
            \vec{q}_4
        \end{array}\!\!\right]\qquad\iff\qquad\left[\!\!\begin{array}{c}
            \vec{q}_1\\
            \vec{q}_2\\
            \vec{q}_3\\
            \vec{q}_4
        \end{array}\!\!\right]\!=\!\left[\begin{array}{l c c r}
            \tfrac{1}{2} & \!\--\tfrac{1}{2} & \;0 & \!0\\
            0 & \;\:0 & \;\tfrac{1}{2} & \!-\tfrac{1}{2}\\
            1 & \;\:1 & \;0 & \!0\\
            0 & \;\:0 & \;1 & \!1
        \end{array}\right]\!\left[\!\!\begin{array}{c}
            \vec{p}_1\\
            \vec{p}_2\\
            \vec{p}_3\\
            \vec{p}_4
        \end{array}\!\!\right]$$
        obtaining
        \begin{align*}
            \Phi^{\lambda,\+\beta}_{\-\mathrm{diag};\+0}\-(\mspace{-0.75mu}n\mspace{-0.75mu})\mspace{-2.25mu}&=\mspace{-2.25mu}\tfrac{(N\--4)(N\--5)\!}{2\+\sqrt{2\+}\+n^2}\!\!\integrate[\R^{3(N-1)}]{\!\sqrt{\-\tfrac{1}{2}\+p^2\!+2\+(q_1^2\-+\-q_2^2)\-+\tfrac{1}{2}(q_3^2\-+\-q_4^2)\-+\!P^2\!+\-\lambda\+};\mspace{-36mu}d\vec{p}\+d\vec{q}_1d\vec{q}_2d\vec{q}_3d\vec{q}_4 d\mspace{-0.75mu}\vec{P}}\+\times\\
            &\pushright{\times\+ \tfrac{\beta \, n^{4-\beta/2}\!}{4\pi^{3N/2-2}}\:q_1^{\frac{\beta}{4}-2}q_2^{\frac{\beta}{4}-2}\,\Char{\big(\tfrac{n}{4^{1/\beta}},\,n\big)}\!(q_1)\,\Char{\big(\tfrac{n}{4^{1/\beta}},\,n\big)}\!(q_2)\:e^{-p^2-\+q_1^2\+-\+q_2^2\+-\frac{3}{4}\+(q_3^2\++\+q_4^2)\+-P^2}.}
        \end{align*}
        Therefore, it is plain to see that also in this case we have
        \begin{equation}\label{diagLeadingOrder0}
            \Phi^{\lambda,\+\beta}_{\mathrm{diag};\+0}(n)=\oBig{n^{2-\frac{\beta}{2}}}\-,\qquad n\longrightarrow\pInfty.
        \end{equation}        \end{subequations}
        Considering altogether~(\ref{diagLeadingOrder2}-c), equation~\eqref{diagLeadingOrder} is proven.\newline
        Then, concerning $\Phi^\lambda_{\mathrm{off};\+1}[\xi^\beta_n]$, we can use~\eqref{off1PhiSplitted} to have the upper bound
        \begin{align*}
            \Phi^\lambda_{\mathrm{off};\+1}[\xi^\beta_n]\leq&\, -2(N\!-\!2)\!\-\integrate[\R^{3(N-1)}]{\frac{\:e^{-\sqrt{\-\frac{\lambda}{2}}\,\abs{\vec{x}-\+\vec{x}'}}\!\-}{\abs{\vec{x}-\vec{x}'}}\,\abs{\xi^\beta_n(\vec{x},\vec{x}'\-,\vec{X})}^2;\mspace{-33mu}d\vec{x}d\vec{x}'\-d\vec{X}}\++\\[-5pt]
            &+32\pi\+(N\!-\-2)\!\-\integrate[\R^{3(N-1)}]{\abs{\xi_n^\beta(\vec{x},\vec{x}'\-,\vec{X})}^2;\mspace{-33mu}d\vec{x}d\vec{x}'\-d\vec{X}}\!\!\integrate[\R^{3(N-1)}]{;\mspace{-33mu}d\vec{y}d\vec{y}'\-d\vec{Y}}\+G^\lambda\bigg(\!\!
            \begin{array}{r c c l}
                \vec{x}, & \!\!\-\vec{x}'\-, &\!\!\- \vec{x}, & \!\!\! \vec{X}\\[-2.5pt]
                \vec{y}, & \!\!\-\vec{y}, &\!\!\-\vec{y}'\-, & \!\!\!\vec{Y}
            \end{array}\!\!\-\bigg)\\[-5pt]
            =&\,2(N\!-\!2)\!\-\integrate[\R^{3(N-1)}]{\frac{\:e^{-\sqrt{\-\frac{\lambda}{2}}\,\abs{\vec{x}-\+\vec{x}'}}\!\-}{\abs{\vec{x}-\vec{x}'}}\,\abs{\xi_n^\beta(\vec{x},\vec{x}'\-,\vec{X})}^2;\mspace{-33mu}d\vec{x}d\vec{x}'\-d\vec{X}}\!,
        \end{align*}
        according to equation~\eqref{greenIntegrated}.
        Hence, one has
        $$\abs{\Phi^\lambda_{\mathrm{off};\+1}[\xi^\beta_n]}\leq 2\+(N\!-\!2)\!\-\integrate[\R^{3(N-1)}]{\frac{\abs{\xi_n^\beta(\vec{x},\vec{x}'\-,\vec{X})}^2\!\-}{\abs{\vec{x}-\vec{x}'}};\mspace{-33mu}d\vec{x}d\vec{x}'\-d\vec{X}}.$$
        This means that we can adopt the same argument used for the regularizing term in order to state
        $$\abs{\Phi^\lambda_{\mathrm{off};\+1}[\xi^\beta_n]}=\oBig{n}\-,\qquad n\longrightarrow \pInfty.$$

        Lastly, ............. $\Phi^\lambda_{\mathrm{off};\+0}[\xi^\beta_n]$
        
    \end{proof}
\end{lemma}
}

\comment{
\section{Self-Adjoint Extensions of Restrictions}\label{krein}

In this section, we summarize some known results obtained in~\cite{P1} and~\cite{P2} (see~\cite{BHS} for further information on this abstract setting).

\subsection*{A general perspective}

\n Let $\hilbert*$ be a complex Hilbert space and $\maps{A}{\dom{A}\-\subseteq\-\hilbert*;\hilbert*}$ a s.a. operator.
Endowing $\dom{A}$ with the graph norm of $A$, we treat $\dom{A}$ as a Hilbert subspace.
Then, we consider a densely defined, closed and symmetric operator $\maps{S}{\dom{S}\-\subset\-\dom{A};\hilbert*}$ such that $A \!\upharpoonright\- \dom{S}\!=\-S$.
Under these prescriptions, we shall provide the characterization of the whole set of s.a. extensions of $S$.

\n To this end, let $\maps{\tau}{\dom{A};\X*}$ be a linear and continuous operator in an auxiliary complex Hilbert space $\X*$, such that $\overline{\ran{\tau}}=\X*$ and $\ker{\tau}=\dom{S}$.

\n Then, for each $z\-\in\-\rho(A)$, we define the bounded operator $\define{\mathscr{G}(z);\adj{(\tau\resolvent{A}[\conjugate{z}])}}\!\!\in\-\bounded{\X*,\hilbert*}$.
\begin{note}\label{uniqueDecomposition}
    The operator $\mathscr{G}(z)$ fulfills the following properties.
    \begin{enumerate}[label=\roman*)]
        \item $\mathscr{G}(z)$ is injective, since
        $\ker{\mathscr{G}(z)}=\ran{\tau}^\perp\-=\-\{0\}$.
        \item Given a compact $K\-\subset\-\rho(A)$, there exists $C\->\-0$ s.t. $\norm{\mathscr{G}(z)}[\mathscr{L}(\X*,\,\hilbert*)]\!\leq C$ uniformly in $z\-\in\! K$, since $\tau$ is continuous and in general $\;\norm{\resolvent{A}[w]}[\mathscr{L}(\hilbert*,\,\dom{A})]\!=\mathrm{dist}(w,\,\spectrum{A})^{-1}\+,\quad\forall w\in\rho(A)$.
        \item For any $z\-\in\-\rho(A)$, $\,\ran{\mathscr{G}(z)\-}\cap\+\dom{A}\-=\-\{0\}\+$ owing to the density of $\ker{\tau}$ in $\dom{A}$.
        
        \n To prove this fact, consider $\psi\in\ker{\tau}$, so that one has
        \begin{equation*}
            0=\scalar{\phi}{\tau\psi}[\X*]=\scalar{\phi}{\tau\resolvent{A}[\conjugate{z}](A-\conjugate{z})\psi}[\X*]=\scalar{\mathscr{G}(z)\+\phi}{(A-\conjugate{z})\psi}[\hilbert*]\+,\qquad\forall\phi\in\X*.
        \end{equation*}
        By way of contradiction, assume that there exists $\phi\neq 0$ such that $\mathscr{G}(z)\+\phi\-\in\-\dom{A}$.
        Then,
        \begin{equation*}
            0=\scalar{(A-z)\+\mathscr{G}(z)\+\phi}{\psi}[\hilbert*]\+,\qquad\forall\psi\in\ker{\tau}.
        \end{equation*}
        This means that $(A-z)\+\mathscr{G}(z)\+\phi\-\in\-\ker{\tau}^{\perp}\!=\-\{0\}$, and therefore the only solution is $\mathscr{G}(z)\+\phi=0$, since $z\-\in\-\rho(A)$, but the injectivity of $\mathscr{G}(z)$ provides the contradiction.\newline
        Furthermore, this result can be strengthen, for instance, by considering $A>\mu\+$ for some $\mu\-\in\-\R$. In this case one can prove that
        $\,\ran{\mathscr{G}(\lambda)\-}\cap\+\mathfrak{Q}(A)\-=\-\{0\}\+$ for any $\lambda\-<\-\mu$, where $\mathfrak{Q}(A)\!\supset\-\dom{A}$ denotes the form domain of $A$.\newline
        This can be proven again by way of contradiction, by assuming the existence of $\phi\-\neq\- 0$ such that $\mathscr{G}(\lambda)\+\phi\-\in\-\mathfrak{Q}(A)$.
        Indeed, in this case one has
        $$0=\scalar{\sqrt{A-\lambda\+}\,\mathscr{G}(\lambda)\+\phi}{\sqrt{A-\lambda\+}\,\psi}[\hilbert*],\qquad\forall\psi\in\ker{\tau}.$$
        We stress that the quantity $\scalar{\varphi_1}{\varphi_2}[\mathfrak{Q}(A)]\-\vcentcolon=\-\scalar{\sqrt{A-\lambda\+}\,\varphi_1}{\sqrt{A-\lambda\+}\,\varphi_2}[\hilbert*]$ defines a scalar product in $\mathfrak{Q}(A)$, since $\lambda\-\in\-\rho(A)$ (hence $\norm{\varphi}[\mathfrak{Q}(A)]\!=0$ implies $\varphi=0$).
        Next, we observe that the density of $\ker{\tau}$ in $\dom{A}$ (and the fact that $A$ is densely defined) directly implies the density of $\ker{\tau}$ in $\mathfrak{Q}(A)$.
        Thus
        $$\scalar{\+\mathscr{G}(\lambda)\+\phi}{\psi}[\mathfrak{Q}(A)]\-=0,\qquad \forall \psi\in\ker{\tau} \quad\implies\quad \mathscr{G}(\lambda)\+\phi=0.$$
        Once again we have a contradiction because of the injectivity of $\,\mathscr{G}(\lambda)$.
    \end{enumerate}
\end{note}

\n Another important property of the operator $\mathscr{G}(z)$ is proved in the following proposition.

\begin{prop}\label{potentialProp}
Let $z,w\in\rho(A)$ and $\phi_1,\phi_2\in\X*$. Then, if $\ran{\tau}$ is dense in $\X*$
\begin{equation*}
\mathscr{G}(z)\+\phi_1\--\mathscr{G}(w)\+\phi_2\in\dom{A} \iff \phi_1\-=\phi_2\+.
\end{equation*}
\begin{proof}
    \n $\Leftarrow)\quad$ Applying the operator $\tau$ in the first resolvent identity for the operator $A$, one gets
    \begin{subequations}\label{potentialResolventIdentity}
    \begin{gather}
        (\conjugate{z}-\conjugate{w})\+\adj{\mathscr{G}(w)}\resolvent{A}[\conjugate{z}]=\adj{\mathscr{G}(z)}\!-\adj{\mathscr{G}(w)}\!,\label{potentialResolventIdentityA}\\
        (z-w)\+\resolvent{A}[z]\+\mathscr{G}(w)=\mathscr{G}(z)\--\mathscr{G}(w),\label{potentialResolventIdentityB}
    \end{gather}
    \end{subequations}
    and therefore $\ran{\mathscr{G}(z)\--\mathscr{G}(w)}\in\dom{A}$.
    
    \n $\Rightarrow)\quad$ Rewriting $\mathscr{G}(z)$ according to equation~\eqref{potentialResolventIdentityB}, one obtains
    \begin{equation*}
        \mathscr{G}(z)\+\phi_1\--\mathscr{G}(w)\+\phi_2=\mathscr{G}(w)(\phi_1\--\phi_2)+(z-w)\+\resolvent{A}[z]\+\mathscr{G}(w)\+\phi_1\+.
    \end{equation*}
    By hypothesis, the left-hand side belongs to $\dom{A}$ and the same is true for the last term in the right-hand side, whereas $\ran{\mathscr{G}(w)}\cap\dom{A}=\{0\}$ owing to point $iii)$ of Remark~\ref{uniqueDecomposition}.\newline
    These facts imply $\phi_1\--\phi_2\in\ker{\mathscr{G}(w)}=\{0\}$.
    
\end{proof}
\end{prop}
\n Lastly, we consider a continuous map $\maps{\Gamma}{\rho(A);\linear{\X*}}$ satisfying
\begin{subequations}\label{conditionsGammaTheory}
\begin{align}
    &\Gamma:z\longmapsto\maps{\Gamma(z)}{D\subseteq\X*;\X*} \;\text{is a densely defined operator},\label{baseGammaProperty}\\
    &\Gamma:z\longmapsto\Gamma(z)\-=\adj{\Gamma(\conjugate{z})},\label{adjointGammaTheory}\\
    &\Gamma(z)\--\Gamma(w)=(w\--\-z)\+\adj{\mathscr{G}(\bar{w})}\mathscr{G}(z)\-=\-(w\--\-z)\+\adj{\mathscr{G}(\bar{z})}\mathscr{G}(w),\quad\forall w,z\in\rho(A),\label{firstResolventGammaTheory}\\
    &\exists z\in\rho(A) \,:\: 0\in\rho\+(\Gamma(z)).\label{invertibilityGammaTheory}
\end{align}
\end{subequations}
\begin{note}\label{closedGamma}
    Observe that condition~\eqref{adjointGammaTheory} makes the operator $\Gamma(z)$ closed for any $z\-\in\-\rho(A)$.
    Moreover, by condition~\eqref{firstResolventGammaTheory}, one has that the domain $D$ is independent of $z\in\rho(A)$ and the map $\Gamma$ is actually locally Lipschitz continuous, \ie~\emph{(see the second point of Remark~\ref{uniqueDecomposition})}
    \begin{equation*}
        \norm{\Gamma(w)-\Gamma(z)}[\mathscr{L}(\X*)]\leq C \+|w-z|,\qquad \forall w,z\in K\-\subset\-\rho(A)
    \end{equation*}
    with some $K$ compact and $C\->\-0$.
    In particular, notice that for any $w,z \-\in\-\rho(A)$, one has that the difference $\Gamma(w)\--\Gamma(z)\-\in\-\bounded{\X*}$, so that any unbounded contribution of $\Gamma(z)$ cannot depend on $z\+$.
    Indeed, there holds $\frac{\mathrm{d}}{\mathrm{d}z}\Gamma(z)=-\+\adj{\mathscr{G}(\conjugate{z})}\+\mathscr{G}(z)\in\bounded{\X*},\quad\forall z\in\rho(A).$

    \vs
    
    Suppose $A>\mu$ and define $c(\lambda)\vcentcolon=\!\inf\limits_{\substack{\xi\+\in\+\X*\+:\\ \norm{\xi}[\X*]\+=\+1}}\!\scalar{\xi}{\Gamma(\lambda)\+\xi}[\X*]\,$ for any $\lambda<\mu$.
    Then,
    $$\frac{\scalar{\xi}{\big(\Gamma(\lambda_1)-\Gamma(\lambda_2)\big)\xi}[\X*]}{\norm{\xi}[\X*]^2}\leq C\+\abs{\lambda_1-\lambda_2},\qquad\forall\lambda_1,\lambda_2\in K\subset (-\infty, \mu)\,\text{ compact.}$$
    Hence, in particular
    $$c(\lambda_1)-\frac{\scalar{\xi}{\Gamma(\lambda_2)\+\xi}[\X*]}{\norm{\xi}[\X*]^2}\leq C\+\abs{\lambda_1-\lambda_2},\qquad\forall\lambda_1,\lambda_2\in K\subset (-\infty, \mu)\,\text{ compact.}$$
    Since the right-hand side does not depend on $\xi\in\X*$ one can optimize the inequality by taking the supremum
    obtaining that $c(\lambda)$ is locally Lipschitz continuous.
    One always has $\Gamma(\lambda)\geq c(\lambda).$
\end{note}
\n In the next proposition, we establish a sufficient condition to check the validity of properties~\eqref{conditionsGammaTheory}.
\begin{prop}\label{sufficientConditionsGamma}
    Assume $A-\mu>0$ and let $\lambda\-<\mu$ and $\maps{\Gamma_{\!\lambda}}{D\subset\X*;\X*}$ a s.a. operator satisfying
    \begin{enumerate}[label=\roman*)]
        \item $\Gamma_{\!\lambda_1}\!-\Gamma_{\!\lambda_2}=\tau(\mathscr{G}(\lambda_2)-\mathscr{G}(\lambda_1)),\qquad\forall\lambda_1,\lambda_2\-<\mu,$
        \item $\exists \maps{c}{(-\infty,\mu);\R}$ locally Lipschitz continuous s.t. $\Gamma_{\!\lambda}\geq c(\lambda), \lim\limits_{\lambda\to\mInfty}c(\lambda)=\pInfty\,$ and $\,\exists \lambda_0<\mu \,:\quad \forall\lambda<\lambda_0\quad c(\lambda)$ is positive and $\+c(\lambda_0)=0$.
    \end{enumerate}
    Then, the map
    \begin{equation*}
        z\longmapsto\define{\Gamma(z);\Gamma_{\!\lambda_0}\-+(\lambda_0-z)\+\adj{\mathscr{G}(\lambda_0)}\mathscr{G}(z)}
    \end{equation*}
    fulfills conditions~\eqref{conditionsGammaTheory}.
\begin{proof}
    Condition~\eqref{baseGammaProperty} is automatically satisfied owing to the self-adjointness of $\Gamma_{\!\lambda}\+$.
    Applying the operator $\tau$ to~\eqref{potentialResolventIdentityB}, one gets for all $z,w\in\rho(A)$
    \begin{equation}
        \tau\+[\mathscr{G}(z)-\mathscr{G}(w)]=(z-w)\+\adj{\mathscr{G}(\conjugate{z})}\mathscr{G}(w)=(z-w)\+\adj{\mathscr{G}(\conjugate{w})}\mathscr{G}(z).
    \end{equation}
    In particular, this means that \begin{gather}
        \Gamma_{\!\lambda_1}\!-\Gamma_{\!\lambda_2}\-=(\lambda_2-\lambda_1)\+\adj{\mathscr{G}(\lambda_2)}\mathscr{G}(\lambda_1)=(\lambda_2-\lambda_1)\+\adj{\mathscr{G}(\lambda_1)}\mathscr{G}(\lambda_2),\qquad\forall\lambda_1,\lambda_2<\mu,\label{evenRealDomainDoesNotDependOnLambda}\\
        \intertext{and, moreover}
        \Gamma(z)\-=\Gamma_{\!\lambda_0}\-+(\lambda_0-z)\+\adj{\mathscr{G}(\conjugate{z})}\mathscr{G}(\lambda_0).\label{alternativeDefGamma}
    \end{gather}
    Thus, \eqref{adjointGammaTheory} can be proven, since
    \begin{equation*}
        \adj{\Gamma(\conjugate{z})}\!-\Gamma_{\!\lambda_0}=(\lambda_0-z)\+\adj{\mathscr{G}(\conjugate{z})}\mathscr{G}(\lambda_0)=(\lambda_0-z)\+\adj{\mathscr{G}(\lambda_0)}\mathscr{G}(z)=\Gamma(z)-\Gamma_{\!\lambda_0}\+.
    \end{equation*}
    Next, concerning  condition~\eqref{firstResolventGammaTheory}, we use equations~\eqref{potentialResolventIdentity}, the definition of $\Gamma(z)$ and identity~\eqref{alternativeDefGamma}
    \begin{align*}
        \Gamma(z)\--\Gamma(w)=&\:\Gamma_{\!\lambda}+(\lambda-z)\+\adj{\mathscr{G}(\conjugate{z})}\mathscr{G}(\lambda)-\Gamma_{\!\lambda}+(w-\lambda)\+\adj{\mathscr{G}(\lambda)}\mathscr{G}(w)\\
        =&\,(\lambda-z)\+\adj{\mathscr{G}(\conjugate{z})}\-\big[\mathscr{G}(w)+(\lambda-w)\+\resolvent{A}[\lambda]\+\mathscr{G}(w)\big]+\\
        &+(w-\lambda)\-\big[\adj{\mathscr{G}(\conjugate{z})}+(\lambda-z)\+\adj{\mathscr{G}(\conjugate{z})}\resolvent{A}[\lambda]\big]\mathscr{G}(w)\\
        =&\,(w-z)\+\adj{\mathscr{G}(\conjugate{z})}\mathscr{G}(w).
    \end{align*}
    Finally, we show that $\Gamma(z)$ is invertible with bounded inverse for all $z\-\in\-\C\smallsetminus[\lambda_0,\pInfty)$.
    According to~\eqref{firstResolventGammaTheory}
    \begin{equation*}
        \Gamma(\conjugate{z})\--\Gamma(z)=(z-\conjugate{z})\+\adj{\mathscr{G}(z)}\mathscr{G}(z),
    \end{equation*}
    while condition~\eqref{adjointGammaTheory} implies
    \begin{equation*}
        \scalar{\xi}{[\Gamma(\conjugate{z})\--\Gamma(z)]\+\xi}[\X*]\-=-2i\,\Im{\+\scalar{\xi}{\Gamma(z)\+\xi}[\X*]}\+,\qquad\forall\xi\in D.
    \end{equation*}
    Hence, one obtains
    \begin{equation}\label{actuallyHermitianQF}
        \Im{\+\scalar{\xi}{\Gamma(z)\+\xi}[\X*]}=-\Im(z)\norm{\mathscr{G}(z)\+\xi}[\hilbert*]^2,\qquad\forall\xi\in D.
    \end{equation}
    Therefore, since $\Gamma(z)+\Gamma(\conjugate{z})$ is s.a., one gets for any $\xi\-\in\- D$
    \begin{align*}
        \norm{\xi}[\X*]^2\norm{\Gamma(z)\+\xi}[\X*]^2&\geq\abs{\scalar{\xi}{\Gamma(z)\+\xi}[\X*]}^2=|\scalar{\xi}{\tfrac{\Gamma(z)\++\+\Gamma(\conjugate{z})}{2}\+\xi}[\X*]|^2\-+|\scalar{\xi}{\tfrac{\Gamma(z)-\Gamma(\conjugate{z})}{2}\+\xi}[\X*]|^2\\
        &=|\scalar{\xi}{\tfrac{\Gamma(z)\++\+\Gamma(\conjugate{z})}{2}\+\xi}[\X*]|^2\-+(\Im\+z)^2\norm{\mathscr{G}(z)\+\xi}[\hilbert*]^4\\
        &=\-\left\lvert\scalar{\xi}{\Gamma_{\!\lambda}\+\xi}+\abs{\lambda\--\-z}^2\scalar{\mathscr{G}(z)\+\xi}{\resolvent{A}[\lambda]\+\mathscr{G}(z)\+\xi}+(\lambda\--\-z)\norm{\mathscr{G}(z)\+\xi}[\hilbert*]^2\right\rvert^2\!,\qquad \forall \lambda<\mu.
    \end{align*}
    In the last step we exploited identities~\eqref{potentialResolventIdentity} and condition~\eqref{firstResolventGammaTheory}.
    From the previous expression one can obtain the following estimate from below
    \begin{equation}\label{lowerEstimateClosedRange}
        \abs{\scalar{\xi}{\Gamma(z)\+\xi}}^2\-\geq \scalar{\xi}{\Gamma_{\!\lambda}\+\xi}^2\-+(\Im{\+z})^2\norm{\mathscr{G}(z)\+\xi}[\hilbert*]^4\-,\qquad \forall \Re{\+z}\leq\lambda.
    \end{equation}
    In other words, $\Gamma(z)$ is injective for all $z\-\in\C\smallsetminus(\lambda_0,\pInfty)$.
    In particular $\adj{\Gamma(z)}\!=\Gamma(\conjugate{z})$ is injective for all $z\-\in\C\smallsetminus(\lambda_0,\pInfty)$ as well, hence $\Gamma(z)$ has dense range in $\X*$, since $\{0\}\-=\ker{\adj{\Gamma(z)}}=\ran{\Gamma(z)}^\perp$.
    We need to prove that $\ran{\Gamma(z)}$ is also closed.
    Due to the closedness of the operator $\Gamma(z)$, according to~\cite[Chapter IV, theorem 5.2]{Kato} one has
    $$\inf_{\xi\+\in\+\X*}\left\{\abs{\scalar{\xi}{\Gamma(z)\+\xi}[\X*]} \:\Big|\; \norm{\xi}[\X*]=1\right\}\->0\qquad\iff\qquad \ran{\Gamma(z)} \,\text{ is closed}.$$
    In case the operator $\tau$ is surjective, then
    $$\inf_{\xi\+\in\+\X*}\left\{\norm{\mathscr{G}(z)\+\xi}[\hilbert*]^2 \:\Big|\; \norm{\xi}[\X*]=1\right\}=\inf_{\xi\+\in\+\X*}\left\{\scalar{\xi}{\adj{\mathscr{G}(z)}\+\mathscr{G}(z)\+\xi} \:\Big|\; \norm{\xi}[\X*]=1\right\}\->0,$$
    hence, we immediately conclude that $\Gamma(z)$ is boundedly invertible for all $z\-\in\C\smallsetminus[\lambda_0,\pInfty)$.
    On the other hand, if $\ran{\tau}$ is only dense in $\X*$, at this stage we can just infer that the operator $\Gamma(z)$ is invertible, with bounded inverse, for instance, for any $\Re{\+z}\-<\-\lambda_0\+$.
    Moreover,~\eqref{lowerEstimateClosedRange} also implies that $$c(\lambda_1)\leq\!\inf\limits_{\substack{\xi\+\in\+\X*\+:\\ \norm{\xi}[\X*]\+=\+1}}\!\scalar{\xi}{\Gamma(\lambda_1)\+\xi}[\X*]\leq \!\inf\limits_{\substack{\xi\+\in\+\X*\+:\\ \norm{\xi}[\X*]\+=\+1}}\!\scalar{\xi}{\Gamma(\lambda_2)\+\xi}[\X*],\qquad \forall\lambda_2\leq\lambda_1<\mu.$$
    In other words, the function $c$ is always bounded from above by a non-increasing function.
    However, we are going to prove that the same result holds under the assumptions of this proposition.\newline
    It is known that, given $z\-\in\-\rho(A)$ such that $\Gamma(z)$ is boundedly invertible, if there exists some $w\-\in\-\rho(A)$ so that
    $$\norm{\Gamma(w)\--\Gamma(z)}[\linear{\X*}]\norm{\Gamma(z)^{-1}}[\linear{\X*}]\!<\-1$$
    then also $\Gamma(w)$ is invertible with bounded inverse and
    $$\norm{\Gamma(w)^{-1}}[\linear{\X*}]\leq\left[\:\frac{1}{\norm{\Gamma(z)^{-1}}[\linear{\X*}]\!\!\!\!}\,-\norm{\Gamma(w)\--\Gamma(z)}[\linear{\X*}]\-\right]^{-1}\!.$$
    Hence, for any $z\-\in\-\left\{w\in\C\:\big|\;\Re{\+w}\-<\-\lambda_0, \,\Im{\+w}\neq 0\right\}$ consider for a given $\lambda\-<\-\lambda_0$
    \begin{align*}
        \norm{\Gamma(2\lambda\--\-z)\--\Gamma(z)}[\linear{\X*}]\norm{\Gamma(z)^{-1}}[\linear{\X*}]\-&= 2\+\abs{\lambda\--\-z}\norm{\adj{\mathscr{G}(2\lambda\--\-\conjugate{z})}\+\mathscr{G}(z)}[\linear{\X*}]\norm{\Gamma(z)^{-1}}[\linear{\X*}]\\
        &\leq \frac{2\norm{\tau}[\linear{\dom{A},\,\X*}]^2 \abs{\lambda\--\-z}}{(\Im{\+z})^2}\+\norm{\Gamma(z)^{-1}}[\linear{\X*}].
    \end{align*}
    According to~\eqref{lowerEstimateClosedRange} we have $\norm{\Gamma(z)^{-1}}[\linear{\X*}]\leq \frac{1}{c(\Re{\+z})}$, hence
    \begin{equation*}
        \norm{\Gamma(2\lambda\--\-z)\--\Gamma(z)}[\linear{\X*}]\norm{\Gamma(z)^{-1}}[\linear{\X*}]\-\leq \frac{2\sqrt{(\lambda\--\-\Re{\+z})^2+(\Im{\+z})^2}}{(\Im{\+z})^2\,c(\Re{\+z})}.
    \end{equation*}
    It is clear that picking $\abs{\Im{\+z}}$ large enough the right-hand side of the previous inequality is smaller than $1$.
    Thus, we found $0\in\-\rho(\Gamma(w))$ for all $w\-\in\-\rho(A)$ such that $\Re{\+w}\->2\lambda\--\-\lambda_0$ and $\abs{\Im{\+w}}$ sufficiently large.
    More precisely, the new information is that one can find for any fixed $\Re{\+w}\-\geq\-\lambda_0$ the boundedly invertibility by requiring
    $$\abs{\Im{\+w}}>\frac{\norm{\tau}[\linear{\dom{A},\,\X*}]^2}{c(2\lambda\--\Re{\+w})}\,\sqrt{2+2\sqrt{1+\tfrac{c(2\lambda-\Re{\+w})^2\!}{\norm{\tau}[\linear{\dom{A},\,\X*}]^4}\,(\Re{\+w}\--\-\lambda)^2\, }\,}=\vcentcolon L_{\lambda}(\Re{\+w})$$
    and there holds for any $\lambda\-<\-\lambda_0$
    $$\norm{\Gamma(w)^{-1}}[\linear{\X*}]\leq\frac{1}{c(2\lambda\--\Re{\+w})\--\frac{2\norm{\tau}[\linear{\dom{A},\,\X*}]^2\!}{(\Im{\+w})^2}\+\sqrt{(\Re{\+w}\--\-\lambda)^2+(\Im{\+w})^2\+}\,}=\vcentcolon \frac{1}{g_{\lambda}(\Re{\+w},\Im{\+w})}.$$
    One could argue that the function $g_{\lambda}(\Re{\+w},\+\cdot)$ is increasing and positive in the interval $(L_{\lambda}(\Re{\+w}),\pInfty)$ with \begin{align*}
        g_{\lambda}(\Re{\+w},L_{\lambda}(\Re{\+w}))=0, & & \lim\limits_{t\to\pInfty} g_{\lambda}(\Re{\+w},t)=c(2\lambda-\Re{\+w}).
    \end{align*}
    Similarly to what has been done here, one can pick $w\-\in\-\rho(A)$ such that $\Im{\+w}\->\-L_{\lambda}(\Re{\+w})$ and $\Re{\+w}\-\geq\-\lambda_0\+$.
    Then, given $0\-<\-s\-<\-\Im{\+w}$ we consider the quantity
    $$\norm{\Gamma(w\--i\+s)\--\Gamma(w)}[\linear{\X*}]\norm{\Gamma(w)^{-1}}[\linear{\X*}]\leq \frac{s\norm{\tau}[\linear{\dom{A},\,\X*}]^2}{\Im{\+w}\,(\Im{\+w}-s)\,g_{\lambda}(\Re{\+w},\Im{\+w})}$$
    which turns out to be less than $1$ whenever
    $$s<\frac{(\Im{\+w})^2\+g_{\lambda}(\Re{\+w},\Im{\+w})}{\norm{\tau}[\linear{\dom{A},\,\X*}]^2+\Im{\+w}\,g_{\lambda}(\Re{\+w},\Im{\+w})}.$$
    Hence, we proved that $\Gamma(\zeta)$ is boundedly invertible for $\zeta\-\in\-\rho(A)$ such that $\Re{\+\zeta}\-\geq\-\lambda_0$ and
    $$\Im{\+\zeta}>\frac{\Im{\+w}\norm{\tau}[\linear{\dom{A},\,\X*}]^2}{\norm{\tau}[\linear{\dom{A},\,\X*}]^2+\Im{\+w}\,g_{\lambda}(\Re{\+\zeta},\Im{\+w})}.$$
    In particular, by choosing $\Im{\+w}$ larger and larger we can notice that the lower bound for $\Im{\+\zeta}$ is smaller than $L_{\lambda}(\Re{\+\zeta})$.
    Indeed, taking account of the symmetry for conjugation, one gets the boundedly invertibility of $\Gamma(\zeta)$ for all $\Re{\+\zeta}\-\geq\-\lambda_0$ such that $\abs{\Im{\+\zeta}}\->\frac{\norm{\tau}[\linear{\dom{A},\,\X*}]^2}{c(2\lambda-\Re{\+\zeta})}$.
    In the end, by exploiting the degree of freedom of $\lambda\-<\-\lambda_0$ which can be chosen smaller and smaller, the operator $\Gamma(\zeta)$ has a bounded inverse for any $\Re{\+\zeta}\-\geq\-\lambda_0$ and $\abs{\Im{\+\zeta}}\->\-0$.
    
    \comment{However, let $w\-\in\-\C$ be such that $\Re{\+w}\-<\-\lambda_0$ and $\abs{\Im{\+w}}\->\-\epsilon\->\-0$.
    
    Notice that because of~\eqref{lowerEstimateClosedRange}
    $$\epsilon'<\,\frac{1}{\!\!\!\sup\limits_{\substack{\Re{\+w}\+<\+\lambda_0,\\ \abs{\Im{\+w}}\+>\+\epsilon}}\!\!\norm{\Gamma(w)^{-1}}[\linear{\X*}]\nquad\!\!\!}$$
    Fix $\epsilon'>0$ such that $$\epsilon'<\,\frac{1}{\!\!\!\sup\limits_{\substack{\Re{\+w}\+<\+\lambda_0,\\ \abs{\Im{\+w}}\+>\+\epsilon}}\!\!\norm{\Gamma(w)^{-1}}[\linear{\X*}]\nquad\!\!\!}$$
    
    Hence
    \begin{align*}
        \sup_{\substack{\Re{\+w}\+<\+\lambda_0,\\ \abs{\Im{\+w}}\+>\+\epsilon}} \!\!\norm{\Gamma(w\-+\-\delta)\--\Gamma(w)}[\linear{\X*}]\norm{\Gamma(w)^{-1}}[\linear{\X*}]\leq\! \sup_{\substack{\Re{\+w}\+<\+\lambda_0,\\ \abs{\Im{\+w}}\+>\+\epsilon}} \!\!\norm{\Gamma(w\-+\-\delta)\--\Gamma(w)}[\linear{\X*}]\sup_{\substack{\Re{\+w}\+<\+\lambda_0,\\ \abs{\Im{\+w}}\+>\+\epsilon}}\!\!\norm{\Gamma(w)^{-1}}[\linear{\X*}]\\[-2pt]
        =\delta\!\!\sup_{\substack{\Re{\+w}\+<\+\lambda_0,\\ \abs{\Im{\+w}}\+>\+\epsilon}}\!\!\norm{\adj{\mathscr{G}(\conjugate{w}\-+\-\delta)}\+\mathscr{G}(w)}[\linear{\X*}]\sup_{\substack{\Re{\+w}\+<\+\lambda_0,\\ \abs{\Im{\+w}}\+>\+\epsilon}}\!\!\norm{\Gamma(w)^{-1}}[\linear{\X*}]\\
        \leq\delta\norm{\tau}[\linear{\dom{A},\,\X*}]^2\!\sup_{\substack{\Re{\+w}\+<\+\lambda_0,\\ \abs{\Im{\+w}}\+>\+\epsilon}}\!\!\frac{1}{\mathrm{dist}(w,\spectrum{A}\-)\,\mathrm{dist}(\conjugate{w}\-+\-\delta,\spectrum{A}\-)}\sup_{\substack{\Re{\+w}\+<\+\lambda_0,\\ \abs{\Im{\+w}}\+>\+\epsilon}}\!\!\norm{\Gamma(w)^{-1}}[\linear{\X*}]\\
        \leq\frac{\delta\norm{\tau}[\linear{\dom{A},\,\X*}]^2}{(\epsilon')^2}\sup_{\substack{\Re{\+w}\+<\+\lambda_0,\\ \abs{\Im{\+w}}\+>\+\epsilon}}\!\!\norm{\Gamma(w)^{-1}}[\linear{\X*}]
    \end{align*}
    }
    
\end{proof}
\end{prop}
\comment{\n 
In light of Proposition~\ref{sufficientConditionsGamma} one finds out that $\Gamma(z)$ can be always parametrized as follows
$$\Gamma(z)=\Theta+\tau \-\left(\tfrac{\mathscr{G}(z_0)\++\+\mathscr{G}(\conjugate{z}_0)}{2}-\mathscr{G}(z)\right)\!,$$
for some $z_0\in\rho(A)$ and $\Theta\in\linear{\X*}$ a s.a. (and typically unbounded) operator.
}
\n One of the main results of~\cite{P1} and~\cite{P2} is that the choice of $\tau$ and $\Gamma$ identifies in an exhaustive way every s.a. extension of $S$, denoted by $A^\tau_\Gamma\+$.
Indeed, for any $z\-\in\-\rho(A)$ satisfying~\eqref{invertibilityGammaTheory}, namely such that $\Gamma(z)^{-1}\-\in\bounded{\X*}$, the quantity
\begin{equation}
    \define{R^\tau_{\Gamma}(z);\resolvent{A}[z]+\mathscr{G}(z)\Gamma(z)^{-1}\adj{\mathscr{G}(\conjugate{z})}}
\end{equation}
defines the resolvent of a s.a. operator $\define{A_\Gamma^\tau;R^\tau_\Gamma(z)^{-1}\!+\-z}$ that does not depend on $z$ and coincides with $A$ on $\ker{\tau}$.
More precisely, it is characterized by
\begin{equation}\label{saExtensionA}
    \begin{dcases}
    \dom{A^\tau_\Gamma}\vcentcolon=\left\{\phi\in\hilbert*\,\big|\;\phi=\varphi_z\-+\mathscr{G}(z)\Gamma(z)^{-1}\tau\varphi_z\+,\:\varphi_z\in\dom{A},\:z\in\rho(A) \text{ satisfying~\eqref{invertibilityGammaTheory}} \right\}\-,\\
    (A^\tau_\Gamma-z)\+\phi=(A-z)\+\varphi_z\+.
    \end{dcases}
\end{equation}
Notice that, for any fixed $z\-\in\-\rho(A)$ satisfying~\eqref{invertibilityGammaTheory}, the decomposition of an element in $\dom{A^\tau_\Gamma}$ is unique because of point $iii)$ of Remark~\ref{uniqueDecomposition}.

\n Equivalently, the operator $A^\tau_\Gamma$ can be represented in terms of a proper boundary condition.
Indeed, if we introduce the quantity $\xi\-\vcentcolon=\-\Gamma(z)^{-1}\tau\varphi_z\-\in\-\X*$, it is straightforward to see that 
\begin{equation}\label{bcHamiltonianCharacterization}
    \begin{dcases}
    \dom{A^\tau_\Gamma}=\left\{\phi\in\hilbert*\,\big|\;\phi=\varphi_z\-+\mathscr{G}(z)\+\xi,\:\Gamma(z)\+\xi=\tau\varphi_z\+,\;\varphi_z\in\dom{A},\:\xi\in D,\: z\in\rho(A) \right\}\-,\\
    A^\tau_\Gamma\+\phi=A\varphi_z+z\+\mathscr{G}(z)\+\xi.
    \end{dcases}
\end{equation}
We stress that the definition of $\xi$ does not depend on $z$.
Indeed, one can exploit two distinct decompositions of an element in $\dom{A^\tau_\Gamma}$ to obtain
$$\varphi_z\--\phi_w=\mathscr{G}(w)\+\Gamma(w)^{-1}\tau\phi_w\!-\mathscr{G}(z)\+\Gamma(z)^{-1}\tau\varphi_z\+,\qquad\forall w,z\in\rho(A) \text{ satisfying~\eqref{invertibilityGammaTheory}}.$$ 
Since the left-hand side belongs to $\dom{A}$, according to Proposition~\ref{potentialProp} one has
$$\Gamma(z)^{-1}\+\tau\varphi_z=\Gamma(w)^{-1}\+\tau\phi_w\+.$$

\n It is noteworthy evaluating the quadratic form associated to the s.a. operator $A^\tau_\Gamma\+$.
In particular, one has for any $\phi\in\dom{A^\tau_\Gamma}\subset\mathfrak{Q}(A^\tau_\Gamma)$
\begin{align}
    \scalar{\phi}{A^\tau_\Gamma\+ \phi}[\hilbert*]&=\scalar{\phi}{A\+ \varphi_z\-+z\,\mathscr{G}(z)\+\xi}[\hilbert*]\-=z\norm{\phi}[\hilbert*]^2\-+\scalar{\phi}{(A-z)\+\varphi_z}[\hilbert*]\nonumber\\
    &=z\norm{\phi}[\hilbert*]^2\-+\scalar{\varphi_z}{(A-z)\+\varphi_z}[\hilbert*]+\scalar{\mathscr{G}(z)\+\xi}{(A-z)\+\varphi_z}[\hilbert*]\nonumber\\
    &=z\norm{\phi}[\hilbert*]^2\-+\scalar{\varphi_z}{(A-z)\+\varphi_z}[\hilbert*]+\scalar{\xi}{\tau \varphi_z}[\X*].\nonumber\\
    \intertext{Therefore, the boundary condition implies}
    \scalar{\phi}{A^\tau_\Gamma\+ \phi}[\hilbert*]&=z\norm{\phi}[\hilbert*]^2\-+\scalar{\varphi_z}{(A-z)\+\varphi_z}[\hilbert*]+\scalar{\xi}{\Gamma(z)\+\xi}[\X*]\+.\label{operatorDomainQF}
\end{align}
Remarkably, since the left-hand side of the previous equation is real, one has
\begin{equation*}
    \Im(z)\norm{\varphi_z\-+\mathscr{G}(z)\+\xi}[\hilbert*]^2-\Im(z)\norm{\varphi_z}[\hilbert*]^2\-+\Im\scalar{\xi}{\Gamma(z)\+\xi}=0.
\end{equation*}
Thus, due to~\eqref{actuallyHermitianQF} one has that the decomposition of a vector $\phi=\varphi_z\-+\mathscr{G}(z)\+\xi\-\in\-\dom{A^\tau_\Gamma}$ requires
\begin{equation}
    \Im(z)=0\quad\lor\quad \varphi_z\perp \mathscr{G}(z)\+\xi.
\end{equation}
\n For the sake of completeness, we mention that
$$\mu\in\spectrum[\mathrm{p}]{A^\tau_\Gamma}\cap\rho(A)\iff 0\in\spectrum[\mathrm{p}]{\Gamma(\mu)},$$
where $\spectrum[\mathrm{p}]{\cdot}$ denotes the point spectrum and the operator $\maps{\mathscr{G}(\mu)}{\ker{\Gamma(\mu)};\ker{A^\tau_\Gamma-\mu}}$ is a bijection.

\n Lastly, we provide a revised version of~\cite[corollary 2.1]{P1}.
\begin{prop}\label{boundednessFromBelow}
    Let $A$ be a s.a. operator in $\hilbert*$ and $\maps{\Gamma}{\rho(A);\linear{\X*}}$ a continuous map satisfying~\eqref{conditionsGammaTheory}.
    Suppose that there exist $\mu\-\in\-\R$ and $\lambda_0\leq \mu$ such that
    \begin{enumerate}[label=\roman*)]
    \item $A-\mu>0\+$,
    \item    $\Gamma(\lambda)>0,\qquad\forall \lambda<\lambda_0\+$.
    \end{enumerate}
    Then, the operator $A^\tau_\Gamma$ defined by~\eqref{saExtensionA} satisfies $A^\tau_\Gamma\geq\lambda_0\+$ for any $\maps{\tau}{\dom{A};\X*}$ linear bounded operator s.t. $\overline{\ran{\tau}}=\X*$ and $\overline{\ker{\tau}}=\dom{A}$.
\begin{proof}
    Taking into account identity~\eqref{operatorDomainQF}, one has for all $\phi\in\dom{A^\tau_\Gamma}$ and $\lambda<\lambda_0$
    \begin{align*}
        \scalar{\phi}{A^\tau_\Gamma\+\phi}[\hilbert*]&=\lambda\norm{\phi}[\hilbert*]^2+\scalar{\phi_\mu}{(A-\lambda)\+\phi_\mu}[\hilbert*]+\scalar{\xi}{\Gamma(\lambda)\+\xi}[\X*]\\
        &>\lambda\norm{\phi}[\hilbert*]^2+\scalar{\phi_\mu}{(A-\mu)\+\phi_\mu}[\hilbert*]>\lambda\norm{\phi}[\hilbert*]^2.
    \end{align*}
\end{proof}
\end{prop}

\subsection*{Application to our Setting}

\n In the following, we establish the connection of the abstract theory with our problem.
The Hilbert space $\hilbert*$ is given by $\hilbert_N\-=\-\LpS{2}[\R^{3N}]$, while the s.a. operator $A$ plays the role of the free Hamiltonian $\mathcal{H}_0$ and $S$ corresponds to $\dot{\mathcal{H}}_0\+$.
Next, define the isomorphic spaces
\begin{subequations}\label{auxiliaryHilberts}
\begin{gather}
    \define{\X_\sigma;\Lp{2}[\R^3,d\vec{x}]\-\otimes\-\LpS{2}[\R^{3(N-2)}\-,\+d\vec{x}_1\cdots d\check{\vec{x}}_\sigma\cdots d\vec{x}_N]},\qquad \sigma\in\mathcal{P}_N\label{auxiliaryHilbertSigma}\\
    \define{\X;\Lp{2}[\R^3]\-\otimes\-\LpS{2}[\R^{3(N-2)}]}\label{auxiliaryHilbertGeneric}
\end{gather}
\end{subequations}
and let $\mathcal{C}_\sigma$ be the unitary transformation sending $\X$ to $\X_\sigma$.
We stress that the space $\X$ shall be the analogue of the auxiliary Hilbert space $\X*$.
Therefore, in order to provide the construction of the operator $\tau$, we need to introduce $\maps{T_\sigma}{\dom{\mathcal{H}_0};\X_\sigma}$ the linear bounded operators whose action in $\mathcal{S}(\R^{3N})$ is explicitly given by
\begin{equation}\label{singleTraceOperator}
    T_{\sigma\+=\+\{i,\,j\}}: f(\vec{x}_1,\ldots,\vec{x}_N)\longmapsto 8\pi f(\vec{x}_1,\ldots,\vec{x}_N)|_{\vec{x}_i=\,\vec{x}_j=\,\vec{x}}.
\end{equation}
In our framework, $\tau$ shall be therefore represented by the trace operator $T\-\in\bounded{\dom{\mathcal{H}_0},\X}$ as follows
\begin{equation}\label{traceOperator}
    T:\psi\longmapsto \sum_{\sigma\+\in\+\mathcal{P}_N}\! \adj{\mathcal{C}}_\sigma\+T_\sigma\+\psi=\tfrac{N(N-1)}{2}\,\adj{\mathcal{C}}_{\nu}\+T_{\nu}\+\psi,\qquad\forall\nu\in\mathcal{P}_N.
\end{equation}
Indeed, notice that $T$ satisfies the required properties: it is bounded since the mappings $f\longmapsto f|_{\pi_\sigma}$ are continuous between $H^{\frac{3}{2}+\+s}(\R^{3N})$ and $H^s(\R^{3(N-1)})$ for any $s\!>\!0$, then $\ran{T}$ is dense in $\X$ and lastly  $\ker{T}\!=\!\dom{\dot{\mathcal{H}}_0}$, by construction.\newline
We stress that (see Proposition~\ref{weakLimitTraceOperatorPro}) if there exists $s\->\-0$ such that $f\!\in\!\hilbert_N\cap H^{\frac{3}{2}+s}(\R^{3N})$, then the operator $T_\sigma$ given by~\eqref{singleTraceOperator} satisfies
\begin{equation}\label{weakTraceId}
    (T_\sigma f)(\vec{x},\vec{X}_{\-\sigma})= 8\pi \wlim{r\to 0^+} (U_\sigma f)(\vec{r},\vec{x},\vec{X}_{\-\sigma}),
\end{equation}
where $U_\sigma, \sigma=\{i,j\}\-\in\-\mathcal{P}_N$ (with $i<j$ without loss of generality) is the following unitary operator
\begin{equation}
\label{unitaryCoordinateAppendix}\begin{split}
    &\maps{U_\sigma}{\Lp{2}[\R^{3N}];\Lp{2}[\R^6,d\vec{r}d\vec{x}]\-\otimes\-\Lp{2}[\R^{3(N-2)},d\vec{X}_{\-\sigma}]},\\[-2pt]
    &\qquad(U_\sigma\+f)(\vec{r},\vec{x},\vec{X}_{\-\sigma})=f\Big(\vec{x}_1,\ldots,\underset{i\text{-th}}{\vec{x}+\tfrac{\vec{r}}{2}}\+,\ldots,\underset{j\text{-th}}{\+\vec{x}-\tfrac{\vec{r}}{2}}\+,\ldots,\vec{x}_N\!\Big).
    \end{split}
\end{equation}
The reason behind the constant $8\pi$ in the definition of $T_\sigma$ shall be clarified by Remark~\ref{tuningConstant}.
\begin{note}
    One can argue that the operator defined by equation~\eqref{potentialDef} can be equivalently rewritten as
    \begin{equation}\label{abstractPotential}
        \mathcal{G}^\lambda\!=\-\adj{(T\+\resolvent{\mathcal{H}_0}(-\lambda))}\-=\mathscr{G}(-\lambda),\qquad\lambda>0.
    \end{equation}
    Indeed, by definition, one has
    \begin{gather*}
        \scalar{\psi}{\mathscr{G}(-\lambda)\+\xi}[\hilbert_N]=\sum_{\sigma\+\in\+\mathcal{P}_N}\-\scalar{T_\sigma\resolvent{\mathcal{H}_0}(-\lambda)\+\psi}{\mathcal{C}_\sigma\+\xi}[\X_\sigma]\\[-5pt]
        =8\pi\nquad\sum_{\mathcal{P}_N\+\ni\,\sigma\+=\+\{i,\,j\}}\-\integrate[\R^{3(N-1)}]{\xi(\vec{y},\vec{Y}_{\!\!\sigma});\mspace{-33mu}d\vec{y}d\vec{Y}_{\!\!\sigma}}\!\-\integrate[\R^{3N}]{\conjugate*{\psi(\vec{x}_i,\vec{x}_j,\vec{X}_{\-\sigma})};\mspace{-10mu} d\vec{x}_i\+d\vec{x}_j\+ d\vec{X}_{\-\sigma}}\,G^\lambda\bigg(\!\!
            \begin{array}{r c l}
                \vec{x}_i\+,&\!\!\vec{x}_j\+, &\!\!\vec{X}_{\-\sigma}\\[-2.5pt]
                \vec{y},&\!\!\vec{y},&\!\!\vec{Y}_{\!\!\sigma}
            \end{array}\!\!\!\bigg).
    \end{gather*}
    Therefore, one obtains the expression given in~\eqref{potentialDef}.
\end{note}

\n Taking account of this abstract setting, it is possible to provide an alternative (yet equivalent) construction of the lower semi-bounded Hamiltonian $\mathcal{H}$.
More precisely, we just need the definition of the missing proper continuous map $\maps{\tilde{\Gamma}}{\C\smallsetminus\Rplus;\linear{\X}}$ that fulfills properties~\eqref{conditionsGammaTheory} and encodes the singular boundary conditions at the coincidence hyperplanes described by~\eqref{mfBC}.\newline
Such Hamiltonian $\mathcal{H}$, being a singular perturbation of $\mathcal{H}_0$ supported on $\pi$, coincides by construction with $\mathcal{H}_0$ on $\hilbert_N\cap H_0^2(\R^{3N}\setminus \pi)$ and, according to~\eqref{bcHamiltonianCharacterization}, any element of its domain $\psi$ can be decomposed as $\psi=\varphi_\lambda\-+\mathcal{G}^\lambda\xi$ with $\varphi_\lambda\!\in\-\dom{\mathcal{H}_0}$ and $\xi$ in the domain of $\tilde{\Gamma}(-\lambda)$ satisfying the boundary condition $\tilde{\Gamma}(-\lambda)\+\xi=T\varphi_\lambda\+$.
In order to individuate the specific map $\tilde{\Gamma}$ describing the desired behavior introduced in Section~\ref{intro}, we compute the action of the quadratic form associated to $\tilde{\Gamma}(-\lambda)$ by imposing on $\psi$ the boundary condition~\eqref{mfBC} in the weak topology, \ie
\begin{equation}
    \wlim{r\to 0^+}\!\left[(U_\nu\+\psi)(\vec{r},\vec{x},\vec{X}_\nu)-\tfrac{\xi(\vec{x},\vec{X}_\nu)}{r}\right]\!=(\Gamma^{\+\nu}_{\!\mathrm{reg}}\+\xi)(\vec{x},\vec{X}_\nu),\qquad\forall\nu\in\mathcal{P}_N,
\end{equation}
which implies
\begin{equation*}
        \scalar{\mathcal{C}_\nu\+\xi}{\Gamma^{\+\nu}_{\!\mathrm{reg}}\+\xi}[\X_\nu]+\!\lim_{r\to 0^+}\scalar{\mathcal{C}_\nu\+\xi}{\!\left[\tfrac{\xi(\vec{x},\+\vec{X}_\nu)}{r}-U_\nu(\mathcal{G}^\lambda\xi+\varphi_\lambda)(\vec{r},\vec{x},\vec{X}_\nu)\-\right]}[\X_\nu]\!=0,\qquad\forall \nu\in\mathcal{P}_N.
\end{equation*}
Therefore, since $\varphi_\lambda\!\in\- H^2(\R^{3N})$, taking into account~\eqref{weakTraceId} one gets for any $\nu\in\mathcal{P}_N$
\begin{equation*}
    \scalar{\mathcal{C}_\nu\+\xi}{\Gamma^{\+\nu}_{\!\mathrm{reg}}\+\xi}[\X_\nu]\-+\!\lim_{r\to 0^+}\scalar{\mathcal{C}_\nu\+\xi}{\!\left[\tfrac{\xi(\vec{x},\+\vec{X}_\nu)}{r}-(U_\nu\+\mathcal{G}^\lambda\xi)(\vec{r},\vec{x},\vec{X}_\nu)\-\right]}[\X_\nu]\!
    -\tfrac{1}{8\pi}\scalar{\mathcal{C}_\nu\+\xi}{T_\nu\+\varphi_\lambda}[\X_\nu]\!=0,
\end{equation*}
namely, according to~\eqref{traceOperator}
\begin{equation*}
    \scalar{\mathcal{C}_\nu\+\xi}{\Gamma^{\+\nu}_{\!\mathrm{reg}}\+\xi}[\X_\nu]\-+\!\lim_{r\to 0^+}\scalar{\mathcal{C}_\nu\+\xi}{\!\left[\tfrac{\xi(\vec{x},\+\vec{X}_\nu)}{r}-(U_\nu\+\mathcal{G}^\lambda\xi)(\vec{r},\vec{x},\vec{X}_\nu)\-\right]}[\X_\nu]\!
    =\tfrac{1}{4\pi\+N(N\--1)\!}\+\scalar{\xi}{T\+\varphi_\lambda}[\X].
\end{equation*}
Thus, the boundary condition $\tilde{\Gamma}(-\lambda)\+\xi=T\varphi_\lambda$ provides the expression of the quadratic form associated to the s.a. operator $\tilde{\Gamma}(-\lambda)$, denoted by $\tilde{\Phi}^\lambda$
\begin{equation}\label{phiTildeDefinition}
    \tilde{\Phi}^\lambda[\xi]\vcentcolon=4\pi N(N\!-\-1)\-\left\{\!\scalar{\mathcal{C}_\nu\+\xi}{\Gamma^{\+\nu}_{\!\mathrm{reg}}\+\xi}[\X_\nu]\!+\!
    \lim_{r\to 0^+}\scalar{\mathcal{C}_\nu\+\xi}{\!\left[\tfrac{\xi(\vec{x},\+\vec{X}_\nu)}{r}-(U_\nu\+\mathcal{G}^\lambda\xi)(\vec{r},\vec{x},\vec{X}_\nu)\-\right]}[\X_\nu]\!\right\}\-.
\end{equation}
\comment{
whose expression coincides with~\eqref{phiDefinition}.\newline
Therefore, let $\Phi^\lambda$ be the densely defined quadratic form in $\X$ given by~\eqref{phiDefinition}.
Our goal is to show that $\Phi^\lambda$ is associated to a s.a. operator $\Gamma^\lambda$ satisfying the hypotheses of Propositions~\ref{sufficientConditionsGamma} and~\ref{boundednessFromBelow} so that a s.a. and bounded from below extension of $\dot{\mathcal{H}}_0$ can be defined by~\eqref{bcHamiltonianCharacterization}.
This s.a. extension shall be the Hamiltonian $\mathcal{H}$ described in the introduction, since, by construction, it will coincide with $\mathcal{H}_0$ on $\hilbert_N\cap H^2_0(\R^{3N}\setminus \pi)$ and any element $\psi\-\in\-\dom{\mathcal{H}}$ satisfies boundary condition~\eqref{mfBC} at least in the weak topology.
}
\begin{note}\label{resolventIdPhi}
Notice that definition~\eqref{phiTildeDefinition} satisfies for any $\lambda_1,\lambda_2>0$
\begin{align*}
    \tilde{\Phi}^{\lambda_1}[\xi]\-&-\tilde{\Phi}^{\lambda_2}[\xi]=\+8\pi\!\sum_{\sigma\+\in\+\mathcal{P}_N} \lim_{r\to 0^+}\scalar{\mathcal{C}_\sigma\+\xi}{(U_\sigma\+\mathcal{G}^{\lambda_2}\xi-U_\sigma\+\mathcal{G}^{\lambda_1}\xi)(\vec{r},\vec{x},\vec{X}_{\-\sigma})}[\X_\sigma]\\[-2.5pt]
    &=\+8\pi\!\sum_{\sigma\+\in\+\mathcal{P}_N}\-\scalar{\mathcal{C}_\sigma\+\xi}{\wlim{r\to 0^+}(U_\sigma\+\mathcal{G}^{\lambda_2}\xi-U_\sigma\+\mathcal{G}^{\lambda_1}\xi)(\vec{r},\vec{x},\vec{X}_{\-\sigma})}[\X_\sigma]\!=\-\sum_{\sigma\+\in\+\mathcal{P}_N}\- \scalar{\mathcal{C}_\sigma\+\xi}{T_\sigma(\mathcal{G}^{\lambda_2}\!-\mathcal{G}^{\lambda_1})\+\xi}[\X_\sigma]
\end{align*}
thanks to~\eqref{weakTraceId}, since $\ran{\mathcal{G}^{\lambda_2}\!-\mathcal{G}^{\lambda_1}}\-\subset\-\dom{\mathcal{H}_0}$.
Hence, due to~\eqref{traceOperator}, we have obtained
\begin{equation}\label{differenceGamma}\tilde{\Phi}^{\lambda_1}[\xi]\--\tilde{\Phi}^{\lambda_2}[\xi]=\scalar{\xi}{T\!\left(\mathcal{G}^{\lambda_2}\!-\mathcal{G}^{\lambda_1}\!\right)\mspace{-2.25mu}\xi}[\X]\+.
\end{equation}
\end{note}
We point out that expression~\eqref{phiTildeDefinition} has been obtained only by exploiting the known structure~\eqref{bcHamiltonianCharacterization} of the domain of any singular perturbation of $\mathcal{H}_0\+$ that is supported on $\pi$.
None of the properties of the specific $\tilde{\Gamma}$ has been used, beside the desired boundary condition~\eqref{mfBC}.
Indeed, one could consider the previous argument as a heuristic motivation to start the rigorous analysis by studying the quadratic form defined by~\eqref{phiTildeDefinition}.
If one proves that $\tilde{\Phi}^\lambda$ is closed for any $\lambda\->\-0$ and that it is coercive for $\lambda$ large enough, then there exists an actual s.a. operator associated with $\tilde{\Phi}^\lambda$ which is satisfying Propositions~\ref{sufficientConditionsGamma} and~\ref{boundednessFromBelow} (see Remark~\ref{resolventIdPhi}).\newline
This means that the coercivity of $\tilde{\Phi}^\lambda$ for $\lambda$ large is basically the only crucial ingredient needed in order to define the s.a. and lower-bounded zero-range Hamiltonian $\mathcal{H}$ satisfying boundary condition~\eqref{mfBC}.

\n In conclusion, let us decompose $\tilde{\Phi}^\lambda$ as follows
\begin{equation}\label{phiTildeDecomposition}
    \tilde{\Phi}^\lambda\!=4\pi N(N\!-\-1)\!\left(\tilde{\Phi}^{\lambda}_{\mathrm{diag}}\!+\tilde{\Phi}^{\lambda}_{\mathrm{off;\+0}}\-+\tilde{\Phi}^{\lambda}_{\mathrm{off;\+1}}\-+\tilde{\Phi}_{\mathrm{reg}}\right)\!,
\end{equation}
where, considering $\sigma\-\in\-\mathcal{P}_N$, we have defined
\begin{subequations}\label{phiTildeComponents}
\begin{gather}
    \tilde{\Phi}^{\lambda}_{\mathrm{diag}}[\xi]\vcentcolon=\lim_{r\to 0^+}\scalar{\mathcal{C}_\sigma\+\xi}{\!\left[\tfrac{\xi(\vec{x},\vec{X}_{\-\sigma})}{r}-(U_\sigma\+\mathcal{G}^\lambda_\sigma\+\xi)(\vec{r},\vec{x},\vec{X}_{\-\sigma})\right]\-}[\X_\sigma],\label{diagPhiDef}\\
    \tilde{\Phi}^{\lambda}_{\mathrm{off;}\+i}[\xi]\vcentcolon=-\mspace{-12mu}\sum_{\substack{\nu\,\in\+\mathcal{P}_N\+:\\|\+\sigma\+\cap\,\nu\+|\,=\,i}}\!\-\lim_{r\to 0^+}\scalar{\mathcal{C}_\sigma\+\xi}{(U_\sigma\+\mathcal{G}^\lambda_\nu\+\xi)(\vec{r},\vec{x},\vec{X}_{\-\sigma})}[\X_\sigma],\quad i\-\in\-\{0,1\},\label{offsPhiDef}\\[-10pt]
    \tilde{\Phi}_{\mathrm{reg}}[\xi]\vcentcolon=\scalar{\mathcal{C}_\sigma\+\xi}{\Gamma^{\+\sigma}_{\!\mathrm{reg}}\+\xi}[\X_\sigma].\label{regPhiDef}
\end{gather}
\end{subequations}
Definitions~\eqref{phiTildeComponents} are shown to coincide with the ones given by identities~\eqref{phiComponents} owing to Proposition~\ref{diagonalSingularityCancellation}.
\begin{note}\label{tuningConstant}
    We stress that the cancellation of the singular leading order obtained in Proposition~\ref{diagonalSingularityCancellation} is possible thanks to the properly tuned constant $8\pi$ in definition~\eqref{singleTraceOperator}.
\end{note}
\begin{note}\label{GammaRemark}
    According to our construction there holds
    \begin{equation}
        \tilde{\Gamma}(-\lambda)=\Gamma^\lambda=8\pi\!\sum_{\sigma\,\in\,\mathcal{P}_N}\!\adj{\mathcal{C}}_\sigma(\Gamma^{\+\sigma,\+\lambda}\-+\Gamma^{\+\sigma}_{\!\mathrm{reg}}),
    \end{equation}
    with $\Gamma^{\+\sigma,\+\lambda}$ and $\Gamma^{\+\sigma}_{\!\mathrm{reg}}$ given by equations~\eqref{GammaTMSHyp} and~\eqref{regGammaHyp}, respectively.
\end{note}
}

\section{Properties of the Potential}\label{appB}

\comment{Let us use this space to state the following technical facts.

\begin{prop}\label{weakLimitTraceOperatorPro}
    Given $\maps{f}{\R^m\-\times\R^n;\C}$, suppose $f\in H^{\frac{m}{2}+s}(\R^{n+m})$ for some $s>0$.
    Then
    \begin{equation*}
        f(\vec{x},\vec{y})|_{\vec{x}=\vec{0}}=\wlim{\vec{x}\to\vec{0}} f(\vec{x},\vec{y}),\qquad\text{for almost every }\,\vec{y}\in\R^n.
    \end{equation*}
\begin{proof}
    First, for any $s>0$ denote by $\maps{\tau_0}{H^{\frac{m}{2}+s}(\R^{n+m});H^s(\R^n)}$ the trace operator given by
    \begin{equation}
        (\widehat{\tau_0 f})(\vec{p})=\frac{1}{(2\pi)^{\frac{m}{2}}\!\-}\integrate[\R^m]{\FT{f}(\vec{q},\vec{p});\!\-d\vec{q}}.
    \end{equation}
    We want to show that
    \begin{subequations}\begin{equation}
        \lim_{\vec{x}\to\vec{0}}\integrate[\R^n]{\conjugate*{g(\vec{y})}f(\vec{x},\vec{y});\!d\vec{y}}=\!\integrate[\R^n]{\conjugate*{g(\vec{y})}\,(\tau_0 f)(\vec{y});\!d\vec{y}},\qquad\forall g\in\Lp{2}[\R^n],
    \end{equation}
    or, equivalently
    \begin{equation}
        \lim_{\vec{x}\to\vec{0}}\integrate[\R^n]{\conjugate*{\FT{g}(\vec{p})};\!d\vec{p}}\!\integrate[\R^n]{\frac{e^{-i\vec{p}\+\cdot\+\vec{y}}\!\!}{(2\pi)^{\frac{n}{2}}\!\-}\;f(\vec{x},\vec{y});\!d\vec{y}}=\!\integrate[\R^n]{\conjugate*{\FT{g}(\vec{p})}\,(\widehat{\tau_0 f})(\vec{p});\!d\vec{p}},\qquad\forall g\in\Lp{2}[\R^n].\label{weakLimitWTS}
    \end{equation}\end{subequations}
    Clearly, one has
    \begin{equation}\label{semiFT/semiAntiFT}
        \integrate[\R^n]{\frac{e^{-i\vec{p}\+\cdot\+\vec{y}}\!\!}{(2\pi)^{\frac{n}{2}}\!\-}\;f(\vec{x},\vec{y});\!d\vec{y}}=\!\integrate[\R^m]{\frac{e^{i\vec{x}\+\cdot\+\vec{q}}\!\!}{(2\pi)^{\frac{m}{2}}\!\-}\;\FT{f}(\vec{q},\vec{p});\!\-d\vec{q}}.
        \end{equation}
    We claim that
    \begin{equation}
        \left\lvert\integrate[\R^{n+m}]{\conjugate*{\FT{g}(\vec{p})}\FT{f}(\vec{q},\vec{p});\mspace{-20mu}d\vec{p}\+d\vec{q}}\right\rvert\leq\,\sqrt{\frac{\pi^{\frac{m}{2}}\Gamma(s)}{\Gamma\!\left(s+\frac{m}{2}\right)\!}} \,\norm{g}[2]\norm{f}[H^{m/2\, +\+s}(\R^{n+m})]\!.
    \end{equation}
    Indeed,
    \begin{align*}
        \left(\integrate[\R^{n+m}]{\abs{\FT{g}(\vec{p})}\abs{\FT{f}(\vec{q},\vec{p})};\mspace{-20mu}d\vec{p}\+d\vec{q}}\right)^{\!2}\-\!\!&\leq\! \integrate[\R^{n+m}]{\frac{\abs{\FT{g}(\vec{p})}^2}{(p^2+q^2+1)^{\frac{m}{2}+s}};\mspace{-20mu}d\vec{p}\+d\vec{q}}\!\integrate[\R^{n+m}]{(p^2+q^2+1)^{\frac{m}{2}+s}\,\abs{\FT{f}(\vec{q},\vec{p})}^2;\mspace{-20mu}d\vec{p}\+d\vec{q}}\\
        &\leq \frac{2\pi^{\frac{m}{2}}}{\Gamma\!\left(\frac{m}{2}\right)\!}\,\norm{f}[H^{m/2\,+\+s}(\R^{n+m})]^2\!\-\integrate[\R^n]{\abs{\FT{g}(\vec{p})}^2;\!d\vec{p}}\!\-\integrate[0;\pInfty]{\frac{q^{m-1}}{(q^2+1)^{\frac{m}{2}+s}};\nquad dq}.
    \end{align*}
    Therefore, exploiting~\eqref{semiFT/semiAntiFT} and the dominated convergence theorem, one has
    \begin{equation*}
        \lim_{\vec{x}\to\vec{0}}\integrate[\R^n]{\conjugate*{\FT{g}(\vec{p})};\!d\vec{p}}\!\integrate[\R^n]{\frac{e^{-i\vec{p}\+\cdot\+\vec{y}}\!\!}{(2\pi)^{\frac{n}{2}}\!\-}\;f(\vec{x},\vec{y});\!d\vec{y}}=\!\integrate[\R^n]{\conjugate*{\FT{g}(\vec{p})};\!d\vec{p}}\;\frac{1}{(2\pi)^{\frac{m}{2}}\!\-}\integrate[\R^m]{\FT{f}(\vec{q},\vec{p});\!\-d\vec{q}}.
    \end{equation*}
    and equation~\eqref{weakLimitWTS} is proven.
    
\end{proof}
\end{prop}
}

\n Here we state a useful property of $G^\lambda$ defined by~\eqref{greenLaplace}.

\begin{prop}\label{greenTransformed}
    Let $G^\lambda$ be defined by~\eqref{greenLaplace} and $\sigma=\{i,j\}$.
    Then,
    \begin{equation*}
        \integrate[\R^{3(N-1)}]{e^{-i\vec{q}\+\cdot\vec{y}-\+i\+\vec{Q}_\sigma\cdot\+\vec{Y}_{\!\!\sigma}};\mspace{-33mu}d\vec{y}d\vec{Y}_{\!\!\sigma}}\+G^\lambda\bigg(\!\!
        \begin{array}{r c l}
            \vec{x}_i\+,&\!\!\vec{x}_j\+, &\!\!\vec{X}_{\-\sigma}\\[-2.5pt]
            \vec{y},&\!\!\vec{y},&\!\!\vec{Y}_{\!\!\sigma}
        \end{array}\!\!\!\bigg)\!=\frac{\:e^{-\sqrt{\-\frac{q^2\!}{2}\++\,Q^2_\sigma\++\+\lambda\+}\:\frac{\abs{\vec{x}_i-\+\vec{x}_j}}{\sqrt{2}}}\nquad}{8\pi\,\abs{\vec{x}_i\--\vec{x}_j}}\;e^{-i\vec{q}\,\cdot\+\frac{\vec{x}_i+\+\vec{x}_j}{2}-\+i\+\vec{Q}_\sigma\cdot\+\vec{X}_{\-\sigma}},\quad\vec{x}_i\neq\vec{x}_j\+.
    \end{equation*}
\begin{proof}
    Adopting the change of coordinates $\vec{z}=\vec{y}-\vec{x}_j\+$, $\vec{Z}_{\-\sigma}=\vec{Y}_{\!\!\sigma}\--\-\vec{X}_{\-\sigma}\+$, one has
    \begin{align*}
        \integrate[\R^{3(N-1)}]{e^{-i\vec{q}\+\cdot\vec{y}-\+i\+\vec{Q}_\sigma\cdot\+\vec{Y}_{\!\!\sigma}};\mspace{-33mu}d\vec{y}d\vec{Y}_{\!\!\sigma}}\+G^\lambda\bigg(\!\!
        \begin{array}{r c l}
            \vec{x}_i\+,&\!\!\vec{x}_j\+, &\!\!\vec{X}_{\-\sigma}\\[-2.5pt]
            \vec{y},&\!\!\vec{y},&\!\!\vec{Y}_{\!\!\sigma}
        \end{array}\!\!\!\bigg)\!=e^{-i\vec{q}\,\cdot\+\vec{x}_j-\+i\+\vec{Q}_\sigma\cdot\+\vec{X}_{\-\sigma}}\+\times\\
        \times\!\-\integrate[\R^{3(N-1)}]{e^{-i\vec{q}\+\cdot\vec{z}-\+i\+\vec{Q}_\sigma\cdot\+\vec{Z}_{\-\sigma}};\mspace{-33mu}d\vec{z}d\vec{Z}_{\-\sigma}}G^\lambda\bigg(\!\!\-
        \begin{array}{c c c}
            \vec{x}_i\--\vec{x}_j\+,&\!\!\vec{0}\+, &\!\!\vec{0}\\[-5pt]
            \vec{z},&\!\!\vec{z},&\!\!\vec{Z}_{\-\sigma}
        \end{array}\!\!\-\bigg).
    \end{align*}
    Next, take into account the following identities
    \begin{align*}
        G^\lambda\bigg(\!\!\-
        \begin{array}{c c c}
            \vec{x}_i\--\vec{x}_j\+,&\!\!\vec{0}\+, &\!\!\vec{0}\\[-5pt]
            \vec{z},&\!\!\vec{z},&\!\!\vec{Z}_{\-\sigma}
        \end{array}\!\!\-\bigg)\!
        &=\frac{\!\!\-1}{\,(2\pi)^{3N}\!\-}\-\integrate[\R^{3N}]{\frac{e^{i\+\vec{p}_i\+\cdot\,(\vec{z}-\vec{x}_i+\vec{x}_j)+\+i\+\vec{p}_j\+\cdot\+\vec{z}\,+\,i\+\vec{P}_{\!\sigma}\cdot\+\vec{Z}_{\-\sigma}}}{p_i^2+p_j^2+P^2_{\!\sigma}+\lambda};\mspace{-10mu}d\vec{p}_i \+d\vec{p}_j d\mspace{-0.75mu}\vec{P}_{\!\sigma}}\\
        &=\frac{\!\!\-1}{\,(2\pi)^{3N}\!\-}\-\integrate[\R^{3(N-1)}]{;\mspace{-33mu}d\vec{p}\+d\mspace{-0.75mu}\vec{P}_{\!\sigma}}\!\!\integrate[\R^3]{\frac{e^{i\+\vec{p}\,\cdot\left(\-\sqrt{2}\+\vec{z}\+-\+\frac{\vec{x}_i-\+\vec{x}_j}{\sqrt{2}}\-\right)+\+i\+\vec{q}\mspace{2.25mu}\cdot\left(\frac{\vec{x}_i-\+\vec{x}_j}{\sqrt{2}}\-\right)+\,i\+\vec{P}_{\!\sigma}\+\cdot\+\vec{Z}_{\-\sigma}}}{p^2+q^2+P_{\!\sigma}^2+\lambda};\-d\vec{q}}\\
        &=\frac{2\sqrt{2}\+\pi^2}{(2\pi)^{3N}\abs{\vec{x}_i\--\vec{x}_j}}\-\integrate[\R^{3(N-1)}]{e^{i\+\vec{p}\,\cdot\left(\-\sqrt{2}\+\vec{z}\+-\+\frac{\vec{x}_i-\+\vec{x}_j}{\sqrt{2}}\-\right)+\,i\+\vec{P}_{\!\sigma}\+\cdot\+\vec{Z}_{\-\sigma}}e^{-\sqrt{p^2+P_{\!\sigma}^2\,+\+\lambda\+}\,\frac{\-\abs{\vec{x}_i-\+\vec{x}_j}}{\sqrt{2}}};\mspace{-33mu}d\vec{p}\+d\mspace{-0.75mu}\vec{P}_{\!\sigma}}\!\-,
    \end{align*}
    where we have set $(\vec{p}_i\+,\,\vec{p}_j)=\left(\frac{\vec{p}\,-\,\vec{q}}{\sqrt{2}},\frac{\vec{p}\,+\,\vec{q}}{\sqrt{2}}\right)$ and exploited
    \begin{equation}\label{fourierTransformYukawa}
        \integrate[\R^3]{\frac{e^{i\,\vec{k}\,\cdot\, \vec{x}}}{k^2+a^2};\-d\vec{k} }=\frac{\,2\pi^2\!\-}{\abs{\vec{x}}}\,e^{-a\,\abs{\vec{x}}}\:,\quad \forall \vec{x}\neq \vec{0}\+,\,a>0.
    \end{equation}
    Thus, if we define the function
    \begin{equation}\label{hiddenBesselTransform}
        \begin{split}
            &\maps{\FT{f}^\lambda_x}{\R^3\!\otimes\-\R^{3(N-2)};\Rplus},\qquad x,\lambda>0,\\
            &\qquad(\vec{p},\vec{P})\longmapsto \frac{\:e^{-\sqrt{p^2+P^2+\lambda\+}\,x}\!\-}{x},
        \end{split}
    \end{equation}
    we can write    \begin{equation}\label{almostManifestBesselTransform}
        G^\lambda\bigg(\!\!
        \begin{array}{c c c}
            \vec{x}_i\--\vec{x}_j\+,&\!\!\vec{0}\+, &\!\!\vec{0}\\[-5pt]
            \vec{z},&\!\!\vec{z},&\!\!\vec{Z}_{\-\sigma}
        \end{array}\!\!\-\bigg)=\+\frac{2\pi^2}{\,(2\pi)^{\-\frac{3}{2}(N+1)}\!\-}\; f^\lambda_{\frac{\abs{\vec{x}_i-\+\vec{x}_j}}{\sqrt{2}}}\!\left(\-\sqrt{2}\+\vec{z}-\tfrac{\vec{x}_i-\+\vec{x}_j}{\sqrt{2}},\vec{Z}_{\sigma}\-\right)\!.
    \end{equation}
    Therefore,
    \begin{align*}
        \integrate[\R^{3(N-1)}]{e^{-i\vec{q}\+\cdot\vec{z}-\+i\+\vec{Q}_\sigma\cdot\+\vec{Z}_{\-\sigma}};\mspace{-33mu}d\vec{z}d\vec{Z}_{\-\sigma}}\,G^\lambda\bigg(\!\!
        \begin{array}{c c c}
            \vec{x}_i\--\vec{x}_j\+,&\!\!\vec{0}\+, &\!\!\vec{0}\\[-5pt]
            \vec{z},&\!\!\vec{z},&\!\!\vec{Z}_{\-\sigma}
        \end{array}\!\!\-\bigg)\!=\;&\frac{\pi^2\+e^{-i\vec{q}\,\cdot\+\frac{\vec{x}_i-\+\vec{x}_j}{2}}}{\!\-\sqrt{2}\+(2\pi)^{\-\frac{3}{2}(N+1)}\!\-}\,\times\\ &\times\!\-\integrate[\R^{3(N-1)}]{e^{-i\vec{q}\+\cdot\frac{\vec{s}}{\sqrt{2}}-\+i\+\vec{Q}_\sigma\cdot\+\vec{Z}_{\-\sigma}}f^\lambda_{\frac{\abs{\vec{x}_i-\+\vec{x}_j}}{\sqrt{2}}}\!\left(\vec{s},\vec{Z}_{\-\sigma}\-\right);\mspace{-33mu}d\vec{s}d\vec{Z}_{\-\sigma}}\\
        =\,\frac{\pi^2\+e^{-i\vec{q}\,\cdot\+\frac{\vec{x}_i-\+\vec{x}_j}{2}}}{\!\-\sqrt{2}\+(2\pi)^3\!\-}\,\FT{f}^\lambda_{\frac{\abs{\vec{x}_i-\+\vec{x}_j}}{\sqrt{2}}}\!\left(\tfrac{\vec{q}}{\sqrt{2}},\vec{Q}_\sigma\!\right)\!&=\frac{\:e^{-\sqrt{\-\frac{q^2\!}{2}\++\,Q_\sigma^2\++\+\lambda\+}\:\frac{\abs{\vec{x}_i-\+\vec{x}_j}}{\sqrt{2}}}\!\!\!}{8\pi\,\abs{\vec{x}_i\--\vec{x}_j}}\:e^{-i\vec{q}\,\cdot\+\frac{\vec{x}_i-\+\vec{x}_j}{2}}.
    \end{align*}
    
\end{proof}
\end{prop}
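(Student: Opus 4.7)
The goal is to evaluate a partial Fourier transform of $G^{\lambda}$ along the "bra" variables, with its "ket" variables forced to coincide in the pair $\sigma = \{i,j\}$. The plan is a direct momentum-space computation, which splits neatly once one passes to "center of mass / relative" coordinates in the pair $\sigma$.

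First, I would exploit the translation invariance of $G^{\lambda}$, which depends only on the difference $\vec{x} - \vec{y}$. Changing variables $\vec{z} \vcentcolon= \vec{y} - \vec{x}_j$ and $\vec{Z}_\sigma \vcentcolon= \vec{Y}_\sigma - \vec{X}_\sigma$ pulls out the global phase $e^{-i\vec{q}\cdot\vec{x}_j - i\vec{Q}_\sigma\cdot\vec{X}_\sigma}$ and reduces the integral to
\[
\integrate[\R^{3(N-1)}]{e^{-i\vec{q}\cdot\vec{z} - i\vec{Q}_\sigma\cdot\vec{Z}_\sigma};\mspace{-33mu}d\vec{z}\,d\vec{Z}_\sigma}\, G^{\lambda}\!\bigg(\!\!
\begin{array}{c c c}
    \vec{x}_i\--\vec{x}_j, & \vec{0}, & \vec{0} \\[-5pt]
    \vec{z}, & \vec{z}, & \vec{Z}_\sigma
\end{array}\!\!\bigg),
\]
with a remaining phase $e^{-i\vec{q}\cdot\vec{x}_j}$ that, combined later with $e^{i\vec{q}\cdot(\vec{x}_i-\vec{x}_j)/2}$, will assemble into $e^{-i\vec{q}\cdot(\vec{x}_i+\vec{x}_j)/2}$.

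Next, I would insert the standard momentum-space representation of the resolvent kernel,
\[
G^{\lambda}\!\bigg(\!\!
\begin{array}{c c c}
    \vec{x}_i\--\vec{x}_j, & \vec{0}, & \vec{0} \\[-5pt]
    \vec{z}, & \vec{z}, & \vec{Z}_\sigma
\end{array}\!\!\bigg)
=\frac{1}{(2\pi)^{3N}}\integrate[\R^{3N}]{\frac{e^{i\vec{p}_i\cdot(\vec{z}-(\vec{x}_i-\vec{x}_j))+i\vec{p}_j\cdot\vec{z}+i\vec{P}_\sigma\cdot\vec{Z}_\sigma}}{p_i^2+p_j^2+P_\sigma^2+\lambda};\mspace{-10mu}d\vec{p}_i\,d\vec{p}_j\,d\vec{P}_\sigma},
\]
and perform the rotated change of variables $(\vec{p}_i,\vec{p}_j) = \bigl(\tfrac{\vec{p}-\vec{q}}{\sqrt 2},\tfrac{\vec{p}+\vec{q}}{\sqrt 2}\bigr)$, which has unit Jacobian and diagonalises $p_i^2+p_j^2 = p^2+q^2$. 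The $\vec{z}$-dependence then collapses to $e^{i\sqrt{2}\,\vec{p}\cdot\vec{z}}$, while the $\vec{x}_i-\vec{x}_j$ dependence is carried only by $\vec{q}$.

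At this point, the $\vec{q}$-integral has exactly the Yukawa Fourier form, for which I would use the classical identity
\[
\integrate[\R^3]{\frac{e^{i\vec{k}\cdot\vec{x}}}{k^2+a^2};\-d\vec{k}} = \frac{2\pi^2}{|\vec{x}|}\, e^{-a|\vec{x}|}, \qquad a>0,\ \vec{x}\neq\vec{0},
\]
applied with $a=\sqrt{p^2+P_\sigma^2+\lambda}$. This produces the factor $\frac{e^{-\sqrt{p^2+P_\sigma^2+\lambda}\,|\vec{x}_i-\vec{x}_j|/\sqrt{2}}}{|\vec{x}_i-\vec{x}_j|}$, times an ordinary (anti-)Fourier transform in $(\vec{p},\vec{P}_\sigma)$ against $e^{i\sqrt{2}\,\vec{p}\cdot\vec{z}+i\vec{P}_\sigma\cdot\vec{Z}_\sigma}$. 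Introducing the auxiliary function $\FT{f}_x^\lambda(\vec{p},\vec{P})\vcentcolon= x^{-1}e^{-\sqrt{p^2+P^2+\lambda}\,x}$ makes this identification explicit and gives
\[
G^{\lambda}\!\bigg(\!\!
\begin{array}{c c c}
    \vec{x}_i\--\vec{x}_j, & \vec{0}, & \vec{0} \\[-5pt]
    \vec{z}, & \vec{z}, & \vec{Z}_\sigma
\end{array}\!\!\bigg)
=\frac{2\pi^2}{(2\pi)^{3(N+1)/2}}\, f^{\lambda}_{|\vec{x}_i-\vec{x}_j|/\sqrt 2}\!\Bigl(\sqrt{2}\,\vec{z}-\tfrac{\vec{x}_i-\vec{x}_j}{\sqrt 2},\,\vec{Z}_\sigma\Bigr).
\]

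Finally, I would Fourier transform this against $e^{-i\vec{q}\cdot\vec{z}-i\vec{Q}_\sigma\cdot\vec{Z}_\sigma}$: the linear change of variable $\vec{s} = \sqrt{2}\,\vec{z}-\tfrac{\vec{x}_i-\vec{x}_j}{\sqrt 2}$ absorbs the shift into a phase $e^{-i\vec{q}\cdot(\vec{x}_i-\vec{x}_j)/2}$, and recognising the remaining integral as $\FT{f}^{\lambda}_{|\vec{x}_i-\vec{x}_j|/\sqrt 2}(\vec{q}/\sqrt{2},\vec{Q}_\sigma)$ yields the stated right-hand side, once combined with the overall phase $e^{-i\vec{q}\cdot\vec{x}_j-i\vec{Q}_\sigma\cdot\vec{X}_\sigma}$ extracted in the first step. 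The only real obstacle is careful bookkeeping: constants of $(2\pi)$, Jacobians of the orthogonal change $(\vec{p}_i,\vec{p}_j)\mapsto(\vec{p},\vec{q})$, and the assembly of the two phases $e^{-i\vec{q}\cdot\vec{x}_j}$ and $e^{-i\vec{q}\cdot(\vec{x}_i-\vec{x}_j)/2}$ into $e^{-i\vec{q}\cdot(\vec{x}_i+\vec{x}_j)/2}$; none of the individual steps is delicate, as each integral is either Gaussian-free Yukawa or an elementary linear substitution.
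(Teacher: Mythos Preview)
Your proposal is correct and follows essentially the same route as the paper: the same translation to reduce to $G^\lambda$ with shifted arguments, the same orthogonal change $(\vec{p}_i,\vec{p}_j)\mapsto(\vec{p},\vec{q})$, the same Yukawa identity for the $\vec{q}$-integral, the same auxiliary function $\FT{f}^\lambda_x$, and the same final substitution $\vec{s}=\sqrt{2}\,\vec{z}-\tfrac{\vec{x}_i-\vec{x}_j}{\sqrt{2}}$. One sentence is slightly loose---after the rotation the phase also contains $-\tfrac{\vec{p}}{\sqrt 2}\cdot(\vec{x}_i-\vec{x}_j)$, so the $\vec{x}_i-\vec{x}_j$ dependence is not carried by $\vec{q}$ alone---but your subsequent formula for $G^\lambda$ in terms of $f^\lambda$ evaluated at $\sqrt{2}\,\vec{z}-\tfrac{\vec{x}_i-\vec{x}_j}{\sqrt{2}}$ accounts for this correctly, so nothing is actually missing.
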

\n In particular, notice that Proposition~\ref{greenTransformed} implies
\begin{equation}\label{greenIntegrated}
        \integrate[\R^{3(N-1)}]{;\mspace{-33mu}d\vec{y}d\vec{Y}_{\!\!\sigma}}\+G^\lambda\bigg(\!\!
        \begin{array}{r c l}
            \vec{x}_i\+,&\!\!\vec{x}_j\+, &\!\!\vec{X}_{\-\sigma}\\[-2.5pt]
            \vec{y},&\!\!\vec{y},&\!\!\vec{Y}_{\!\!\sigma}
        \end{array}\!\!\!\bigg)=\frac{\:e^{-\sqrt{\-\frac{\lambda}{2}\+}\:\abs{\vec{x}_i-\+\vec{x}_j}}\!\!\!}{8\pi\,\abs{\vec{x}_i\--\vec{x}_j}},\qquad\vec{x}_i\neq\vec{x}_j\+.
\end{equation}
In conclusion, we provide one last property concerning the potential $\mathcal{G}^\lambda$.

\begin{prop}\label{diagonalSingularityCancellation}
Given $U_\sigma$ defined by~\eqref{unitaryCoordinate}, let $\xi\!\in\! \dom{\Phi^\lambda}$ and $\Phi^\lambda$ the hermitian quadratic form in $\hilbert*_{N-1}$ defined by~(\ref{phiDefinition},~\ref{phiComponents}).
Then, one has for all $\lambda\->\-0$
\begin{subequations}
    \begin{gather}
        \Phi^\lambda_{\mathrm{diag}}[\xi]=\lim_{\vec{r}\to \vec{0}} \integrate[\R^{3(N-1)}]{\!\left[\tfrac{\abs{\xi(\vec{x},\,\vec{X}_{\-\sigma})}^2\!\!}{r}-\+\conjugate*{\xi(\vec{x},\vec{X}_{\-\sigma})\-}\,(U_\sigma\+\mathcal{G}^\lambda_\sigma\+\xi)(\vec{r},\vec{x},\vec{X}_{\-\sigma}\-)\right]\!;\mspace{-33mu}d\vec{x}d\vec{X}_{\-\sigma}}\-,\label{itIsDiag}\\[7.5pt]
        \Phi^\lambda_{\mathrm{off};\+1}[\xi]=-\!\lim_{\vec{r}\to\vec{0}}\integrate[\R^{3(N-1)}]{\conjugate*{\xi(\vec{x},\vec{X}_{\-\sigma})\-}\,\big(U_\sigma\nquad\;{\textstyle \sum\limits_{\substack{\nu\+\in\+\mathcal{P}_N\+:\\|\nu\+\cap\+\sigma|\+=\+1}}}\!\!\!\!\mathcal{G}^\lambda_\nu\+\xi\big)(\vec{r},\vec{x},\vec{X}_{\-\sigma});\mspace{-33mu}d\vec{x}d\vec{X}_{\-\sigma}},\label{itIsOff1}\\
        \Phi^\lambda_{\mathrm{off};\+0}[\xi]=-\!\lim_{\vec{r}\to\vec{0}}\integrate[\R^{3(N-1)}]{\conjugate*{\xi(\vec{x},\vec{X}_{\-\sigma})\-}\,\big(U_\sigma\nquad\;{\textstyle \sum\limits_{\substack{\nu\+\in\+\mathcal{P}_N\+:\\|\nu\+\cap\+\sigma|\+=\+0}}}\!\!\!\!\mathcal{G}^\lambda_\nu\+\xi\big)(\vec{r},\vec{x},\vec{X}_{\-\sigma});\mspace{-33mu}d\vec{x}d\vec{X}_{\-\sigma}}.\label{itIsOff0}
    \end{gather}
\end{subequations}
\begin{proof}
    We denote by $L^\lambda_{\xi,\+i}\+, \,i\-\in\-\{0,1,2\}$ the limits we want to compute.\newline
    Regarding~\eqref{itIsDiag}, in light of equation~\eqref{greenIntegrated}, we have
    \begin{equation}\begin{split}
        L^\lambda_{\xi,\+2}\-=\-\lim_{\vec{r}\to \vec{0}}&\-\integrate[\R^{3(N-1)}]{\conjugate*{\xi(\vec{x},\vec{X}_{\-\sigma})}\+\bigg[\frac{1\--e^{-\sqrt{\-\frac{\lambda}{2}}\,r}\!\-}{r}\:\xi(\vec{x},\vec{X}_{\-\sigma});\mspace{-33mu}d\vec{x}d\vec{X}_{\-\sigma}}\++\\
        &-8\pi\!\-\integrate[\R^{3(N-1)}]{[\xi(\vec{y},\vec{Y}_{\!\!\sigma})\--\xi(\vec{x},\vec{X}_{\-\sigma})];\mspace{-33mu}d\vec{y}d\vec{Y}_{\!\!\sigma}}\,G^\lambda\bigg(\!\!
        \begin{array}{c c l}
            \vec{x}+\tfrac{\vec{r}}{2}\+, & \!\!\vec{x}-\tfrac{\vec{r}}{2}\+, &\!\!\! \vec{X}_{\-\sigma} \\[-2.5pt]
            \vec{y}, & \!\!\vec{y}, &\!\!\! \vec{Y}_{\!\!\sigma}
        \end{array}\!\!\!\bigg)\!\bigg].
        \end{split}
    \end{equation}
    The first term has the integrable majorant $\sqrt{\frac{\lambda}{2}\+}\+\abs{\xi(\vec{x},\vec{X}_{\-\sigma})}^2$, therefore
    \begin{equation*}
       \begin{split} L^\lambda_{\xi,\+2}\-=\-\sqrt{\tfrac{\lambda}{2}}\norm{\xi}[\hilbert*_{N-1}]^2\!-8\pi\!\lim_{r\to 0^+}\-\integrate[\R^{3(N-1)}]{\conjugate*{\xi(\vec{x},\vec{X}_{\-\sigma})};\mspace{-33mu}d\vec{x}d\vec{X}_{\-\sigma}}\!\-\integrate[\R^{3(N-1)}]{[\xi(\vec{y},\vec{Y}_{\!\!\sigma})\--\xi(\vec{x},\vec{X}_{\-\sigma})];\mspace{-33mu}d\vec{y}d\vec{Y}_{\!\!\sigma}}\+\times\\[-5pt]
        \times\+ G^\lambda\bigg(\!\!\-
        \begin{array}{c c l}
            \vec{x}+\tfrac{\vec{r}}{2}\+, & \!\!\vec{x}-\tfrac{\vec{r}}{2}\+, &\!\!\! \vec{X}_{\-\sigma} \\[-2.5pt]
            \vec{y}, & \!\!\vec{y}, &\!\!\! \vec{Y}_{\!\!\sigma}
        \end{array}\!\!\!\bigg) .
        \end{split}
    \end{equation*}
    Notice that the quantity $\abs{\vec{x}+\tfrac{r}{2}-\vec{y}}^2\-+\abs{\vec{x}-\frac{\vec{r}}{2}-\vec{y}}^2\-+\abs{\vec{X}_{\-\sigma}-\vec{Y}_{\!\!\sigma}}^2$ is symmetric under the exchange $(\vec{x},\vec{X}_{\-\sigma})\-\longleftrightarrow\-(\vec{y},\vec{Y}_{\!\!\sigma})$.
    Hence,
    \begin{equation*}
       \begin{split} L^\lambda_{\xi,\+2}\-=\-\sqrt{\tfrac{\lambda}{2}}\norm{\xi}[\hilbert*_{N-1}]^2\!+4\pi\!\lim_{r\to 0^+}\-\integrate[\R^{3(N-1)}]{;\mspace{-33mu}d\vec{x}d\vec{X}_{\-\sigma}}\!\!\!\integrate[\R^{3(N-1)}]{\abs{\xi(\vec{y},\vec{Y}_{\!\!\sigma})\--\xi(\vec{x},\vec{X}_{\-\sigma})}^2;\mspace{-33mu}d\vec{y}d\vec{Y}_{\!\!\sigma}}\, G^\lambda\bigg(\!\!\-
        \begin{array}{c c l}
            \vec{x}+\tfrac{\vec{r}}{2}\+, & \!\!\vec{x}-\tfrac{\vec{r}}{2}\+, &\!\!\! \vec{X}_{\-\sigma} \\[-2.5pt]
            \vec{y}, & \!\!\vec{y}, &\!\!\! \vec{Y}_{\!\!\sigma}
        \end{array}\!\!\!\bigg).
        \end{split}
    \end{equation*}
    Then, exploiting the decrease of $G^\lambda$ and according to the elementary inequality $$\abs{\vec{x}+\tfrac{\vec{r}}{2}-\vec{y}}^2\-+\abs{\vec{x}-\tfrac{\vec{r}}{2}-\vec{y}}^2\-\geq\tfrac{1}{2}\+\abs{\vec{x}+\tfrac{\vec{r}}{2}-\vec{y}+\vec{x}-\tfrac{\vec{r}}{2}-\vec{y}}^2\-=2\+\abs{\vec{x}-\vec{y}}^2,$$
    one gets
    \begin{equation*}
        G^\lambda\bigg(\!\!\-
        \begin{array}{c c l}
            \vec{x}+\tfrac{\vec{r}}{2}\+, & \!\!\vec{x}-\tfrac{\vec{r}}{2}\+, &\!\!\! \vec{X}_{\-\sigma} \\[-2.5pt]
            \vec{y}, & \!\!\vec{y}, &\!\!\! \vec{Y}_{\!\!\sigma}
        \end{array}\!\!\!\bigg)
        \-\leq \+G^\lambda\bigg(\!\!
            \begin{array}{r c l}
           \vec{x}, & \!\!\vec{x}, &\!\!\! \vec{X}_{\-\sigma} \\[-2.5pt]
           \vec{y}, & \!\!\vec{y}, &\!\!\! \vec{Y}_{\!\!\sigma}
        \end{array}\!\!\!\bigg),\qquad\forall\vec{r}\-\in\-\R^3.
    \end{equation*}
    This means that the limit can be computed inside the integral, and the result is obtained because of identity~\eqref{diagPhiPos}.\newline
    Next, we focus on proving~\eqref{itIsOff1}.
    Due to the unitarity of the Fourier transform, one has
    \begin{equation*}
        L^\lambda_{\xi,\+1}=-\-\lim_{\vec{r}\to\vec{0}}\scalar{\FT{\xi}}{\widehat{U_\sigma\nquad\sum\limits_{\substack{\nu\+\in\+\mathcal{P}_N\+:\\ |\nu\+\cap\+\sigma|\+=\+1}} \!\!\!\!\mathcal{G}^\lambda_\nu\+\xi}}[\hilbert*_{N-1}],
    \end{equation*}
    hence, due to~\eqref{unitaryCoordinateTransformed} and~\eqref{potentialFourier},
    \begin{align*}
        L^\lambda_{\xi,\+1}&=-\tfrac{2\+(N\--2)}{\pi^2}\!\lim_{\vec{r}\to\vec{0}}\integrate[\R^{3(N-1)}]{\conjugate*{\FT{\xi}(\vec{p},\vec{p}'\-,\vec{P})};\mspace{-33mu}d\vec{p}\+d\vec{p}'\-d\mspace{-0.75mu}\vec{P}}\!\integrate[\R^3]{\frac{\FT{\xi}(\vec{p}\--\-\vec{q}+\vec{p}'\-,\vec{q},\vec{P})\+\cos\!\left[\vec{r}\!\cdot\!\left(\frac{\vec{p}}{2}\--\vec{q}\right)\-\right]\!}{\abs{\vec{p}\--\-\vec{q}}^2+q^2\-+p{\+'\+}^2\-+\-P^2\-+\lambda};\-d\vec{q}}\\
        &=-\tfrac{2\+(N\!-\-2)}{\pi^2}\!\lim_{\vec{r}\to\vec{0}}\integrate[\R^{3N}]{\cos\!\left(\-\vec{r}\!\cdot\-\tfrac{\vec{p}_1-\+\vec{p}_2}{2}\!\right)\frac{\conjugate*{\FT{\xi}(\vec{p}_1\!+\vec{p}_2,\vec{p}_3,\vec{P})\-}\:\FT{\xi}(\vec{p}_1\!+\vec{p}_3,\vec{p}_2,\vec{P})}{p_1^2+p_2^2+p_3^2+\-P^2\-+\lambda};\mspace{-10mu} d\vec{p}_1d\vec{p}_2\+d\vec{p}_3 d\mspace{-0.75mu}\vec{P}}.
    \end{align*}
    Now it is plain to see that we can use the dominated convergence theorem in order to recover~\eqref{off1Phi}.\newline
    Similarly, concerning~\eqref{itIsOff0} one has
    \begin{align*}
        L^\lambda_{\xi,\+0}&=-\-\lim_{\vec{r}\to\vec{0}}\scalar{\FT{\xi}}{\widehat{U_\sigma\nquad\sum\limits_{\substack{\nu\+\in\+\mathcal{P}_N\+:\\|\nu\+\cap\+\sigma|\+=\+0}}\!\!\!\! \mathcal{G}^\lambda_\nu\+\xi}}[\hilbert*_{N-1}]\\[-5pt]
        &=-\tfrac{(N\--2)(N\--3)}{2\pi^2}\!\lim_{\vec{r}\to\vec{0}}\integrate[\R^{3(N-1)}]{\conjugate*{\FT{\xi}(\vec{p},\vec{p}'\-,\vec{p}''\!,\vec{P})};\mspace{-33mu}d\vec{p}\+d\vec{p}'\-d\vec{p}''\!d\mspace{-0.75mu}\vec{P}}\!\integrate[\R^3]{\frac{\FT{\xi}(\vec{p}'\-+\vec{p}''\!,\vec{p}\--\-\vec{q},\vec{q},\vec{P})\+\cos\!\left[\vec{r}\!\cdot\!\left(\frac{\vec{p}}{2}\--\vec{q}\right)\-\right]\!}{\abs{\vec{p}\--\-\vec{q}}^2\-+q^2\-+p{\+'\+}^2\-+p{\+''\+}^2\-+\-P^2+\lambda};\-d\vec{q}}\\
        &=-\tfrac{(N\--2)(N\--3)}{2\pi^2}\!\lim_{\vec{r}\to\vec{0}}\integrate[\R^{3N}]{\cos\!\left(\-\vec{r}\!\cdot\-\tfrac{\vec{p}_3-\+\vec{p}_4}{2}\!\right)\frac{\conjugate*{\FT{\xi}(\vec{p}_1\!+\vec{p}_2,\vec{p}_3,\vec{p}_4,\vec{P})\-}\:\FT{\xi}(\vec{p}_3\-+\vec{p}_4,\vec{p}_1,\vec{p}_2,\vec{P})}{p_1^2+p_2^2+p_3^2+p_4^2+\-P^2\-+\lambda};\mspace{-10mu}d\vec{p}_1d\vec{p}_2\+d\vec{p}_3 d\vec{p}_4 d\mspace{-0.75mu}\vec{P}},
    \end{align*}
    and therefore one can recognize in the last expression identity~\eqref{off0Phi}.
    
    \end{proof}
\end{prop}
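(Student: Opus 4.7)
The plan is to handle the three identities separately, treating the two off-diagonal ones as a straightforward exercise in Plancherel plus dominated convergence, and spending most of the effort on the diagonal one, where two individually singular contributions must be combined before the limit can be taken.

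First, for the off-diagonal identities~\eqref{itIsOff1} and~\eqref{itIsOff0}, I would pass to momentum variables by means of Plancherel's theorem, using the Fourier transform $\widehat{U_\sigma f}$ given by~\eqref{unitaryCoordinateTransformed} and the explicit expression~\eqref{potentialFourier} for $\widehat{\mathcal{G}^\lambda_\nu \xi}$. After the natural changes of integration variables (absorbing the half-sum $\frac{\vec{p}}{2}$ together with the dummy $\vec{q}$), the inner product reduces to an integral in which the only $\vec{r}$-dependence is the oscillatory factor $\cos[\vec{r}\cdot(\tfrac{\vec{p}}{2}-\vec{q})]$, and the remaining kernel is exactly the integrand of $\Phi^\lambda_{\mathrm{off};\+1}$ (respectively $\Phi^\lambda_{\mathrm{off};\+0}$) as displayed in~\eqref{off1Phi} (resp.~\eqref{off0Phi}). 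The main step then consists in producing an $\vec{r}$-uniform integrable majorant so that dominated convergence yields the limit $\vec{r}\to\vec{0}$; the bounds established in proposition~\ref{upperBound}, applied to the symmetrized integrand via Cauchy--Schwarz, provide such a majorant for any $\xi\in H^{1/2}(\R^{3(N-1)})$.

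For the diagonal identity~\eqref{itIsDiag}, the delicate point is that both $(U_\sigma \mathcal{G}^\lambda_\sigma\xi)(\vec{r},\vec{x},\vec{X}_{\-\sigma})$ and $\frac{\xi(\vec{x},\vec{X}_{\-\sigma})}{r}$ blow up as $\vec{r}\to\vec{0}$. The plan is to first invoke~\eqref{greenIntegrated} in order to rewrite the contribution coming from the constant part $\xi(\vec{x},\vec{X}_{\-\sigma})$ of the charge against the Green's kernel as the explicit Yukawa profile $\frac{e^{-\sqrt{\lambda/2}\, r}}{r}$; this produces the cancellation
\[
\frac{\abs{\xi(\vec{x},\vec{X}_{\-\sigma})}^2}{r} - \conjugate{\xi}\,\mathrm{(Yukawa\;part)} = \frac{1-e^{-\sqrt{\lambda/2}\,r}}{r}\,\abs{\xi}^2 \longrightarrow \sqrt{\tfrac{\lambda}{2}}\abs{\xi}^2
\]
whose majorant is $\sqrt{\lambda/2}\,\abs{\xi}^2$, yielding the $\sqrt{\lambda/2}\,\norm{\xi}^2$ term of~\eqref{diagPhiPos}. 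What remains is the ``difference-of-charges'' piece in which $\xi(\vec{y},\vec{Y}_{\!\!\sigma})$ has been replaced by $\xi(\vec{y},\vec{Y}_{\!\!\sigma})-\xi(\vec{x},\vec{X}_{\-\sigma})$; I would then exploit the bosonic symmetry in exchanging $(\vec{x},\vec{X}_{\-\sigma})\leftrightarrow (\vec{y},\vec{Y}_{\!\!\sigma})$ to turn this into the positive quantity $\abs{\xi(\vec{y},\vec{Y}_{\!\!\sigma})-\xi(\vec{x},\vec{X}_{\-\sigma})}^2$ integrated against the shifted Green's kernel.

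The main obstacle—and the step requiring the most care—is passing the limit $\vec{r}\to\vec{0}$ inside this last integral. The plan is to use the elementary bound
\[
\tfrac{1}{2}\abs{\vec{x}+\tfrac{\vec{r}}{2}-\vec{y}}^2+\tfrac{1}{2}\abs{\vec{x}-\tfrac{\vec{r}}{2}-\vec{y}}^2 \geq \abs{\vec{x}-\vec{y}}^2
\]
together with the monotonicity property~\eqref{decreasingMacdonald} of $z^{\mu}\mathrm{K}_\mu(z)$ to dominate the shifted Green's function by the unshifted one $G^\lambda\big(\!\begin{smallmatrix}\vec{x},\vec{x},\vec{X}_{\-\sigma}\\\vec{y},\vec{y},\vec{Y}_{\!\!\sigma}\end{smallmatrix}\!\big)$. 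Once this $\vec{r}$-independent majorant is established, dominated convergence lets me exchange limit and integral, and the remaining expression coincides termwise with the position-space representation~\eqref{diagPhiPos} of $\Phi^\lambda_{\mathrm{diag}}[\xi]$. The verification that the global majorant is integrable follows from the form-domain hypothesis $\xi\in H^{1/2}(\R^{3(N-1)})$ and again from the diagonal part of proposition~\ref{upperBound}.
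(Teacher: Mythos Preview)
Your proposal is correct and follows essentially the same route as the paper's proof: the off-diagonal identities are obtained by passing to Fourier variables via~\eqref{unitaryCoordinateTransformed} and~\eqref{potentialFourier} and then applying dominated convergence to remove the cosine factor, while the diagonal identity is handled by using~\eqref{greenIntegrated} to cancel the $1/r$ singularity, symmetrizing in $(\vec{x},\vec{X}_{\-\sigma})\leftrightarrow(\vec{y},\vec{Y}_{\!\!\sigma})$ to produce the squared difference, and dominating the shifted Green's kernel by the unshifted one via the monotonicity~\eqref{decreasingMacdonald}. The only minor imprecision is your appeal to ``the diagonal part of proposition~\ref{upperBound}'' for the integrability of the majorant in the diagonal case: that proposition does not treat $\Phi^\lambda_{\mathrm{diag}}$, but the required integrability is immediate from the finiteness of the right-hand side of~\eqref{diagPhiPos} for $\xi\in H^{1/2}(\R^{3(N-1)})$.
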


\end{document}